\def\bR{\mathbb{R}}
\def\bN{\mathbb{N}}
\def\bZ{\mathbb{Z}}
\def\cQ{\mathcal{Q}}
\def\cD{\mathcal{D}}
\def\cA{\mathcal{A}}
\def\cM{\mathcal{M}}
\def\cV{\mathcal{V}}
\def\cO{\mathcal{O}}
\def\cF{\mathcal{F}}
\def\cG{\mathcal{G}}
\def\cL{\mathcal{L}}
\def\cN{\mathcal{N}}
\def\cE{\mathcal{E}}
\def\cK{\mathcal{K}}
\def\cH{\mathcal{H}}
\def\eps{\varepsilon}
\def\ph{\varphi}
\def\wt{\widetilde}
\def\indic{\hbox{\raise-2pt \hbox{\indbf 1}}}
\def\*{{\hfill\break\null\hfill\break}}
\def\tende#1{\,\vtop{\ialign{##\crcr\rightarrowfill\crcr
             \noalign{\kern-1pt\nointerlineskip}
             \hskip3.pt${\scriptstyle #1}$\hskip3.pt\crcr}}\,}
\def\otto{\,{\kern-1.truept\leftarrow\kern-5.truept\to\kern-1.truept}\,}
\newtheorem{theorem}{Theorem}[section]  
\newtheorem{cor}[theorem]{Corollary}
\newtheorem{prop}[theorem]{Proposition}
\newtheorem{lemma}[theorem]{Lemma}
\numberwithin{equation}{section}
\begin{document}

\title{The Excitation Spectrum of Bose Gases \\ Interacting Through Singular Potentials}

\author{Chiara Boccato$^1$, Christian Brennecke$^1$, Serena Cenatiempo$^2$, Benjamin Schlein$^1$ \\
\\
Institute of Mathematics, University of Zurich\\
Winterthurerstrasse 190, 8057 Zurich, 
Switzerland$^1$ \\
\\
Gran Sasso Science Institute, Viale Francesco Crispi 7 \\ 
67100 L'Aquila, Italy$^2$}

\maketitle

\begin{abstract}
We consider systems of $N$ bosons in a box with volume one, interacting through a repulsive two-body potential of the form $\kappa N^{3\beta-1} V(N^\beta x)$. For all $0 < \beta < 1$, and for sufficiently small coupling constant $\kappa > 0$, we establish the validity of Bogoliubov theory, identifying the ground state energy and the low-lying excitation spectrum up to errors that vanish in the limit of large $N$. 
\end{abstract}

\section{Introduction and main result}

We consider systems of $N$ bosons in the three dimensional box $\Lambda = [-1/2 ; 1/2]^{\times 3}$  with periodic boundary conditions. The Hamilton operator is given by 
\begin{equation}\label{eq:Ham0} H_N^\beta = \sum_{j=1}^N -\Delta_{x_j} + \frac{\kappa}{N} \sum_{i<j}^N N^{3\beta} V (N^{\beta} (x_i -x_j)) \end{equation}
for some $\beta \in [0;1]$ and acts on the Hilbert space $L^2_s (\Lambda^N)$, the subspace of $L^2 (\Lambda^N)$ consisting of all functions that are symmetric with respect to any permutation of the $N$ particles. In (\ref{eq:Ham0}), $V$ is a non-negative, compactly supported and spherically symmetric two-body potential. Later, we will require that $V \in L^3 (\bR^3)$ and that the coupling constant $\kappa > 0$ is sufficiently small. For $\beta = 0$, (\ref{eq:Ham0}) is the Hamilton operator of a Bose gas in the so-called mean-field limit. If $\beta = 1$, on the other end, we recover the Gross-Pitaevskii regime. 

In this paper, we are interested in low-energy states of (\ref{eq:Ham0}), i.e. the ground state and eigenstates with small excitation energy. The properties of low-energy states of dilute Bose gases have already been studied in the pioneering work of Bogoliubov, see \cite{Bog}. Bogoliubov wrote the Hamilton operator of a dilute Bose gas in Fock space, using standard creation and annihilation operators. Since low-energy states exhibit Bose-Einstein condensation, he proposed to replace creation and annihilation operators associated to the zero momentum mode by commuting numbers and then to neglect all cubic and quartic contributions to the resulting Hamilton operator. With this approximation, Bogoliubov obtained a quadratic Hamilton operator, which he was able to diagonalize finding precise expressions for the ground state energy and for excited eigenvalues. 

Mathematically, the validity of Bogoliubov theory has only been established for a limited number of systems, so far. In \cite{LSo}, Lieb and Solovej used a rigorous version of Bogoliubov's method to compute the ground state energy of bosonic jellium. Similar ideas were used by Giuliani and Seiringer in \cite{GiS} to show the Lee-Huang-Yang formula for the ground state energy of a Bose gas in a simultaneous limit of weak coupling and high density. In \cite{Sei}, Seiringer proved the validity of Bogoliubov theory for the Hamilton operator (\ref{eq:Ham0}) in the case $\beta =0$ (the mean field limit). More precisely, assuming $V$ to be positive definite, he proved that the ground state energy of (\ref{eq:Ham0}) for $\beta =0$   satisfies 
\begin{equation}\label{eq:GS0} E_N^{\beta=0} = \frac{(N-1)}{2} \kappa\widehat{V} (0) - \frac{1}{2} \sum_{p \in 2\pi \bZ^3 \backslash \{ 0 \}} \left[ p^2 + \kappa \widehat{V} (p) - \sqrt{|p|^4 + 2p^2 \kappa \widehat{V} (p)}\right] + o (1) \end{equation}
as $N\to\infty$. Furthermore, he showed that the spectrum of $H_N^{\beta=0} - E_N^{\beta=0}$ below a fixed threshold $\zeta >0$ is given, up to errors vanishing in the limit $N \to \infty$, by finite sums of the form 
\begin{equation}\label{eq:excit-zero} \sum_{p \in 2\pi \bZ^3 \backslash \{ 0 \}} n_p  \sqrt{|p|^4 + 2p^2 \kappa \widehat{V} (p)} \end{equation}
where $n_p \in \bN$ for all $p \in \Lambda^*_+ = 2\pi \bZ^3 \backslash \{ 0 \}$ (in fact, in \cite{Sei} the threshold $\zeta$ may grow with $N$; (\ref{eq:excit-zero}) remains true as long as $\zeta \ll N^{1/3}$).

The results of \cite{Sei} were extended by Grech and Seiringer to mean-field systems of bosons trapped by external potentials in  \cite{GS}. Results similar to (\ref{eq:GS0}), (\ref{eq:excit-zero}) for Bose gases in the mean-field limit were also obtained, with different approaches, by Lewin, Nam, Serfaty and Solovej in \cite{LNSS} and by Derezinski and Napiorkowski in \cite{DN}. Furthermore, in \cite{Piz1,Piz2,Piz3}, Pizzo obtained, for $\beta=0$ and imposing an ultraviolet cutoff, a convergent series expansion for the ground state of (\ref{eq:Ham0}) in powers of $N^{-1}$.

The goal of our paper is to extend the results (\ref{eq:GS0}), (\ref{eq:excit-zero}) and to prove the validity of Bogoliubov's prediction for the ground state energy and the excitation spectrum of (\ref{eq:Ham0}) in scaling limits with $0 < \beta < 1$, where the range of the interaction potential shrinks to zero, as $N \to \infty$. This is the content of our main theorem.
\begin{theorem}\label{thm:main}
Let $V \in L^3 (\bR^3)$ be non-negative, spherically symmetric, compactly supported and assume that the coupling constant $\kappa > 0$ is small enough. Fix $0 < \beta < 1$ and let $m_\beta \in \bN$ be the largest integer with $m_\beta \leq 1/(1-\beta) + \min (1/2 , \beta/ (1-\beta))$. Then, in the limit $N \to \infty$, the ground state energy $E_N^\beta$ of the Hamilton operator $H_N^\beta$ defined in (\ref{eq:Ham0}) is given by  
\begin{equation}
\begin{split}\label{1.groundstate}
E^\beta_{N} = \; &4\pi (N-1) a_N^\beta \\ &-\frac{1}{2}\sum_{p\in\Lambda^*_+} \left[ p^2+ \kappa \widehat{V} (0) - \sqrt{|p|^4 + 2 p^2 \kappa \widehat{V} (0)} - \frac{\kappa^2 \widehat{V}^2 (0)}{2p^2}\right] + \cO (N^{-\alpha})
    \end{split}
    \end{equation}      
for all $0< \alpha < \beta$ such that $\alpha \leq (1-\beta)/2$. Here we set $\Lambda_+^* = 2\pi \bZ^3 \backslash \{0 \}$ and we defined 
    \begin{equation} \label{bN}
    \begin{split}
    8\pi a_N^\beta = \; & \kappa \widehat{V} (0)-\frac{1}{2N}\sum_{p\in\Lambda^*_+}  \frac{\kappa^2 \widehat{V}^2 (p/N^\beta)}{p^2} \\
    &+\sum_{k=2}^{m_\beta}\frac{(-1)^{k}}{(2N)^{k}}\sum_{p\in\Lambda^*_+}\frac{ \kappa \widehat{V} (p/N^\beta)}{p^2} \\ &\quad \times \sum_{q_1, q_2, \dots, q_{k-1}\in\Lambda^*_+ }\frac{ \kappa \widehat{V} ((p-q_1)/N^\beta)}{q_1^2}\left(\prod_{i=1}^{k-2}\frac{ \kappa \widehat{V} ((q_i-q_{i+1})/N^\beta)}{q^2_{i+1}}\right)  \kappa\widehat{V} (q_{k-1}/N^\beta)
    \end{split}
    \end{equation}
Moreover, the spectrum of $H_N^\beta-E^\beta_{N}$ below an energy $\zeta$ consists of eigenvalues given, in the limit $N \to \infty$, by 
    \begin{equation}
    \begin{split}\label{1.excitationSpectrum}
    \sum_{p\in\Lambda^*_+} n_p \sqrt{|p|^4+2p^2  \kappa\widehat{V} (0)}+ \cO (N^{-\alpha} (1+ \zeta^3))
    \end{split}
    \end{equation}
for all $0< \alpha < \beta$ such that $\alpha \leq (1-\beta)/2$. Here $n_p \in \bN$ for all $p\in\Lambda^*_+$ and $n_p \not = 0$ for finitely many $p\in \Lambda^*_+$ only.  
\end{theorem}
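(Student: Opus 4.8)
\noindent\emph{Plan of proof.} The strategy is the by-now-standard three-step route to Bogoliubov's predictions. First, pass to an excitation Hamiltonian by factoring out the condensate, using the unitary map $U_N : L^2_s(\Lambda^N) \to \fockex$ of Lewin--Nam--Serfaty--Solovej onto the truncated Fock space over the orthogonal complement of the constant mode $\pn$; it removes the condensate and represents each $N$-body state by its excitations. The conjugated operator $\cL_N = U_N H_N^\beta U_N^*$ equals, up to the constant $\tfrac12 \kappa(N-1)\widehat V(0)$, a Fock-space operator whose potentially large parts are a quadratic expression in the modified operators $b_p^*, b_p$ (which respect the truncation) together with a cubic term. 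Since $\beta > 0$, the off-diagonal quadratic coefficients and part of the quartic interaction carry singular factors of order $N^\beta$, so diagonalizing $\cL_N$ directly would not converge; a renormalization adapted to the short-scale correlation structure is needed.

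Next, I would introduce the kernel $\eta = (\eta_p)_{p\in\Lambda^*_+}$ obtained from a truncated version of the zero-energy scattering equation associated with $\kappa N^{3\beta-1}V(N^\beta\cdot)$, tuned so that conjugation with the generalized Bogoliubov transformation $e^B$, $B = \tfrac12\sum_{p\in\Lambda^*_+}\big(\eta_p\, b_p^* b_{-p}^* - \overline{\eta_p}\, b_p b_{-p}\big)$, cancels the leading singular off-diagonal quadratic term. Expanding $\cG_N = e^{-B}\cL_N e^{B}$ through the Duhamel/Baker--Campbell--Hausdorff series in $B$ and tracking the cancellations should give
\[
\cG_N = 4\pi (N-1)\, a_N^\beta + \cQ_N + \cV_N + \cE_N ,
\]
with $\cQ_N$ a quadratic Bogoliubov Hamiltonian of coefficients $p^2$ and $\kappa\widehat V(0)$, $\cV_N \ge 0$ the transformed potential energy, the constant $4\pi(N-1) a_N^\beta$ collecting exactly the Born-type series (\ref{bN}) — whose number of retained terms $m_\beta$ is fixed by the target accuracy $N^{-\alpha}$, the later iterated Born contributions being of lower order — and an error $\cE_N$ obeying $\pm\,\cE_N \le N^{-\alpha}(\HN + 1)$ on states with few excitations. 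Should the cubic term fail to be negligible for $\beta$ close to $1$, I would add a cubic conjugation $e^{A}$ as in the Gross--Pitaevskii analysis; in the stated range $\alpha \le (1-\beta)/2$ its contribution is of acceptable size.

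Finally, a further, ordinary Bogoliubov transformation diagonalizes $\cQ_N$, producing the convergent constant $-\tfrac12\sum_{p\in\Lambda^*_+}\big[p^2 + \kappa\widehat V(0) - \sqrt{|p|^4 + 2p^2\kappa\widehat V(0)} - \kappa^2\widehat V^2(0)/(2p^2)\big]$ — the subtracted tail making the sum summable — and the free excitation operator $\sum_{p\in\Lambda^*_+}\varepsilon(p)\,a_p^* a_p$ with $\varepsilon(p) = \sqrt{|p|^4 + 2p^2\kappa\widehat V(0)}$. To turn this approximate identity into the eigenvalue asymptotics I would invoke the min-max principle: the upper bound from trial states obtained by applying $U_N^* e^{B}$ to quasi-free excitation states in $\fockex$, and the lower bound from the operator lower bound on $\cG_N$ together with a priori estimates showing that every eigenstate of $H_N^\beta$ below $\zeta$ has $\langle \cN_+\rangle, \langle \cK\rangle \lesssim 1+\zeta$ and bounded moments of $\cN_+ + 1$ — consequences of the positivity of $V$, the smallness of $\kappa$, and an energy-localization argument. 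Propagating the powers of $\cN_+ + 1$ through the error bounds produces the factor $(1+\zeta^3)$ in (\ref{1.excitationSpectrum}).

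The hardest part is the uniform estimate $\pm\,\cE_N \le N^{-\alpha}(\HN+1)$. One must bound the many terms produced by conjugating the cubic and quartic parts of $\cL_N$ with $e^B$ (and possibly $e^A$) — each a product of $\eta$ with $\widehat V(\cdot/N^\beta)$ and with the Fock-space truncation operators — showing that, once the scattering equation is used to expose the cancellations, they are dominated by $N^{-\alpha}(\HN+1)$. This is delicate precisely because, as $\beta \to 1$, the kernel $\eta$ loses regularity (only $L^2$-type bounds of size $\sim N^{(\beta-1)/2}$ survive while higher Sobolev norms grow with $N$), the number $m_\beta$ of Born-series corrections to be retained increases, and crude term-by-term bounds no longer suffice; matching $m_\beta$ and the accuracy $N^{-\alpha}$ against the loss caused by the singularity of the potential, while keeping track of the commutators with the truncation, is the crux of the argument.
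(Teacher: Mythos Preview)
Your overall three-step strategy---factor out the condensate via the unitary $U$, renormalize with a generalized Bogoliubov transformation $e^{B(\eta)}$ built from a scattering-type kernel, then diagonalize the resulting quadratic part with a second Bogoliubov rotation and compare spectra via min--max---is exactly the route the paper takes. Two points, however, deserve correction.

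First, no cubic conjugation $e^A$ is used or needed in the range $0<\beta<1$; the paper is explicit that such a device would only be required in the Gross--Pitaevskii regime $\beta=1$. So drop that hedge.

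Second, and more importantly, you understate the form of the error and the a~priori input needed to control it. The error after the first conjugation is not bounded by $N^{-\alpha}(\cH_N^\beta+1)$ but rather by
\[
\pm\,\cE_N^\beta \;\le\; C\,N^{(\beta-1)/2}\,(\cN_++1)(\cK+1),
\]
with an unavoidable extra factor of $(\cN_++1)$. To absorb this on low-energy states one needs not merely $\langle\cN_+\rangle,\langle\cK\rangle\lesssim 1+\zeta$ but the \emph{joint} bound
\[
\big\langle \xi_N,\,(\cN_++1)(\cH_N^\beta+1)\,\xi_N\big\rangle \;\le\; C(1+\zeta^2)
\]
for excitation vectors $\xi_N$ associated with spectral states below $E_N^\beta+\zeta$. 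This does not follow from energy positivity and Bose--Einstein condensation alone; the paper obtains it by first proving the commutator estimate $\pm\,[\,\cG_N^\beta,\,i\cN_+\,]\le C(\cH_N^\beta+1)$ and then combining it with the coercivity bound $\cG_N^\beta-E_N^\beta\ge\tfrac12\cH_N^\beta-C$ via an integral representation of $(\cN_++1)^{1/2}$ and spectral localization. This commutator-plus-localization mechanism is one of the genuinely new technical ingredients, and your sketch (``bounded moments of $\cN_++1$\ldots\ energy-localization argument'') does not identify it. Without it the $(\cN_++1)$ weight in the error cannot be closed, and the min--max comparison would fail.

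A minor remark: the constant $4\pi(N-1)a_N^\beta$ and the Bogoliubov sum do not appear directly after the first conjugation; the first step yields a constant $C_N^\beta$ and quadratic coefficients $F_p,G_p$ still depending on $\eta$, and it is only after the second diagonalization and an iteration of the scattering relation (this is where the truncated Born series and $m_\beta$ enter) that one identifies the claimed expressions.
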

{\it Remarks:}
\begin{itemize}
\item[1)] The sum over $p \in \Lambda^*_+$ appearing on the r.h.s. of (\ref{1.groundstate}) converges (a careful analysis shows that the expression in the parenthesis decays as $|p|^{-4}$ for large $|p|$). It gives therefore a contribution of order one to the ground state energy $E_N^\beta$.  
\item[2)] The r.h.s. of (\ref{bN}) is $N$ times a Born series expansion for the scattering length of the potential $\kappa N^{3\beta -1} V (N^\beta .)$. A simple computation shows that the $k$-th term in the sum on the r.h.s. of (\ref{bN}) (including the term on the first line, which is associated with $k=1$) is of the order $N^{k (\beta - 1)}$, for all $k \in \bN$. Hence, it gives a contribution to the ground state energy (\ref{1.groundstate}) of the order $N^{k\beta - (k-1)}$ which is negligible (vanishes, as $N \to \infty$) if $\beta < (k-1)/k$ or, equivalently, if $k > 1/(1-\beta)$. The truncation of the Born series at $k = m_\beta \simeq 1/(1-\beta) + \min (1/2 , \beta/(1-\beta))$ is chosen to make sure that the error we do in neglecting terms with $k > m_\beta$ is of the order $\mathcal{O} (N^{-\alpha})$, for all $\alpha \leq \min (\beta , (1-\beta)/2)$ (and therefore it is negligible, compared to other errors arising in the analysis). Notice that in the Gross-Pitaevskii regime $\beta=1$, which is not covered by Theorem \ref{thm:main}, the situation is different. In this case, all terms in the Born series are of order one; their sum reconstructs the scattering length $a_0$ of the unscaled potential $V$. In this sense, (\ref{1.groundstate}) is consistent with the results of Lieb and Yngvason in \cite{LY} and of Lieb, Seiringer and Yngvason in \cite{LSY} which imply that $E_N^{\beta=1} = 4 \pi a_0 N + o (N)$. 
\item[3)] As explained in \cite{Bog,LSSY,GiS}, the validity of Bogoliubov's approximation is restricted 
to regimes where the ratio $\mu = a / R$ between the scattering length $a$ of the interaction and its range $R$ is such that 
\begin{equation}\label{eq:cond-Bog} 1 \gg \mu \gg \sqrt{\rho a^3} \gg \mu^2 
\end{equation}
For the trapped gas described by (\ref{eq:Ham0}), we have $a \simeq N^{-1}$, $R \simeq N^{-\beta}$ and $\rho = N$. Hence, (\ref{eq:cond-Bog}) is only satisfied if $\beta < 1/2$. In other words, for \mbox{$1/2 \leq \beta < 1$}, Theorem \ref{thm:main} establishes the validity of the predictions of Bogoliubov's theory in regimes where Bogoliubov approximation fails (in fact, Bogoliubov theory is expected to hold in any dilute limit, with $\rho a^3 \ll 1$).  
\item[4)] It is worth noticing that the expression (\ref{1.excitationSpectrum}) for the excitation spectrum of (\ref{eq:Ham0}) has important consequences from the point of view of physics. It shows that the dispersion of bosons described by (\ref{eq:Ham0}) is linear for small momenta, in sharp contrast with the quadratic dispersion of free particles. This observation was used by Bogoliubov in \cite{Bog} to explain the emergence of superfluidity, via the so-called Landau criterion \cite{Lan}. 
\item[5)] Theorem \ref{thm:main} describes low-lying eigenvalues of the Hamiltonian (\ref{eq:Ham0}). As a corollary of (\ref{1.groundstate}) and (\ref{1.excitationSpectrum}), it is also possible to describe eigenvectors associated to low-lying eigenvalues of (\ref{eq:Ham0}). Referring to arguments from \cite{GS}, we provide a norm approximation of these eigenvectors in a remark at the end of Section \ref{sec:proof}, after the proof of Theorem \ref{thm:main}. 
\end{itemize}

To show Theorem \ref{thm:main}, we follow the strategy introduced in \cite{BBCS} to show complete Bose-Einstein condensation for low-energy states in the Gross-Pitaevskii regime (and also used in \cite{BS} to study the time-evolution of condensates in this scaling limit). We start with an idea from \cite{LNSS}. Every $\psi \in L^2_s (\Lambda^N)$ can be represented uniquely as 
\[ \psi = \sum_{n=0}^N \psi^{(n)} \otimes_s \ph_0^{\otimes (N-n)} \] for a sequence $\psi_n \in L^2_\perp (\Lambda)^{\otimes_s n}$, where $L^2_\perp (\Lambda)$ is the orthogonal complement of $\ph_0 (x) = 1$ in $L^2 (\Lambda)$ and $L^2_\perp (\Lambda)^{\otimes_s n}$ is the symmetric tensor product of $n$ copies of $L^2_\perp (\Lambda)$. This remark allows us to define a unitary map \begin{equation}\label{eq:Uph-def} 
U : L^2_s (\Lambda^N) \to\cF_+^{\leq N} = \bigoplus_{n=0}^N L^2_\perp (\Lambda)^{\otimes_s n}  \end{equation} through $U \psi = \{ \psi^{(0)} ,\dots , \psi^{(N)} \}$ ($\cF_+^{\leq N}$ is the Fock space over $L^2_\perp (\Lambda)$, truncated to exclude sectors with more than $N$ particles). The map $U$ factors out the condensate (particles described by the wave function $\ph_0$) and let us focus on its orthogonal excitations. Using $U$, we can define a first excitation Hamiltonian $\cL_N^\beta = U H_N^\beta U^* : \cF_+^{\leq N} \to \cF_+^{\leq N}$.  
Conjugating with $U$, we effectively extract, from the interaction term in the original Hamiltonian (\ref{eq:Ham0}), contributions to $\cL^\beta_N$ that are constant (c-numbers) and quadratic in creation and annihilation operators. This is very much in the spirit of the Bogoliubov approximation, where creation and annihilation operators involving the condensate mode 
$\ph_0$ are replaced by commuting numbers. 

In the mean-field regime (i.e. for $\beta = 0$), conjugation with $U$ is sufficient to extract all contributions of the many-body interaction whose expectation in low energy states survives, as $N \to \infty$. In other words, in the mean-field regime, the excitation Hamiltonian $\cL^{\beta =0}_N$ can be approximated by the sum of a constant and of a term  quadratic in creation and annihilation operators; the expected value of all other terms vanishes, as $N \to \infty$, when we consider low-energy states. 

For $\beta > 0$, the situation is different; some of the quartic terms in $\cL^\beta_N$ that were negligible for $\beta = 0$ are now important, in the limit $N \to \infty$. To better understand this point, let us observe that 
\[ \langle \Omega, \cL^{\beta}_N \Omega \rangle = \langle U^* \Omega, H_N^\beta U^* \Omega\rangle = \langle \ph_0^{\otimes N} , H_N^\beta \ph_0^{\otimes N} \rangle = \frac{(N-1) \widehat{V} (0)}{2} \]
According to (\ref{1.groundstate}), the difference between $\langle \Omega, \cL^{\beta}_N \Omega \rangle$ and the real ground state energy of (\ref{eq:Ham0}) is of the order $N^{\beta}$ and diverges, as 
$N \to \infty$. To make up for this error, we have to take into account correlations among particles. In \cite{BDS}, this goal was achieved by conjugating the excitation Hamiltonian  with a unitary Bogoliubov transformation of the form
\begin{equation}\label{eq:wtT} \wt{T} = \exp \left[ \frac{1}{2} \sum_{p \in 2\pi\bZ^3 , p \not = 0} \left( \eta_p a^*_p a^*_{-p} - \bar{\eta}_p a_p a_{-p} \right) \right] 
\end{equation}
for appropriate coefficients $\eta_p = \eta_{-p}$ (the context of \cite{BDS} was slightly different; it focused on the time-evolution for approximately coherent initial data on the Fock space). In (\ref{eq:wtT}), the operators $a_p^*$ and $a_p$ create and, respectively, annihilate a particle with momentum $p \in 2\pi \bZ^3$ (see Section \ref{sec:fock} below for precise definitions). While Bogoliubov transformations of the form (\ref{eq:wtT}) work well 
on the Fock space, they do not map the space $\cF_+^{\leq N}$ into itself (because they do not preserve the constraint $\cN \leq N$). 

To obviate this problem, we follow the strategy used in \cite{BS,BBCS}. We introduce modified creation and annihilation operators $b^*_p, b_p$ for all $p \in 2\pi \bZ^3$. The creation operator $b_p^*$ creates a particle with momentum $p$ but, at the same time, it removes a particle with momentum zero from the condensate. Similarly, $b_p$ annihilates a particle with momentum $p$ but, simultaneously, it creates a particles with momentum $0$ in the condensate. Hence, $b^*_p$ and $b_p$ create and annihilate excitations but they do not change the total number of particles in the system. As a consequence, when transformed with $U$, they map the Hilbert space $\cF^{\leq N}_+$ into itself. Using these operators, we can therefore define generalized Bogoliubov transformations of the form 
\begin{equation}\label{eq:T} T = \exp \left[ \frac{1}{2} \sum_{p \in 2\pi \bZ^3} \left( \eta_p b^*_p b^*_{-p} - \bar{\eta}_p b_p b_{-p} \right) \right] \end{equation}
with appropriate coefficients $\eta_p = \eta_{-p}$. In contrast with (\ref{eq:wtT}), generalized Bogoliubov transformations leave the space $\cF_+^{\leq N}$ invariant. This allows us to define a modified excitation Hamiltonian having the form $\cG_N^\beta = T^* U H_N U^* T : \cF_+^{\leq N} \to \cF_+^{\leq N}$. 

With the right definition of the coefficients $\eta_p$, 
we can show that the modified excitation Hamiltonian $\cG_N^\beta$ can be approximated by the sum of a constant and of a term quadratic in creation and annihilation operators. To be more precise, we first prove, as we recently did in \cite{BBCS} for the case $\beta =1$, that, for sufficiently small $\kappa > 0$, $\cG_N^\beta$ satisfies the lower bound
\begin{equation}\label{eq:cond0} \cG_N^\beta - E_N^\beta \geq \frac{1}{2} \cN_+ - C \end{equation}
where $\cN_+$ denotes the number of particles operator on $\cF_+^{\leq N}$. As we will show in Prop.~4.1,  Eq. (\ref{eq:cond0}) easily  implies that states $\psi_N \in L^2_s (\Lambda^N)$ with bounded excitation energy can be written as $\psi_N = U T \xi_N$, for an excitation vector $\xi_N \in \cF_+^{\leq N}$ satisfying 
\begin{equation}\label{eq:cond1} \langle \xi_N, \cN_+  \xi_N \rangle \leq C \end{equation}
with a constant $C > 0$ independent of $N$. In other words, (\ref{eq:cond0}) shows that low-energy states 
exhibit complete Bose-Einstein condensation in a very strong sense: the number of excitations, that is the number of particles that are not in the condensate, remains bounded, uniformly in $N$. Notice that Bose-Einstein condensation in the ground state of the Gross-Pitaevskii Hamiltonian (i.e. (\ref{eq:Ham0}) for $\beta =1$) has been known since the work \cite{LS} of Lieb and Seiringer; the novelty of (\ref{eq:cond1}) is the fact that it gives a bound, in fact an optimal bound, on the number of excitations. 

Combining (\ref{eq:cond0}) with the commutator estimate
\begin{equation}\label{eq:comm0} \pm \big[ \cG_N^\beta, i\cN_+ \big] \leq C (\cH_N^\beta +1) \end{equation}
where $\cH_N^\beta$ denotes the Hamiltonian (\ref{eq:Ham0}), rewritten as an operator on the Fock space, we show then that the excitation vector $\xi_N \in \cF^{\leq N}_+$ associated with a low-energy state also satisfy the stronger bound
\begin{equation}\label{eq:KN} \langle \xi_N, (\cN_+ +1) ( \cH_N^\beta +1) \xi_N \rangle \leq C 
\end{equation}
for a constant $C > 0$ independent of $N$. Eq.  (\ref{eq:KN}) does not only provide control on the expectation of the number of excitations, but also on their energy. It is worth noticing that, like for (\ref{eq:cond1}), the improved estimate (\ref{eq:KN}) does not require the assumption $\beta < 1$; it also holds true for the Gross-Pitaevskii Hamiltonian obtained with $\beta =1$.

Equipped with the bound (\ref{eq:KN}) we go back to the excitation Hamiltonian $\cG_N^\beta$ and we show that, in low-energy states, the expectation of all terms that are not constant or quadratic in creation and annihilation operators vanish, in the limit of large $N$. More precisely, we prove that
\begin{equation} \cG_N^\beta = C^\beta_N + \cQ_N^\beta + \cE_N^\beta \end{equation}
where $C_N^\beta$ is a constant (to leading order, the ground state energy of $H_N^\beta$), $\cQ_N^\beta$ is  quadratic in creation and annihilation operators, and $\cE_N^\beta$ is such that
\begin{equation}\label{eq:impr} \pm \cE_N^\beta \leq C N^{-\alpha} (\cN_+ +1) (\cH_N^\beta +1) \end{equation}
for all $0< \alpha < \beta$ such that $\alpha \leq (1-\beta)/2$. Combining (\ref{eq:impr}) with the bound (\ref{eq:cond1}), it follows that, on low-energy states, $\cG_N^\beta$ is dominated by its quadratic part. As a consequence, to conclude the proof of Theorem \ref{thm:main}, we only have to conjugate $\cG_N^\beta$ with a second generalized Bogoliubov transformation, diagonalizing the quadratic operator $\cQ_N^\beta$.

It is in the proof of (\ref{eq:impr}) that the assumption 
$\beta < 1$ plays a crucial role. For $\beta =1$, in the Gross-Pitaevskii regime, the error term $\cE_N^\beta$ is not small; in this case, (\ref{eq:impr}) only holds with $\alpha =0$. In other words, the excitation Hamiltonian $\cG_N^{\beta=1}$ contains cubic and quartic contributions that remain of order one in the limit of large $N$. This observation is not surprising. Already in \cite{ESY} and more recently in \cite{NRS1,NRS2}, it has been shown that quasi-free states can only approximate the ground state of a dilute Bose gas up to an error of order one. For this reason, when $\beta =1$ we cannot expect to extract all relevant terms from the Hamiltonian (\ref{eq:Ham0}) by applying Bogoliubov transformations. To prove Theorem \ref{thm:main} in the Gross-Pitaevskii regime, the Hamilton operator $H_N^{\beta=1}$ must instead be conjugated with more complicated maps. A first partial result in this direction is the upper bound for the ground state energy obtained by Yau and Yin in \cite{YY}. 

The paper is organized as follows. In Section \ref{sec:fock} we introduce the formalism of second quantization, defining generalized Bogoliubov transformations and studying their properties. The main results of this section are Lemma \ref{lm:indu} and Lemma \ref{lm:conv-series} (taken from \cite{BS}) where we show how to expand the action of generalized Bogoliubov transformations in absolutely convergent infinite series. In Section \ref{sec:ex}, we define the excitation Hamiltonian $\cG_N^\beta$ and we state its most important properties (namely the bounds (\ref{eq:cond0}), (\ref{eq:comm0}) and (\ref{eq:impr})) in Theorem \ref{thm:gene}, whose long and technical proof is deferred to Section \ref{sec:prop}. In Section \ref{sec:cond}, we show how (\ref{eq:cond0}) and (\ref{eq:comm0}) can be used to show the bounds (\ref{eq:cond1}) and, more importantly, (\ref{eq:KN}) for the excitation vectors of low-energy states. In Section \ref{sec:diag}, we show how to diagonalize the quadratic part of $\cG_N^\beta$ using a second generalized Bogoliubov transformation. Using the results of Section \ref{sec:cond} and Section \ref{sec:diag}, we prove our main result, Theorem \ref{thm:main}, in Section \ref{sec:proof}.

\medskip

{\it Acknowledgement.} B.S. gratefully acknowledge support from the NCCR SwissMAP and from the Swiss National Foundation of Science through the SNF Grant ``Effective equations from quantum dynamics'' and the SNF Grant ``Dynamical and energetic properties of Bose-Einstein condensates''.

\section{Fock space}
\label{sec:fock}

Let 
\[ \cF = \bigoplus_{n \geq 0} L^2_s (\Lambda^{n}) = \bigoplus_{n \geq 0} L^2 (\Lambda)^{\otimes_s n} \]
be the bosonic Fock space over $L^2 (\Lambda)$, where $L^2_s (\Lambda^{n})$ is the subspace of $L^2 (\Lambda^n)$ consisting of wave functions that are symmetric w.r.t. permutations. We use the notation 
$\Omega = \{ 1, 0, \dots \} \in \cF$ for the vacuum vector in $\cF$. 

For $g \in L^2 (\Lambda)$, we define the creation operator $a^* (g)$ and the annihilation operator $a(g)$ by 
\[ \begin{split} 
(a^* (g) \Psi)^{(n)} (x_1, \dots , x_n) &= \frac{1}{\sqrt{n}} \sum_{j=1}^n g (x_j) \Psi^{(n-1)} (x_1, \dots , x_{j-1}, x_{j+1} , \dots , x_n) 
\\
(a (g) \Psi)^{(n)} (x_1, \dots , x_n) &= \sqrt{n+1} \int_\Lambda  \bar{g} (x) \Psi^{(n+1)} (x,x_1, \dots , x_n) \, dx \end{split} \]
Notice that $a^* (g)$ is the adjoint of $a(g)$ and that  creation and annihilation operators satisfy canonical commutation relations
\begin{equation}\label{eq:ccr} [a (g), a^* (h) ] = \langle g,h \rangle , \quad [ a(g), a(h)] = [a^* (g), a^* (h) ] = 0 \end{equation}
for all $g,h \in L^2 (\Lambda)$ (here $\langle g,h \rangle$ is the usual inner product on $L^2 (\Lambda)$). 

Since we consider a translation invariant system, it will be convenient to work in momentum space. {F}rom now on, let $\Lambda^* = 2\pi \bZ^3$. For $p \in \Lambda^*$, we define the normalized wave function $\ph_p  (x) = e^{-ip\cdot x}$ in $L^2 (\Lambda)$. We set
\begin{equation}\label{eq:ap} a^*_p = a^* (\ph_p), \quad \text{and } \quad  a_p = a (\ph_p) \end{equation} 
In other words, $a^*_p$ and $a_p$ create, respectively annihilate, a particle with momentum $p$. 

In some occasions, it will also be important to switch to position space (where it is easier to use the positivity of the potential $V(x)$). For this reason, we introduce operator valued distributions $\check{a}_x, \check{a}_x^*$ defined so that
\begin{equation}\label{eq:axf} a(f) = \int \bar{f} (x) \,  \check{a}_x \, dx , \quad a^* (f) = \int f(x) \, \check{a}_x^* \, dx  \end{equation}

On $\cF$, we also introduce the number of particles operator, defined by $(\cN\Psi)^{(n)} = n \Psi^{(n)}$. Notice that 
\[ \cN = \sum_{p \in \Lambda^*} a_p^* a_p  = \int \check{a}^*_x \check{a}_x \, dx \, . \]
It is important to observe that creation and annihilation operators are bounded by the square root of the number of particles operator, i.e.   
\begin{equation}\label{eq:abd} \| a (f) \Psi \| \leq \| f \| \| \cN^{1/2} \Psi \|, \quad \| a^* (f) \Psi \| \leq \| f \| \| (\cN+1)^{1/2} \Psi \| 
\end{equation}
for all $f \in L^2 (\Lambda)$. 

We will often deal with quadratic (and translation invariant) expressions in creation and annihilation operators. For $f \in \ell^2 (\Lambda^*)$, we define
\begin{equation}\label{eq:AAdef} A_{\sharp_1, \sharp_2} (f) = \sum_{p \in \Lambda^*} f_p \, a_{\alpha_1 p}^{\sharp_1} a_{\alpha_2 p}^{\sharp_2} \end{equation}
where $\sharp_1, \sharp_2 \in \{ \cdot, * \}$, and where we use the notation $a^\sharp = a$, if $\sharp = \cdot$, and $a^\sharp = a^*$ if $\sharp = *$. Also, $\alpha_j \in \{ \pm 1 \}$ is chosen so that $\alpha_1 = 1$, if $\sharp_1 = *$, $\alpha_1 = -1$ if $\sharp_1 = \cdot$, $\alpha_2 = 1$ if $\sharp_2 = \cdot$ and $\alpha_2 = -1$ if $\sharp_2 = *$. Notice that, in position space  
\[ A_{\sharp_1, \sharp_2} (f) = \int dx dy \, \check{f} (x-y) \, \check{a}_x^{\sharp_1} \, \check{a}_y^{\sharp_2} \]
with the inverse Fourier transform \[ \check{f} (x) = \sum_{p \in \Lambda^*} f_p e^{ip\cdot x} \, . \]  
In the next simple lemma, taken from \cite{BS}, we show how to bound quadratic operators of the form (\ref{eq:AAdef}). 
\begin{lemma} \label{lm:Abds} Let $f \in \ell^2 (\Lambda^*)$ and, if $\sharp_1 = \cdot$ and $\sharp_2 = *$ assume additionally that $j \in \ell^1 (\Lambda^*)$. Then we have, for any $\Psi \in \cF$, 
\[ \begin{split} 
\| A_{\sharp_1, \sharp_2} (f) \Psi \| &\leq \sqrt{2} \| (\cN+1) \Psi \| \cdot \left\{ \begin{array}{ll}  \| f \|_2 + \| f \|_1 \, \qquad &\text{if } \sharp_1 = \cdot, \sharp_2 = * \\  \| f \|_2  \qquad &\text{otherwise} \end{array} \right. \end{split} \]
\end{lemma}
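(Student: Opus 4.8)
The plan is to estimate $A_{\sharp_1,\sharp_2}(f)\Psi$ sector by sector, exploiting the bound \eqref{eq:abd} that controls creation and annihilation operators by powers of $\cN^{1/2}$. The cleanest route is to treat the three relevant cases separately according to which of $\sharp_1,\sharp_2$ are creation operators. First I would consider the case $\sharp_1 = *, \sharp_2 = \cdot$, i.e. the operator $\sum_p f_p\, a_p^* a_p$ (or with a sign, $\sum_p f_p\, a_{-p}^* a_{-p}$, which after relabelling is the same). This is diagonal in momentum, so on a vector $\Psi$ with $\Psi^{(n)}$ supported on the $n$-particle sector one simply gets $\|\sum_p f_p a_p^* a_p \Psi\| \le \|f\|_\infty \|\cN\Psi\| \le \|f\|_2 \|\cN\Psi\|$, which is even better than claimed. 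The case $\sharp_1 = *, \sharp_2 = *$, i.e. $\sum_p f_p\, a_p^* a_{-p}^*$, and the case $\sharp_1 = \cdot, \sharp_2 = \cdot$, i.e. $\sum_p f_p\, a_{-p} a_p$, are adjoints of each other, so it suffices to bound one; here I would expand $\|A\Psi\|^2$, move one pair of operators across using the commutation relations \eqref{eq:ccr}, and bound the resulting double sum by Cauchy--Schwarz in $p$, using $\sum_p |f_p| \|a_{\pm p}\Psi^{(n)}\| \cdot (\ldots)$ together with $\sum_p \|a_p \Psi^{(n)}\|^2 = \|\cN^{1/2}\Psi^{(n)}\|^2 \le n\|\Psi^{(n)}\|^2$. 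This yields the constant $\sqrt 2$ and the $\|f\|_2$ bound, with the extra commutator term being lower order (controlled by $\|f\|_2 \|(\cN+1)^{1/2}\Psi\|^2$) and absorbed into the $(\cN+1)$ on the right.

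The genuinely different case is $\sharp_1 = \cdot, \sharp_2 = *$, i.e. $\sum_p f_p\, a_{-p} a_{p}$ wait --- here $\alpha_1 = -1$ since $\sharp_1 = \cdot$ and $\alpha_2 = -1$ since $\sharp_2 = *$, so the operator is $\sum_p f_p\, a_{-p} a_{-p}^*$; relabelling $p \to -p$ this is $\sum_p f_p\, a_p a_p^*$. Now I would use the canonical commutation relation to write $a_p a_p^* = a_p^* a_p + 1$, so that $\sum_p f_p\, a_p a_p^* = \sum_p f_p\, a_p^* a_p + \big(\sum_p f_p\big)$. The first term is handled exactly as in the diagonal case above, giving $\|f\|_2 \|\cN\Psi\|$; the second term is the c-number $\sum_p f_p = \check f(0)$ times the identity, and $|\sum_p f_p| \le \|f\|_1$, contributing $\|f\|_1 \|\Psi\| \le \|f\|_1\|(\cN+1)\Psi\|$. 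This is precisely the source of the extra $\|f\|_1$ term in the statement and explains why the hypothesis $f \in \ell^1(\Lambda^*)$ is needed only in that case.

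The main (and only) obstacle is bookkeeping: one has to keep careful track of the sign conventions $\alpha_j$ encoded in \eqref{eq:AAdef}, to make sure that in each of the four $(\sharp_1,\sharp_2)$ combinations the operator really has the form claimed (diagonal, or $a^*a^*$, or $aa$, or $aa^*$) after the relabelling $p \mapsto -p$ permitted because the sum runs over all of $\Lambda^* = 2\pi\bZ^3$. Once the correct normal-ordered form is identified in each case, the estimates are immediate from \eqref{eq:abd} and \eqref{eq:ccr}, with no further subtlety; there is no convergence issue since $f \in \ell^2$ (and $\ell^1$ where needed) makes all the sums absolutely convergent when tested against a fixed $\Psi \in \cF$ with finitely many nonzero sectors, and the general case follows by density. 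I would present the argument by doing the $aa$ (equivalently $a^*a^*$) case in detail via Cauchy--Schwarz and then remark that the remaining cases reduce to it or to the trivial diagonal bound after normal ordering.
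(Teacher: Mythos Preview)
Your approach is correct and standard: normal-order in each of the four $(\sharp_1,\sharp_2)$ cases, bound the diagonal and the commutator (c-number) contributions directly, and handle the $aa$/$a^*a^*$ pair via a sector-wise Cauchy--Schwarz estimate (the adjoint relation between them is indeed enough once you work sector by sector, since the operator norm of $A_{\cdot,\cdot}(f)$ from the $n$-sector to the $(n-2)$-sector equals that of $A_{*,*}$ from the $(n-2)$-sector to the $n$-sector, and $(m+1)(m+2)\le 2(m+1)^2$ produces the $\sqrt{2}$). The paper does not give its own proof of this lemma but cites \cite{BS}; your sketch is essentially the argument one finds there. One cosmetic slip: after the substitution $p\to -p$ in the $(\cdot,*)$ case you obtain $\sum_p f_{-p}\,a_p a_p^*$, not $\sum_p f_p\,a_p a_p^*$, but since $\|f_{-\cdot}\|_1=\|f\|_1$ and $\|f_{-\cdot}\|_2=\|f\|_2$ this does not affect the bound.
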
 

As already explained in the introduction, we will work on certain subspaces of $\cF$. Recall that $\ph_0 \in L^2 (\Lambda)$ is the constant wave function with $\ph_0 (x) = 1$ for all $x \in \Lambda$. We denote by $L^2_{\perp} (\Lambda)$ the orthogonal complement of the one dimensional space spanned by $\ph_0$ in $L^2 (\Lambda)$. We
define 
\[ \cF_{+} = \bigoplus_{n \geq 0} L^2_{\perp} (\Lambda)^{\otimes_s n} \, . \]
as the Fock space constructed over $L^2_\perp (\Lambda)$, i.e. the Fock space generated by creation and annihilation operators $a_p^*, a_p$, with $p \in \Lambda^*_+ := 2\pi \bZ^3 \backslash \{ 0 \}$. On $\cF_+$, we denote the number of particles operator by 
\[ \cN_+ = \sum_{p \in \Lambda^*_+} a_p^* a_p \]
We will also need a truncated version of $\cF_+$. For $N \in \bN$, we define 
\[ \cF_{+}^{\leq N} = \bigoplus_{n=0}^N L^2_{\perp} (\Lambda)^{\otimes_s n} \, . \]
On $\cF_+^{\leq N}$, we construct modified creation and annihilation operators. For $f \in L^2_\perp (\Lambda)$, we set 
\[ b (f) = \sqrt{\frac{N- \cN_+}{N}} \, a (f), \qquad \text{and } \quad  b^* (f) = a^* (f) \, \sqrt{\frac{N-\cN_+}{N}} \]
We have $b(f), b^* (f) : \cF_+^{\leq N} \to \cF_+^{\leq N}$. As we will discuss in the next section, the importance of these fields arises from the application of the map $U$, defined in (\ref{eq:Uph-def}), since, for instance,  
\begin{equation}\label{eq:UaU} U a^* (f) a(\ph_0) U^* = a^* (f) \sqrt{N-\cN_+}  = \sqrt{N} b^* (f) 
\end{equation}
Based on (\ref{eq:UaU}), we can interpret $b^* (f)$ as an operator exciting a particle from the condensate into its orthogonal complement. Compared with the standard fields $a^*,a$, the modified operators $b^*,b$ have an important advantage; they create (or annihilate) excitations but, at the same time, they annihilate (or create) a particle in the condensate, preserving thus the total number of particles.  

It is convenient to introduce modified creation and annihilation operators in momentum space, setting
\[ b_p = \sqrt{\frac{N-\cN_+}{N}} \, a_p, \qquad \text{and } \quad  b^*_p = a^*_p \, \sqrt{\frac{N-\cN_+}{N}} \]
for all $p \in \Lambda^*_+$ and operator valued distributions 
in position space 
\[ \check{b}_x = \sqrt{\frac{N-\cN_+}{N}} \, \check{a}_x, \qquad \text{and } \quad  \check{b}^*_x = \check{a}^*_x \, \sqrt{\frac{N-\cN_+}{N}} \]
for all $x \in \Lambda$. 

Modified creation and annihilation operators satisfy the commutation relations 
\begin{equation}\label{eq:comm-bp} \begin{split} [ b_p, b_q^* ] &= \left( 1 - \frac{\cN_+}{N} \right) \delta_{p,q} - \frac{1}{N} a_q^* a_p 
\\ [ b_p, b_q ] &= [b_p^* , b_q^*] = 0 
\end{split} \end{equation}
and, in position space, 
\begin{equation}\label{eq:comm-b}
\begin{split}  [ \check{b}_x, \check{b}_y^* ] &= \left( 1 - \frac{\cN_+}{N} \right) \delta (x-y) - \frac{1}{N} \check{a}_y^* \check{a}_x \\ 
[ \check{b}_x, \check{b}_y ] &= [ \check{b}_x^* , \check{b}_y^*] 
= 0 
\end{split} \end{equation}
Furthermore 
\begin{equation}\label{eq:comm-b2}
\begin{split}
[\check{b}_x, \check{a}_y^* \check{a}_z] &=\delta (x-y)\check{b}_z, \qquad 
[\check{b}_x^*, \check{a}_y^* \check{a}_z] = -\delta (x-z) \check{b}_y^*
\end{split} \end{equation}
It follows that $[ \check{b}_x, \cN_+ ] = \check{b}_x$, $[ \check{b}_x^* , \cN_+ ] = - \check{b}_x^*$ and, in momentum space, $[b_p , \cN_+] = b_p$, $[b_p^*, \cN_+] = - b_p^*$. With (\ref{eq:abd}), we obtain 
\begin{equation}\label{lm:bbds}
\begin{split} 
\| b(f) \xi \| &\leq \| f \| \left\| \cN_+^{1/2} \left( \frac{N+1-\cN_+}{N} \right)^{1/2} \xi \right\| \\ 
\| b^* (f) \xi \| &\leq \| f \| \left\| (\cN_+ +1)^{1/2} \left( \frac{N-\cN_+ }{N} \right)^{1/2} \xi \right\|
\end{split} \end{equation}
for all $f \in L^2_\perp (\Lambda)$ and $\xi \in \cF^{\leq N}_+$. Since $\cN_+  \leq N$ on $\cF_+^{\leq N}$, $b(f), b^* (f)$ are bounded operators with $\| b(f) \|, \| b^* (f) \| \leq (N+1)^{1/2} \| f \|$. 

We will consider quadratic expressions in the $b$-fields. As we did in (\ref{eq:AAdef}), we restrict our attention to translation invariant operators. For $f \in \ell^2 (\Lambda^*_+)$, we let
\[ B_{\sharp_1, \sharp_2} (f) = \sum_{p \in \Lambda^*} f_p b_{\alpha_1 p}^{\sharp_1} b_{\alpha_2 p}^{\sharp_2} \]
with $\alpha_1 = 1$ if $\sharp_1 = *$, $\alpha_1 = -1$ if $\sharp_1 = \cdot$, $\alpha_2 = 1$ if $\sharp_2 = \cdot$ and $\alpha_2 = -1$ if $\sharp_2 = *$. By construction, $B_{\sharp_1, \sharp_2} (f) : \cF_+^{\leq N} \to \cF_+^{\leq N}$. In position space, we find 
\[ \begin{split} 
B_{\sharp_1, \sharp_2} (f) &= \int \check{f} (x-y) \, b^{\sharp_1}_x b^{\sharp_2}_y \, dx dy  
\end{split}  \]
From Lemma \ref{lm:Abds}, we obtain the following bounds.
\begin{lemma}\label{lm:Bbds}
Let $f \in \ell^2 (\Lambda^*_+)$. If $\sharp_1 = \cdot$ and $\sharp_2 = *$, we assume additionally that $f \in \ell^1 (\Lambda^*_+)$. Then
\[ \begin{split} \frac{\| B_{\sharp_1,\sharp_2} (f) \xi \|}{\left\| (\cN_+ +1) \left(\frac{N+2-\cN_+ }{N} \right) \xi \right\|} &
\leq \sqrt{2} \cdot  \left\{ \begin{array}{ll}  \| f \|_2 + \| f \|_1 \qquad &\text{if } \sharp_1 = \cdot, \sharp_2 = * \\  \| f \|_2  \qquad &\text{otherwise} \end{array} \right. 
\end{split} \]
for all $\xi \in \cF^{\leq N}$. Since $\cN_+  \leq N$ on $\cF^{\leq N}_+$, the operator $B_{\sharp_1 , \sharp_2} (f)$ is bounded, with 
\[ \begin{split} \| B_{\sharp_1, \sharp_2} (f) \| &\leq \sqrt{2} N \left\{ \begin{array}{ll} \| f \|_2 + \| f \|_1 \quad &\text{if } \sharp_1 = \cdot, \sharp_2 = * \\  \| f \|_2  \qquad &\text{otherwise} \end{array} \right. \end{split} \]
\end{lemma}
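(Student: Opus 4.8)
The plan is to deduce Lemma~\ref{lm:Bbds} from Lemma~\ref{lm:Abds}, by extracting from each quadratic expression in the $b$-fields the corresponding quadratic expression in the $a$-fields, multiplied by a bounded function of $\cN_+$ that acts first on the test vector. Writing $b_q=\sqrt{(N-\cN_+)/N}\,a_q$, $b^*_q=a^*_q\sqrt{(N-\cN_+)/N}$ and using the elementary identities $\sqrt{(N-\cN_+)/N}\,a_q=a_q\sqrt{(N+1-\cN_+)/N}$ and $\sqrt{(N-\cN_+)/N}\,a^*_q=a^*_q\sqrt{(N-1-\cN_+)/N}$ (which follow from $[a_q,\cN_+]=a_q$ and $[a^*_q,\cN_+]=-a^*_q$), I would commute, term by term in $B_{\sharp_1,\sharp_2}(f)=\sum_{p\in\Lambda^*}f_p\,b^{\sharp_1}_{\alpha_1 p}b^{\sharp_2}_{\alpha_2 p}$, all $\cN_+$-dependent factors to the far right. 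Since they do not depend on $p$, they factor out of the sum, and one obtains an operator identity
\[ B_{\sharp_1,\sharp_2}(f)=A_{\sharp_1,\sharp_2}(f)\,P_{\sharp_1,\sharp_2}(\cN_+), \]
where $P_{\sharp_1,\sharp_2}$ is a product of two factors of the form $\sqrt{(N+j-\cN_+)/N}$ with $j\in\{-1,0,1,2\}$: explicitly $P_{\cdot,\cdot}(\cN_+)=\sqrt{(N+2-\cN_+)(N+1-\cN_+)}/N$, $P_{*,\cdot}(\cN_+)=(N+1-\cN_+)/N$, $P_{*,*}(\cN_+)=\sqrt{(N-1-\cN_+)(N-\cN_+)}/N$ and $P_{\cdot,*}(\cN_+)=(N-\cN_+)/N$ (in the last case one can also argue directly, $b_{-p}b^*_{-p}=\sqrt{(N-\cN_+)/N}\,a_{-p}a^*_{-p}\,\sqrt{(N-\cN_+)/N}=\tfrac{N-\cN_+}{N}\,a_{-p}a^*_{-p}$, since $a_{-p}a^*_{-p}$ commutes with $\cN_+$).

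Next I would check that $P_{\sharp_1,\sharp_2}(\cN_+)$ is a well-defined, bounded, nonnegative diagonal operator on $\cF_+^{\leq N}$ satisfying $0\leq P_{\sharp_1,\sharp_2}(\cN_+)\leq (N+2-\cN_+)/N$. For every eigenvalue $n\in\{0,\dots,N\}$ of $\cN_+$ the product of the two radicands is nonnegative; this is immediate except for $\sharp_1=\sharp_2=*$, where the radicand $N-1-\cN_+$ is negative on the top sector $\cN_+=N$, but there the companion factor $N-\cN_+$ vanishes, so the product is again $0$ and the displayed identity still holds on $\cF_+^{\leq N}$ (this is the one point that requires a little care). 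The bound $0\leq P_{\sharp_1,\sharp_2}(n)\leq (N+2-n)/N$ is then a direct pointwise estimate: e.g.\ $\sqrt{(N+2-n)(N+1-n)}\leq N+2-n$, and $\sqrt{(N-1-n)(N-n)}\leq N+2-n$ because $(N-1-n)(N-n)-(N+2-n)^2=-5(N-n)-4<0$ for $n\leq N$. Since $\cN_+ +1$ and $P_{\sharp_1,\sharp_2}(\cN_+)$ are commuting nonnegative functions of $\cN_+$, it follows that $(\cN_+ +1)\,P_{\sharp_1,\sharp_2}(\cN_+)\leq (\cN_+ +1)(N+2-\cN_+)/N$ as operators on $\cF_+^{\leq N}$.

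To conclude, I would apply Lemma~\ref{lm:Abds} to $A_{\sharp_1,\sharp_2}(f)$ with the vector $\Psi=P_{\sharp_1,\sharp_2}(\cN_+)\xi$ and use that $\cN$ coincides with $\cN_+$ on $\cF_+^{\leq N}$, obtaining
\[ \|B_{\sharp_1,\sharp_2}(f)\xi\|=\|A_{\sharp_1,\sharp_2}(f)\,P_{\sharp_1,\sharp_2}(\cN_+)\,\xi\|\leq \sqrt{2}\,c(f)\,\|(\cN_+ +1)\,P_{\sharp_1,\sharp_2}(\cN_+)\,\xi\|, \]
where $c(f)=\|f\|_2+\|f\|_1$ if $\sharp_1=\cdot$, $\sharp_2=*$ (precisely the case in which Lemma~\ref{lm:Abds} requires $f\in\ell^1$ for $A_{\cdot,*}$) and $c(f)=\|f\|_2$ otherwise. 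Combined with the operator inequality of the previous paragraph, this gives the first bound of the lemma. The operator-norm estimate then follows by taking the supremum over $\xi$ with $\|\xi\|=1$ and bounding the diagonal operator $(\cN_+ +1)(N+2-\cN_+)/N$ by $N$, using $\cN_+\leq N$ on $\cF_+^{\leq N}$. The whole argument is routine once Lemma~\ref{lm:Abds} is available; the only part that deserves attention is the bookkeeping of the $\cN_+$-shifts, in particular the harmless sign issue in the $(*,*)$ case noted above.
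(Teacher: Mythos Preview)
Your proof is correct and follows precisely the route the paper intends: the paper's own ``proof'' is just the sentence ``From Lemma~\ref{lm:Abds}, we obtain the following bounds,'' and you have carried out exactly the factorization $B_{\sharp_1,\sharp_2}(f)=A_{\sharp_1,\sharp_2}(f)\,P_{\sharp_1,\sharp_2}(\cN_+)$ and pointwise estimate on $P_{\sharp_1,\sharp_2}$ that this reference implicitly invokes. The only cosmetic point is that the final bound $(\cN_+ +1)(N+2-\cN_+)/N\leq N$ (a quadratic in $\cN_+$ with maximum $(N+3)^2/(4N)$) strictly requires $N\geq 3$, which is of course harmless in the paper's large-$N$ setting.
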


We will need to consider products of several creation and annihilation operators. In particular, two types of monomials in creation and annihilation operators will play an important role in our analysis. For $f_1, \dots , f_n \in \ell_2 (\Lambda^*_+)$, $\sharp = (\sharp_1, \dots , \sharp_n), \flat = (\flat_0, \dots , \flat_{n-1}) \in \{ \cdot, * \}^n$, we set 
\begin{equation}\label{eq:Pi2}
\begin{split}  
\Pi^{(2)}_{\sharp, \flat} &(f_1, \dots , f_n) \\ &= \sum_{p_1, \dots , p_n \in \Lambda^*}  b^{\flat_0}_{\alpha_0 p_1} a_{\beta_1 p_1}^{\sharp_1} a_{\alpha_1 p_2}^{\flat_1} a_{\beta_2 p_2}^{\sharp_2} a_{\alpha_2 p_3}^{\flat_2} \dots  a_{\beta_{n-1} p_{n-1}}^{\sharp_{n-1}} a_{\alpha_{n-1} p_n}^{\flat_{n-1}} b^{\sharp_n}_{\beta_n p_n} \, \prod_{\ell=1}^n f_\ell (p_\ell)  \end{split} \end{equation}
where, for every $\ell=0,1, \dots , n$, we set $\alpha_\ell = 1$ if $\flat_\ell = *$, $\alpha_\ell =    -1$ if $\flat_\ell = \cdot$, $\beta_\ell = 1$ if $\sharp_\ell = \cdot$ and $\beta_\ell = -1$ if $\sharp_\ell = *$. In (\ref{eq:Pi2}), we impose the condition that for every $j=1,\dots, n-1$, we have either $\sharp_j = \cdot$ and $\flat_j = *$ or $\sharp_j = *$ and $\flat_j = \cdot$ (so that the product $a_{\beta_\ell p_\ell}^{\sharp_\ell} a_{\alpha_\ell p_{\ell+1}}^{\flat_\ell}$ always preserves the number of particles, for all $\ell =1, \dots , n-1$). With this assumption, we find that the operator $\Pi^{(2)}_{\sharp,\flat} (f_1, \dots , f_n)$ maps $\cF^{\leq N}_+$ into itself. If, for some $\ell=1, \dots , n$, $\flat_{\ell-1} = \cdot$ and $\sharp_\ell = *$ (i.e. if the product $a_{\alpha_{\ell-1} p_\ell}^{\flat_{\ell-1}} a_{\beta_\ell p_\ell}^{\sharp_\ell}$ for $\ell=2,\dots , n$, or the product $b_{\alpha_0 p_1}^{\flat_0} a_{\beta_1 p_1}^{\sharp_1}$ for $\ell=1$, is not normally ordered) we require additionally that $f_\ell  \in \ell^1 (\Lambda^*_+)$. In position space, the same operator can be written as 
\begin{equation}\label{eq:Pi2-pos} \Pi^{(2)}_{\sharp, \flat} (f_1, \dots , f_n) = \int   \check{b}^{\flat_0}_{x_1} \check{a}_{y_1}^{\sharp_1} \check{a}_{x_2}^{\flat_1} \check{a}_{y_2}^{\sharp_2} \check{a}_{x_3}^{\flat_2} \dots  \check{a}_{y_{n-1}}^{\sharp_{n-1}} \check{a}_{x_n}^{\flat_{n-1}} \check{b}^{\sharp_n}_{y_n} \, \prod_{\ell=1}^n \check{f}_\ell (x_\ell - y_\ell) \, dx_\ell dy_\ell \end{equation}
An operator of the form (\ref{eq:Pi2}), (\ref{eq:Pi2-pos}) with all the properties listed above, will be called a $\Pi^{(2)}$-operator of order $n$.

For $g, f_1, \dots , f_n \in \ell_2 (\Lambda^*_+)$, $\sharp = (\sharp_1, \dots , \sharp_n)\in \{ \cdot, * \}^n$, $\flat = (\flat_0, \dots , \flat_{n}) \in \{ \cdot, * \}^{n+1}$, we also define the operator 
\begin{equation}\label{eq:Pi1}
\begin{split} \Pi^{(1)}_{\sharp,\flat} &(f_1, \dots , f_n;g) \\ &= \sum_{p_1, \dots , p_n \in \Lambda^*}  b^{\flat_0}_{\alpha_0, p_1} a_{\beta_1 p_1}^{\sharp_1} a_{\alpha_1 p_2}^{\flat_1} a_{\beta_2 p_2}^{\sharp_2} a_{\alpha_2 p_3}^{\flat_2} \dots a_{\beta_{n-1} p_{n-1}}^{\sharp_{n-1}} a_{\alpha_{n-1} p_n}^{\flat_{n-1}} a^{\sharp_n}_{\beta_n p_n} a^{\flat n} (g) \, \prod_{\ell=1}^n f_\ell (p_\ell) \end{split} \end{equation}
where $\alpha_\ell$ and $\beta_\ell$ are defined as above. Also here, we impose the condition that, for all $\ell = 1, \dots , n$, either $\sharp_\ell = \cdot$ and $\flat_\ell = *$ or $\sharp_\ell = *$ and $\flat_\ell = \cdot$. This implies that $\Pi^{(1)}_{\sharp,\flat} (f_1, \dots , f_n;g)$ maps $\cF^{\leq N}_+$ back into $\cF_+^{\leq N}$. Additionally, we assume that $f_\ell \in \ell^1 (\Lambda^*_+)$, if $\flat_{\ell-1} = \cdot$ and $\sharp_\ell = *$ for some $\ell = 1,\dots , n$ (i.e. if the pair $a_{\alpha_{\ell-1} p_\ell}^{\flat_{\ell-1}} a^{\sharp_\ell}_{\beta_\ell p_\ell}$ is not normally ordered). In position space, the same operator can be written as
\begin{equation}\label{eq:Pi1-pos} \Pi^{(1)}_{\sharp,\flat} (f_1, \dots ,f_n;g) = \int \check{b}^{\flat_0}_{x_1} \check{a}_{y_1}^{\sharp_1} \check{a}_{x_2}^{\flat_1} \check{a}_{y_2}^{\sharp_2} \check{a}_{x_3}^{\flat_2} \dots  \check{a}_{y_{n-1}}^{\sharp_{n-1}} \check{a}_{x_n}^{\flat_{n-1}} \check{a}^{\sharp_n}_{y_n} \check{a}^{\flat n} (g) \, \prod_{\ell=1}^n \check{f}_\ell (x_\ell - y_\ell) \, dx_\ell dy_\ell \end{equation}
An operator of the form (\ref{eq:Pi1}), (\ref{eq:Pi1-pos}) will be called a $\Pi^{(1)}$-operator of order $n$. Operators of the form $b(f)$, $b^* (f)$, for a $f \in \ell^2 (\Lambda^*_+)$, will be called $\Pi^{(1)}$-operators of order zero. 

In the next lemma we show how to bound $\Pi^{(2)}$- and $\Pi^{(1)}$-operators. The simple proof, based on Lemma \ref{lm:Abds}, can be found in \cite{BS}. 
\begin{lemma}\label{lm:Pi-bds}
Let $n \in \bN$, $g, f_1, \dots , f_n \in \ell^2 (\Lambda^*_+)$. Assume that $\Pi^{(2)}_{\sharp,\flat} (f_1,\dots , f_n)$ and $\Pi^{(1)}_{\sharp,\flat} (f_1,\dots, f_n ; g)$ are defined as in (\ref{eq:Pi2}), (\ref{eq:Pi1}). Then  
\begin{equation}\label{eq:Pi-bds} \begin{split} 
\left\| \Pi^{(2)}_{\sharp,\flat} (f_1,\dots ,f_n) \xi \right\| &\leq 6^n \prod_{\ell=1}^n K_\ell^{\flat_{\ell-1}, \sharp_\ell} \left\| (\cN_+ +1)^n \left(1- \frac{\cN_+ -2}{N} \right) \xi \right\| \\
 \left\| \Pi^{(1)}_{\sharp,\flat} (f_1,\dots , f_n;g) \xi \right\| 
 &\leq 6^n \| g \| \prod_{\ell=1}^n K_\ell^{\flat_{\ell-1}, \sharp_\ell} \left\| (\cN_+ +1)^{n+1/2} \left(1- \frac{\cN_+ -2}{N} \right)^{1/2} \xi \right\| 
\end{split} \end{equation} 
where
\[ K_\ell^{\flat_{\ell-1}, \sharp_\ell} = \left\{ \begin{array}{ll} \|f_\ell \|_2 + \| f_\ell \|_1  \quad &\text{if } \flat_{\ell-1} = \cdot \text{ and } \sharp_\ell = * \\
\| f_\ell \|_2 \quad &\text{otherwise} \end{array} \right. \]
Since $\cN_+  \leq N$ on $\cF^{\leq N}_+$, it follows that  
\[ \begin{split} 
\left\|  \Pi^{(2)}_{\sharp,\flat} (f_1,\dots, f_n) \right\| &\leq (12 N)^n \prod_{\ell=1}^n K_\ell^{\flat_{\ell-1}, \sharp_\ell} \\ 
\left\|  \Pi^{(1)}_{\sharp,\flat} (f_1,\dots ,f_n;g) \right\| &\leq (12 N)^n \sqrt{N} \| g \| \prod_{\ell=1}^n K_\ell^{\flat_{\ell-1}, \sharp_\ell} 
\end{split} \]
\end{lemma}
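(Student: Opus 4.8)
The plan is to argue in position space, using the representations (\ref{eq:Pi2-pos}) and (\ref{eq:Pi1-pos}), and to prove both bounds by a single induction on the order. The base cases are immediate: a $\Pi^{(1)}$-operator of order zero is $b(f)$ or $b^*(f)$ and is controlled by (\ref{lm:bbds}); a $\Pi^{(2)}$-operator of order one is a translation-invariant quadratic monomial $B_{\flat_0,\sharp_1}(f_1)$ in the modified fields, so Lemma \ref{lm:Bbds} gives the claim with room to spare (note $\sqrt 2\le 6$ and $1-(\cN_+-2)/N=(N+2-\cN_+)/N$). The inductive engine is a reduction that lowers the order by one: from a $\Pi^{(2)}$-operator of order $n$ one extracts the rightmost modified field $\check b^{\sharp_n}_{y_n}$, which after integrating out $y_n$ turns the remaining string into a $\Pi^{(1)}$-operator of order $n-1$ whose $g$ is a translate of $\check f_n$, so that $\|g\|_2=\|f_n\|_2$; from a $\Pi^{(1)}$-operator of order $n$ one extracts the last two fields $\check a^{\sharp_n}_{y_n}\,\check a^{\flat_n}(g)$, again producing a $\Pi^{(1)}$-operator of order $n-1$.

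Each extraction is handled by a short case analysis according to whether the field being moved onto $\xi$ is a creation or an annihilation operator. An annihilation field acts directly on $\xi$ via the first bounds in (\ref{eq:abd}) and (\ref{lm:bbds}), costing a factor $\|f_\ell\|_2$ (from the contraction it carries) and a factor $\cN_+^{1/2}$; a creation field is commuted through the remaining string using the commutation relations (\ref{eq:comm-bp})--(\ref{eq:comm-b2}), its diagonal part estimated by the second bounds in (\ref{eq:abd}) and (\ref{lm:bbds}) and the commutator terms --- which are again $\Pi^{(1)}$- or $\Pi^{(2)}$-operators of strictly lower order, possibly carrying a harmless factor $1/N$ --- estimated by Lemma \ref{lm:Abds}. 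It is exactly in these commutators that, when the relevant pair $\check a^{\flat_{\ell-1}}\check a^{\sharp_\ell}$ is not normally ordered, i.e.\ $\flat_{\ell-1}=\cdot$ and $\sharp_\ell=*$, the $\ell^1$-norm $\|f_\ell\|_1$ is genuinely needed, which is precisely why $K_\ell^{\flat_{\ell-1},\sharp_\ell}$ includes it only in that case. A term-by-term check gives that each extraction costs a constant $\le 6\,K_\ell^{\flat_{\ell-1},\sharp_\ell}$, raises the power of $(\cN_+ +1)$ by one, and releases a factor $(1-(\cN_+-2)/N)^{1/2}$ per modified field (two of them, hence power one, in a $\Pi^{(2)}$-operator; one, hence power $1/2$, in a $\Pi^{(1)}$-operator), the additional half-power of $(\cN_+ +1)$ and the additional factor $\|g\|$ in the $\Pi^{(1)}$-bound arising from the single $g$-smeared field $\check a^{\flat_n}(g)$. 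Iterating over the $n$ extractions yields the two displayed operator inequalities, and the operator-norm bounds follow at once by inserting $\cN_+\le N$ on $\cF_+^{\leq N}$, which turns the weight $(\cN_+ +1)^n(1-(\cN_+-2)/N)$ into a quantity of size at most $N^n$ (and $(\cN_+ +1)^{n+1/2}(1-(\cN_+-2)/N)^{1/2}$ into one of size at most $N^{n+1/2}$), so that $6^n$ is upgraded to $(12N)^n$, respectively to $(12N)^n\sqrt N\,\|g\|$.

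The step I expect to be the main obstacle is exactly this bookkeeping in the case analysis. Every time an annihilation field --- including one hidden inside the normalization $\sqrt{(N-\cN_+)/N}$ of a modified operator --- is commuted past the string, it spawns a bundle of lower-order $\Pi$-terms, and one must verify that each of them comes dressed with the correct power of $(\cN_+ +1)$ so that the weights telescope and the factor $1-(\cN_+-2)/N$ saturates at power $1$ (resp.\ $1/2$) instead of compounding with every modified field; one must also keep track of which rearrangements remain normally ordered, so that $\ell^1$-norms are invoked only where the statement allows. No single estimate is difficult; it is the proliferation of cases that makes the argument lengthy, which is why the details are taken from \cite{BS}.
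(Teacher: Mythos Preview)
Your proposal is correct and aligns with the approach the paper indicates: the paper does not give a proof but only states that the ``simple proof, based on Lemma~\ref{lm:Abds}, can be found in \cite{BS}'', and your inductive extraction in position space, reducing the order by one at each step and invoking Lemma~\ref{lm:Abds} (and its $b$-field variant Lemma~\ref{lm:Bbds}) on the extracted pair, is exactly the argument carried out there. Your identification of where the $\ell^1$-norm enters (precisely at the non-normally-ordered pairs $\flat_{\ell-1}=\cdot$, $\sharp_\ell=*$) and of how the weights in $(\cN_++1)$ and $(1-(\cN_+-2)/N)$ accumulate is accurate, and your honest acknowledgment that the constant $6$ per step requires the full case-by-case bookkeeping of \cite{BS} is appropriate.
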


Next, we introduce generalized Bogoliubov transformations and we discuss their properties. For $\eta \in \ell^2 (\Lambda^*_+)$ with $\eta_{-p} = \eta_{p}$ for all $p \in \Lambda^*_+$, we define 
\begin{equation}\label{eq:defB} 
B(\eta) = \frac{1}{2} \sum_{p\in \Lambda^*_+}  \left( \eta_p b_p^* b_{-p}^* - \bar{\eta}_p b_p b_{-p} \right) \,. \end{equation}
and we consider 
\begin{equation}\label{eq:eBeta} 
e^{B(\eta)} = \exp \left[ \frac{1}{2} \sum_{p \in \Lambda^*_+}   \left( \eta_p b_p^* b_{-p}^* - \bar{\eta}_p  b_p b_{-p} \right) \right] 
\end{equation}
Notice that $B(\eta), e^{B(\eta)} : \cF_{+}^{\leq N} \to \cF_{+}^{\leq N}$. We refer to unitary operators of the form (\ref{eq:eBeta}) as generalized Bogoliubov transformations. The name arises from the observation that, on states with $\cN_+  \ll N$, we have $b_p \simeq a_p$, $b^*_p \simeq a_p^*$ and therefore 
\begin{equation}\label{eq:wtB} B(\eta) \simeq \wt{B} (\eta) = \frac{1}{2} \sum_{p \in \Lambda^*_+} \left( \eta_p a_p^* a_{-p}^* - \bar{\eta}_p a_p a_{-p} \right)  \end{equation}
Since $\wt{B} (\eta)$ is quadratic in creation and annihilation operators, $\exp (\wt{B} (\eta))$ is a standard Bogoliubov transformation, whose action on creation and annihilation operators is explicitly 
given by 
\begin{equation}\label{eq:act-Bog} e^{-\wt{B} (\eta)} a_p e^{\wt{B} (\eta)} = \cosh (\eta_p) a_p + \sinh (\eta_p) a^*_{-p} \, . 
\end{equation}
As discussed in the introduction, (\ref{eq:wtB}) does not map $\cF^{\leq N}$ into itself. For this reason, in the following it will be convenient for us to work with generalized Bogoliubov transformations of the form (\ref{eq:eBeta}). The price we have to pay is the fact that there is no explicit expression like (\ref{eq:act-Bog}) for the action of (\ref{eq:eBeta}). We need other tools to control the action of 
generalized Bogoliubov transformations. 

A first important observation is the following lemma, whose proof can be found in \cite{BS} (a similar result was previously established in \cite{Sei}).
\begin{lemma}\label{lm:Ngrow}
Let $\eta \in \ell^2 (\Lambda^*)$ and $B(\eta)$ be defined as in (\ref{eq:defB}). Then, for every $n_1, n_2 \in \bZ$, there exists a constant $C > 0$ (depending on $\| \eta\|$)  such that, on $\cF_+^{\leq N}$, 
\[ e^{-B(\eta)} (\cN_+ +1)^{n_1} (N+1-\cN_+ )^{n_2} e^{B(\eta)} \leq C (\cN_+ +1)^{n_1} (N+1- \cN_+ )^{n_2} \,. \]
\end{lemma}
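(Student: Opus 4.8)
The plan is a Grönwall argument for the moments of $\cN_+$ along the flow $s\mapsto e^{sB(\eta)}$. First I would reduce the operator inequality to its quadratic form: it suffices to prove, for every fixed $\xi\in\cF_+^{\le N}$, that $\langle\xi, e^{-B(\eta)}\,g(\cN_+)\,e^{B(\eta)}\xi\rangle\le C\langle\xi, g(\cN_+)\,\xi\rangle$, where I abbreviate $g(\cN_+) = (\cN_++1)^{n_1}(N+1-\cN_+)^{n_2}$. On $\cF_+^{\le N}$ we have $N+1-\cN_+\ge1$, so $g(\cN_+)$ is a strictly positive bounded operator there (hence $g^{1/2}(\cN_+)$ makes sense), while $e^{\pm B(\eta)}$ are bounded unitaries by Lemma~\ref{lm:Bbds}. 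Setting $\phi(s) = \langle e^{sB(\eta)}\xi,\ g(\cN_+)\,e^{sB(\eta)}\xi\rangle\ge0$ for $s\in[0,1]$, analyticity of $s\mapsto e^{sB(\eta)}\xi$ (valid since $B(\eta)$ is bounded) together with $B(\eta)^*=-B(\eta)$ gives $\phi'(s) = \langle e^{sB(\eta)}\xi,\ [g(\cN_+),B(\eta)]\,e^{sB(\eta)}\xi\rangle$, and the claim follows by Grönwall once one shows $|\phi'(s)|\le C\|\eta\|\,\phi(s)$ uniformly on $[0,1]$.

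Next I would compute the commutator. From $[b_p,\cN_+] = b_p$ and $[b_p^*,\cN_+] = -b_p^*$ one gets $g(\cN_+)\,b_p^*b_{-p}^* = b_p^*b_{-p}^*\,g(\cN_++2)$ and $g(\cN_+)\,b_pb_{-p} = b_pb_{-p}\,g(\cN_+-2)$, where the shifted functions are evaluated only on sectors where the corresponding pair of operators does not vanish (so there is no issue with $g$ being undefined near $\cN_+ = N$ or $\cN_+=0$; e.g. $b_p^*b_{-p}^*$ kills the sectors $\cN_+\in\{N-1,N\}$). Hence $[g(\cN_+),B(\eta)] = -\tfrac12(T+T^*)$ with $T = \sum_{p\in\Lambda^*_+}\eta_p\,b_p^*b_{-p}^*\,\Psi(\cN_+)$ and $\Psi(n) := g(n+2)-g(n)$, so $\phi'(s) = -\operatorname{Re}\langle\psi_s, T\psi_s\rangle$ where $\psi_s := e^{sB(\eta)}\xi$. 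Writing $\Psi(\cN_+) = g^{1/2}(\cN_+)\,\lambda(\cN_+)\,g^{1/2}(\cN_+)$ with $\lambda = \Psi/g$, and moving one copy of $g^{1/2}$ through the two $b$'s, this becomes $\langle\psi_s, T\psi_s\rangle = \langle \rho(\cN_+)\,g^{1/2}(\cN_+)\psi_s,\ B_{*,*}(\eta)\,\lambda(\cN_+)\,g^{1/2}(\cN_+)\psi_s\rangle$, with $\rho$ coming from the ratio $g^{1/2}(\cN_+-2)/g^{1/2}(\cN_+)$.

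The estimate then rests on two inputs. The elementary one: there is $C_0 = C_0(n_1,n_2)$ with $C_0^{-1}g(n)\le g(n\pm2)\le C_0\,g(n)$ and $|g(n+2)-g(n)|\le C_0\big(\tfrac1{n+1}+\tfrac1{N+1-n}\big)g(n)$ for all relevant $n$; both follow from the mean value theorem applied separately to the factors $(n+1)^{n_1}$ and $(N+1-n)^{n_2}$, using $(n+3)/(n+1)\le3$ and $(N+1-n)/(N-1-n)\le3$. Consequently $\rho$ is bounded and $|\lambda(n)|\le C_0(\tfrac1{n+1}+\tfrac1{N+1-n})$. The structural input, and the crux, is that because $b_p^* = a_p^*\sqrt{(N-\cN_+)/N}$, a short computation (moving the square roots to the right of the $a^*$'s via $[\cN_+,a_{-p}^*]=a_{-p}^*$) gives $B_{*,*}(\eta) = A_{*,*}(\eta)\,F(\cN_+)$ with $F(\cN_+) = \sqrt{(N-1-\cN_+)(N-\cN_+)}/N\le(N-\cN_+)/N$ on $\cF_+^{\le N}$. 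Applying Lemma~\ref{lm:Abds} in the form $\|A_{*,*}(\eta)\Phi\|\le\sqrt2\,\|\eta\|_2\,\|(\cN_++1)\Phi\|$ then yields
\[ |\langle\psi_s, T\psi_s\rangle|\ \le\ \|\rho(\cN_+)g^{1/2}(\cN_+)\psi_s\|\ \cdot\ \sqrt2\,\|\eta\|_2\ \big\|(\cN_++1)\,F(\cN_+)\,\lambda(\cN_+)\,g^{1/2}(\cN_+)\psi_s\big\|, \]
and since $(n+1)F(n)|\lambda(n)|\le C_0\big(\tfrac{N-n}{N}+\tfrac{n+1}{N}\big)\le2C_0$ is bounded, both norms on the right are $\le C\|g^{1/2}(\cN_+)\psi_s\|$, whence $|\phi'(s)|\le C\|\eta\|_2\,\langle\psi_s, g(\cN_+)\psi_s\rangle = C\|\eta\|_2\,\phi(s)$, and Grönwall closes the proof.

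I expect the genuine obstacle to be precisely the bookkeeping near the top sector $\cN_+\approx N$: there the difference $g(\cN_+)-g(\cN_+\pm2)$ is only comparable to $g(\cN_+)(N+1-\cN_+)^{-1}$, which carries no smallness of its own, and the naive bound of $B_{*,*}(\eta)$ via Lemma~\ref{lm:Bbds} (which costs a full factor $\cN_++1$) is too lossy. The point is that the $\sqrt{(N-\cN_+)/N}$ factors hidden in the $b^*$-operators provide the exact cancellation needed, the $(N-\cN_+)/N$ from $F(\cN_+)$ meeting the $(N+1-\cN_+)^{-1}$ from $\lambda(\cN_+)$. For $n_2=0$ this subtlety is absent and the estimate is immediate from Lemma~\ref{lm:Abds}; the same Grönwall scheme handles all integer pairs $(n_1,n_2)$ simultaneously, including the negative exponents, for which keeping $g$ strictly positive is exactly why the truncation $\cN_+\le N$ matters.
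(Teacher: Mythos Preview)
Your approach is correct and is essentially the one the paper defers to in \cite{BS}: a Gr\"onwall argument along the flow $s\mapsto e^{sB(\eta)}$, computing the commutator $[g(\cN_+),B(\eta)]$ and exploiting the factor $\sqrt{(N-\cN_+)(N-1-\cN_+)}/N$ built into $b_p^*b_{-p}^*$ to compensate the $(N+1-\cN_+)^{-1}$ coming from the difference quotient of $g$. Your identification of this compensation as the crux is exactly right, and the same scheme appears in the paper itself in the proof of Lemma~\ref{lm:err-con}. Two cosmetic points: the sign in $[g(\cN_+),B(\eta)]=-\tfrac12(T+T^*)$ should be $+\tfrac12(T+T^*)$ (irrelevant for the bound), and your edge-sector remarks could be phrased a bit more carefully (e.g.\ $g^{1/2}(\cN_+-2)$ is only needed on sectors $\cN_+\ge2$ because $T\psi_s$ lives there), but neither affects the argument.
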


Controlling the change of the number of particles operator is not enough for our purposes. We will often need to express the action of generalized Bogoliubov transformations by means of a convergent series of nested commutators. We notice, first of all, that for any $p \in \Lambda^*_+$, 
\[\begin{split} e^{-B(\eta)} \, b_p \, e^{B(\eta)} &= b_p + \int_0^1 ds \, \frac{d}{ds}  e^{-sB(\eta)} b_p e^{sB(\eta)} \\ &= b_p - \int_0^1 ds \, e^{-sB(\eta)} [B(\eta), b_p] e^{s B(\eta)} \\ &= b_p - [B(\eta),b_p] + \int_0^1 ds_1 \int_0^{s_1} ds_2 \, e^{-s_2 B(\eta)} [B(\eta), [B(\eta),b_p] e^{s_2 B(\eta)} \end{split} \]
Iterating $m$ times, we find 
\begin{equation}\label{eq:BCH} \begin{split} 
e^{-B(\eta)} b_p e^{B(\eta)} = &\sum_{n=1}^{m-1} (-1)^n \frac{\text{ad}^{(n)}_{B(\eta)} (b_p)}{n!} \\ &+ \int_0^{1} ds_1 \int_0^{s_1} ds_2 \dots \int_0^{s_{m-1}} ds_m \, e^{-s_m B(\eta)} \text{ad}^{(m)}_{B(\eta)} (b_p) e^{s_m B(\eta)} \end{split} \end{equation}
where we recursively defined \[ \text{ad}_{B(\eta)}^{(0)} (A) = A \quad \text{and } \quad \text{ad}^{(n)}_{B(\eta)} (A) = [B(\eta), \text{ad}^{(n-1)}_{B(\eta)} (A) ]  \]
We will later show that, under appropriate assumptions on $\eta$, the norm of the error term on the r.h.s. of (\ref{eq:BCH}) vanishes, as $m \to \infty$. Hence, the action of $e^{B(\eta)}$ on $b_p, b_p^*$ can be described in terms of the nested commutators $\text{ad}^{(n)}_{B(\eta)} (b_p)$ and $\text{ad}^{(n)}_{B(\eta)} (b^*_p)$ for $n \in \bN$. In the next lemma, we give a detailed analysis of these operators; the proof can be found in \cite[Lemma 2.5]{BBCS}. 
\begin{lemma}\label{lm:indu}
Let $\eta \in \ell^2 (\Lambda^*_+)$ be such that $\eta_p = \eta_{-p}$ for all $p \in \ell^2 (\Lambda^*)$. To simplify the notation,  assume also $\eta$ to be real-valued (as it will be in applications). Let $B(\eta)$ be defined as in (\ref{eq:defB}), $n \in \bN$ and $p \in \Lambda^*$. Then the nested commutator $\text{ad}^{(n)}_{B(\eta)} (b_p)$ can be written as the sum of exactly $2^n n!$ terms, with the following properties. 
\begin{itemize}
\item[i)] Possibly up to a sign, each term has the form
\begin{equation}\label{eq:Lambdas} \Lambda_1 \Lambda_2 \dots \Lambda_i \, N^{-k} \Pi^{(1)}_{\sharp,\flat} (\eta^{j_1}, \dots , \eta^{j_k} ; \eta^{s}_p \ph_{\alpha p}) 
\end{equation}
for some $i,k,s \in \bN$, $j_1, \dots ,j_k \in \bN \backslash \{ 0 \}$, $\sharp \in \{ \cdot, * \}^k$, $ \flat \in \{ \cdot, * \}^{k+1}$ and $\alpha \in \{ \pm 1 \}$ chosen so that $\alpha = 1$ if $\flat_k = \cdot$ and $\alpha = -1$ if $\flat_k = *$ (recall here that $\ph_p (x) = e^{-ip \cdot x}$). In (\ref{eq:Lambdas}), each operator $\Lambda_w : \cF^{\leq N} \to \cF^{\leq N}$, $w=1, \dots , i$, is either a factor $(N-\cN_+ )/N$, a factor $(N-(\cN_+ -1))/N$ or an operator of the form
\begin{equation}\label{eq:Pi2-ind} N^{-h} \Pi^{(2)}_{\sharp',\flat'} (\eta^{z_1}, \eta^{z_2},\dots , \eta^{z_h}) \end{equation}
for some $h, z_1, \dots , z_h \in \bN \backslash \{ 0 \}$, $\sharp,\flat  \in \{ \cdot , *\}^h$. 
\item[ii)] If a term of the form (\ref{eq:Lambdas}) contains $m \in \bN$ factors $(N-\cN_+ )/N$ or $(N-(\cN_+ -1))/N$ and $j \in \bN$ factors of the form (\ref{eq:Pi2-ind}) with $\Pi^{(2)}$-operators of order $h_1, \dots , h_j \in \bN \backslash \{ 0 \}$, then 
we have
\begin{equation}\label{eq:totalb} m + (h_1 + 1)+ \dots + (h_j+1) + (k+1) = n+1 \end{equation}
\item[iii)] If a term of the form (\ref{eq:Lambdas}) contains (considering all $\Lambda$-operators and the $\Pi^{(1)}$-operator) the arguments $\eta^{i_1}, \dots , \eta^{i_m}$ and the factor $\eta^{s}_p$ for some $m, s \in \bN$, and $i_1, \dots , i_m \in \bN \backslash \{ 0 \}$, then \[ i_1 + \dots + i_m + s = n .\]
\item[iv)] There is exactly one term having of the form (\ref{eq:Lambdas}) with $k=0$ and such that all $\Lambda$-operators are factors of $(N-\cN_+ )/N$ or of $(N+1-\cN_+ )/N$. It is given by 
\begin{equation}\label{eq:iv1} \left(\frac{N-\cN_+ }{N} \right)^{n/2} \left(\frac{N+1-\cN_+ }{N} \right)^{n/2} \eta^{n}_p b_p 
\end{equation}
if $n$ is even, and by 
\begin{equation}\label{eq:iv2} - \left(\frac{N-\cN_+ }{N} \right)^{(n+1)/2} \left(\frac{N+1-\cN_+ }{N} \right)^{(n-1)/2} \eta^{n}_p b^*_{-p}  \end{equation}
if $n$ is odd.
\item[v)] If the $\Pi^{(1)}$-operator in (\ref{eq:Lambdas}) is of order $k \in \bN \backslash \{ 0 \}$, it has either the form  
\[ \sum_{p_1, \dots , p_k}  b^{\flat_0}_{\alpha_0 p_1} \prod_{i=1}^{k-1} a^{\sharp_i}_{\beta_i p_{i}} a^{\flat_i}_{\alpha_i p_{i+1}}  a^*_{-p_k} \eta^{2r}_p  a_p \prod_{i=1}^k \eta^{j_i}_{p_i}  \]
or the form 
\[\sum_{p_1, \dots , p_k} b^{\flat_0}_{\alpha_0 p_1} \prod_{i=1}^{k-1} a^{\sharp_i}_{\beta_i p_{i}} a^{\flat_i}_{\alpha_i p_{i+1}}  a_{p_k} \eta^{2r+1}_p a^*_p \prod_{i=1}^k \eta^{j_i}_{p_i}  \]
for some $r \in \bN$, $j_1, \dots , j_k \in \bN \backslash \{ 0 \}$. If it is of order $k=0$, then it is either given by $\eta^{2r}_p b_p$ or by $\eta^{2r+1}_p b_{-p}^*$, for some $r \in \bN$. 
\item[vi)] For every non-normally ordered term of the form 
\[ \begin{split} &\sum_{q \in \Lambda^*} \eta^{i}_q a_q a_q^* , \quad \sum_{q \in \Lambda^*} \, \eta^{i}_q b_q a_q^* \\  &\sum_{q \in \Lambda^*} \, \eta^{i}_q a_q b_q^*, \quad \text{or } \quad \sum_{q \in \Lambda^*} \, \eta^{i}_q b_q b_q^*  \end{split} \]
appearing either in the $\Lambda$-operators or in the $\Pi^{(1)}$-operator in (\ref{eq:Lambdas}), we have $i \geq 2$.
\end{itemize}
\end{lemma}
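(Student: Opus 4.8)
The plan is to prove Lemma \ref{lm:indu} by induction on $n$, using the recursion $\text{ad}^{(n+1)}_{B(\eta)}(b_p) = [B(\eta), \text{ad}^{(n)}_{B(\eta)}(b_p)]$. The base case $n=0$ is immediate: $\text{ad}^{(0)}_{B(\eta)}(b_p) = b_p$ is a single term ($2^0\cdot 0! = 1$), which is exactly the $k=0$, $n$-even instance of (\ref{eq:iv1}) with no $\Lambda$-operators present, and properties i)--vi) hold trivially. For the inductive step, I would assume the statement for $n$, expand $\text{ad}^{(n)}_{B(\eta)}(b_p)$ as the sum of its $2^n n!$ terms of the form (\ref{eq:Lambdas}), fix one such term $T = \Lambda_1\cdots\Lambda_i\, N^{-k}\Pi^{(1)}_{\sharp,\flat}(\eta^{j_1},\dots,\eta^{j_k};\eta_p^s\ph_{\alpha p})$, and compute $[B(\eta),T]$ by the Leibniz rule, distributing the commutator over the $\Lambda$-factors and the $\Pi^{(1)}$-operator. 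The whole argument then reduces to evaluating the three elementary commutators that can occur and checking that each produces a sum of operators again of the form (\ref{eq:Lambdas}) with all the required structure.

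The three elementary commutators are: (a) $[B(\eta),(N-\cN_+)/N]$ and $[B(\eta),(N+1-\cN_+)/N]$, which using $[\cN_+,b_q^{\sharp}] = \pm b_q^{\sharp}$ equal $N^{-1}$ times a sum of two $\Pi^{(2)}$-operators of order one with argument $\eta$, so that substituting such a commutator for a $\Lambda$-factor replaces a factor of weight one by a $\Pi^{(2)}$-factor of weight $h+1=2$, raising the total weight in the sense of ii) by one and adding one power of $\eta$; (b) $[B(\eta),N^{-h}\Pi^{(2)}_{\sharp',\flat'}(\eta^{z_1},\dots,\eta^{z_h})]$, which one evaluates by writing $B(\eta)$ and the $\Pi^{(2)}$-string in position space and commuting through with (\ref{eq:comm-b}), (\ref{eq:comm-b2}): each contraction of a $\check b$-distribution from $B(\eta)$ with an operator of the string produces, from the $\delta$-part of (\ref{eq:comm-b}), a new $\Lambda$-factor $(N-\cN_+)/N$ with a shortened string, and from the $-N^{-1}\check a^*\check a$-part an extra $a$-pair inside the string, and one checks that after re-expressing $b$'s in terms of $a$'s at the endpoints where needed and normal-ordering via (\ref{eq:ccr}) the outcome is again a sum of $\Pi^{(2)}$-operators of order $\le h+1$ times $\Lambda$-factors, each carrying one extra power of $\eta$ on the newly contracted momentum; (c) $[B(\eta),N^{-k}\Pi^{(1)}_{\sharp,\flat}(\eta^{j_1},\dots;\eta_p^s\ph_{\alpha p})]$, handled exactly as (b) with the one extra possibility that a $\check b$ from $B(\eta)$, which carries momentum $\pm p$, contracts with the final operator $\check a^{\flat_k}(\eta_p^s\ph_{\alpha p})$, raising $\eta_p^s$ to $\eta_p^{s+1}$; the result is a sum of $\Pi^{(1)}$-operators of order $k$ or $k+1$ of the shape described in v).

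With these computations in hand, the remaining properties follow by bookkeeping. Property iii) is immediate since $B(\eta)$ is linear in $\eta$, so each commutator adds exactly one power of $\eta$; property ii) follows from the weight count in (a)--(c), each commutator raising the total weight by one; i) and v) are read off from the explicit form of the three elementary commutators; vi) follows from the observation that any non-normally-ordered pair in the output is created by contracting two operators each of which already carries a factor $\eta$ (the $\check b$'s from $B(\eta)$ carry $\eta^1$, the operators in the $\Pi$-strings carry $\eta^{\ge 1}$ by i)), so the exponents add and the resulting exponent is $\ge 2$; the crux of the counting is to verify that, after evaluating all three commutators and collecting terms, each of the $2^n n!$ terms of $\text{ad}^{(n)}_{B(\eta)}(b_p)$ gives rise to exactly $2(n+1)$ terms of the required form with no cancellations, which yields a total of $2^n n!\cdot 2(n+1) = 2^{n+1}(n+1)!$; and for iv), among all these contributions the only one that is again a pure-$\Lambda$, $k=0$ term is the one in which every contraction with a $\check b$ from $B(\eta)$ hits the single $b$-operator present in (\ref{eq:iv1}) or (\ref{eq:iv2}), and a short direct computation, collecting the powers of $(N-\cN_+)/N$ and $(N+1-\cN_+)/N$ with identities such as $b_p (N-\cN_+)/N = ((N+1-\cN_+)/N)\,b_p$, shows it equals (\ref{eq:iv2}) (resp.\ (\ref{eq:iv1})) with $n$ replaced by $n+1$.

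The main obstacle is not conceptual but the exhaustive case analysis concealed in the second and third elementary commutators: one must check that the commutator of $B(\eta)$ with an arbitrary $\Pi^{(2)}$- or $\Pi^{(1)}$-string, after re-expanding the $b$'s occurring at the string's endpoints in terms of the $a$'s wherever a contraction demands it and normal-ordering the result, lands in precisely the family (\ref{eq:Lambdas}) with the correct order, the correct power of $N$, and --- most delicately --- with property vi) intact, since this is the property that later guarantees $\ell^1$-summability of the non-normally-ordered pieces. A secondary subtlety is pinning the term count down to exactly $2^{n+1}(n+1)!$: one must be sure that recombining the contributions of different contractions causes neither cancellation nor double-counting.
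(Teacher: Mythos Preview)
The paper does not actually contain a proof of this lemma: immediately before the statement it writes ``the proof can be found in \cite[Lemma 2.5]{BBCS}'', so there is no in-paper argument to compare your proposal against. Your inductive strategy---base case $n=0$, then Leibniz rule on $[B(\eta),T]$ distributed over the $\Lambda$-factors and the $\Pi^{(1)}$-tail, reducing everything to the three elementary commutators (a), (b), (c)---is exactly the natural approach and is the one carried out in the cited reference.

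Two small remarks on your sketch. First, your justification of vi) is slightly imprecise: a non-normally-ordered pair $\sum_q \eta_q^{i} a_q a_q^*$ sits at a single momentum slot of a $\Pi$-string, and the exponent $i$ is the exponent $z_\ell$ (or $j_\ell$) attached to that slot, not a sum of two contributions; the reason $i\ge 2$ is that such a pair can only be \emph{created} when a contraction from $B(\eta)$ merges two adjacent slots, each already carrying at least one power of $\eta$, into one. Second, your count ``exactly $2(n{+}1)$ children per term'' is the right target, and the cleanest way to see it is to observe that the total weight in ii) equals $n{+}1$ and that commuting $B(\eta)$ through a factor of weight $w$ produces exactly $2w$ new terms (two from each $(N-\cN_+)/N$-type factor, $2(h{+}1)$ from a $\Pi^{(2)}$ of order $h$, and $2(k{+}1)$ from the $\Pi^{(1)}$ of order $k$), so the Leibniz sum gives $2\sum w = 2(n{+}1)$; this is worth stating explicitly since it simultaneously verifies the term count and the weight update in ii).
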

With Lemma \ref{lm:indu}, it follows that the series on the r.h.s. of (\ref{eq:BCH}) converges absolutely, if the $\ell^2$-norm $\| \eta \|$ is small enough. The proof of the next Lemma is a simple adaptation of the proof of \cite[Lemma 3.3]{BS} 
\begin{lemma}\label{lm:conv-series}
Let $\eta \in \ell^2 (\Lambda^*)$ be symmetric, with $\| \eta \|$ sufficiently small. Then we have 
\begin{equation}\label{eq:conv-serie}
\begin{split} e^{-B(\eta)} b_p e^{B (\eta)} &= \sum_{n=0}^\infty \frac{(-1)^n}{n!} \text{ad}_{B(\eta)}^{(n)} (b_p) \\
e^{-B(\eta)} b^*_p e^{B (\eta)} &= \sum_{n=0}^\infty \frac{(-1)^n}{n!} \text{ad}_{B(\eta)}^{(n)} (b^*_p) \end{split} \end{equation}
and the series on the r.h.s. are absolutely convergent. 
\end{lemma}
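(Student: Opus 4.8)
The plan is to read off, from the structural description of the nested commutators $\text{ad}^{(n)}_{B(\eta)}(b_p)$ provided by Lemma~\ref{lm:indu}, an operator-norm bound of the form $\|\text{ad}^{(n)}_{B(\eta)}(b_p)\|\leq \sqrt{N}\,n!\,(C\|\eta\|)^n$ on $\cF_+^{\leq N}$, and then to pass to the limit $m\to\infty$ in the Duhamel-type expansion (\ref{eq:BCH}). Once such a bound is available, both claims follow at once: the series $\sum_n \frac{(-1)^n}{n!}\text{ad}^{(n)}_{B(\eta)}(b_p)$ is dominated, in operator norm on $\cF_+^{\leq N}$ for each fixed $N$, by the geometric series $\sqrt{N}\sum_n (C\|\eta\|)^n$, which converges as soon as $\|\eta\|$ is small enough that $C\|\eta\|<1$; and, since $B(\eta)^*=-B(\eta)$ makes $e^{\pm sB(\eta)}$ unitary, the remainder term in (\ref{eq:BCH}) has norm at most $\frac{1}{m!}\|\text{ad}^{(m)}_{B(\eta)}(b_p)\|\leq \sqrt{N}\,(C\|\eta\|)^m\to 0$ as $m\to\infty$. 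Taking $m\to\infty$ in (\ref{eq:BCH}) then yields the first identity in (\ref{eq:conv-serie}); the second is obtained by the same argument applied to $b_p^*$, or by taking adjoints and using that $B(\eta)^*=-B(\eta)$ forces $(\text{ad}^{(n)}_{B(\eta)}(b_p))^*=\text{ad}^{(n)}_{B(\eta)}(b_p^*)$.

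It remains to establish the commutator bound, and this is where Lemma~\ref{lm:indu} does the work: it writes $\text{ad}^{(n)}_{B(\eta)}(b_p)$ as a sum of at most $2^n n!$ terms of the form (\ref{eq:Lambdas}), and I would bound each such term factor by factor. On $\cF_+^{\leq N}$ each factor $(N-\cN_+)/N$ or $(N+1-\cN_+)/N$ has norm $\leq 1$; each factor $N^{-h}\Pi^{(2)}_{\sharp',\flat'}(\eta^{z_1},\dots,\eta^{z_h})$ has norm $\leq 12^h\prod_{\ell=1}^h K_\ell$ by the operator-norm estimate of Lemma~\ref{lm:Pi-bds} (the $N^{-h}$ exactly cancels the $N^h$ there); and the $\Pi^{(1)}$-factor $N^{-k}\Pi^{(1)}_{\sharp,\flat}(\eta^{j_1},\dots,\eta^{j_k};\eta^s_p\ph_{\alpha p})$ has norm $\leq 12^k\sqrt{N}\,|\eta_p|^s\prod_{\ell=1}^k K_\ell$, using $\|\ph_{\alpha p}\|=1$. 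The inequalities $\|\eta^i\|_2\leq\|\eta\|_\infty^{i-1}\|\eta\|\leq\|\eta\|^i$ and $|\eta_p|^s\leq\|\eta\|^s$ control the $\ell^2$-norms; the only delicate point is the $\ell^1$-norms, which enter a $K_\ell$ precisely when a pair of operators fails to be normally ordered — but point~vi) of Lemma~\ref{lm:indu} guarantees that in that case the corresponding argument is $\eta^i$ with $i\geq 2$, so that $\|\eta^i\|_1=\sum_q|\eta_q|^i\leq\|\eta\|_\infty^{i-2}\|\eta\|^2\leq\|\eta\|^i$ (this is exactly why the hypothesis $\eta\in\ell^2$, rather than $\eta\in\ell^1$, suffices). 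Thus every $K_\ell$ is at most $2\|\eta\|^{(\text{power of }\eta)}$, and every factor contributes at most $24^{(\text{order})}$ times $\|\eta\|^{(\text{power of }\eta)}$, with an extra $\sqrt{N}$ for the $\Pi^{(1)}$-factor.

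Finally I would sum the exponents using the bookkeeping of Lemma~\ref{lm:indu}: point~iii) says the total power of $\eta$ occurring in a term (\ref{eq:Lambdas}), counting all $\Lambda$-operators, the $\Pi^{(1)}$-arguments and the factor $\eta^s_p$, equals $n$; and point~ii) says the sum of the orders of the $\Pi^{(2)}$-factors plus the order $k$ of the $\Pi^{(1)}$-factor is at most $n$. Hence a single term of the form (\ref{eq:Lambdas}) has operator norm $\leq 24^n\|\eta\|^n\sqrt{N}$, and summing over the at most $2^n n!$ such terms gives $\|\text{ad}^{(n)}_{B(\eta)}(b_p)\|\leq n!\,(48\|\eta\|)^n\sqrt{N}$, as needed. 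The main obstacle is thus purely organizational: one must check that the combinatorial factor $2^n n!$ is beaten — which works because, by point~iii), the smallness available is $\|\eta\|^n$ and not a lower power of $\|\eta\|$ — and that no $\ell^1$-norm ever lands on an $\eta^1$-factor, which is exactly the content of point~vi). The crude factor $\sqrt{N}$ is harmless for the statement as phrased; in the later applications one instead replaces the operator-norm bounds of Lemma~\ref{lm:Pi-bds} by its vector estimates and combines them with Lemma~\ref{lm:Ngrow} to obtain $N$-uniform control of the action of $e^{B(\eta)}$ on states with few excitations.
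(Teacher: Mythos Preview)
Your proposal is correct and follows exactly the approach the paper has in mind: the paper does not spell out a proof but refers to \cite[Lemma~3.3]{BS}, and your argument---bounding each of the $2^n n!$ terms of Lemma~\ref{lm:indu} via the operator-norm estimates of Lemma~\ref{lm:Pi-bds}, using point~iii) to collect the total power $\|\eta\|^n$, point~ii) to bound the total order of the $\Pi$-factors by $n$, and point~vi) to ensure that $\ell^1$-norms only hit $\eta^i$ with $i\geq 2$---is precisely that argument. The resulting bound $\|\text{ad}^{(n)}_{B(\eta)}(b_p)\|\leq n!\,(C\|\eta\|)^n\sqrt{N}$ then kills the remainder in (\ref{eq:BCH}) as you describe, and the adjoint computation for the second identity is correct since $B(\eta)^*=-B(\eta)$ gives $[B(\eta),A]^*=[B(\eta),A^*]$.
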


\section{The excitation Hamiltonian} 
\label{sec:ex}

We define the unitary operator $U: L^2_s (\Lambda^N) \to \cF_+^{\leq N}$ as in (\ref{eq:Uph-def}). In terms of creation and annihilation operators, the map $U$ is given by 
\[ U \psi = \bigoplus_{n=0}^N  (1-|\ph_0 \rangle \langle \ph_0|)^{\otimes n} \frac{a(\ph_0)^{N-n}}{\sqrt{(N-n)!}} \psi \]
for all $\psi \in L^2_s (\Lambda^N)$ (here we identify $\psi \in L^2_s (\Lambda^N)$ with the vector $\{ 0, \dots , 0, \psi, 0, \dots \} \in \cF$). The map $U^* : \cF_{+}^{\leq N} \to L^2_s (\Lambda^N)$ is given, on the other hand, by 
\[ U^* \{ \psi^{(0)}, \dots , \psi^{(N)} \} = \sum_{n=0}^N \frac{a^* (\ph_0)^{N-n}}{\sqrt{(N-n)!}} \psi^{(n)} \]
It is useful to compute the action of $U$ on the product of a creation and an annihilation operators. We find (see \cite{LNSS}):
\begin{equation}\label{eq:U-rules}
\begin{split} 
U a^*_0 a_0 U^* &= N- \cN_+  \\
U a^*_p a_0 U^* &= a^*_p \sqrt{N-\cN_+ } \\
U a^*_0 a_p U^* &= \sqrt{N-\cN_+ } a_p \\
U a^*_p a_q U^* &= a^*_p a_q 
\end{split} \end{equation}
for all $p,q \in \Lambda^*_+ = 2\pi \bZ^3 \backslash \{ 0 \}$. Writing (\ref{eq:Ham0}) in momentum space, we find
\[ H^\beta_N = \sum_{p \in \Lambda^*} p^2 a_p^* a_p + \frac{\kappa}{N} \sum_{p,q,r \in \Lambda^*} \widehat{V} (r/N^\beta) a_p^* a_q^* a_{q-r} a_{p+r} \]
where 
\[ \widehat{V} (q) = \int_{\bR^3} V (x) e^{-i q \cdot x} dx \] 
denotes the Fourier transform of $V$, for all $q \in \bR^3$. With (\ref{eq:U-rules}), we can conjugate $H^\beta_N$ with the map $U$, defining $\cL^\beta_N = U H^\beta_N U^* : \cF_+^{\leq N} \to \cF_+^{\leq N}$. We find
\begin{equation}\label{eq:cLN} \cL^\beta_N =  \cL^{(0)}_{\beta,N} + \cL^{(2)}_{\beta,N} + \cL^{(3)}_{\beta,N} + \cL^{(4)}_{\beta,N} \end{equation}
with
\begin{equation}\label{eq:cLNj} \begin{split} 
\cL_{N,\beta}^{(0)} =\;& \frac{N-1}{2N} \widehat{V} (0) (N-\cN_+ ) + \frac{\widehat{V} (0)}{2N} \cN_+  (N-\cN_+ ) \\
\cL^{(2)}_{N,\beta} =\; &\sum_{p \in \Lambda^*_+} p^2 a_p^* a_p + \sum_{p \in \Lambda_+^*} \widehat{V} (p/N^\beta) \left[ b_p^* b_p - \frac{1}{N} a_p^* a_p \right] \\ &+ \frac{1}{2} \sum_{p \in \Lambda^*_+} \widehat{V} (p/N^\beta) \left[ b_p^* b_{-p}^* + b_p b_{-p} \right] \\
\cL^{(3)}_{N,\beta} =\; &\frac{1}{\sqrt{N}} \sum_{p,q \in \Lambda_+^* : p+q \not = 0} \widehat{V} (p/N^\beta) \left[ b^*_{p+q} a^*_{-p} a_q  + a_q^* a_{-p} b_{p+q} \right] \\
\cL^{(4)}_{N,\beta} =\; & \frac{1}{2N} \sum_{p,q \in \Lambda_+^*, r \in \Lambda^*: r \not = -p,-q} \widehat{V} (r/N^\beta) a^*_{p+r} a^*_q a_p a_{q+r} 
\end{split} \end{equation}

As explained in the introduction, for $\beta \not = 0$, conjugation with $U$ does not yet pull all important contributions for low-energy states into the constant and the quadratic parts of the excitation Hamiltonian  $\cL^{\beta}_N$. In other words, in contrast to the mean-field case $\beta =0$, for $\beta >0$ we cannot expect $\cL^{(3)}_{N,\beta}$ and $\cL^{(4)}_{N,\beta}$ to be small on low-energy states, in the limit $N \to \infty$. For this reason, we need to conjugate $\cL^\beta_N$ with an appropriate generalized Bogoliubov transformation of the form (\ref{eq:eBeta}). 

To choose the function $\eta \in \ell^2 (\Lambda^*_+)$ entering (\ref{eq:defB}) and (\ref{eq:eBeta}), we fix a length $0< \ell < 1/2$, independently of $N$, and we consider the solution of the Neumann problem 
\begin{equation}\label{eq:scatl} 
\left(-\Delta + \frac{\kappa}{2} N^{3\beta-1} V (N^\beta x) \right) f_{N,\ell} (x) = \lambda_{N,\ell} f_{N,\ell} (x) \end{equation}
on the  ball $|x| \leq \ell$, with radial derivative 
$\partial_{|x|} f_{N,\ell} (x) = 0$ and with 
the normalization $f_{N,\ell} (x) = 1$ for $|x| = \ell$ (we omit the $\beta$-dependence in the notation for $f_{N,\ell}$ and for $\lambda_{N,\ell}$). The condition $\ell < 1/2$ guarantees that the ball of radius $\ell$ is contained in $\Lambda$. We extend then $f_{N,\ell}$ to $\Lambda$, by setting $f_{N,\ell} (x) = 1$ for all $|x| > \ell$. Then, for all $x \in \Lambda$, we have 
\begin{equation}\label{eq:scatlN}
 \left( -\Delta + \frac{\kappa}{2} N^{3\beta-1} V (N^\beta x) \right) f_{N,\ell} (x) = \lambda_{N,\ell} f_{N,\ell} (x) \chi_\ell (x) 
\end{equation}
where $\chi_\ell$ is the characteristic function of the ball of radius $\ell$. It is also useful to define $w_{N,\ell} = 1-f_{N,\ell}$ (so that $w_{N,\ell} (x) = 0$ if $|x| > \ell$). Since $w_{N,\ell}$ is compactly supported on $\Lambda$, it can be interpreted as a periodic function. Its Fourier coefficients are given by 
\[ \widehat{w}_{N,\ell} (p) =  \int_{\Lambda} w_{N,\ell} (x) e^{-i p \cdot x} dx  \]
for all $p \in \Lambda^*$. {F}rom (\ref{eq:scatlN}), we find that 
\begin{equation}\label{eq:wellp}
\begin{split}  - p^2 \widehat{w}_{N,\ell} (p) +  \frac{\kappa}{2N} \widehat{V} (p/N^\beta) &- \frac{\kappa}{2N} \sum_{q \in \Lambda^*} \widehat{V} ((p-q)/N^\beta) \widehat{w}_{N,\ell} (q) \\ &	
= \lambda_{N,\ell} \widehat{\chi}_\ell (p) - \lambda_{N,\ell} \sum_{q \in \Lambda^*} \widehat{\chi}_\ell (p-q) \widehat{w}_{N,\ell} (q) \end{split} \end{equation}
for all $p \in \Lambda^*$. In the next lemma we collect some important properties of $\lambda_{N,\ell}$ and of the functions $w_{N,\ell}, f_{N,\ell}$; the proof can be found in \cite[Lemma A.1]{ESY0} and in \cite{BS} (notice that this lemma is the reason we require $V \in L^3 (\bR^3)$; for the rest of the analysis, $V \in L^2 (\bR^3)$ would suffice). 
\begin{lemma} \label{3.0.sceqlemma}
Let $V \in L^3 (\bR^3)$ be non-negative, compactly supported and spherically symmetric. Fix $0 < \ell < 1/2$ and let $f_{N,\ell}$ denote the ground state solution of the Neumann problem \eqref{eq:scatl}. 
\begin{enumerate}
\item [i)] We have 
\[ \lambda_{N,\ell} = \frac{3\kappa \widehat{V} (0)}{8\pi N \ell^3} \left(1 + O(N^{\beta -1}) \right) \]
\item[ii)] We have $0\leq f_{N,\ell}, w_{N,\ell} \leq 1$.
\item[iii)] There exists a constant $C>0$ such that 
	\begin{equation}\label{3.0.scbounds1} 
	w_{N,\ell} (x) \leq \frac{C\kappa}{N (|x|+ N^{-\beta})} \quad\text{ and }\quad |\nabla w_{N,\ell} (x)|\leq \frac{C \kappa}{N (x^2+N^{-2\beta})}. 
	\end{equation}
for all $|x| \leq \ell$. As a result
\[ \int_\Lambda w_{N,\ell} (x) dx \leq \frac{C\kappa \ell^2}{N} \]
\item[iv)] There exists a constant $C > 0$ such that 
\[ |\widehat{w}_{N,\ell} (p)| \leq \frac{C \kappa}{N p^2} \]
for all $p \in \Lambda^*_+$.    
\end{enumerate}        
\end{lemma}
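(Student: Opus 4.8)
\emph{Proof strategy.} The plan is to reduce the statement to standard facts about the zero-energy scattering problem for a \emph{fixed}, compactly supported potential, at the price of a coupling constant that vanishes as $N\to\infty$. Rescaling $x\mapsto N^\beta x$ and setting $g_N(y)=f_{N,\ell}(y/N^\beta)$, the Neumann problem \eqref{eq:scatl} becomes
\[ \Big(-\Delta + \tfrac{\kappa}{2}N^{\beta-1}V\Big)g_N = N^{-2\beta}\lambda_{N,\ell}\,g_N \quad\text{on } |y|\le N^\beta\ell, \]
with $\partial_{|y|}g_N=0$ and $g_N=1$ at $|y|=N^\beta\ell$. Since $\beta<1$, the coupling $\tfrac{\kappa}{2}N^{\beta-1}$ is small and the radius $N^\beta\ell$ is large, so $g_N$ is a small perturbation of the solution $g_\infty$ of the problem $(-\Delta+\tfrac{\kappa}{2}N^{\beta-1}V)g_\infty=0$ on $\bR^3$ with $g_\infty\to1$ at infinity, whose scattering length equals $\tfrac{\kappa}{8\pi}N^{\beta-1}\widehat V(0)\big(1+O(N^{\beta-1})\big)$ by the Born series; rescaling back, the corresponding object in the original variables has scattering length $a_N\simeq\kappa\widehat V(0)/(8\pi N)$ and interaction range $N^{-\beta}R_0$, where $R_0$ is the radius of the support of $V$. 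The one genuinely technical point is the uniform-in-$N$ control of the discrepancy $g_N-g_\infty$, i.e.\ of the harmonic boundary layer near $|y|=N^\beta\ell$ created by replacing ``$g\to1$ at infinity'' with the Neumann condition at finite radius; this was carried out in \cite[Lemma~A.1]{ESY0} and in \cite{BS}, and I would follow those arguments.

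Statement (ii) is the maximum principle: on $|x|<\ell$, $f_{N,\ell}$ solves $\Delta f_{N,\ell}=\big(\tfrac{\kappa}{2}N^{3\beta-1}V(N^\beta\cdot)-\lambda_{N,\ell}\big)f_{N,\ell}$ with $\lambda_{N,\ell}>0$ small, so it cannot have an interior maximum larger than $1$ or an interior minimum below $0$; hence $0\le f_{N,\ell}\le1$ and $0\le w_{N,\ell}\le1$. For (iii) I would work with the rescaled whole-space solution: from $-\Delta g_\infty=-\tfrac{\kappa}{2}N^{\beta-1}Vg_\infty$ and $g_\infty\to1$ one obtains the integral identity $1-g_\infty(y)=\tfrac{\kappa N^{\beta-1}}{8\pi}\int|y-z|^{-1}V(z)g_\infty(z)\,dz$, and using $0\le g_\infty\le1$ together with the fact that, for $V\in L^3(\bR^3)$ compactly supported, $|\cdot|^{-1}\!*V\in L^\infty(\bR^3)$ (H\"older, since $|y|^{-1}\in L^{3/2}_{\mathrm{loc}}$) with decay $\widehat V(0)/(4\pi|y|)$ at infinity, one gets $1-g_\infty(y)\le C\kappa N^{\beta-1}/(1+|y|)$. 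Transferring this back through $w_{N,\ell}(x)=1-g_N(N^\beta x)$, plus the small correction $g_N-g_\infty$ from the first paragraph, yields $w_{N,\ell}(x)\le C\kappa/\big(N(|x|+N^{-\beta})\big)$; this is exactly where the hypothesis $V\in L^3$ is used, as announced in the remark preceding the lemma. Differentiating the integral identity and using $\nabla|y-z|^{-1}=O(|y-z|^{-2})$ gives the gradient estimate the same way, and integrating the pointwise bound over $|x|\le\ell$ gives $\int_\Lambda w_{N,\ell}\le C\kappa\ell^2/N$.

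For (i) I would integrate \eqref{eq:scatlN} over $|x|\le\ell$. By the divergence theorem and $\partial_{|x|}f_{N,\ell}=0$ on $|x|=\ell$ the Laplacian term drops and
\[ \tfrac{\kappa}{2}N^{3\beta-1}\!\int V(N^\beta x)\,f_{N,\ell}(x)\,dx=\lambda_{N,\ell}\!\int_{|x|\le\ell}f_{N,\ell}(x)\,dx. \]
After $z=N^\beta x$, the left side equals $\tfrac{\kappa}{2N}\int V(z)g_N(z)\,dz=\tfrac{\kappa\widehat V(0)}{2N}\big(1+O(N^{\beta-1})\big)$, since $g_N\to1$ locally uniformly on the support of $V$ with the rate from the comparison with $g_\infty$; the right side is $\lambda_{N,\ell}\big(\tfrac{4\pi}{3}\ell^3-\int_{|x|\le\ell}w_{N,\ell}\big)=\lambda_{N,\ell}\tfrac{4\pi}{3}\ell^3\big(1+O(N^{-1})\big)$ by (iii). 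Dividing yields (i). For (iv): the right-hand side of $\Delta w_{N,\ell}=\lambda_{N,\ell}f_{N,\ell}\chi_\ell-\tfrac{\kappa}{2}N^{3\beta-1}V(N^\beta\cdot)f_{N,\ell}$ has zero mean over $\Lambda$ (this is the identity just used), hence on the torus $\widehat w_{N,\ell}(p)=-p^{-2}\,\widehat{\Delta w_{N,\ell}}(p)$ for $p\in\Lambda^*_+$, and $|\widehat{\Delta w_{N,\ell}}(p)|\le\tfrac{\kappa}{2}N^{3\beta-1}\|V(N^\beta\cdot)\|_1+\lambda_{N,\ell}\tfrac{4\pi}{3}\ell^3\le C\kappa/N$ by $0\le f_{N,\ell}\le1$ and (i), which is the claimed bound. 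The expected main obstacle is the boundary-layer estimate of the first paragraph (controlling $g_N-g_\infty$ uniformly in $N$) together with carefully tracking the range $N^{-\beta}$ throughout — in the Gross--Pitaevskii case $\beta=1$ of \cite{BBCS} this range is of order one, while here it shrinks, which is precisely what makes $w_{N,\ell}$ and its Fourier coefficients genuinely small.
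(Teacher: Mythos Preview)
Your proposal is correct and follows the standard route taken in the references the paper cites (the paper itself does not prove this lemma, deferring to \cite[Lemma~A.1]{ESY0} and \cite{BS}): rescale to a weakly coupled scattering problem on a large ball, compare with the zero-energy solution on $\bR^3$, use the Green's function representation for the pointwise bounds, integrate the equation over the ball for the eigenvalue asymptotics, and read off the Fourier decay from the equation in momentum space. You also correctly single out the boundary-layer comparison $g_N-g_\infty$ as the one place where real work is needed.

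Two small technical points are worth tightening. First, the maximum-principle sketch for (ii) is not quite complete as written: since $\lambda_{N,\ell}>0$, the operator $-\Delta + \tfrac{\kappa}{2}N^{3\beta-1}V(N^\beta\cdot)-\lambda_{N,\ell}$ is not of definite sign outside the support of $V$, so ``no interior extremum'' does not follow directly; positivity of $f_{N,\ell}$ is really a ground-state (Perron--Frobenius) statement, and $f_{N,\ell}\le 1$ is cleanest via the radial ODE for $u(r)=rf_{N,\ell}(r)$. Second, for the gradient bound in (iii), naively differentiating the Green's function identity produces $\int |y-z|^{-2}V(z)\,dz$, and $|y|^{-2}\notin L^{3/2}_{\mathrm{loc}}(\bR^3)$, so H\"older with $V\in L^3$ is exactly the failing Hardy--Littlewood--Sobolev endpoint. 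The way around this (and what the cited references effectively do) is again to exploit the spherical symmetry: writing $v(r)=ru'(r)-u(r)$ for $u=rg_\infty$, one finds $v(r)=\tfrac{\kappa}{2}N^{\beta-1}\int_0^r V(s)u(s)s\,ds$, and then H\"older in the radial variable with $V\in L^3_{\mathrm{rad}}$ gives $|v(r)|\le C\kappa N^{\beta-1}r^2$, hence $|g'_\infty(r)|=|v(r)|/r^2\le C\kappa N^{\beta-1}$ uniformly. This is precisely where the $L^3$ hypothesis is genuinely used, consistent with the remark preceding the lemma.
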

Using the function $w_{N,\ell} = 1- f_{N,\ell}$ defined above, we define $\eta : \Lambda^*_+ \to \bR$ through  
\begin{equation}\label{eq:ktdef} \eta_p = - N \widehat{w}_{N,\ell} (p) 
\end{equation}
{F}rom Lemma \ref{3.0.sceqlemma}, it follows that 
\begin{equation}\label{eq:etap} |\eta_p | \leq \frac{C \kappa}{p^2} \end{equation}
Hence $\eta \in \ell^2 (\Lambda^*_+)$, uniformly in $N$. With Lemma \ref{3.0.sceqlemma} (part iii)), we also obtain   
\begin{equation}\label{eq:etapN} \sum_{p \in \Lambda^*_+} p^2 |\eta_p|^2 = \| \nabla \check{\eta} \|_2^2 \leq C N^\beta \kappa^2 
\end{equation}
Sometimes, it is useful to define $\eta$ also at the point $p=0$. We set $\wt{\eta} (p) = - N \widehat{w}_\ell (p)$ for all $p \in \Lambda^*$. Then $\wt{\eta}_p = \eta_p$ for all $p \not = 0$. By Lemma \ref{3.0.sceqlemma}, part iii), we find \begin{equation}\label{eq:wteta0}  |\wt{\eta}_0| \leq N \int_\Lambda  w_{N,\ell} (x) dx \leq C \kappa \ell^2 \end{equation}
{F}rom (\ref{eq:wellp}), we obtain the following relation for the coefficients $\wt{\eta}$: 
\begin{equation}\label{eq:eta-scat}
\begin{split} 
p^2 \wt{\eta}_p +  \frac{\kappa}{2} \widehat{V} (p/N^\beta) &+ \frac{\kappa}{2N} \sum_{q \in \Lambda^*} \widehat{V} ((p-q)/N^\beta) \wt{\eta}_q  \\ &	
= N \lambda_{N,\ell} \widehat{\chi}_\ell (p) +  \lambda_{N,\ell} \sum_{q \in \Lambda^*} \widehat{\chi}_\ell (p-q) \wt{\eta}_q 
\end{split} 
\end{equation}
With $\eta \in \ell^2 (\Lambda^*_+)$, we construct, as in (\ref{eq:eBeta}), the generalized Bogoliubov transformation $e^{B(\eta)} : \cF_+^{\leq N} \to \cF^{\leq N}_+$. Furthermore, we define the excitation Hamiltonian $\cG^\beta_N : \cF^{\leq N}_+ \to \cF^{\leq N}_+$ by setting (recall the definition (\ref{eq:cLN}) of the operator $\cL^\beta_N$)  
\begin{equation}\label{eq:GN} \cG^\beta_N = e^{-B(\eta)} \cL^\beta_N e^{B(\eta)} = e^{-B(\eta)} U H_N^\beta U^* e^{B(\eta)} 
\end{equation}
In the next theorem, we collect important properties of the self-adjoint operator $\cG_N^\beta$. We will use the notation 
\[ \cK = \sum_{p\in \Lambda^*_+} p^2 a_p^* a_p , \quad \text{and } \quad \cV_N = \frac{\kappa}{2N} \sum_{p,q \in \Lambda^*_+ , r \in \Lambda^* : r \not = -p,-q} \widehat{V} (r/N^\beta) a_{p+r}^* a_q^* a_p a_{q+r} \]
for the kinetic and potential energy operators on the excitation Fock space $\cF_+^{\leq N}$. We also define $\cH_N^\beta = \cK + \cV_N$. 
\begin{theorem}\label{thm:gene}
Let $V \in L^3 (\bR^3)$ be non-negative, compactly supported and spherically symmetric and assume that the coupling constant $\kappa \geq 0$ is small enough. 
\begin{itemize}
\item[a)] Let $E_N^\beta$ denote the ground state energy of the Hamilton operator (\ref{eq:Ham0}). There exists a constant $C > 0$ such that 
\begin{equation}\label{eq:cond} 
\cG_N^\beta -  E_N^\beta \geq  \frac{1}{2} \cH_N^\beta - C 
\end{equation}
and 
\begin{equation}\label{eq:comm} \pm \left[ i\cG_N^\beta , \cN_+  \right] \leq C (\cH_N^\beta + 1) 
\end{equation}
\item[b)] For $p \in \Lambda^*_+$, we set $\sigma_p = \sinh (\eta_p)$ and $\gamma_p = \cosh (\eta_p)$. Let 
\begin{equation}\label{eq:CN}\begin{split} C^\beta_N =\; & \frac{(N-1)}{2} \kappa \widehat{V} (0) \\ & + \sum_{p \in \Lambda^*_+} \Big[ p^2 \sigma^2_p + \kappa \widehat{V} (p/N^\beta) (\sigma^2_p + \sigma_p \gamma_p ) + \frac{\kappa}{2N} \sum_{q \in \Lambda^*_+} \widehat{V} ((p-q)/N^\beta) \eta_p \eta_q \Big]
\end{split} \end{equation}
Moreover, for every $p \in \Lambda^*_+$, we define   
\begin{equation}\label{eq:FpGp} \begin{split} F_{p} &= p^2 (\sigma^2_p + \gamma^2_p) + \kappa \widehat{V} (p/N^\beta) (\sigma_p + \gamma_p)^2 \\
G_{p} &= 2 p^2 \sigma_p \gamma_p + \kappa \widehat{V} (p/N^\beta) (\sigma_p + \gamma_p)^2 + \frac{\kappa}{N} \sum_{q \in \Lambda^*} \widehat{V} ((p-q)/N^\beta) \wt{\eta}_q \end{split} \end{equation}
We use the coefficients $F_{p},G_{p}$ to construct the operator 
\[ \cQ_N^\beta = \sum_{p \in \Lambda^*_+} \Big[ F_p \, b_p^* b_p + \frac{1}{2} \, G_p \, (b_p^* b_{-p}^* + b_p b_{-p}) \Big] \]
quadratic in the $b,b^*$-fields. We define the self-adjoint operator $\cE^\beta_N$ through the identity 
\[ \cG_N^\beta = C_N^\beta + \cQ_N^\beta + \cE^\beta_N \]
Then there exists a constant $C$ such that, on $\cF_+^{\leq N}$,  
\begin{equation}\label{eq:error}  \pm \cE^\beta_N \leq C N^{(\beta-1)/2} (\cN_+ +1) (\cK +1) \end{equation}
\end{itemize}
\end{theorem}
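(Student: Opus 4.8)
The plan is to compute $\cG_N^\beta = e^{-B(\eta)}\cL_N^\beta e^{B(\eta)}$ explicitly, term by term, starting from the decomposition $\cL_N^\beta = \cL^{(0)}_{N,\beta}+\cL^{(2)}_{N,\beta}+\cL^{(3)}_{N,\beta}+\cL^{(4)}_{N,\beta}$ of (\ref{eq:cLN})--(\ref{eq:cLNj}). For each of the four pieces I would substitute, for every $b_p^\sharp$ appearing in it, the absolutely convergent expansion
\[ e^{-B(\eta)} b_p^\sharp e^{B(\eta)} = \sum_{n\geq 0}\frac{(-1)^n}{n!}\,\text{ad}^{(n)}_{B(\eta)}(b_p^\sharp) \]
of Lemma \ref{lm:conv-series}, and then use the combinatorial description in Lemma \ref{lm:indu}: every $\text{ad}^{(n)}_{B(\eta)}(b_p^\sharp)$ is a sum of $2^n n!$ terms, each a product of factors $(N-\cN_+)/N$, operators $N^{-h}\Pi^{(2)}_{\sharp',\flat'}(\eta^{z_1},\dots,\eta^{z_h})$ and one $\Pi^{(1)}$-operator, of total degree $n$ in $\eta$. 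Using the bounds of Lemma \ref{lm:Pi-bds} together with $\|\eta^j\|_2\leq (C\kappa)^j$ for $j\geq 1$ and $\|\eta^j\|_1\leq (C\kappa)^j$ for $j\geq 2$ (which follow from $|\eta_p|\leq C\kappa/p^2$, cf.\ (\ref{eq:etap}), and $\sum_p p^{-4}<\infty$ in three dimensions; note that $\eta\notin\ell^1$, but by Lemma \ref{lm:indu} vi) every non-normally-ordered pair carries a power $\eta^j$ with $j\geq 2$), the sum over $n$ is dominated by a geometric series in $\kappa$, convergent for $\kappa$ small. The few surviving contributions -- those with $n$ small and all $\Lambda$-factors equal to $(N-\cN_+)/N$ or $(N+1-\cN_+)/N$ -- are isolated by Lemma \ref{lm:indu} iv); everything else goes into an error operator bounded, as a form on $\cF_+^{\leq N}$, by $CN^{(\beta-1)/2}(\cN_+ + 1)(\cK+1)$. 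The exponent $(\beta-1)/2$ arises from combining the $1/\sqrt N$ (resp.\ $1/N$) prefactors of $\cL^{(3)}_{N,\beta}$ (resp.\ $\cL^{(4)}_{N,\beta}$ and the $b$-fields) with the bound $\sum_p p^2|\eta_p|^2\leq CN^\beta\kappa^2$ of (\ref{eq:etapN}).

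The substantive part is tracking the leading terms. Conjugation of the quadratic operator $\cL^{(2)}_{N,\beta}$, of the cubic $\cL^{(3)}_{N,\beta}$, and of the quartic $\cL^{(4)}_{N,\beta}$ (after extracting the piece in which two of the four momenta are the condensate mode $0$, which produces a constant and a quadratic term and leaves the renormalized potential $\cV_N$) all generate, besides contributions of higher order in $\cN_+/N$, new constant and quadratic pieces whose coefficients involve $p^2\eta_p$, $\widehat V(p/N^\beta)$ and $\tfrac1N\sum_q\widehat V((p-q)/N^\beta)\eta_q$. At this point the crucial input is the scattering equation (\ref{eq:eta-scat}),
\[ p^2\wt\eta_p + \tfrac{\kappa}{2}\widehat V(p/N^\beta) + \tfrac{\kappa}{2N}\sum_{q\in\Lambda^*}\widehat V((p-q)/N^\beta)\wt\eta_q = N\lambda_{N,\ell}\,\widehat{\chi}_\ell(p) + \lambda_{N,\ell}\sum_{q\in\Lambda^*}\widehat{\chi}_\ell(p-q)\wt\eta_q, \]
which shows that the dangerous off-diagonal quadratic terms of size $\widehat V(p/N^\beta)$ -- each of which would individually contribute $O(N^\beta)$ to the energy -- are cancelled against $2p^2\eta_p$ up to the right-hand side; by Lemma \ref{3.0.sceqlemma} i), $N\lambda_{N,\ell}=O(1)$ and $\widehat{\chi}_\ell(p)$ decays, so what remains is exactly the coefficient $G_p$ of (\ref{eq:FpGp}) (which thereby becomes $O(\kappa^2 p^{-2})$) plus an error of the claimed order. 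The same computation with $\wt\eta=\eta$ on $\Lambda^*_+$, the $p=0$ mode being controlled by $|\wt\eta_0|\leq C\kappa\ell^2$ of (\ref{eq:wteta0}), produces the diagonal coefficient $F_p$ and, collecting all c-number contributions, the constant $C_N^\beta$ of (\ref{eq:CN}). This yields $\cG_N^\beta = C_N^\beta + \cQ_N^\beta + \cE_N^\beta$ with the bound (\ref{eq:error}), i.e.\ part b).

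For part a), I would run a coarser version of the same expansion: since $\cK\geq 0$ and $\cV_N\geq 0$, in the lower bound one keeps these two operators as positive reservoirs and estimates every remaining contribution -- the off-diagonal quadratic part, the (conjugated) cubic term, and all error terms -- as a form bounded by $\varepsilon\cK+\varepsilon\cV_N+C$, using Cauchy--Schwarz, the scattering-equation cancellation above, Lemma \ref{lm:Abds}, the a priori bound $\cN_+\leq (2\pi)^{-2}\cK$ on $\cF_+^{\leq N}$, and the smallness of $\kappa$. This gives $\cG_N^\beta\geq c_N + \tfrac12\cH_N^\beta - C$ for some constant $c_N$; since $\cG_N^\beta$ is unitarily equivalent to $H_N^\beta$, $\inf\mathrm{spec}\,\cG_N^\beta = E_N^\beta$, hence $c_N\geq E_N^\beta-C$, and (\ref{eq:cond}) follows. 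For the commutator bound (\ref{eq:comm}) I would compute $[\cL_N^\beta,i\cN_+]$ directly: $\cL^{(0)}_{N,\beta}$, $\cL^{(4)}_{N,\beta}$ and the diagonal part of $\cL^{(2)}_{N,\beta}$ commute with $\cN_+$, while the off-diagonal part of $\cL^{(2)}_{N,\beta}$ and the cubic $\cL^{(3)}_{N,\beta}$ do not, and their commutators are controlled by $\cH_N^\beta$ via Lemma \ref{lm:Abds} and Cauchy--Schwarz; conjugating with $e^{B(\eta)}$, using the number-growth estimate of Lemma \ref{lm:Ngrow} and the expansions of Lemmas \ref{lm:indu}--\ref{lm:conv-series}, shows the conjugation only changes constants, giving (\ref{eq:comm}).

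The main obstacle I expect is the bookkeeping in part b): organizing the infinitely many (but geometrically suppressed) terms produced by the conjugation into a short finite list of leading contributions plus a remainder of the precise order $N^{(\beta-1)/2}$, and in particular isolating exactly the quadratic terms on which the scattering equation (\ref{eq:eta-scat}) must be invoked so that the would-be $O(N^\beta)$ contributions collapse. A secondary technical point is to verify throughout that the $\ell^1$-hypotheses of Lemmas \ref{lm:Pi-bds} and \ref{lm:Abds} are met -- i.e.\ that every non-normally-ordered factor carries a power $\eta^j$ with $j\geq 2$, guaranteed by Lemma \ref{lm:indu} vi) -- since $\eta$ itself is not summable.
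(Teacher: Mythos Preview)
Your overall strategy is correct and coincides with the paper's: decompose $\cG_N^\beta$ into the four conjugated pieces $e^{-B(\eta)}\cL^{(j)}_{N,\beta}e^{B(\eta)}$, expand each via Lemmas~\ref{lm:indu}--\ref{lm:conv-series}, isolate the leading (constant and quadratic) contributions, and control the rest. The scattering relation~(\ref{eq:eta-scat}) is indeed the mechanism that prevents the off-diagonal quadratic terms from blowing up; the paper invokes it both inside the error analysis of the kinetic piece and in the final assembly for part~a). Two points need adjustment.

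First, your description of the quartic term is off. The extraction ``two of the four momenta are the condensate mode $0$'' has already been carried out by the map $U$ in passing from $H_N^\beta$ to $\cL_N^\beta$; $\cL^{(4)}_{N,\beta}$ lives entirely on $\Lambda^*_+$. The constant and quadratic pieces that emerge from $e^{-B(\eta)}\cL^{(4)}_{N,\beta}e^{B(\eta)}$ come instead from the Duhamel expansion: $[\cL^{(4)}_{N,\beta},B(\eta)]$ produces (among other things) the term $\tfrac{\kappa}{2N}\sum_{q,r}\widehat V(r/N^\beta)\eta_{q+r}(b_q^*b_{-q}^*+\text{h.c.})$, and iterating once more yields the constant $\tfrac{\kappa}{2N}\sum_{p,q}\widehat V((p-q)/N^\beta)\eta_p\eta_q$.

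Second, your commutator argument for (\ref{eq:comm}) does not work as written: $[\cG_N^\beta,\cN_+]\neq e^{-B(\eta)}[\cL_N^\beta,\cN_+]e^{B(\eta)}$, since $e^{B(\eta)}$ does not commute with $\cN_+$. What the paper does---and what your expansion already sets up---is observe that after writing $\cG_N^\beta$ as a (convergent) sum of monomials of the type described in Lemma~\ref{lm:indu}, the commutator of $\cN_+$ with each such monomial is again a monomial of the same form, multiplied by an integer bounded by the total degree in $a,a^*$. The same bounds then apply with an extra harmless factor, absorbed by the geometric series in $\kappa$.

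A smaller remark: Lemma~\ref{lm:Pi-bds} alone is too coarse to reach the precise exponent $N^{(\beta-1)/2}$ in (\ref{eq:error}). The paper develops sharper pointwise estimates for the $\Pi^{(1)}$-strings (its Lemmas in Section~\ref{sec:prop}) and, crucially, uses that $\cV_N\leq C\kappa N^{\beta-1}(\cN_++1)(\cK+1)$ on $\cF_+^{\leq N}$; note that $\cV_N$ itself is part of $\cE_N^\beta$, so this last bound is what closes part~b).
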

In the last term in the definition of $G_p$, recall that $\wt{\eta}_q = - N \widehat{w}_{N,\ell} (q)$ coincides with $\eta_q$ for all $q \not = 0$ (we find it more convenient to include the contribution with $q=0$ in the definition of $G_p$). The proof of Theorem  \ref{thm:gene} represents the main technical part of our paper. It is deferred to Section \ref{sec:prop} below. In the next three sections, on the other hand, we show how to use the statement of Theorem \ref{thm:gene} to complete the proof of Theorem~\ref{thm:main}.

\section{Bounds on excitation vectors for low-energy states} \label{sec:cond}

In this section, we establish important bounds for  excitation vectors of the form $\xi_N = e^{-B(\eta)} U \psi_N \in \cF_+^{\leq N}$ associated with low energy states $\psi_N \in L^2_s (\Lambda^N)$. We begin with a simple application of the bound (\ref{eq:cond}) in Theorem \ref{thm:gene}. 
\begin{prop} \label{prop:cond}
Let $V \in L^3 (\bR^3)$ be non-negative, compactly supported and spherically symmetric and assume that the coupling constant $\kappa \geq 0$ is small enough. Let $E_N^\beta$ be the ground state energy of the Hamilton operator (\ref{eq:Ham0}). Let $\psi_N \in L^2_s (\Lambda^N)$ be a normalized wave function, with \[ \langle \psi_N , H^\beta_N \psi_N \rangle \leq E_N^\beta + \zeta  \]
for some $\zeta > 0$. Let $\xi_N = e^{-B(\eta)} U \psi_N$ be the excitation vector associated with $\psi_N$ (so that $\psi_N = U^* e^{B(\eta)} \xi_N$). Then there exists a constant $C > 0$ such that 
\begin{equation}\label{eq:condNH} \begin{split} \langle \xi_N , \cN_+  \xi_N \rangle &\leq C (1 + \zeta) \\ \langle \xi_N, \cH_N^\beta \xi_N \rangle &\leq C (1 + \zeta) \end{split} \end{equation}
\end{prop}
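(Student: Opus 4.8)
The plan is to read off both estimates directly from the operator inequality (\ref{eq:cond}) in Theorem~\ref{thm:gene}, using only the unitarity of $U$ and $e^{B(\eta)}$ and the elementary spectral gap $p^2 \geq 4\pi^2$ on $\Lambda^*_+$.

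First I would record that, since $U$ and $e^{B(\eta)}$ are unitary and $\psi_N = U^* e^{B(\eta)}\xi_N$, the excitation vector is normalized, $\|\xi_N\| = \|\psi_N\| = 1$, and, by the definition (\ref{eq:GN}) of $\cG_N^\beta$,
\[
\langle \xi_N, \cG_N^\beta \xi_N \rangle
= \langle e^{-B(\eta)}U\psi_N,\; e^{-B(\eta)}U H_N^\beta U^* e^{B(\eta)} e^{-B(\eta)}U\psi_N\rangle
= \langle \psi_N, H_N^\beta \psi_N \rangle \leq E_N^\beta + \zeta ,
\]
the middle identity following from $U^* e^{B(\eta)} e^{-B(\eta)} U = \mathrm{Id}$ together with the fact that $e^{-B(\eta)}U$ preserves inner products. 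For $\psi_N$ only in the form domain of $H_N^\beta$ this chain should be read at the level of quadratic forms: the hypothesis $\langle \psi_N, H_N^\beta \psi_N\rangle < \infty$ places $\psi_N$ in that form domain, hence $\xi_N$ in the form domain of $\cG_N^\beta$, and unitary conjugation is form-preserving.

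Next I would take the expectation of (\ref{eq:cond}) in the normalized state $\xi_N$, obtaining
\[
\zeta \geq \langle \xi_N, (\cG_N^\beta - E_N^\beta)\xi_N\rangle \geq \tfrac12 \langle \xi_N, \cH_N^\beta \xi_N\rangle - C ,
\]
which already yields $\langle \xi_N, \cH_N^\beta \xi_N\rangle \leq 2C + 2\zeta$, the second of the claimed bounds. For the first one I would use that every $p \in \Lambda^*_+ = 2\pi\bZ^3 \backslash \{0\}$ satisfies $p^2 \geq 4\pi^2$, so that $\cK = \sum_{p \in \Lambda^*_+} p^2 a_p^* a_p \geq 4\pi^2 \cN_+$; combined with $\cV_N \geq 0$ (which holds because $V \geq 0$, so that $\cV_N$ is a positive two-body interaction restricted to the excitation sector) this gives $\cN_+ \leq (4\pi^2)^{-1}\cK \leq (4\pi^2)^{-1}\cH_N^\beta$ and hence $\langle \xi_N, \cN_+ \xi_N\rangle \leq (4\pi^2)^{-1}(2C+2\zeta)$. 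Both right-hand sides are of the announced form $C'(1+\zeta)$.

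There is essentially no obstacle here: the proposition is an immediate corollary of part~a) of Theorem~\ref{thm:gene}, and all the genuine difficulty is concentrated in the proof of that theorem, which is deferred to Section~\ref{sec:prop}. The only point deserving a little care is the passage to quadratic forms mentioned above, which is what legitimizes the identity $\langle \xi_N, \cG_N^\beta \xi_N\rangle = \langle \psi_N, H_N^\beta \psi_N\rangle$ when $\psi_N$ is assumed merely in the form domain of $H_N^\beta$ rather than in its operator domain.
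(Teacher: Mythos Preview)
Your proof is correct and follows essentially the same route as the paper: both use unitarity to identify $\langle \xi_N, \cG_N^\beta \xi_N\rangle$ with $\langle \psi_N, H_N^\beta \psi_N\rangle$, apply the operator bound (\ref{eq:cond}) to control $\cH_N^\beta$, and then use $\cN_+ \leq (2\pi)^{-2}\cK \leq (2\pi)^{-2}\cH_N^\beta$ to deduce the $\cN_+$ bound. Your extra remark on quadratic forms is a harmless elaboration not present in the paper.
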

\begin{proof}
Since, on $\cF_+^{\leq N}$, $\cN_+  \leq (2\pi)^{-2} \cK \leq (2\pi)^{-2} \cH_N^\beta$, it is enough to show the second bound in (\ref{eq:condNH}). {F}rom (\ref{eq:cond}), we find
\[ \begin{split}  \langle  \xi_N , \cH_N^\beta \xi_N \rangle &\leq C + 2 \langle \xi_N, (\cG_N^\beta - E_N^\beta) \xi_N \rangle \\ &= C + 2 \left[ \langle \xi_N, e^{-B(\eta)} U H_N U^* e^{B(\eta)} \xi_N \rangle - E_N^\beta \right] \\ &= C + 2 \left[ \langle \psi_N, H_N \psi_N \rangle - E_N^\beta \right] \leq C (1 + \zeta) \end{split} \]
\end{proof}

To control the expectation of the error term in (\ref{eq:error}), we need stronger estimates on 
excitation vectors associated with low-energy states. We prove the required bounds in the next proposition, combining (\ref{eq:cond}) with the commutator estimate (\ref{eq:comm}). We remark that the proposition also holds with the same proof in the case $\beta =1$. 
\begin{prop} \label{prop:condNH}
Let $V \in L^3 (\bR^3)$ be non-negative, compactly supported and spherically symmetric and assume that the coupling constant $\kappa \geq 0$ is small enough. Let $E_N^\beta$ be the ground state energy of the Hamilton operator (\ref{eq:Ham0}). Let $\psi_N \in L^2_s (\Lambda^N)$ with $\| \psi_N \| = 1$ belong to the spectral subspace of the Hamiltonian (\ref{eq:Ham0}), with energies below $E_N^\beta + \zeta$, for some $\zeta > 0$. In other words, assume that 
\[ \psi_N = {\bf 1}_{(-\infty ; E_N^\beta + \zeta]} (H_N^\beta) \psi_N \]
Let $\xi_N = e^{-B(\eta)} U \psi_N$ be the excitation vector associated with $\psi_N$. Then there exists a constant $C > 0$ such that 
\[ \langle \xi_N, (\cN_+ +1) ( \cK + 1) \xi_N \rangle \leq \langle \xi_N, (\cN_+ +1) ( \cH_N^\beta + 1) \xi_N \rangle \leq C (1 + \zeta^2) \]
\end{prop}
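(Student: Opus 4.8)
The strategy is to bootstrap from Proposition~\ref{prop:cond} using the commutator bound (\ref{eq:comm}), together with the lower bound (\ref{eq:cond}), to control the higher moment $\langle \xi_N, (\cN_+ +1)(\cH_N^\beta+1)\xi_N\rangle$. The first inequality is immediate since $\cK \leq \cH_N^\beta$ on $\cF_+^{\leq N}$ (because $\cV_N \geq 0$), so all the work goes into the second. The basic idea is a Grönwall-type / Duhamel argument: introduce, for a parameter $\ell > 0$ (to be sent to $0$), the regularized weight $m_\ell(\cN_+) = \cN_+ (1+\ell\,\cN_+)^{-1}$, which is bounded (so that all expressions below make sense on $\cF_+^{\leq N}$), and study the quantity
\[ h(\ell) = \langle \xi_N, m_\ell(\cN_+)\,(\cG_N^\beta - E_N^\beta + C)\,\xi_N\rangle, \]
where $C$ is the constant from (\ref{eq:cond}), so that $\cG_N^\beta - E_N^\beta + C \geq \tfrac12 \cH_N^\beta + 1 \geq 1 > 0$. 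The point is that $\langle \psi_N, H_N^\beta\psi_N\rangle \leq E_N^\beta + \zeta$ translates, via $\psi_N = U^* e^{B(\eta)}\xi_N$ and unitarity, into $\langle \xi_N, (\cG_N^\beta - E_N^\beta)\xi_N\rangle \leq \zeta$, and one wants to ``commute'' the weight $m_\ell(\cN_+)$ past $\cG_N^\beta$ at the cost of the commutator (\ref{eq:comm}).

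Concretely, write $m_\ell(\cN_+)(\cG_N^\beta - E_N^\beta+C) = \sqrt{m_\ell}\,(\cG_N^\beta - E_N^\beta+C)\sqrt{m_\ell} + \sqrt{m_\ell}\,[\sqrt{m_\ell},\cG_N^\beta]$ (plus its adjoint, symmetrizing), where by functional calculus $[\sqrt{m_\ell(\cN_+)},\cG_N^\beta]$ can be expressed as an integral over the resolvent of $\cN_+$ of $[\cN_+,\cG_N^\beta] = i\cdot(-i[\cG_N^\beta,\cN_+])$, and hence controlled by $(\cH_N^\beta+1)$ uniformly in $\ell$ using (\ref{eq:comm}); crucially the derivative $m_\ell'$ is bounded by $1$ and supported near the spectrum, so the commutator estimate produces a factor like $C(\cH_N^\beta+1)^{1/2}(\text{something bounded})(\cH_N^\beta+1)^{1/2}$ that is $\ell$-uniform. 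The first term, $\sqrt{m_\ell}(\cG_N^\beta - E_N^\beta+C)\sqrt{m_\ell}$, has expectation bounded by $(C+\zeta)\|\sqrt{m_\ell}\xi_N\|^2$ — but wait, that is not quite enough because $\|\sqrt{m_\ell}\xi_N\|^2 \leq \langle \xi_N,\cN_+\xi_N\rangle \leq C(1+\zeta)$ by Proposition~\ref{prop:cond}, so actually this term is bounded by $C(1+\zeta)^2$. Collecting, one gets
\[ \langle \xi_N, m_\ell(\cN_+)(\tfrac12\cH_N^\beta+1)\xi_N\rangle \leq C(1+\zeta)^2 + C\,\langle\xi_N,(\cH_N^\beta+1)\xi_N\rangle \leq C(1+\zeta^2) \]
again by Proposition~\ref{prop:cond}, uniformly in $\ell$. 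Letting $\ell \to 0$ and using monotone convergence ($m_\ell(\cN_+)\nearrow \cN_+$) yields the claim.

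The main obstacle is making the commutator manipulation rigorous and, in particular, verifying that the ``error'' produced by moving $\sqrt{m_\ell(\cN_+)}$ through $\cG_N^\beta$ is genuinely controlled by $(\cH_N^\beta+1)$ with an $\ell$-independent constant — this is exactly where (\ref{eq:comm}) is used, but one must be careful that the cross terms combine to give something of the form $\varepsilon\,\langle\xi_N, \cN_+(\cH_N^\beta+1)\xi_N\rangle$ (which could be reabsorbed on the left for small $\varepsilon$, except that here the weight is $m_\ell$ not $\cN_+$ so one needs the argument to close without circularity) plus $C_\varepsilon\langle\xi_N,(\cH_N^\beta+1)\xi_N\rangle$. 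A clean way to avoid circularity is to note that $[\sqrt{m_\ell(\cN_+)},\cG_N^\beta]$ lowers/raises $\cN_+$ by bounded amounts and $m_\ell' \leq 1$, so the dangerous term is actually bounded by $C\sqrt{m_\ell}\,(\cH_N^\beta+1)\sqrt{m_\ell} \cdot (\text{bdd})$ — no extra factor of $\cN_+$ appears — and one only ever needs the already-established bound $\langle\xi_N,(\cN_+ + 1)\xi_N\rangle + \langle\xi_N,(\cH_N^\beta+1)\xi_N\rangle \leq C(1+\zeta)$ from Proposition~\ref{prop:cond}. A secondary technical point is justifying all these computations on the dense domain where $\cH_N^\beta$ and $\cN_+$ act (e.g. finitely-many-particle vectors), which is routine since everything lives on $\cF_+^{\leq N}$ where $\cN_+ \leq N$ is bounded, so in fact $m_\ell$ is superfluous for finite $N$ and one can work directly with $\cN_+$, the $\ell$-regularization only being a bookkeeping convenience to keep the $N$-dependence transparent.
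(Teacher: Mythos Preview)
Your overall architecture---split off a commutator term via an integral representation of the square root and control it through (\ref{eq:comm}), then bound the remaining ``main'' term using the spectral hypothesis and Proposition~\ref{prop:cond}---matches the paper's. The commutator half of your argument is essentially correct and parallels what the paper does. The gap is in the other half.

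You assert that the symmetrized term $\langle \sqrt{m_\ell}\,\xi_N,(\cG_N^\beta-E_N^\beta+C)\sqrt{m_\ell}\,\xi_N\rangle$ is bounded by $(C+\zeta)\|\sqrt{m_\ell}\,\xi_N\|^2$. This would follow if $\sqrt{m_\ell(\cN_+)}\,\xi_N$ lay in the spectral subspace of $\cG_N^\beta$ associated with $[E_N^\beta,E_N^\beta+\zeta]$, but it does not: $\sqrt{m_\ell(\cN_+)}$ does not commute with $\cG_N^\beta$, and the spectral hypothesis is only on $\xi_N$ itself. Nor can you fall back on a naive Cauchy--Schwarz $|\langle m_\ell\xi_N,(\wt\cG_N^\beta+C)\xi_N\rangle|\le \|m_\ell\xi_N\|\,\|(\wt\cG_N^\beta+C)\xi_N\|$, since $\|m_\ell\xi_N\|$ involves $\langle\xi_N,\cN_+^2\xi_N\rangle$, which Proposition~\ref{prop:cond} does not control. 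So as written the argument does not close.

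The paper resolves this by a different Cauchy--Schwarz, pivoting on the positive operator $\wt\cG_N^\beta+C$ itself:
\[
\big|\langle\xi_N,(\cN_++1)(\wt\cG_N^\beta+C)\xi_N\rangle\big|
\le \langle\xi_N,(\cN_++1)(\wt\cG_N^\beta+C)^{-1}(\cN_++1)\xi_N\rangle^{1/2}\,\langle\xi_N,(\wt\cG_N^\beta+C)^3\xi_N\rangle^{1/2}.
\]
The first factor is handled via the operator inequality $(\wt\cG_N^\beta+C)\ge c(\cN_++1)$ (a consequence of (\ref{eq:cond})), reducing it to $\langle\xi_N,(\cN_++1)\xi_N\rangle$, which \emph{is} controlled by Proposition~\ref{prop:cond}. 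The second factor is where the spectral hypothesis is applied---crucially, the expectation is still in the vector $\xi_N$, not in $\sqrt{m_\ell}\,\xi_N$, so the functional calculus gives $\langle\xi_N,(\wt\cG_N^\beta+C)^3\xi_N\rangle\le(\zeta+C)^3$ directly. This is the missing idea in your sketch; once you insert it in place of your unjustified bound on the symmetrized term, the rest of your outline (including the $\delta$-absorption at the end) goes through exactly as in the paper.
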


\begin{proof}
The first inequality follows from $\cV_N \geq 0$ and since $\cK,\cV_N$ both commute with $\cN_+$. We focus on the second inequality. {F}rom (\ref{eq:cond}), we find
\[ \begin{split} \langle \xi_N, (\cN_+ +1) (\cH_N^\beta + 1) \xi_N \rangle &= \langle \xi_N (\cN_+ +1)^{1/2} (\cH_N^\beta + 1) (\cN_+ +1)^{1/2} \xi_N \rangle \\ &\leq 2 \langle \xi_N, (\cN_+ +1)^{1/2} (\wt{\cG}_N^\beta + C) (\cN_+ +1)^{1/2} \xi_N \rangle \end{split} \]
where we introduced the notation $\wt{\cG}_N^\beta = \cG_N^\beta - E^\beta_N$. Next, we commute the operator $(\wt{\cG}_N^\beta + C)$ to the right, through the factor $(\cN_+ +1)^{1/2}$. We obtain
\begin{equation}\label{eq:NHN} 
\begin{split} 
\langle \xi_N, (\cN_+ +1) (\cH_N^\beta + 1) \xi_N \rangle \leq  \; &2 \langle \xi_N, (\cN_+ +1) (\wt{\cG}_N^\beta + C) \xi_N \rangle  \\ &+ 2 \Big\langle \xi_N, (\cN_+ +1)^{1/2} \Big[ \cG_N^\beta , (\cN_+ +1)^{1/2} \Big] \xi_N \Big\rangle \end{split} \end{equation}
With Cauchy-Schwarz, the first term on the r.h.s. of (\ref{eq:NHN}) can be estimated by 
\[ \begin{split} 
\Big| \langle \xi_N, (\cN_+ +1) &(\wt{\cG}_N^\beta + C)  \xi_N \rangle \Big| \\ &\leq \langle \xi_N, (\cN_+ +1) (\wt{\cG}_N^\beta +C)^{-1} (\cN_+ +1) \xi_N \rangle^{1/2} \langle \xi_N, (\wt{\cG}_N^\beta + C)^3 \xi_N \rangle^{1/2} \end{split} \]
Since, by (\ref{eq:cond}), $(\wt{\cG}_N^\beta + C) \geq c (\cN_+ +1)$ for some $c > 0$ (choosing $C > 0$ large enough), and since $\xi_N = e^{-B(\eta)} U \psi_N$ is in the spectral subspace of $\wt{\cG}_N^\beta$, associated with the interval $[0 ; \zeta]$, we conclude that
\begin{equation}\label{eq:NL3} \left| \langle \xi_N, (\cN_+ +1) (\wt{\cG}_N^\beta + C) \xi_N \rangle \right| \leq \langle \xi_N, (\cN_+ +1) \xi_N \rangle^{1/2} (\zeta +C)^{3/2} \leq C (1 + \zeta^2) \end{equation}
where we used Prop. \ref{prop:cond}. 

As for the commutator term on the r.h.s. of (\ref{eq:NHN}), we use the representation 
\[ \frac{1}{\sqrt{z}} = \frac{1}{\pi} \int_0^\infty  \frac{1}{\sqrt{t}} \, \frac{1}{t+z} \, dt \, . \]
We find
\[ \begin{split} i \left[ \cG_N^\beta , (\cN_+ +1)^{1/2} \right] = \; &\frac{1}{\pi} \int_0^\infty dt \, \sqrt{t} \frac{1}{t+\cN_+ +1} i[\cG_N^\beta, \cN_+ ] \frac{1}{t + \cN_+ +1} \\= \; &\frac{1}{\pi} \int_0^\infty dt \, \sqrt{t} \frac{1}{t+\cN_+ +1} (\cH_N^\beta +1)^{1/2}  \cA (\cH_N^\beta + 1)^{1/2} \frac{1}{t + \cN_+ +1} \end{split} \]
where we defined the operator $\cA = (\cH_N^\beta +1)^{-1/2} i[\cG_N^\beta , \cN_+ ] (\cH_N^\beta +1)^{-1/2}$. It follows from (\ref{eq:comm}) that $\cA$ is a bounded operator, with norm $\| \cA \| \leq C$, uniformly in $N$. Hence, we have (since $[\cH_N^\beta , \cN_+ ] = 0$) 
\[ \begin{split}  \Big| \langle \xi_N, & (\cN_+ +1)^{1/2} [ \cG_N^\beta , (\cN_+ +1)^{1/2} ] \xi_N \rangle \Big| \\ \leq \; & \frac{1}{\pi} \int_0^\infty dt \, \sqrt{t} \, \Big| \Big\langle \xi_N , \frac{(\cN_+ +1)^{1/2}(\cH_N^\beta +1)^{1/2}}{t + \cN_+ +1}  \cA  \frac{(\cH_N^\beta +1)^{1/2}}{t+\cN_+ +1} \xi_N \Big\rangle \Big| \\ \leq \; & \frac{1}{\pi} \int_0^\infty dt \,  \sqrt{t} \, \frac{1}{(t+1)^2} \left\| (\cN_+ +1)^{1/2} (\cH_N^\beta + 1)^{1/2} \xi_N \right\| 
\left\| (\cH_N^\beta + 1)^{1/2} \xi_N \right\| \end{split} \]
Therefore, for every $\delta > 0$ we find $C > 0$ such that
\[ \begin{split} \Big| \langle \xi_N, (\cN_+ +1)^{1/2} &[ \cG_N^\beta , (\cN_+ +1)^{1/2} ] \xi_N \rangle \Big|  \\ &\leq \delta \langle \xi_N , (\cN_+ +1) (\cH_N^\beta + 1) \xi_N \rangle + C \langle \xi_N , (\cH_N^\beta + 1) \xi_N \rangle \\ &\leq \delta \langle \xi_N , (\cN_+ +1) (\cH_N^\beta + 1) \xi_N \rangle + C (1+\zeta) \end{split} \]
by Prop. \ref{prop:cond}. Choosing $\delta =1/2$, we conclude from (\ref{eq:NHN}) and (\ref{eq:NL3}) that 
\[ \langle \xi_N, (\cN_+ +1) (\cH_N^\beta + 1) \xi_N \rangle \leq C (1+ \zeta^2) \]
\end{proof}

\section{Diagonalization of quadratic Hamiltonian}\label{sec:diag}

{F}rom Theorem \ref{thm:gene}, we recall that the excitation Hamiltonian $\cG_N^\beta = e^{-B(\eta)} U H_N U^* e^{B(\eta)}$ can be decomposed as 
\begin{equation}\label{eq:GQE} \cG_N^\beta = C_N^\beta + \cQ_N^\beta + \cE^\beta_N 
\end{equation}
with the constant $C^\beta_N$ defined in (\ref{eq:CN}), 
the quadratic part 
\begin{equation}\label{eq:cQ} \cQ_N^\beta = \sum_{p \in \Lambda^*_+} \Big[F_p b_p^* b_p + \frac{1}{2} G_p (b_p^* b_{-p}^* + b_p b_{-p}) \Big] \end{equation}
with the coefficients $F_p,G_p$ defined in (\ref{eq:FpGp}) and with the error term $\cE^\beta_N$ satisfying
\begin{equation}\label{eq:cENNH} \pm \cE^\beta_N \leq C N^{(\beta-1)/2} (\cN_+ +1) (\cK + 1) , \end{equation}
The goal of this section is to diagonalize the quadratic operator (\ref{eq:cQ}). To this end, we will conjugate the excitation Hamiltonian $\cG_N^\beta$ with one more generalized Bogoliubov transformation. 

In order to define the Bogoliubov transformation that is going to diagonalize $\cQ_N^\beta$ we need, first of all, to establish some properties of the coefficients $F_{p}, G_{p}$ defined in (\ref{eq:FpGp}).
\begin{lemma}\label{lm:FpGp}
Let $V \in L^3 (\bR^3)$ be non-negative, compactly supported and spherically symmetric. If the coupling constant $\kappa \geq 0$ is small enough, we find a constant $C > 0$ such that $p^2 /2 \leq F_p \leq C (1+p^2)$, 
\begin{equation}
|G_p| \leq \frac{C \kappa}{p^2} 
\end{equation}  
and 
\begin{equation}\label{eq:FG} \frac{|G_p|}{F_p} \leq \frac{C}{|p|^4} \leq \frac{1}{2} 
\end{equation} 
for all $p\in \Lambda^*_+$. 
\end{lemma}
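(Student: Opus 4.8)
The plan is to express $F_p$ and $G_p$ through the scalars $\sigma_p=\sinh\eta_p$, $\gamma_p=\cosh\eta_p$ and to exploit that $\|\eta\|_\infty$ is small: by (\ref{eq:etap}), $|\eta_p|\le C\kappa/p^2\le C\kappa/(4\pi^2)$ for every $p\in\Lambda^*_+$, since $p^2\ge 4\pi^2$ on $\Lambda^*_+$. We will also use $|\widehat V(k)|\le\widehat V(0)=\|V\|_{L^1}<\infty$ for all $k$, which holds because $V\ge 0$ lies in $L^1$ (being in $L^3$ with compact support). With the identities $(\sigma_p+\gamma_p)^2=e^{2\eta_p}$, $\sigma_p^2+\gamma_p^2=\cosh(2\eta_p)$ and $2\sigma_p\gamma_p=\sinh(2\eta_p)$, the coefficient $F_p$ from (\ref{eq:FpGp}) reads
\[ F_p=p^2\cosh(2\eta_p)+\kappa\,\widehat V(p/N^\beta)\,e^{2\eta_p}. \]
The first summand lies between $p^2$ and $p^2\cosh(2\|\eta\|_\infty)$, while the second is at most $\kappa\widehat V(0)e^{2\|\eta\|_\infty}$ in modulus, a quantity tending to $0$ as $\kappa\to 0$ and hence $\le p^2/2$ for all $p\in\Lambda^*_+$ once $\kappa$ is small. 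This gives $F_p\ge p^2-p^2/2=p^2/2$, and, since $\kappa$ and $\|\eta\|_\infty$ are bounded, $F_p\le C(1+p^2)$.

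The real point is the decay of $G_p$. Its two leading pieces, $2p^2\sigma_p\gamma_p=p^2\sinh(2\eta_p)$ and $\kappa\widehat V(p/N^\beta)(\sigma_p+\gamma_p)^2$, are each only of size $\kappa$ and carry no decay in $p$ individually; the decay has to be extracted from the cancellation encoded in the scattering equation (\ref{eq:eta-scat}). I would therefore peel off the linearizations, using $\wt\eta_p=\eta_p$ for $p\ne 0$:
\[ G_p=\Big[\,2p^2\wt\eta_p+\kappa\widehat V(p/N^\beta)+\tfrac{\kappa}{N}\sum_{q\in\Lambda^*}\widehat V\big((p-q)/N^\beta\big)\wt\eta_q\,\Big]+p^2\big(\sinh(2\eta_p)-2\eta_p\big)+\kappa\widehat V(p/N^\beta)\big(e^{2\eta_p}-1\big). \]
By (\ref{eq:eta-scat}) (multiplied by $2$) the bracket equals $2N\lambda_{N,\ell}\widehat\chi_\ell(p)+2\lambda_{N,\ell}\sum_q\widehat\chi_\ell(p-q)\wt\eta_q$. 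Since $w_{N,\ell}$ is supported in $\{|x|\le\ell\}$ we have $\chi_\ell\,w_{N,\ell}=w_{N,\ell}$, so the convolution theorem on the torus gives $\sum_q\widehat\chi_\ell(p-q)\wt\eta_q=-N\,\widehat{\chi_\ell w_{N,\ell}}(p)=-N\,\widehat{w}_{N,\ell}(p)=\wt\eta_p$; thus the bracket collapses to $2N\lambda_{N,\ell}\widehat\chi_\ell(p)+2\lambda_{N,\ell}\wt\eta_p$.

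It then remains to estimate four quantities. By Lemma~\ref{3.0.sceqlemma}(i), $N\lambda_{N,\ell}\le C\kappa$; by (\ref{eq:etap}) (or Lemma~\ref{3.0.sceqlemma}(iv)), $|\wt\eta_p|\le C\kappa/p^2$; and $|\widehat\chi_\ell(p)|\le C/p^2$ from the explicit Fourier transform of a ball together with $|p|\ge 2\pi$. The two nonlinear remainders are of higher order: using $|\sinh x-x|\le\tfrac16|x|^3\cosh|x|$ and $|e^x-1|\le|x|e^{|x|}$ (valid since $\|\eta\|_\infty$ is small), $p^2|\sinh(2\eta_p)-2\eta_p|\le Cp^2|\eta_p|^3\le C\kappa^3/p^4$ and $\kappa|\widehat V(p/N^\beta)|\,|e^{2\eta_p}-1|\le C\kappa|\eta_p|\le C\kappa^2/p^2$. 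Summing the four contributions yields $|G_p|\le C\kappa/p^2$, the asserted bound on $G_p$. Combining it with $F_p\ge p^2/2$ gives $|G_p|/F_p\le 2C\kappa/|p|^4$; choosing $\kappa$ small enough that $2C\kappa\le(2\pi)^4/2$ and recalling $|p|^4\ge(2\pi)^4$ for $p\in\Lambda^*_+$ turns this into $|G_p|/F_p\le 2C\kappa/|p|^4\le\tfrac12$, which is (\ref{eq:FG}). The only step that is not pure bookkeeping is recognizing that $G_p$ must be routed through the scattering equation, and that the $\widehat\chi_\ell$-convolution appearing there must be resolved via the support of $w_{N,\ell}$; everything else is elementary estimation of $\sinh$, $\cosh$ and $\exp$ near the origin.
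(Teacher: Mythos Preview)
Your proof is correct and follows essentially the same route as the paper: bound $F_p$ from below via $\sigma_p^2+\gamma_p^2\ge 1$, linearize $G_p$ by replacing $\sinh(2\eta_p)\to 2\eta_p$ and $e^{2\eta_p}\to 1$, and then invoke the scattering relation (\ref{eq:eta-scat}) to expose the $|p|^{-2}$ decay. Your treatment of the convolution $\sum_q\widehat\chi_\ell(p-q)\wt\eta_q$ is in fact slightly cleaner than the paper's: recognizing via $\chi_\ell w_{N,\ell}=w_{N,\ell}$ that it collapses exactly to $\wt\eta_p$ lets you cite the known bound $|\wt\eta_p|\le C\kappa/p^2$ directly, whereas the paper rewrites it as $N\lambda_{N,\ell}\int_{|x|\le\ell}w_{N,\ell}(x)e^{-ip\cdot x}dx$ and rederives the $|p|^{-2}$ decay by integrating by parts in spherical coordinates with the pointwise estimates of Lemma~\ref{3.0.sceqlemma}(iii).
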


\begin{proof}
Since $\sigma^2_p + \gamma_p^2 \geq 1$, and since there is a constant $C > 0$ such that $|\widehat{V} (p/N^\beta)| \leq C$ and $|\sigma_p|, \gamma_p \leq C$ for all $p \in \Lambda^*_+$ (using the boundedness (\ref{eq:etap}) of the coefficients $\eta_p$), we easily find that $F_p \geq p^2 - C \kappa \geq p^2/ 2$, if $\kappa > 0$ is small enough (recall that $|p| > (2\pi)$ on $\Lambda^*_+$). To bound $G_p$, we write
\begin{equation}\label{eq:GP-scat} G_p = 2 p^2 \eta_p + \kappa \widehat{V} (p/N^\beta) + \frac{\kappa}{N} \sum_{q \in \Lambda^*} \widehat{V} ((p-q)/N^\beta) \wt{\eta}_q + \wt{G}_{N,p} 
\end{equation}
where $\wt{G}_{N,p}$ is such that $|\wt{G}_{N,p}| \leq C \kappa  p^{-2}$ for all $p \in \Lambda^*_+$. Here we used the fact that
\[ \begin{split} | \sigma_p \gamma_p - \eta_p | &= | \sinh (\eta_p) \cosh (\eta_p) - \eta_p |  \\ &= \left| \frac{1}{2} \sinh  (2\eta_p) - \eta_p \right| \leq \frac{1}{2} \sum_{n \geq 1} \frac{2^{2n+1} |\eta_p|^{2n+1}}{(2n+1)!} \leq  \frac{C \kappa^3}{|p|^6} \end{split} \]
and that, similarly,
\[ |(\sigma_p + \gamma_p)^2 - 1| \leq \frac{C\kappa}{p^2} \]
To estimate the other terms in (\ref{eq:GP-scat}), we use the relation (\ref{eq:eta-scat}). We obtain that
\begin{equation}\label{eq:GNp2} G_p = 2N\lambda_{N,\ell} \widehat{\chi}_\ell (p) + 2\lambda_{N,\ell} \sum_{q \in \Lambda^*} \widehat{\chi}_\ell (p-q) \wt{\eta}_q + \wt{G}_{N,p} \end{equation}
From Lemma \ref{3.0.sceqlemma}, part i), we have $N \lambda_{N,\ell} \leq C \kappa$. A simple computation shows that 
\begin{equation}\label{eq:chip} \widehat{\chi}_\ell (p) = \int_{|x| \leq \ell}  e^{-ip \cdot x} dx = \frac{4\pi}{|p|^2} \left( \frac{\sin (\ell |p|)}{|p|} - \ell \cos (\ell |p|) \right) \end{equation}
which, in particular, implies that $|\widehat{\chi}_\ell (p)| \leq C |p|^{-2}$. Similarly, we find
\[ \lambda_N \sum_{q \in \Lambda^*} \widehat{\chi}_\ell (p-q) \wt{\eta}_q = N \lambda_{N,\ell} \int_\Lambda \chi_\ell (x) w_{N,\ell} (x) e^{-ip\cdot x} dx = N \lambda_{N,\ell} \int_{|x| \leq \ell} w_{N,\ell} (x) e^{-ip \cdot x} dx \]
Switching to spherical coordinates and integrating by parts, we find (abusing slightly the notation by writing $w_{N,\ell} (r)$ to indicate $w_{N,\ell} (x)$ for $|x| = r$),   
\[  \begin{split} 
\int_{|x| \leq \ell} w_{N,\ell} (x) e^{-ip \cdot x} dx &= 2\pi \int_0^\ell dr \, r^2 w_{N,\ell} (r) \int_0^\pi d\theta \, \sin \theta \, e^{-i |p| r \cos \theta} \\ &= \frac{4\pi}{|p|} \int_0^\ell dr \, r w_{N,\ell} (r) \sin (|p|r) \\ &= - \frac{4\pi}{|p|^2} \lim_{r \to 0} r w_{N,\ell} (r) + \frac{4\pi}{|p|^2} \int_0^\ell dr \,  \frac{d}{dr} (r w_{N,\ell} (r)) \cos (|p| r) \end{split}\]
With (\ref{3.0.scbounds1}) and using again the bound $N\lambda_{N,\ell} \leq C \kappa$, we conclude that there is a constant $C > 0$ such that 
\begin{equation}\label{eq:bd-convchi}  \left| \lambda_N \sum_{q \in \Lambda^*} \widehat{\chi}_\ell (p-q) \wt{\eta}_q \right| \leq \frac{C \kappa}{p^2} \end{equation}
for all $p \in \Lambda^*_+$. {F}rom (\ref{eq:GNp2}), we obtain that there is $C > 0$ such that $|G_p| \leq C\kappa/p^2$. Together with the estimate $|F_p| \geq p^2/2$, we find the desired bound, choosing $\kappa > 0$ sufficiently small.
\end{proof} 
 
Since by Lemma \ref{lm:FpGp} we know that $|G_p|/F_p \leq 1/2$ for all $p \in \Lambda^*_+$, we can define a sequence $\tau_p$ by setting
\[ \tanh (2\tau_p) = - \frac{G_p}{F_p} \]
for all $p \in \Lambda^*_+$. Equivalently,
\begin{equation}\label{eq:taup} \tau_p = \frac{1}{4} \log \frac{1- G_p/F_p}{1+G_p/F_p} \end{equation} 
This easily implies that 
\begin{equation}\label{eq:tau-dec} |\tau_p| \leq C \frac{|G_p|}{F_p} \leq \frac{C \kappa}{|p|^{4}} \end{equation}
for all $p \in \Lambda^*_+$. Let us stress the fact that the fast decay of $\tau$ for large momenta (which will be crucial below) is a consequence of the fact that the coefficients $\eta_p$ satisfy the relation (\ref{eq:eta-scat}).

We use the coefficients $\tau_p$ (which are, by definition, real) to define a new generalized Bogoliubov transformation. As in (\ref{eq:eBeta}), we construct the antisymmetric operator 
\[ B(\tau) = \frac{1}{2} \sum_{p\in \Lambda^*_+}  \tau_p (b^*_p b^*_{-p} - b_p b_{-p}) \]
and the generalized Bogoliubov transformation
\begin{equation}\label{eq:eBtau} e^{B(\tau)} = \exp \left[ \frac{1}{2} \sum_{p\in \Lambda^*_+}  \tau_p (b^*_p b^*_{-p} - b_p b_{-p}) \right] \end{equation}
With (\ref{eq:eBtau}), we define a new excitation Hamiltonian $\cM_N^\beta : \cF_+^{\leq N} \to \cF_+^{\leq N}$ by setting 
\begin{equation}\label{eq:MN} 
\begin{split} \cM_N^\beta  &= e^{-B(\tau)} e^{-B(\eta)} U H_N U^* e^{B(\eta)} e^{B(\tau)} \\ &= e^{-B(\tau)} \cG_N^\beta e^{B(\tau)} \\ &= C_N^\beta + e^{-B(\tau)} \cQ_N^\beta e^{B(\tau)} + e^{-B(\tau)} \cE^\beta_N e^{B(\tau)} 
\end{split} \end{equation}
In the next lemma we show that, with (\ref{eq:taup}), the action of the generalized Bogoliubov transformation (\ref{eq:eBtau}) approximately diagonalizes the quadratic operator~$\cQ_N^\beta$.
\begin{lemma}\label{lm:diago}
Let $V \in L^3 (\bR^3)$ be non-negative, compactly supported and spherically symmetric and assume that the coupling constant $\kappa \geq 0$ is small enough, so that the bounds of Lemma \ref{lm:FpGp} hold true. Let $\cQ_N^\beta$ be defined as in (\ref{eq:cQ}) and $\tau_p$ as in (\ref{eq:taup}). Then   
\[ e^{-B(\tau)} \cQ_N^\beta e^{B(\tau)} = \frac{1}{2} \sum_{p \in \Lambda^*_+} \left[ -F_p + \sqrt{F_p^2 - G_p^2} \right] + \sum_{p \in \Lambda^*_+} \sqrt{F_p^2 - G_p^2} \; a_p^* a_p + \delta_{N,\beta} \]
where the self-adjoint operator $\delta_{N,\beta}$ is such that 
\begin{equation} \label{eq:pmdelta} \pm \delta_{N,\beta} \leq C N^{-1} (\cN_+ +1) (\cK + 1) \end{equation}
\end{lemma}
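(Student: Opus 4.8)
The plan is to reduce the computation to the action of $e^{B(\tau)}$ on the modified fields $b_p,b^*_p$ and then to expand $\cQ_N^\beta$ and normal order. Since $\|\tau\|_1+\|\tau\|_2\le C\kappa$ by \eqref{eq:tau-dec}, for $\kappa$ small Lemma \ref{lm:conv-series} applies with $\tau$ in place of $\eta$, and Lemma \ref{lm:indu} describes the structure of the nested commutators $\text{ad}^{(n)}_{B(\tau)}(b_p)$. Isolating in $\sum_{n\ge 0}\frac{(-1)^n}{n!}\text{ad}^{(n)}_{B(\tau)}(b_p)$ the leading contributions of Lemma \ref{lm:indu}~iv) and collecting everything else --- together with the harmless factors $(N-\cN_+)/N$ and $(N+1-\cN_+)/N$ appearing there --- into a remainder, I would write
\begin{equation*}
e^{-B(\tau)}b_p e^{B(\tau)}=\cosh(\tau_p)\,b_p+\sinh(\tau_p)\,b^*_{-p}+\mathfrak r_p,\qquad e^{-B(\tau)}b^*_p e^{B(\tau)}=\cosh(\tau_p)\,b^*_p+\sinh(\tau_p)\,b_{-p}+\mathfrak r^*_p.
\end{equation*}
By Lemma \ref{lm:indu}~iv) every monomial in $\mathfrak r_p$ carries an explicit factor $N^{-k}$ with $k\ge1$ (the only term without such a factor being the leading Bogoliubov term). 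Using Lemma \ref{lm:Pi-bds} to bound the $\Pi^{(1)}$- and $\Pi^{(2)}$-operators occurring in $\mathfrak r_p$, the inequality $N^{-k}(\cN_++1)^k\le N^{-1}(\cN_++1)$ on $\cF_+^{\le N}$ for $k\ge1$ (which trades surplus number operators for powers of $N$), and the fast decay $|\tau_p|\le C\kappa|p|^{-4}$ (so that $|p|^2\tau_p\in\ell^2(\Lambda^*_+)$), I would then establish operator bounds showing that, for $g\in\ell^2(\Lambda^*_+)$ and $j\in\{0,1,2\}$, the operator $\sum_p|p|^j g_p\,\mathfrak r_p^{\#}$ has size $N^{-1}\|g\|_2$ times a fixed power of $(\cN_++1)^{1/2}$ --- the momentum sums inside $\mathfrak r_p$ converging because the extra factors of $|p|$ are always absorbed either onto a coefficient $\tau_p$ in $\mathfrak r_p$ or, by pairing with an internal $b$-field, onto a factor $\cK^{1/2}$. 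These are the analogues for $\tau$ of the estimates established for the $\eta$-transformation in \cite{BS,BBCS}, and are technically easier because $\tau$ decays faster.

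Substituting these identities into $\cQ_N^\beta=\sum_p[F_p b_p^* b_p+\tfrac12 G_p(b_p^* b^*_{-p}+b_pb_{-p})]$, the conjugated operator $e^{-B(\tau)}\cQ_N^\beta e^{B(\tau)}$ splits into a leading part (all $\mathfrak r_p$ dropped) plus terms containing one or two factors $\mathfrak r_p^{\#}$. Using $F_p\le C(1+p^2)$ and $|G_p|\le C\kappa|p|^{-2}$ from Lemma \ref{lm:FpGp}, the remainder estimates above, Cauchy--Schwarz in the momentum sum, Lemma \ref{lm:Ngrow} (to commute $e^{\pm B(\tau)}$-conjugated number operators) and $\cN_+\le C\cK$ on $\cF_+^{\le N}$, all the terms containing an $\mathfrak r_p^{\#}$ contribute an error bounded by $CN^{-1}(\cN_++1)(\cK+1)$ in the sense of operators and go into $\delta_{N,\beta}$. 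In the leading part I would expand the products and normal order using \eqref{eq:comm-bp}: the $\delta_{p,q}$-part of $[b_p,b^*_q]$ produces a constant, while the $-\cN_+/N$ and $-N^{-1}a^*_qa_p$ parts --- summed against the coefficients, which is legitimate since $\sum_p|F_p|\sinh^2\tau_p$ and $\sum_p(|F_p|+|G_p|)|\sinh\tau_p\cosh\tau_p|$ converge by Lemma \ref{lm:FpGp} and the decay of $\tau$ --- likewise contribute an error of size $N^{-1}(\cN_++1)(\cK+1)$ to $\delta_{N,\beta}$.

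What remains of the leading part is the constant $\tfrac12\sum_p[F_p\cosh(2\tau_p)+G_p\sinh(2\tau_p)-F_p]$ together with $\sum_p[\widetilde F_p b_p^* b_p+\tfrac12\widetilde G_p(b_p^* b^*_{-p}+b_pb_{-p})]$, where $\widetilde F_p=F_p\cosh(2\tau_p)+G_p\sinh(2\tau_p)$ and $\widetilde G_p=F_p\sinh(2\tau_p)+G_p\cosh(2\tau_p)$. The defining relation $\tanh(2\tau_p)=-G_p/F_p$ --- equivalently $\cosh(2\tau_p)=F_p/\sqrt{F_p^2-G_p^2}$ and $\sinh(2\tau_p)=-G_p/\sqrt{F_p^2-G_p^2}$, which is meaningful since $|G_p|/F_p\le1/2$ by Lemma \ref{lm:FpGp} --- gives $\widetilde G_p=0$, $\widetilde F_p=\sqrt{F_p^2-G_p^2}$, and turns the constant into $\tfrac12\sum_p[-F_p+\sqrt{F_p^2-G_p^2}]$, a convergent sum by Lemma \ref{lm:FpGp}. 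Finally, writing $b_p^* b_p=a_p^*\tfrac{N-\cN_+}{N}a_p$ and replacing it by $a_p^* a_p$ costs the error $-N^{-1}\sum_p\sqrt{F_p^2-G_p^2}\,a_p^*\cN_+a_p$, bounded in absolute value by $CN^{-1}(\cN_++1)(\cK+1)$ since $\sqrt{F_p^2-G_p^2}\le F_p\le C(1+p^2)$; this too joins $\delta_{N,\beta}$. Collecting everything yields the claimed identity with $\pm\delta_{N,\beta}\le CN^{-1}(\cN_++1)(\cK+1)$.

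The main obstacle is the first step: the remainder estimates for $\mathfrak r_p$, with the decisive gain $N^{-1}$ and with enough $|p|$-weight to compensate the coefficients $F_p\simeq p^2$ of $\cQ_N^\beta$. This demands a careful pass through the classification of Lemma \ref{lm:indu}, verifying that each non-leading monomial, once multiplied by $F_p$ and paired with a $b$-field, is controlled by a \emph{fixed} power of $(\cN_++1)$ times $(\cK+1)$ --- which is achieved by using $N^{-k}(\cN_++1)^k\le N^{-1}(\cN_++1)$ on $\cF_+^{\le N}$. It is precisely here that the quartic decay $|\tau_p|\le C\kappa|p|^{-4}$, itself a consequence of the scattering relation \eqref{eq:eta-scat} as emphasized after \eqref{eq:tau-dec}, is indispensable: a slower decay of $\tau$ would leave an $\ell^2$-divergent momentum sum. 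This is the analogue of the harder analysis carried out for the $\eta$-transformation in \cite{BBCS}.
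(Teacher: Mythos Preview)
Your proposal is correct and follows essentially the same route as the paper: define the remainder $\mathfrak r_p$ (the paper calls it $d_p$), substitute into $\cQ_N^\beta$, use the defining relation $\tanh(2\tau_p)=-G_p/F_p$ to diagonalize the leading part algebraically, replace $b_p^*b_p$ by $a_p^*a_p$ at cost $N^{-1}(\cN_++1)(\cK+1)$, and control the $\mathfrak r_p$-terms through the structure of Lemma~\ref{lm:indu} and the quartic decay of $\tau$. The only organizational difference is that the paper does not bound operators of the form $\sum_p|p|^j g_p\,\mathfrak r_p^{\#}$; instead it keeps the four cross terms $F_p\,d_p^*\,e^{-B(\tau)}b_pe^{B(\tau)}$, $F_p(\wt\gamma_pb_p^*+\wt\sigma_pb_p)d_p$, etc.\ intact and proves the two pointwise estimates
\[
\|(\cN_++1)^{1/2}\text{ad}^{(m)}_{B(\tau)}(b_p)\xi\|\le C^m\kappa^m m!\big[|p|^{-4}\|(\cN_++1)\xi\|+\|b_p(\cN_++1)^{1/2}\xi\|\big]
\]
and
\[
\|(\cN_++1)^{-1/2}\big[\text{ad}^{(n)}_{B(\tau)}(b_p)-\tau_p^nb^{\sharp_n}_{\alpha_np}\big]\xi\|\le C^n\kappa^n n!\,N^{-1}\big[|p|^{-4}\|(\cN_++1)\xi\|+\|b_p(\cN_++1)^{1/2}\xi\|\big],
\]
then pairs them via Cauchy--Schwarz in $p$; this is exactly the dichotomy you describe (either a $\tau_p$ absorbs $|p|^2$, or the surviving $b_p$ contributes to $\cK^{1/2}$). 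One small imprecision: your claim that ``every monomial in $\mathfrak r_p$ carries an explicit factor $N^{-k}$ with $k\ge1$'' is not literally true for the difference between the term of Lemma~\ref{lm:indu}~iv) and $\tau_p^n b^{\sharp_n}_{\alpha_np}$, which instead carries a factor $\big[(N-\cN_+)/N\big]^{\cdots}-1=O(n\cN_+/N)$; you acknowledge this parenthetically, and it does give the required $N^{-1}$ gain (with an extra $\cN_+$), so the argument goes through.
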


\begin{proof}
For $p \in \Lambda^*_+$, we define a remainder operator $d_p$ through 
\begin{equation}\label{eq:deco-tau} e^{-B(\tau)} b_p e^{B(\tau)} = \cosh (\tau_p) b_p + \sinh (\tau_p) b_{-p}^* + d_p \end{equation}
With (\ref{eq:deco-tau}) and using the short-hand notation $\wt{\gamma}_p = \cosh \tau_p, \wt{\sigma}_p = \sinh \tau_p$, we can write
\begin{equation}\label{eq:diag1} \begin{split}  e^{-B(\tau)} \cQ_N^\beta e^{B(\tau)} = \; & \sum_{p \in \Lambda^*_+} \big(F_p \wt{\sigma}_p^2 + G_p \wt{\gamma}_p \wt{\sigma}_p \big) + \sum_{p \in \Lambda^*_+} \Big[ F_p (\wt{\gamma}^2_p + \wt{\sigma}^2_p) + 2 G_p \wt{\sigma}_p \wt{\gamma}_p \Big] b_p^* b_p \\ &+ \frac{1}{2} \sum_{p \in \Lambda^*_+} \Big[ 2 F_p \wt{\gamma}_p \wt{\sigma}_p + G_p (\wt{\gamma}_p^2 + \wt{\sigma}^2_p) \Big] (b_p b_{-p} + b_p^* b_{-p}^* ) + \wt{\delta}_{N,\beta} \end{split}  \end{equation}
where 
\begin{equation}\label{eq:delta1} \begin{split} \wt{\delta}_{N,\beta} = \; &\sum_{p \in \Lambda_+^*} F_p d_p^* e^{-B(\tau)} b_p e^{B(\tau)} + \sum_{p \in \Lambda^*_+} F_p (\wt{\gamma}_p b_p^* + \wt{\sigma}_p b_p) d_p \\ &+ \frac{1}{2} \sum_{p \in \Lambda^*_+} G_p \Big[ d_p^* e^{-B(\tau)} b_{-p}^* e^{B(\tau)} + \text{h.c.} \Big] + \frac{1}{2} \sum_{p \in \Lambda^*_+} G_p \Big[ (\wt{\gamma}_p b_p^* + \wt{\sigma}_p b_{-p}) d_{-p}^* + \text{h.c.} \Big]  \end{split} \end{equation}
With the definition (\ref{eq:taup}), (\ref{eq:diag1}) simplifies, after a lengthy but straightforward computation, to
\[ e^{-B(\tau)} \cQ_N^\beta e^{B(\tau)} = \; 
\frac{1}{2} \sum_{p \in \Lambda^*_+} \Big[ - F_p + \sqrt{F_p^2 - G_p^2} \Big] + \sum_{p \in \Lambda^*_+} \sqrt{F_p^2 - G_p^2} \; b_p^* b_p + \wt{\delta}_{N,\beta} \]
{F}rom the bound $F_p \leq C (1+p^2)$ in Lemma \ref{lm:FpGp} we obtain  
\[ \begin{split} 
\Big| \sum_{p \in \Lambda^*_+} \sqrt{F_p^2 - G_p^2} \, \Big[ \langle \xi , b_p^* b_p \, \xi \rangle - \langle \xi , a_p^* a_p \xi \rangle \Big]  \Big| = &\; \Big| \frac{1}{N} \sum_{p \in \Lambda^*_+} \sqrt{F_p^2 - G_p^2} \, \langle \xi, a_p^* \cN_+  a_p \xi \rangle \Big| \\ \leq \; &\frac{1}{N} \sum_{p \in \Lambda^*_+} (p^2 + 1) \| a_p (\cN_+ +1)^{1/2} \xi \|^2 \\ = \; &\frac{1}{N} \langle \xi, (\cN_+ +1)(\cK+1) \xi \rangle \end{split} \]
for all $\xi \in \cF_+^{\leq N}$. Hence, the claim follows if we can show that the operator $\wt{\delta}_{N,\beta}$ defined in (\ref{eq:delta1}) satisfies (\ref{eq:pmdelta}). To reach this goal we notice that, by Lemma \ref{lm:conv-series}, 
\[ e^{-B(\tau)} b_p e^{B(\tau)} = \sum_{n \in \bN} \frac{(-1)^n}{n!} \text{ad}^{(n)}_{B(\tau)} (b_p) \]
and therefore 
\[ d_p = \sum_{n \in \bN} \frac{1}{(2n)!} \left[ \text{ad}^{(2n)}_{B(\tau)} (b_p) - \tau_p^{2n} b_p \right] - \sum_{n \in \bN} \frac{1}{(2n+1)!} \left[ \text{ad}^{(2n+1)}_{B(\tau)} (b_p) - \tau_p^{2n+1} b^*_{-p} \right] \]
Let us now consider the expectation of the first term on the r.h.s. of (\ref{eq:delta1}). We find
\begin{equation}\label{eq:delta11} 
\begin{split} 
\Big| \sum_{p \in \Lambda^*_+} F_p \langle d_p &\xi , e^{-B(\tau)} b_p e^{B(\tau)} \xi \rangle \Big| \\ &\leq \sum_{n,m \in \bN} \frac{1}{n!m!} \sum_{p \in \Lambda^*_+} F_p \, \| (\cN_+ +1)^{-1/2} \big[ \text{ad}^{(n)}_{B(\tau)} (b_p) - \tau_p^n b_{\alpha_n p}^{\sharp_n} \big] \xi\| \\ &\hspace{5.5cm} \times  \| (\cN_+ +1)^{1/2} \text{ad}^{(m)}_{B(\tau)} (b_p) \xi \| \end{split} \end{equation}
where $\alpha_n = 1$ and $\sharp_n = \cdot$ if $n$ is even while $\alpha_n = -1$ and $\sharp_n = *$ if $n$ is odd.   
  
{F}rom Lemma \ref{lm:indu} it follows that, for any $m \in \bN$, $\text{ad}^{(m)}_{B(\tau)} (b_p)$ is given by the sum of $2^{m} m!$ terms of the form
\begin{equation}\label{eq:typ-tau} \Lambda_1 \dots \Lambda_{i_1} N^{-k_1} \Pi^{(1)}_{\sharp,\flat} (\tau^{j_1}, \dots , \tau^{j_{k_1}} ; \tau_p^{\ell_1}) \end{equation}
where $i_1,k_1,\ell_1 \in \bN$, $j_1, \dots , j_{k_1} \in \bN \backslash \{ 0 \}$, and where each $\Lambda_j$ is either a factor $(N-\cN_+ )/N$, $(N+1-\cN_+ )/N$ or a $\Pi^{(2)}$-operator having the form
\begin{equation}\label{eq:Pi2-tau} N^{-p} \Pi^{(2)}_{\underline{\sharp}, \underline{\flat}} (\tau^{q_1}, \dots , \tau^{q_p}) \end{equation}
for some $p, q_1, \dots , q_p \in \bN \backslash \{ 0 \}$. Distinguishing the cases $\ell_1 \geq 1$ and $\ell_1 = 0$, this implies that 
\begin{equation}\label{eq:admBtau} \| (\cN_+ +1)^{1/2} \text{ad}^{(m)}_{B(\tau)} (b_p) \xi \| \leq C^m \kappa^m m!  \left[ |p|^{-4} \| (\cN_+ +1) \xi \| + \| b_p (\cN_+ +1)^{1/2} \xi \| \right] \end{equation}

Similarly, the operator $\text{ad}^{(n)}_{B(\tau)} (b_p)$ can be expanded in the sum of $2^n n!$ contributions of the form (\ref{eq:typ-tau}). Part iv) of Lemma \ref{lm:indu} implies that exactly one of these contributions will have the form 
\begin{equation}\label{eq:main-tau1} \left( \frac{N-\cN_+ }{N} \right)^{n/2} \left( \frac{N+1-\cN_+ }{N} \right)^{n/2} \tau_p^n b_p \,  \end{equation}
if $n$ is even or the form
\begin{equation}\label{eq:main-tau2} - \left( \frac{N-\cN_+ }{N} \right)^{(n+1)/2} \left( \frac{N+1-\cN_+ }{N} \right)^{(n-1)/2} \tau_p^n b^*_{-p} \,  \end{equation}
if $n$ is odd. All other terms will have either $k_1 \not = 0$ or at least one of the $\Lambda$-operator having the form (\ref{eq:Pi2-tau}). Notice that the main part of the contribution (\ref{eq:main-tau1}), (\ref{eq:main-tau2}) is exactly $\tau_p^n b_p$ if $n$ is even and $-\tau_p^n b^*_{-p}$ if $n$ is odd and it is canceled exactly by the subtraction of $\tau_p^n b_{\alpha_n p}^{\sharp_n}$. We obtain  
\begin{equation}\label{eq:ad-b-tau} 
\begin{split} \| (\cN_+ +1)^{-1/2} \big[ \text{ad}_{B(\tau)}^{(n)} &(b_p) -   \tau_p^n b_{\alpha_n p}^{\sharp_n} \big] \xi \| \\ &\leq C^n \kappa^n n! N^{-1} \left[ |p|^{-4} \| (\cN_+ +1) \xi \| + \| b_p (\cN_+ +1)^{1/2} \xi \| \right] \end{split} \end{equation}
Inserting the last inequality and (\ref{eq:admBtau}) in (\ref{eq:delta11}), and using the estimate $F_p \leq C (p^2 + 1)$ from Lemma \ref{lm:FpGp}, we conclude that the expectation of the first term on the r.h.s. of (\ref{eq:delta1}) is bounded by 
\[ \Big| \sum_{p \in \Lambda^*_+} F_p \langle d_p \xi , e^{-B(\tau)} b_p e^{B(\tau)} \xi \rangle \Big| \leq C N^{-1} \langle \xi , (\cN_+ +1) (\cK+1) \xi \rangle \]
The expectation of the second term on the r.h.s. of (\ref{eq:delta1}) can be bounded similarly. 

To bound the expectation of the third term on the r.h.s. of (\ref{eq:delta1}) we expand
\[ \begin{split} 
\Big| \sum_{p \in \Lambda^*_+} G_p &\langle d_p \xi , e^{-B(\tau)} b^*_{-p} e^{B(\tau)} \xi \rangle \Big| \\ =\; & \sum_{p \in \Lambda^*_+} |G_p| \| (\cN_+ +1)^{-1/2} d_p \xi \| \| (\cN_+ +1)^{1/2} e^{-B(\tau)} b_{-p}^* e^{B(\tau)} \xi \| \\ \leq \; &C \kappa \| (\cN_+ +1) \xi \| \sum_{n \geq 0} \frac{1}{n!} \sum_{p \in \Lambda^*_+} |p|^{-2} \Big\| (\cN_+ +1)^{-1/2} \big[ \text{ad}^{(n)}_{B(\tau)} (b_p) - \tau_p^n b_{\alpha_n p}^{\sharp_n} \big] \xi \Big\| \end{split} \]
where we used (twice) Lemma \ref{lm:Ngrow} and the bound $|G_p| \leq C \kappa |p|^{-2}$ from Lemma \ref{lm:FpGp}. Inserting (\ref{eq:ad-b-tau}), we find
\[ \Big| \sum_{p \in \Lambda^*_+} G_p \langle d_p \xi , e^{-B(\tau)} b^*_{-p} e^{B(\tau)} \xi \rangle \Big| 
\leq C N^{-1} \| (\cN_+ +1) \xi \|^2 \]
if $\kappa > 0$ is small enough. The last term on the r.h.s. of (\ref{eq:delta1}) can be controlled similarly. 
\end{proof}

Next, we prove precise estimates for the constant term and for the coefficients of the diagonal part of $\cM_N^\beta$, as defined in (\ref{eq:MN}).  
\begin{lemma}\label{lm:gs+exc}
Let $V \in L^3 (\bR^3)$ be non-negative, compactly supported and spherically symmetric and assume that the coupling constant $\kappa \geq 0$ is small enough,  so that the bounds of Lemma \ref{lm:FpGp} hold true (with $F_p, G_p$ defined as in (\ref{eq:FpGp})). Suppose that $C^\beta_N$ is defined as in (\ref{eq:CN}).  Then, for $N \to \infty$,  
\begin{equation}\label{eq:CN+bd} \begin{split} 
&C^\beta_N + \frac{1}{2} \sum_{p \in \Lambda^*_+} \left[ - F_p + \sqrt{F_p^2 - G_p^2} \right] \\ &= 4\pi (N-1) a_N^\beta + \frac{1}{2} \sum_{p \in \Lambda^*_+} \left[ - p^2  - \kappa \widehat{V} (0) + \sqrt{|p|^4 + 2 |p|^2 \kappa \widehat{V} (0)} + \frac{\kappa \widehat{V}^2 (0)}{2p^2} \right] + \cO (N^{-\alpha})  \end{split} 
\end{equation}
with $a_N^\beta$ as defined in (\ref{bN}) and for all $0< \alpha < \beta$ such that $\alpha \leq (1-\beta)/2$. Furthermore, 
on $\cF_+^{\leq N}$, we have 
\begin{equation}\label{eq:harm} \sum_{p \in \Lambda^*_+} \sqrt{F_p^2 - G_p^2} \; a_p^* a_p  = \sum_{p \in \Lambda^*_+} \sqrt{p^4 + 2 p^2 \kappa \widehat{V} (0)} a_p^* a_p + \vartheta_{N,\beta} \end{equation}
where 
\[ \pm \vartheta_{N,\beta} \leq C N^{-\alpha} (\cN_+ +1)^2 \]
for all $\alpha \leq \min (\beta, (1-\beta))$. 
\end{lemma}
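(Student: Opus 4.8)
\emph{Strategy.} Both identities follow by inserting the explicit formulas (\ref{eq:FpGp}), (\ref{eq:CN}) for $F_p,G_p,C_N^\beta$, writing $\sigma_p=\sinh\eta_p$, $\gamma_p=\cosh\eta_p$, and systematically using the scattering relation (\ref{eq:eta-scat}) together with the position–space bounds of Lemma \ref{3.0.sceqlemma} on $w_{N,\ell}$ to extract the leading behaviour and estimate the remainders. The starting point is an \emph{exact} algebraic identity. Since $\gamma_p\pm\sigma_p=e^{\pm\eta_p}$ and $(\sigma_p+\gamma_p)^2$ is the coefficient of $\kappa\widehat V(p/N^\beta)$ in both $F_p$ and $G_p$, a direct computation from (\ref{eq:FpGp}) gives
\begin{equation}\label{eq:FGplan}
F_p+G_p=e^{2\eta_p}\big(p^2+2\kappa\widehat V(p/N^\beta)\big)+\delta_p,\qquad F_p-G_p=p^2e^{-2\eta_p}-\delta_p,
\end{equation}
where $\delta_p:=\tfrac{\kappa}{N}\sum_{q\in\Lambda^*}\widehat V((p-q)/N^\beta)\wt\eta_q=-\kappa N^{3\beta}\int_\Lambda V(N^\beta x)\,w_{N,\ell}(x)\,e^{-ip\cdot x}\,dx$. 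By (\ref{3.0.scbounds1}) and the compact support of $V$ one gets $|\delta_p|\le C\kappa^2N^{\beta-1}$ uniformly in $p$, and $\sum_p|\delta_p|^2$ is controlled by the $L^2$-bounds on $w_{N,\ell}$ from Lemma \ref{3.0.sceqlemma}. Multiplying the two relations in (\ref{eq:FGplan}) the factors $e^{\pm2\eta_p}$ cancel, and we obtain
\begin{equation}\label{eq:F2G2plan}
F_p^2-G_p^2=p^4+2p^2\kappa\widehat V(p/N^\beta)+\mathcal{E}_p,\qquad \mathcal{E}_p=\delta_p\big[p^2e^{-2\eta_p}-e^{2\eta_p}(p^2+2\kappa\widehat V(p/N^\beta))\big]-\delta_p^2 .
\end{equation}
Using (\ref{eq:eta-scat}) to substitute $-2p^2\wt\eta_p$, and the bounds $|\eta_p|\le C\kappa/p^2$ (see (\ref{eq:etap})), $|\widehat\chi_\ell(p)|\le C|p|^{-2}$, $N\lambda_{N,\ell}\le C\kappa$ and the estimate on $\lambda_{N,\ell}\sum_q\widehat\chi_\ell(p-q)\wt\eta_q$ from the proof of Lemma \ref{lm:FpGp}, one checks that the bracket in (\ref{eq:F2G2plan}) is $O(\kappa|p|^{-2})$, so $|\mathcal{E}_p|\le C\kappa^3N^{\beta-1}|p|^{-2}+\delta_p^2$.

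\emph{The quadratic part (\ref{eq:harm}).} From (\ref{eq:FGplan}) and the above, $F_p\pm G_p\ge p^2/2$ for small $\kappa$ and large $N$, hence $\sqrt{F_p^2-G_p^2}\ge p^2/2$, and (\ref{eq:F2G2plan}) gives $\big|\sqrt{F_p^2-G_p^2}-\sqrt{p^4+2p^2\kappa\widehat V(p/N^\beta)}\big|\le |\mathcal{E}_p|/p^2$. Because $V$ is compactly supported, $\widehat V$ is entire, and spherical symmetry forces $\nabla\widehat V(0)=0$, so $|\widehat V(p/N^\beta)-\widehat V(0)|\le C\min(|p|^2N^{-2\beta},1)$, which yields $\big|\sqrt{p^4+2p^2\kappa\widehat V(p/N^\beta)}-\sqrt{p^4+2p^2\kappa\widehat V(0)}\big|\le C\kappa\min(|p|^2N^{-2\beta},1)$. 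Writing $\vartheta_{N,\beta}=\sum_p c_p\,a_p^*a_p$ with $c_p$ the total discrepancy of the two dispersion relations, one bounds $\pm\vartheta_{N,\beta}$ by splitting the momentum sum at a scale $|p|=N^\gamma$ with $\alpha/2\le\gamma\le\beta-\alpha/2$ (possible precisely because $\alpha\le\beta$): on $\{|p|\le N^\gamma\}$ one has $|c_p|\le C(N^{-(1-\beta)}+N^{2\gamma-2\beta})\le CN^{-\alpha}$ (using also $\alpha\le1-\beta$), contributing $\le CN^{-\alpha}\cN_+$; on $\{|p|>N^\gamma\}$, $|c_p|$ is bounded and $\sum_{|p|>N^\gamma}a_p^*a_p\le N^{-2\gamma}\cK$, which produces the remaining $\cO(N^{-\alpha})$ factor. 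This gives the operator bound on $\cF_+^{\leq N}$ claimed in (\ref{eq:harm}).

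\emph{The constant (\ref{eq:CN+bd}).} Expanding $\sigma_p^2+\gamma_p^2=1+2\sigma_p^2$ and $(\sigma_p+\gamma_p)^2=1+2\sigma_p^2+2\sigma_p\gamma_p$ in (\ref{eq:FpGp}) gives $F_p=p^2+\kappa\widehat V(p/N^\beta)+2[p^2\sigma_p^2+\kappa\widehat V(p/N^\beta)(\sigma_p^2+\sigma_p\gamma_p)]$, so the quadratic-in-$\eta$ sum in $C_N^\beta$ cancels exactly against $-\tfrac12F_p$, leaving
\[
C_N^\beta+\tfrac12\sum_p\big[-F_p+\sqrt{F_p^2-G_p^2}\big]=\tfrac{N-1}{2}\kappa\widehat V(0)+\tfrac{\kappa}{2N}\sum_{p,q}\widehat V((p-q)/N^\beta)\eta_p\eta_q+\tfrac12\sum_p\big[-p^2-\kappa\widehat V(p/N^\beta)+\sqrt{F_p^2-G_p^2}\big].
\]
Into the last sum one inserts (\ref{eq:F2G2plan}) and the Taylor expansion $\sqrt{p^4+2p^2\kappa t}=p^2+\kappa t-\tfrac{\kappa^2t^2}{2p^2}+O(\kappa^3t^3/p^4)$ with $t=\widehat V(p/N^\beta)$: the $p^2$ and $\kappa\widehat V(p/N^\beta)$ contributions cancel, leaving $-\tfrac14\sum_p\tfrac{\kappa^2\widehat V^2(p/N^\beta)}{p^2}$ plus a remainder estimated term by term by $C\kappa^3/p^4+|\mathcal{E}_p|/p^2$ (summable). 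The Bogoliubov sum on the right of (\ref{eq:CN+bd}) is produced by the same expansion at $t=\widehat V(0)$, and the difference of the two expansions is $\cO(N^{-\alpha})$ by the estimate $|\widehat V(p/N^\beta)-\widehat V(0)|\le C\min(|p|^2N^{-2\beta},1)$. Finally, $\tfrac{N-1}{2}\kappa\widehat V(0)$, $-\tfrac14\sum_p\tfrac{\kappa^2\widehat V^2(p/N^\beta)}{p^2}$ and $\tfrac{\kappa}{2N}\sum_{p,q}\widehat V((p-q)/N^\beta)\eta_p\eta_q$ are identified with $4\pi(N-1)a_N^\beta$: writing the scattering relation (\ref{eq:eta-scat}) as a fixed-point equation $\wt\eta_p=-\tfrac{1}{p^2}\big(\tfrac\kappa2\widehat V(p/N^\beta)+\tfrac{\kappa}{2N}\sum_q\widehat V((p-q)/N^\beta)\wt\eta_q\big)+\tfrac{1}{p^2}\big(N\lambda_{N,\ell}\widehat\chi_\ell(p)+\lambda_{N,\ell}\sum_q\widehat\chi_\ell(p-q)\wt\eta_q\big)$ and iterating reconstructs exactly the truncated Born series (\ref{bN}); the $\lambda_{N,\ell}$–terms (estimated via (\ref{eq:chip}), $N\lambda_{N,\ell}\le C\kappa$ and (\ref{3.0.scbounds1})) and the Neumann tail of order $k>m_\beta$ contribute $\cO(N^{-\alpha})$, cf. Remark~2, provided $\alpha\le\min(\beta,(1-\beta)/2)$.

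\emph{Main obstacle.} The algebra (\ref{eq:FGplan})–(\ref{eq:F2G2plan}) and the cancellations above are short; the real work is the bookkeeping of error terms — in particular verifying that every remainder (from the $\sinh/\cosh$ expansions, from $\mathcal{E}_p$ and $\delta_p$, from replacing $\widehat V(p/N^\beta)$ by $\widehat V(0)$, and from truncating the Born series at $m_\beta$) is of order $\cO(N^{-\alpha})$ for the stated range of $\alpha$, and arranging the momentum splittings so that no divergent or $\cO(1)$ contribution survives. This hinges essentially on two inputs: the fast decay $|\eta_p|\lesssim\kappa|p|^{-2}$ and the relation (\ref{eq:eta-scat}) (which makes $G_p$ and hence $\mathcal{E}_p$ small for large $|p|$), and the uniform smallness $|\delta_p|\lesssim\kappa^2N^{\beta-1}$ together with the $\ell^2$-control of $\delta_p$, both coming from the pointwise bounds on $w_{N,\ell}$ in Lemma \ref{3.0.sceqlemma}.
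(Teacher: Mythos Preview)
Your algebraic starting point is nice: the identities $F_p\pm G_p=e^{\pm2\eta_p}(\dots)\pm\delta_p$ give (\ref{eq:F2G2plan}) with $\mathcal E_p=-2G_p\delta_p+\delta_p^2$, and this is exactly the paper's $A_p$ in (\ref{eq:Ap-def}), obtained more cleanly. Your treatment of the quadratic part (\ref{eq:harm}) is correct and essentially parallel to the paper's (both end up with a bound involving $\cK$, so the $(\cN_++1)^2$ in the lemma statement is a minor inconsistency in the paper itself).

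The gap is in the constant part. You discard $\tfrac14\sum_p\mathcal E_p/p^2$ as a ``summable remainder'' and then claim that iterating the scattering relation inside the single term $\tfrac{\kappa}{2N}\sum_{p,q}\widehat V((p-q)/N^\beta)\eta_p\eta_q$ reproduces the full Born series (\ref{bN}). Neither step is correct for $\beta$ close to $1$. First, $\sum_p\delta_p^2/p^2$ is of order $N^{3\beta-2}$ (split at $|p|=N^\beta$: the low part is bounded by $\|\delta\|_\infty^2\cdot N^\beta\sim N^{3\beta-2}$, the high part by $N^{-2\beta}\|\delta\|_{\ell^2}^2\sim N^{3\beta-2}$), which is not $\cO(N^{-\alpha})$ once $\beta\ge 2/3$. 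Second, and more structurally, iterating \emph{both} factors $\eta_p,\eta_q$ over-counts: at order $\kappa^4/N^2$ the cross term $\eta_p^{(0)}\eta_q^{(1)}+\eta_p^{(1)}\eta_q^{(0)}$ gives $-\tfrac{\kappa^4}{8N^2}\sum_{p,q,q'}\frac{\widehat V_p\widehat V_{p-q}\widehat V_{q-q'}\widehat V_{q'}}{p^2q^2q'^2}$, which is \emph{twice} the $k=3$ contribution to $4\pi(N-1)a_N^\beta$. The missing half is precisely $\tfrac14\sum_p\delta_p^2/p^2$ (after the relabelling $p\leftrightarrow q$ that identifies the ``star'' and ``chain'' sums). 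In the paper this is handled by keeping $\tfrac14\sum_pA_p/p^2$ together with the $\eta_p\eta_q$ term as the single quantity $\text{B}$ in (\ref{eq:B-def}); a cancellation between $A_p/(4p^2)$ and $\tfrac12\eta_p\delta_p$ then reduces $\text{B}$ to $-\tfrac14\sum_p\kappa\widehat V(p/N^\beta)\delta_p/p^2+\cO(N^{\beta-1})$, and it is \emph{this} linear-in-$\eta$ expression that is iterated once per step to produce (\ref{eq:iter-last}). Your proposal would be repaired by retaining $\tfrac14\sum_p\mathcal E_p/p^2$ and carrying out this combination before iterating.
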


\begin{proof}
{F}rom (\ref{eq:CN}) and from the definition of the coefficients $F_p,G_p$ in (\ref{eq:FpGp}) we obtain 
\[  C^\beta_N - \frac{1}{2} \sum_{p \in \Lambda^*_+} F_p =  \frac{(N-1)}{2} \kappa \widehat{V} (0) - \frac{1}{2} \sum_{p \in\Lambda^*_+} \left[ p^2 + \kappa \widehat{V} (p/N^\beta) \right] + \frac{\kappa}{2N} \sum_{p,q \in \Lambda^*_+} \widehat{V} (p/N^\beta) \eta_p \eta_q
\]
On the other hand, setting
\begin{equation}\label{eq:Ap-def} \begin{split}  A_p = \; & -2 \Big[ \kappa \widehat{V} (p/N^\beta) (\gamma_p + \sigma_p)^2 + 2 p^2 \gamma_p \sigma_p  \Big] \frac{\kappa}{N} \sum_{q \in \Lambda^*} \widehat{V} ((p-q)/N^\beta) \wt{\eta}_q \\ &- \Big[ \frac{\kappa}{N} \sum_{q \in \Lambda^*} \widehat{V} ((p-q)/N^\beta) \wt{\eta}_q \Big]^2 \end{split} \end{equation}
we find that 
\begin{equation}\label{eq:F-G} F_p^2 - G_p^2 = |p|^4 + 2p^2 \kappa \widehat{V} (p/N^\beta) + A_p 
\end{equation}
Notice that with (\ref{eq:etap}) and (\ref{eq:wteta0}), we have
\begin{equation}\label{eq:bd-b-1} \Big|  \frac{\kappa}{N} \sum_{q \in \Lambda^*} \widehat{V} ((p-q)/N^\beta) \wt{\eta}_q \Big| \leq C \frac{\kappa^2}{N} \sum_{q \in \Lambda^*} \frac{|\widehat{V} ((p-q)/N^\beta)|}{q^2 + 1} \leq C \kappa^2 N^{\beta-1} \end{equation}
which implies that \begin{equation}\label{eq:Ap-est} |A_p| \leq C N^{\beta-1}
\end{equation} 
for every fixed $p \in \Lambda^*$. Choosing $\kappa > 0$ so small that $|p|^4 + 2p^2 \kappa \widehat{V} (p/N^\beta)$ and $|p|^4 + 2p^2 \kappa \widehat{V} (p/N^\beta) + A_p$ are positive and bounded away from $0$, uniformly in $p \in \Lambda^*_+$, we observe that 
\[ \begin{split} \sqrt{|p|^4 + 2 p^2 \kappa \widehat{V} (p/N^\beta) + A_p} = \; &\sqrt{|p|^4 + 2p^2 \kappa \widehat{V} (p/N^\beta)} \\ &+ \frac{A_p}{\sqrt{|p|^4 + 2 p^2 \kappa \widehat{V} (p/N^\beta) + A_p} + \sqrt{|p|^4 + 2p^2 \kappa \widehat{V} (p/N^\beta)}}  \end{split} \]
The denominator in the last term is such that 
\[ \begin{split} 2 p^2 &\leq \sqrt{|p|^4 + 2 p^2 \kappa \widehat{V} (p/N^\beta) + A_p} + \sqrt{|p|^4 + 2p^2 \kappa \widehat{V} (p/N^\beta)} \\ &\hspace{6cm} \leq  2 p^2 \left[ 1 + C \left( \frac{A_p}{|p|^4} + \frac{\kappa \widehat{V} (p/N^\beta)}{p^2} \right) \right] \end{split} \]
This implies that 
\[ \begin{split} \frac{A_p}{2p^2} &\left[ 1- C \left( \frac{A_p}{|p|^4} + \frac{\kappa \widehat{V} (p/N^\beta)}{p^2} \right) \right] \\ &\hspace{2cm} \leq \frac{A_p}{\sqrt{|p|^4 + 2 p^2 \kappa \widehat{V} (p/N^\beta) + A_p} + \sqrt{|p|^4 + 2p^2 \kappa \widehat{V} (p/N^\beta)}} \leq \frac{A_p}{2p^2} \end{split}  \]
for all $p\in \Lambda^*_+$. Since, from (\ref{eq:Ap-est}),  
\[ \sum_{p \in \Lambda^*_+} \frac{A_p^2}{|p|^6} \leq C N^{2(\beta-1)} , \quad \text{ and } \quad \sum_{p\in \Lambda^*_+} \frac{A_p |\widehat{V} (p/N^\beta)|}{|p|^4} \leq C N^{\beta-1} \]
we conclude that 
\begin{equation}\label{eq:inter1} \begin{split} C_N^\beta + & \frac{1}{2} \sum_{p \in \Lambda^*_+} \Big[ -F_p + \sqrt{F_p^2 - G_p^2} \Big] \\ = \; & \frac{(N-1)}{2} \kappa \widehat{V} (0) + \frac{1}{2} \sum_{p \in\Lambda^*_+} \left[ - p^2 - \kappa \widehat{V} (p/N^\beta) + \sqrt{|p|^4 + 2p^2 \kappa \widehat{V} (p/N^\beta)} \right] 
\\ &+\sum_{p \in \Lambda^*_+} \left[ \frac{A_p}{4p^2} + \frac{\kappa}{2N} \sum_{q \in \Lambda^*_+} \widehat{V} ((p-q)/N^\beta) \eta_p \eta_q \right] + \cO (N^{\beta-1}) \end{split} \end{equation}
We still have to compute 
\begin{equation}\label{eq:B-def} \text{B} := \sum_{p\in \Lambda^*_+} \left[ \frac{A_p}{4p^2} + \frac{\kappa}{2N} \sum_{q \in \Lambda^*_+} \widehat{V} ((p-q)/N^\beta) \eta_p \eta_q \right] \end{equation}
To this end, we decompose $A_p = A_{1,p} + A_{2,p}$ with
\[ A_{1,p} = -\Big[ \kappa \widehat{V} (p/N^\beta) + 2 p^2 \eta_p + \frac{\kappa}{2N} \sum_{q \in \Lambda^*} \widehat{V} ((p-q)/N^\beta) \wt{\eta}_q \Big] \Big[ \frac{2\kappa}{N}  \sum_{q \in \Lambda^*_+} \widehat{V} ((p-q)/N^\beta) \wt{\eta}_q \Big] 
 \]
In other words, we define $A_{1,p}$ by replacing, in (\ref{eq:Ap-def}), $(\gamma_p + \sigma_p)^2$ by $1$ and $\gamma_p \sigma_p$ by $\eta_p$; recalling the bound (\ref{eq:bd-b-1}), we conclude that the rest term $A_{2,p}$ is such that 
\begin{equation}\label{eq:A2p} \sum_{p\in \Lambda^*_+} \frac{A_{2,p}}{p^2} \leq C N^{\beta-1} \end{equation}
{F}rom (\ref{eq:B-def}), we obtain  
\[ \begin{split} \text{B} = \; & -
\sum_{p\in \Lambda^*_+} \frac{1}{p^2} \Big[ \frac{\kappa}{2N}  \sum_{q \in \Lambda^*} \widehat{V} ((p-q)/N^\beta) \wt{\eta}_q \Big]\\ &\hspace{3cm} \times  \Big[ \kappa \widehat{V} (p/N^\beta) + p^2 \eta_p + \frac{\kappa}{2N} \sum_{q \in \Lambda^*} \widehat{V} ((p-q)/N^\beta) \wt{\eta}_q \Big] \\ &+ \cO (N^{\beta- 1})
\end{split} 
\]
Notice here that, in contrast with (\ref{eq:B-def}), the sum on the r.h.s. includes the point $q=0$ (which gives a contribution of order $N^{\beta-1}$). Using the relation (\ref{eq:eta-scat}), we find 
\[ \begin{split} \text{B} = \; & -
\sum_{p\in \Lambda^*_+} \frac{1}{p^2} \Big[ \frac{\kappa}{2N}  \sum_{q \in \Lambda^*} \widehat{V} ((p-q)/N^\beta) \wt{\eta}_q \Big]\\ &\hspace{3cm} \times  \Big[ \frac{\kappa}{2} \widehat{V} (p/N^\beta) + N\lambda_{N,\ell} \widehat{\chi}_\ell (p) + \lambda_{N,\ell} \sum_{q \in\Lambda^*} \widehat{\chi}_\ell (p-q) \wt{\eta}_q \Big] \\ &+ \cO (N^{\beta- 1})
\end{split} 
\]
With (\ref{eq:chip}) and the bounds (\ref{eq:bd-convchi}) and (\ref{eq:bd-b-1}), we can simplify the last identity to 
\begin{equation}\label{eq:iter} \begin{split} \text{B} = \; & -  
\sum_{p\in \Lambda^*_+} \frac{\kappa \widehat{V} (p/N^\beta)}{2p^2} \frac{\kappa}{2N}  \sum_{q \in \Lambda^*} \widehat{V} ((p-q)/N^\beta) \wt{\eta}_q  + \cO (N^{\beta- 1})
\\ = \; & -  
\sum_{p\in \Lambda^*_+} \frac{\kappa \widehat{V} (p/N^\beta)}{2p^2} \frac{\kappa}{2N}  \sum_{q \in \Lambda^*_+} \widehat{V} ((p-q)/N^\beta) \eta_q  + \cO (N^{\beta- 1})
\end{split} \end{equation}
since the contribution from the term with $q=0$ is of the order $N^{\beta-1}$ (and since $\wt{\eta}_q  = \eta_q$ for $q \not = 0$). The r.h.s. is of the order $N^{2\beta-1}$ (the sum over $q$ is of the order $N^{\beta-1}$, but it does not decay in $p$; summing over $p$ produces an additional factor $N^\beta$). For $\beta < 1/2$, the whole r.h.s. is negligible, in the limit $N \to \infty$. For $\beta \geq 1/2$, on the other hand, we have to expand it further. To this end, we use again the relation (\ref{eq:eta-scat}) to write
\begin{equation}\label{eq:eta-scat2}
\begin{split}  q^2 \eta_q = \; &-\frac{\kappa}{2} \widehat{V} (q/N^\beta) - \frac{\kappa}{2N} \sum_{q_2 \in \Lambda^*} \widehat{V} ((q-q_2)/N^\beta) \wt{\eta}_{q_2} \\ &+ N \lambda_{N,\ell} \widehat{\chi}_\ell (q) + \lambda_{N,\ell} \sum_{q_2 \in \Lambda^*} \widehat{\chi}_\ell (q-q_2) \wt{\eta}_{q_2}
\end{split} \end{equation}
Inserting this identity in the r.h.s. of (\ref{eq:iter}) we notice that the contribution of the last two terms on the r.h.s. is negligible, in the limit of large $N$ (after summing over $p,q,q_2$, it is of the order $N^{\beta-1} \ll 1$). Also the contribution associated with $q_2 = 0$ in the second term on the r.h.s. of (\ref{eq:eta-scat2}) vanishes, as $N \to \infty$ (it is of order $N^{2(1-\beta)}$). We arrive at 
\begin{equation}\label{eq:iter2}
\begin{split} \text{B} = \; & 
\sum_{p\in \Lambda^*_+} \frac{\kappa \widehat{V} (p/N^\beta)}{2p^2}  \frac{\kappa}{2N}  \sum_{q \in \Lambda^*} \widehat{V} ((p-q)/N^\beta) \frac{\kappa \widehat{V} (q/N^\beta)}{2q^2} \\ &+ 
\sum_{p\in \Lambda^*_+} \frac{\kappa \widehat{V} (p/N^\beta)}{2p^2}  \frac{\kappa}{2N}  \sum_{q_1 \in \Lambda^*} \widehat{V} ((p-q_1)/N^\beta) \frac{\kappa}{2 q_1^2 N} \sum_{q_2 \in \Lambda^*_+} \widehat{V} ((q_1 - q_2)/N^\beta) \eta_{q_2}  \\ &+ \cO (N^{\beta-1})
\end{split} 
\end{equation}
While the first term on the r.h.s. is of the order $N^{2\beta -1}$, the second term is now of the order $N^{3\beta -2}$. If $\beta < 2/3$, it is negligible. If instead $\beta \geq 2/3$, we iterate again the same procedure, expressing $\eta_{q_2}$ using (\ref{eq:eta-scat2}). After $k$ iterations, we obtain
\begin{equation}\label{eq:iter-last}
\begin{split} \text{B}  = \; & \sum_{j=1}^k \frac{(-1)^{j+1} \kappa^{j+2}}{2^{j+2} N^j} \sum_{p,q_1, \dots , q_j \in \Lambda^*_+} 
\frac{\widehat{V} (p/N^\beta)}{p^2} 
\frac{\widehat{V} ((p-q_1)/N^\beta)}{q_1^2} \frac{\widehat{V} ((q_1-q_2)/N^\beta)}{q_2^2} \dots \\ &\hspace{6.5cm} \dots \frac{\widehat{V} ((q_{j-1} - q_j)/N^\beta)}{q_j^2} \widehat{V} (q_j/N^\beta) \\ & + \cO (N^{(k+1)\beta - k}) +\cO ( N^{\beta-1}). 
\end{split} \end{equation}
Choosing $k = m_\beta$ the largest integer with $m_\beta \leq 1/(1-\beta) + \min (1/2 , \beta/(1-\beta))$, we obtain that $(k+1)\beta - k < - \min ((1-\beta)/2, \beta)$. Inserting (\ref{eq:iter-last}) in (\ref{eq:inter1}), we obtain 
\[ \begin{split}
C_N^\beta + &\frac{1}{2} \sum_{p \in \Lambda^*_+} \Big[ - F_p + \sqrt{F_p^2 - G_p^2} \Big] \\ = \; &\frac{(N-1)}{2} \kappa \widehat{V} (0) + \sum_{j=1}^{m_\beta} \frac{(-1)^{j+1} \kappa^{j+2}}{2^{j+2} N^j} \\ &\hspace{1cm} \times \sum_{p,q_1, \dots, q_j \in \Lambda^*_+} \frac{V(p/N^\beta)}{p^2} \frac{\widehat{V} ((p-q_1)/N^\beta)}{q_1^2} \dots \frac{\widehat{V} ((q_{j-1}- q_j)/N^\beta)}{q_j^2} \widehat{V} (q_j/N^\beta) \\ &+ \frac{1}{2} \sum_{p\in \Lambda^*_+} \Big[ -p^2 -\kappa \widehat{V} (p/N^\beta) + \sqrt{|p|^4 + 2p^2 \kappa \widehat{V} (p/N^\beta)} \Big] + \cO (N^{-\alpha}) \end{split} \]
for all $\alpha < \min (\beta, (1-\beta)/2)$. Adding and subtracting $\sum_{p \in \Lambda^*_+} \kappa^2 \widehat{V}^2 (p/N^\beta) / (4p^2)$ and comparing with the definition (\ref{bN}) of $a_N^\beta$, we get
\begin{equation}\label{eq:GS-fin} \begin{split}
C_N^\beta + &\frac{1}{2} \sum_{p \in \Lambda^*_+} \Big[ -F_p + \sqrt{F_p^2 - G_p^2} \Big] \\ = \; &4 \pi (N-1) a_N^\beta \\ &+ \frac{1}{2} \sum_{p\in \Lambda^*_+} \Big[ -p^2 -\kappa \widehat{V} (p/N^\beta) + \sqrt{|p|^4 + 2p^2 \kappa \widehat{V} (p/N^\beta)} + \frac{\kappa^2 \widehat{V}^2 (p/N^\beta)}{2p^2} \Big] + \cO (N^{-\alpha}) \end{split} \end{equation}
for every $\alpha < \min (\beta, (1-\beta)/2)$. Expanding the square root in the last sum as  
\begin{equation}\label{eq:sqrt-exp} \begin{split} &\sqrt{|p|^4 + 2 p^2 \kappa \widehat{V} (p/N^\beta)}  \\ &\hspace{1cm} = p^2 \left\{ 1 + \frac{\kappa \widehat{V} (p/N^\beta)}{p^2} - \frac{\kappa^2 \widehat{V}^2 (p/N^\beta)}{2|p|^4} \right. \\ &\hspace{2.5cm} \left. + \frac{3 \kappa^3 \widehat{V}^3 (p/N^\beta)}{|p|^6} \int_0^1 ds_1\,  s_1^2 \int_0^1 ds_2 \, s_2 \int_0^1 ds_3 \, \frac{1}{\left[1+ \frac{2 \kappa s_1 s_2 s_3 \widehat{V} (p/N^\beta)}{p^2} \right]^{5/2}} \right\} \end{split} \end{equation}
it is easy to check that
\[ \Big| -p^2 -\kappa \widehat{V} (p/N^\beta) + \sqrt{|p|^4 +2p^2 \kappa \widehat{V} (p/N^\beta)} + \frac{\kappa^2 \widehat{V}^2 (p/N^\beta)}{2p^2} \Big| \leq \frac{C}{|p|^{4}} \] 
uniformly in $N$ and, comparing (\ref{eq:sqrt-exp}) with a similar expansion with $\widehat{V} (p/N^\beta)$ replaced by $\widehat{V} (0)$, that 
\[ \begin{split} \Big| \Big[ -p^2 - \kappa \widehat{V} (p/N^\beta) &+ \sqrt{|p|^4 + 2p^2 \kappa \widehat{V} (p/N^\beta)} + \frac{\kappa^2 \widehat{V}^2 (p/N^\beta)}{2p^2} \Big] \\ &- \Big[ -p^2 -\kappa \widehat{V} (0) + \sqrt{|p|^4 + 2p^2 \kappa \widehat{V} (0)} + \frac{\kappa^2 \widehat{V}^2(0)}{2p^2} \Big] \Big| \leq C N^{-\beta} |p|^{-3}\end{split} \]
Here, we used the fact that $\kappa > 0$ is so small that the denominator in the integral on the r.h.s. of (\ref{eq:sqrt-exp}) is bounded away from $0$, uniformly in $p \in \Lambda^*_+$. Separating the sum in two regions $|p| \leq N^\beta$ and $|p| \geq N^\beta$, we conclude that 
\[ \begin{split} 
\Big| \sum_{p \in \Lambda^*_+} &\left[ - p^2 -\kappa \widehat{V} (p/N^\beta) + \sqrt{|p|^4 + 2p^2 \kappa \widehat{V} (p/N^\beta)} + \frac{\kappa^2 \widehat{V}^2 (p/N^\beta)}{2p^2} \right] \\ &\hspace{2cm} - \sum_{p \in \Lambda^*_+} \left[ - p^2 -\kappa \widehat{V} (0) + \sqrt{|p|^4 + 2p^2 \kappa \widehat{V} (0)} + \frac{\kappa^2 \widehat{V}^2 (0)}{2p^2} \right] \Big| \leq C N^{-\alpha}  \end{split} \]
for every $\alpha < \beta$. 
Inserting in (\ref{eq:GS-fin}), we obtain (\ref{eq:CN+bd}). 

Let us now prove (\ref{eq:harm}). {F}rom (\ref{eq:F-G}), we find
\begin{equation}\label{eq:B1B2} \begin{split} \sum_{p \in \Lambda^*_+} 
&\sqrt{F_p^2 - G_p^2} \, a_p^* a_p \\ = \; &\sum_{p \in \Lambda^*_+} \sqrt{|p|^4 + 2p^2 \kappa \widehat{V} (p/N^\beta) + A_p} \, a_p^* a_p \\ = \; &\sum_{p \in \Lambda^*_+} \sqrt{|p|^4 + 2p^2 \kappa \widehat{V} (p/N^\beta)} \, a_p^* a_p \\ &+ \sum_{p \in \Lambda^*_+} \frac{A_p}{\sqrt{|p|^4 +2p^2 \kappa \widehat{V} (p/N^\beta) + A_p} + \sqrt{|p|^4 + 2 p^2 \kappa \widehat{V} (p/N^\beta)}} \, a_p^* a_p \\ =: \; & \text{B}_1 + \text{B}_2 \end{split} \end{equation}
With (\ref{eq:Ap-est}), we find
\begin{equation}\label{eq:B2-bdd} \begin{split}  |\langle \xi , \text{B}_2 \xi \rangle| &\leq \sum_{p \in \Lambda^*_+} \frac{|A_p|}{\sqrt{|p|^4 +2p^2 \kappa \widehat{V} (p/N^\beta) + A_p} + \sqrt{|p|^4 + 2 p^2 \kappa \widehat{V} (p/N^\beta)}} \| a_p \xi \|^2  \\ 
&\leq C N^{\beta-1} \langle \xi, \cN_+  \xi \rangle \end{split} \end{equation}
As for $\text{B}_1$, we write
\[ \begin{split} \text{B}_1 =\; & \sum_{p \in \Lambda^*_+} \sqrt{|p|^4 +2p^2 \kappa \widehat{V} (0) + 2 N^{-\beta} p^2 \kappa \int_0^1 ds \, p \cdot \nabla \widehat{V} (s \, p / N^\beta)} \; a_p^* a_p \\ =\; & \sum_{p \in \Lambda^*_+} \sqrt{|p|^4 +2p^2 \kappa \widehat{V} (0)} \; a_p^* a_p \\ &+ \sum_{p \in \Lambda^*} \frac{2 N^{-\beta} p^2 \kappa \int_0^1 ds \, p \cdot \nabla \widehat{V} (s \, p / N^\beta)}{\sqrt{|p|^4 +2p^2 \kappa \widehat{V} (0)} + \sqrt{|p|^4 +2p^2 \kappa \widehat{V} (p/N^\beta)}} \, a_p^* a_p \\ =\; &  \sum_{p \in \Lambda^*_+} \sqrt{|p|^4 +2p^2 \kappa \widehat{V} (0)} \; a_p^* a_p + \wt{\text{B}}_{1}  
 \end{split} \]
The expectation of the second term can be bounded by 
\[ |\langle \xi, \wt{\text{B}}_{1} \xi \rangle| \leq C N^{-\beta} \sum_{p\in \Lambda^*_+} |p| \| a_p \xi \|^2 \leq  C N^{-\beta} \sum_{p\in \Lambda^*_+} (1+p^2) \| a_p \xi \|^2 \leq C N^{-\beta} \langle \xi , (\cK + 1) \xi \rangle \]
Combining the last bound with (\ref{eq:B1B2}) and (\ref{eq:B2-bdd}) we obtain
\[ \sum_{p \in \Lambda^*_+} \sqrt{F_p^2 - G_p^2} \,  a_p^* a_p = \sum_{p \in \Lambda^*_+} \sqrt{|p|^4 + 2\kappa p^2 \widehat{V} (0)} \, a_p^* a_p + \vartheta_{N,\beta} \]
where $\vartheta_{N,\beta}$ is such that 
\[ \pm \vartheta_{N,\beta} \leq C N^{-\alpha} (\cH_N^\beta + 1) ( \cN_+ +1) \]
for all $\alpha \leq \min (\beta, (1-\beta))$. 
\end{proof}

To show that the error term $\cE_N$ appearing in the decomposition (\ref{eq:GQE}) of $\cG_N^\beta$ remains negligible after conjugation with the generalized Bogoliubov transformation $e^{B(\tau)}$, we use the following lemma. 
\begin{lemma}\label{lm:err-con}
Let $V \in L^3 (\bR^3)$ be non-negative, spherically symmetric, compactly supported and suppose that the coupling constant $\kappa > 0$ is small enough. Suppose that, for $p \in \Lambda^*_+$, $\tau_p$ is defined as in (\ref{eq:taup}).  Then there exists $C > 0$ such that 
\begin{equation}\label{eq:err-con} e^{-B(\tau)} (\cN_+ +1)(\cH_N^\beta+1) e^{B(\tau)} \leq C  (\cN_+ +1)(\cH_N^\beta+1) \end{equation}
\end{lemma}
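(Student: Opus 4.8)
The plan is to run a Gr\"onwall argument in the parameter of the transformation. For $\xi\in\cF_+^{\leq N}$ and $s\in[0,1]$ put $\xi_s=e^{sB(\tau)}\xi$ and $g(s)=\langle\xi_s,(\cN_++1)(\cH_N^\beta+1)\xi_s\rangle$. Since $B(\tau)^*=-B(\tau)$,
\[ g'(s)=\big\langle\xi_s,\big[(\cN_++1)(\cH_N^\beta+1),\,B(\tau)\big]\xi_s\big\rangle , \]
so \eqref{eq:err-con} will follow from the form estimate
\[ \pm\big[(\cN_++1)(\cH_N^\beta+1),\,B(\tau)\big]\ \leq\ C\,(\cN_++1)(\cH_N^\beta+1)\qquad\text{on }\cF_+^{\leq N} \]
together with Gr\"onwall's lemma. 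The one feature that makes this possible — and that fails for $e^{B(\eta)}$ — is the rapid decay \eqref{eq:tau-dec}, $|\tau_p|\le C\kappa|p|^{-4}$: it gives not only $\|\tau\|_{\ell^2},\|\tau\|_{\ell^1}\le C\kappa$ but also $\sum_p|p|^2|\tau_p|\le C\kappa$ and $\sum_p|p|^4|\tau_p|^2\le C\kappa^2$, all uniformly in $N$, so that every commutator produced below is $\cO(\kappa)$-small uniformly in $N$.

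Concretely, I would expand the commutator by the Leibniz rule, using $\cH_N^\beta=\cK+\cV_N$, into
\[ (\cN_++1)[\cK,B(\tau)]+(\cN_++1)[\cV_N,B(\tau)]+D(\tau)\,(\cH_N^\beta+1),\qquad D(\tau):=[\cN_+,B(\tau)]=\sum_{p\in\Lambda^*_+}\tau_p\big(b_p^*b_{-p}^*+b_pb_{-p}\big). \]
From $[\cK,b_p^*]=p^2b_p^*$, $[\cK,b_p]=-p^2b_p$ one gets $[\cK,B(\tau)]=\sum_{p\in\Lambda^*_+}p^2\tau_p(b_p^*b_{-p}^*+b_pb_{-p})$, an operator of the type $B_{\sharp_1,\sharp_2}(p^2\tau)$; since $p^2\tau$ lies in $\ell^1\cap\ell^2$ with both norms $\le C\kappa$, Lemma \ref{lm:Bbds} bounds $[\cK,B(\tau)]$ and $D(\tau)$ by $C\kappa(\cN_++1)$ in operator norm. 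The only genuinely new object is $[\cV_N,B(\tau)]$: commuting the quartic $\cV_N$ with the quadratic $B(\tau)$ produces, besides $\tau$-weighted quadratic pieces, cubic expressions of the $\Pi^{(1)}/\Pi^{(2)}$-form of Section \ref{sec:fock} with $\tau$-weighted kernels; estimating these through Lemma \ref{lm:Pi-bds}, or, when a $\check{a}_x$-field can be pulled out, directly via Cauchy--Schwarz and $\cV_N\ge0$, I expect to obtain for every $\delta>0$ a bound $\pm[\cV_N,B(\tau)]\le\delta\,\cV_N+C_\delta\kappa(\cN_++1)+C\kappa N^{-1}(\cK+1)$.

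It then remains to absorb the three weighted terms into $C(\cN_++1)(\cH_N^\beta+1)$. Two structural facts are used throughout: $[\cN_+,\cH_N^\beta]=0$ (both $\cK$ and $\cV_N$ conserve $\cN_+$), so half-powers $(\cN_++1)^{1/2}$ and $(\cH_N^\beta+1)^{1/2}$ may be distributed symmetrically on the two sides of each inner product; and $(\cN_++1)(\cH_N^\beta+1)\ge c(\cN_++1)^2$, $\ge c(\cN_++1)(\cK+1)$ and $\ge c\,\cV_N$, which let every lower-order term be absorbed. For $(\cN_++1)[\cK,B(\tau)]$ and $(\cN_++1)[\cV_N,B(\tau)]$ one splits the outer $(\cN_++1)$ as $(\cN_++1)^{1/2}(\cN_++1)^{1/2}$, commutes one factor through the quadratic/cubic operator (each $b_p$, $b_p^*$ shifting $\cN_+$ by $\pm1$, so the extra terms have the same shape), and applies Lemmas \ref{lm:Bbds}/\ref{lm:Pi-bds} together with Cauchy--Schwarz; this lands on $C\kappa(\cN_++1)^2+\delta(\cN_++1)\cV_N+C_\delta\kappa(\cN_++1)(\cK+1)$, which is $\le C(\cN_++1)(\cH_N^\beta+1)$. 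For $D(\tau)(\cH_N^\beta+1)$ one writes it as $(\cH_N^\beta+1)^{1/2}D(\tau)(\cH_N^\beta+1)^{1/2}$ up to the corrections $[D(\tau),\cH_N^\beta]$ and $(\cH_N^\beta+1)^{1/2}[(\cH_N^\beta+1)^{1/2},D(\tau)]$; the symmetrized term is controlled by the elementary bound $|\langle\psi,D(\tau)\psi\rangle|\le C\kappa\langle\psi,(\cN_++1)\psi\rangle$ (two Cauchy--Schwarz steps, one in $p$ and one between the two $a$-fields) applied to $\psi=(\cH_N^\beta+1)^{1/2}\xi$, while the two corrections reduce again to $[\cK,\text{quadratic}]$-terms (explicit, weight $p^2\tau$) plus $[\cV_N,\text{quadratic}]$-terms handled as above, the square-root commutator being treated via the integral representation $z^{-1/2}=\pi^{-1}\int_0^\infty t^{-1/2}(t+z)^{-1}\,dt$, exactly as in the proof of Proposition \ref{prop:condNH}. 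The main obstacle is the cubic commutator $[\cV_N,B(\tau)]$ and, above all, the bookkeeping needed to check that after all weights have been redistributed no term exceeds $(\cN_++1)(\cH_N^\beta+1)$ — the danger being spurious $(\cN_++1)^3$ or $(\cH_N^\beta+1)^2$ contributions, which are excluded precisely by combining $[\cN_+,\cH_N^\beta]=0$ with the fast decay of $\tau$. This is the same kind of analysis carried out for the conjugation estimates of Section \ref{sec:prop} and in \cite{BBCS,BS}, to which the proof of the present lemma can be reduced.
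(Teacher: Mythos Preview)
Your overall strategy---Gr\"onwall in the parameter $s$, reducing to a form bound on $[(\cN_++1)(\cH_N^\beta+1),B(\tau)]$ and splitting the commutator via Leibniz into $(\cN_++1)[\cK,B(\tau)]$, $(\cN_++1)[\cV_N,B(\tau)]$, and $[\cN_+,B(\tau)](\cH_N^\beta+1)$---is exactly what the paper does, and the decay $|\tau_p|\le C\kappa|p|^{-4}$ (hence $\tau\in\ell^1$) is indeed the key input. Your proposal is correct in outline.

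That said, two of your three pieces are handled much more directly in the paper than you suggest. For the term $D(\tau)(\cH_N^\beta+1)$ you propose to symmetrize as $(\cH_N^\beta+1)^{1/2}D(\tau)(\cH_N^\beta+1)^{1/2}$ and then control the square-root commutator via the integral representation, as in Proposition~\ref{prop:condNH}. This works but is unnecessary: the paper simply expands $\cH_N^\beta+1=\cK+\cV_N+1$ and bounds $\sum_{p,q}\tau_p q^2\langle\xi_s,a_q^*a_q(b_pb_{-p}+b_p^*b_{-p}^*)\xi_s\rangle$ and $\sum_p\tau_p\langle\xi_s,\cV_N(b_pb_{-p}+b_p^*b_{-p}^*)\xi_s\rangle$ by Cauchy--Schwarz in a few lines, commuting the $b$-operators past $a_q$ or $\check a_x\check a_y$ and using $\sum_p|\tau_p|<\infty$. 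No half-powers, no resolvent integrals.

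Similarly, for $[\cV_N,B(\tau)]$ you anticipate invoking the $\Pi^{(1)}/\Pi^{(2)}$ machinery of Section~\ref{sec:fock}. The paper instead passes to position space: the commutator of the quartic $\cV_N$ with the quadratic $B(\tau)$ yields one term carrying the weight $N^{3\beta}V(N^\beta(x-y))\check\tau(x-y)$ on $b_x^*b_y^*+b_xb_y$, and one cubic term with an $a^*(\check\tau_y)$-factor. Since $\|\check\tau\|_\infty\le\|\tau\|_{\ell^1}$ and $\|\check\tau_y\|_2=\|\tau\|_{\ell^2}$ are bounded uniformly in $N$, both are controlled by $C\langle\xi_s,(\cV_N+1)(\cN_++1)\xi_s\rangle$ in two lines. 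The heavy Section~\ref{sec:prop}-type analysis is not needed precisely because $\tau\in\ell^1$, unlike $\eta$.
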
 

\begin{proof}
We apply Gronwall's inequality. For $ \xi\in\cF_+^{\leq N}$ and $s \in \bR$, we consider
        \[\begin{split} &\partial_s  \langle \xi, e^{-sB(\tau)}  (\cH_N^\beta +1 )(\cN_+ +1) e^{sB(\tau)} \xi \rangle = -\langle \xi, e^{-s B(\tau)} [B(\tau),  (\cH_N^\beta+1)(\cN_+ +1) ] e^{sB(\tau)} \xi \rangle \end{split} \]
We have
\begin{equation}\label{eq:commBHN} \begin{split} 
[ B(\tau),  (\cH_N^\beta+1)(\cN_+ +1) ] = \; &[ B(\tau),  \cK](\cN_+ +1) \\ &+ [B(\tau), \cV_N] (\cN_+ +1) + (\cH_N^\beta +1)  
[B(\tau), \cN_+  ]  \end{split}
\end{equation}
Consider first the last term on the r.h.s. of (\ref{eq:commBHN}). With
\[ [ B(\tau), \cN_+  ] = \sum_{p \in \Lambda^*_+} \tau_p (b_p b_{-p} + b^*_p b^*_{-p} ) \]
we find 
\begin{equation}\label{eq:commHBN} \begin{split}
\langle \xi, e^{-sB(\tau)} (\cH_N^\beta + 1) &[B(\tau), \cN_+ ] e^{s B(\tau)} \xi \rangle \\ = \; &\sum_{p,q \in \Lambda^*_+} \tau_p q^2 \langle \xi,  e^{-sB(\tau)} a_q^* a_q (b_p b_{-p}+b_p^*b_{-p}^*) e^{sB(\tau)} \xi \rangle \\ &+ \sum_{p \in \Lambda^*_+} \tau_p \langle \xi,  e^{-sB(\tau)} \cV_N (b_p b_{-p}+b_p^*b_{-p}^*) e^{sB(\tau)} \xi \rangle \\ = \; &\text{I} + \text{II} 
\end{split} \end{equation}
where
\[ \begin{split} 
| \text{I} | &\leq \sum_{p,q \in \Lambda^*_+} \tau_p q^2 \| a_q (\cN_+ +1)^{1/2} e^{sB(\tau)} \xi \| \\ &\hspace{2cm} \times  \big[ \| (b_p b_{-p} + b_p^* b_{-p}^*) (\cN_+ +1)^{-1/2} a_q e^{s B(\tau)} \xi \| + \delta_{p,q} \| \xi \| \big] \\ &\leq C \sum_{p,q \in \Lambda^*_+} \tau_p q^2 \| a_q (\cN_+ +1)^{1/2} e^{sB(\tau)} \xi \|^2 
+ \sum_p \tau_p p^2 \| a_p (\cN_+ +1)^{1/2} \xi \| \| \xi \| 
\\ &\leq C \| \cK^{1/2} (\cN_+ +1)^{1/2} e^{sB(\tau)} \xi \|^2 + \| (\cN_+ +1) e^{sB(\tau)} \xi \| \| \xi \|\\ &\leq C \langle e^{sB(\tau)} \xi , (\cN_+ +1) ( \cK+1) e^{sB(\tau)} \xi \rangle \end{split} \]
and, expressing the potential energy operator in position space, 
\[\begin{split} | \text{II} | &\leq \sum_{p \in \Lambda^*_+} \tau_p  \int dx dy \, N^{-1+3\beta} V(N^\beta (x-y)) \Big| \langle \check{a}_x \check{a}_y e^{sB(\tau)} \xi , \check{a}_x \check{a}_y (b_p b_{-p} + b_p^* b_{-p}^*) e^{sB(\tau)} \xi \rangle \Big| 
\\ &\leq  \sum_{p \in \Lambda^*_+} \tau_p \int dx dy \, N^{-1+3\beta} V(N^\beta (x-y)) \| \check{a}_x \check{a}_y (\cN_+ +1)^{1/2} e^{sB(\tau)} \xi \| \\ &\hspace{1.2cm} \times \Big[ \| (b_p b_{-p} + b_p^* b_{-p}^*) (\cN_+ +1)^{-1/2} \check{a}_x \check{a}_y   e^{sB(\tau)} \xi \| + \| \check{a}_y e^{s B(\tau)} \xi \| + \| \check{a}_x e^{ sB(\tau)} \xi \| \Big] \\ &\leq C \int dx dy \, N^{-1+3\beta} V(N^\beta (x-y)) \| \check{a}_x \check{a}_y (\cN_+ +1)^{1/2} e^{sB(\tau)} \xi \|^2 \\ &\hspace{.4cm} + C \| (\cN_+ +1)^{1/2} e^{s B(\tau)} \xi \|^2 
\\ &\leq C \langle \xi, e^{-sB(\tau)} (\cV_N + 1) (\cN_+ +1) e^{sB(\tau)} \xi \rangle \end{split} \]
Here we used (\ref{eq:tau-dec}) to conclude that $\sum_{p\in \Lambda^*_+} |\tau_p| < \infty$. {F}rom (\ref{eq:commHBN}) we obtain that 
\begin{equation}\label{eq:BN-fin} \Big| \langle \xi, e^{-sB(\tau)} (\cH_N^\beta + 1) [B(\tau), \cN_+ ] e^{s B(\tau)} \xi \rangle \Big| \leq C 
\langle \xi, e^{-sB(\tau)}  (\cH_N^\beta +1 )(\cN_+ +1) e^{sB(\tau)} \xi \rangle 
\end{equation}

Let us now consider the first term on the r.h.s. of (\ref{eq:commBHN}). Since
\[ [B(\tau), \cK] = \sum_{p\in \Lambda^*_+} p^2 \tau_p \left( b_p b_{-p} + b_p^* b^*_{-p}\right) \]
we obtain by Cauchy-Schwarz that
\begin{equation}\label{eq:Kcom}  \begin{split} \Big|\langle \xi, &e^{-sB(\tau)} [B(\tau) ,  \cK ] (\cN_+ +1) e^{sB(\tau)} \xi \rangle \Big| \\ &\leq \sum_{p\in\Lambda^*_+} p^2 |\tau_p|  \| b_p b_{-p} e^{s B(\tau)} \xi \| \| (\cN_+ +1) e^{sB(\tau)} \xi \|   \\  &\leq C \| (\cN_+ +1) e^{s B(\tau)} \xi \| \Big[ \sum_{p\in\Lambda^*_+} \| b_p (\cN_+ +1)^{1/2} e^{sB(\tau)}\xi \|^2  \Big]^{1/2}  \\
        &\leq C \langle \xi , e^{-s B(\tau)}(\cN_+ +1)^2 e^{sB(\tau)} \xi \rangle \leq C \langle \xi, e^{-s B(\tau)}  (\cH_N^\beta+1)(\cN_+ +1) e^{sB(\tau)} \xi \rangle \end{split} \end{equation}
Here, we used the estimate (\ref{eq:tau-dec}) (to make sure that $\sum_{p\in \Lambda^*_+} p^4 \tau_p^2 < \infty$) and again the fact that, on $\cF_+^{\leq N}$, $\cN_+  \leq C \cH_N^\beta$.
       
Finally, let us consider the second term on the r.h.s. of (\ref{eq:commBHN}). It is convenient to express the potential energy operator $\cV_N$ in position space. We find
\[ \begin{split} 
\langle \xi, &e^{-s B(\tau)} [ B(\tau), \cV_N ] (\cN_+ +1) e^{s B(\tau)} \xi \rangle \\ =\; &\frac{\kappa}{2N} \int_{\Lambda \times \Lambda} dx dy \, N^{3\beta} V(N^\beta (x-y)) \check{\tau} (x-y) \langle e^{sB(\tau)} \xi, (b_x^* b_y^* +  b_x b_y) (\cN_+ +1) e^{s B(\tau)} \xi \rangle \\ &+ \frac{\kappa}{N} \int_{\Lambda \times \Lambda} dx dy N^{3\beta} V(N^\beta (x-y)) \langle e^{s B(\tau)} \xi , \big[ b_x^* b_y^* a^* (\check{\tau}_y) \check{a}_x + \text{h.c.} \big] (\cN_+ +1) e^{s B(\tau)} \xi \rangle \\ = \; & \text{III} + \text{IV} \end{split} \]
where we set $\check{\tau} (x) = \sum_{p \in \Lambda^*_+} \tau_p e^{ip \cdot x}$. Since $\| \check{\tau} \|_\infty \leq \| \tau \|_1 \leq C < \infty$ uniformly in $N$, it is simple to estimate 
\[ |\text{I}| \leq C \langle \xi, e^{-s B(\tau)} (\cV_N +1) (\cN_+ +1) e^{s B(\tau)} \xi \rangle \]
Similarly, since $\| \check{\tau}_y \|_2 = \| \check{\tau} \|_2 = \| \tau \|_2 \leq C < \infty$ independently of $y \in \Lambda$ and of $N$, we find
\[ |\text{II}| \leq C \langle \xi, e^{-s B(\tau)} (\cV_N +1) (\cN_+ +1) e^{s B(\tau)} \xi \rangle \]
We conclude therefore that
\[ \Big| \langle \xi, e^{-s B(\tau)} [ B(\tau), \cV_N ] (\cN_+ +1) e^{s B(\tau)} \xi \rangle \Big| \leq C \langle \xi, e^{-s B(\tau)} (\cV_N +1) (\cN_+ +1) e^{s B(\tau)} \xi \rangle \]
Combining this bound with (\ref{eq:BN-fin}) and (\ref{eq:Kcom}), we obtain from (\ref{eq:commBHN}) that
\[ \Big| \partial_s  \langle \xi, e^{-sB(\tau)}  (\cH_N^\beta +1 )(\cN_+ +1) e^{sB(\tau)} \xi \rangle \Big| \leq C   \langle \xi, e^{-sB(\tau)}  (\cH_N^\beta +1 )(\cN_+ +1) e^{sB(\tau)} \xi \rangle \]
By Gronwall's inequality, we arrive at (\ref{eq:err-con}).  
\end{proof}

In the next corollary, we summarize the properties of the excitation Hamiltonian $\cM_N^\beta$ defined in (\ref{eq:MN}) that follow from Lemma \ref{lm:diago}, Lemma \ref{lm:gs+exc} and Lemma \ref{lm:err-con} above. This corollary will be the starting point for the proof of Theorem \ref{thm:main} in the next section.
\begin{cor}\label{cor}
Fix $0 < \beta < 1$. Let $V \in L^3 (\bR^3)$ be non-negative, spherically symmetric and compactly supported with sufficiently small coupling constant $\kappa > 0$. Then the excitation Hamiltonian $\cM_N^\beta = e^{-B(\tau)} e^{-B(\eta)} U H_N U^* e^{B(\eta)} e^{B(\tau)} : \cF_+^{\leq N} \to \cF_+^{\leq N}$ is such that
\begin{equation}\label{eq:cor} \begin{split} \cM_N^\beta = &4\pi (N-1) a_N^\beta + \frac{1}{2} \sum_{p \in \Lambda^*_+} \left[ - p^2  - \kappa \widehat{V} (0) + \sqrt{|p|^4 + 2 |p|^2 \kappa \widehat{V} (0)} + \frac{\kappa \widehat{V}^2 (0)}{2p^2} \right] \\ &+ \sum_{p \in \Lambda^*_+} \sqrt{p^4 + 2 p^2 \kappa \widehat{V} (0)} \; a_p^* a_p + \rho_{N,\beta} \end{split} \end{equation}
where, for all $0< \alpha < \beta$ such that $\alpha \leq (1-\beta)/2$ there exists $C > 0$ with 
\[ \pm \rho_{N,\beta} \leq C N^{-\alpha} (\cN_+ +1) ( \cH_N^\beta + 1) \]
Furthermore, let $E_N^\beta$ be the ground state energy of the Hamiltonian $H_N^\beta$ and let $\psi_N \in L^2_s (\bR^{3N})$ with $\| \psi_N \|  = 1$ belong to the spectral subspace of $H_N^\beta$ with energies below $E_N^\beta + \zeta$, for some $\zeta > 0$. In other words, assume that 
\[ \psi_N = {\bf 1}_{(-\infty; E_N^\beta + \zeta]} (H_N^\beta) \psi_N \]
Let $\xi_N = e^{-B(\tau)} e^{-B(\eta)} U \psi_N \in \cF_+^{\leq N}$. Then there exists a constant $C > 0$ such that 
\[ \langle \xi_N , (\cN_+ +1) ( \cH_N^\beta + 1) \xi_N \rangle \leq C (1+\zeta^2) \]
\end{cor}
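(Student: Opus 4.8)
The plan is to read off the statement from the three preceding lemmas, which contain all the analytic content; the remaining work is bookkeeping — collecting the various error terms and checking their sizes against the claimed exponent $\alpha$. Starting from the decomposition (\ref{eq:MN}), namely $\cM_N^\beta = C_N^\beta + e^{-B(\tau)}\cQ_N^\beta e^{B(\tau)} + e^{-B(\tau)}\cE^\beta_N e^{B(\tau)}$, I would first insert Lemma \ref{lm:diago} into the middle term: this produces the constant $\tfrac12\sum_{p\in\Lambda^*_+}[-F_p + \sqrt{F_p^2-G_p^2}\,]$, the diagonal operator $\sum_{p\in\Lambda^*_+}\sqrt{F_p^2-G_p^2}\,a_p^*a_p$, and a remainder $\delta_{N,\beta}$ with $\pm\delta_{N,\beta}\le CN^{-1}(\cN_++1)(\cK+1)$ by (\ref{eq:pmdelta}). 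Then I would apply the two identities of Lemma \ref{lm:gs+exc}: the first turns $C_N^\beta + \tfrac12\sum_p[-F_p+\sqrt{F_p^2-G_p^2}\,]$ into $4\pi(N-1)a_N^\beta + \tfrac12\sum_p[-p^2-\kappa\widehat V(0)+\sqrt{|p|^4+2|p|^2\kappa\widehat V(0)}+\kappa\widehat V^2(0)/(2p^2)] + \cO(N^{-\alpha})$, and the second replaces $\sum_p\sqrt{F_p^2-G_p^2}\,a_p^*a_p$ by $\sum_p\sqrt{p^4+2p^2\kappa\widehat V(0)}\,a_p^*a_p + \vartheta_{N,\beta}$, where (from the last lines of that proof) $\pm\vartheta_{N,\beta}\le CN^{-\alpha}(\cH_N^\beta+1)(\cN_++1)$ for $\alpha\le\min(\beta,1-\beta)$. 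Finally, for the last term of (\ref{eq:MN}) I would combine $\pm\cE_N^\beta\le CN^{(\beta-1)/2}(\cN_++1)(\cK+1)\le CN^{-(1-\beta)/2}(\cN_++1)(\cH_N^\beta+1)$ from (\ref{eq:error}) with Lemma \ref{lm:err-con} to get $\pm\,e^{-B(\tau)}\cE_N^\beta e^{B(\tau)}\le CN^{-(1-\beta)/2}(\cN_++1)(\cH_N^\beta+1)$.

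Setting $\rho_{N,\beta}:=[\cO(N^{-\alpha})]+\delta_{N,\beta}+\vartheta_{N,\beta}+e^{-B(\tau)}\cE_N^\beta e^{B(\tau)}$ then gives (\ref{eq:cor}), and it remains to verify $\pm\rho_{N,\beta}\le CN^{-\alpha}(\cN_++1)(\cH_N^\beta+1)$ for every $0<\alpha<\beta$ with $\alpha\le(1-\beta)/2$. This is immediate: $N^{-1}\le N^{-\alpha}$ since $\alpha<\beta\le 1$; $N^{-(1-\beta)/2}\le N^{-\alpha}$ since $\alpha\le(1-\beta)/2$; $\alpha\le(1-\beta)/2\le\min(\beta,1-\beta)$, so the $\vartheta_{N,\beta}$ bound applies; a scalar of size $\cO(N^{-\alpha})$ is dominated by $CN^{-\alpha}(\cN_++1)(\cH_N^\beta+1)$ because the two commuting positive factors are each $\ge 1$ on $\cF_+^{\leq N}$, so the product is $\ge 1$; and finally $\cK+1\le\cH_N^\beta+1$ together with $(\cN_++1)^2\le C(\cN_++1)(\cH_N^\beta+1)$ (from $\cN_+\le(2\pi)^{-2}\cK$) upgrades the remaining bounds to the desired form.

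For the second assertion, I would write $\xi_N=e^{-B(\tau)}\eta_N$ with $\eta_N:=e^{-B(\eta)}U\psi_N$. Proposition \ref{prop:condNH} already gives $\langle\eta_N,(\cN_++1)(\cH_N^\beta+1)\eta_N\rangle\le C(1+\zeta^2)$, so it suffices to note $\langle\xi_N,(\cN_++1)(\cH_N^\beta+1)\xi_N\rangle = \langle\eta_N,e^{B(\tau)}(\cN_++1)(\cH_N^\beta+1)e^{-B(\tau)}\eta_N\rangle$ and bound the operator in the middle by $C(\cN_++1)(\cH_N^\beta+1)$ using Lemma \ref{lm:err-con} with $\tau$ replaced by $-\tau$; the proof of that lemma uses only $\sum_p|\tau_p|<\infty$ and $\sum_p p^4\tau_p^2<\infty$, both insensitive to the sign of $\tau$, so it applies verbatim. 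Combining the two displays yields $\langle\xi_N,(\cN_++1)(\cH_N^\beta+1)\xi_N\rangle\le C(1+\zeta^2)$.

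There is no essential obstacle here: all the genuine difficulty sits upstream, in the bound (\ref{eq:error}) of Theorem \ref{thm:gene} and in Lemmas \ref{lm:diago}, \ref{lm:gs+exc} and \ref{lm:err-con}. The only point requiring care is the exponent bookkeeping — verifying that none of $\delta_{N,\beta}$, $\vartheta_{N,\beta}$, the conjugated $\cE_N^\beta$ and the scalar error from Lemma \ref{lm:gs+exc} exceeds $N^{-\alpha}(\cN_++1)(\cH_N^\beta+1)$ throughout the range $0<\alpha<\beta$, $\alpha\le(1-\beta)/2$, and that conjugation by $e^{B(\tau)}$ (in both directions) preserves the a priori estimates.
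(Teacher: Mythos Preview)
Your proof is correct and matches the paper's intended route (the paper itself omits the argument, stating only that the corollary summarizes Lemmas \ref{lm:diago}, \ref{lm:gs+exc}, \ref{lm:err-con}). One small slip in the exponent bookkeeping: the chain $\alpha\le(1-\beta)/2\le\min(\beta,1-\beta)$ fails when $\beta<1/3$, since then $(1-\beta)/2>\beta$; instead use both hypotheses directly---from $\alpha<\beta$ and $\alpha\le(1-\beta)/2<1-\beta$ you get $\alpha<\min(\beta,1-\beta)$, which is what the $\vartheta_{N,\beta}$ bound requires.
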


\section{Proof of Theorem \ref{thm:main}}
\label{sec:proof}

Let 
\[ \wt{E}^\beta_N = 4\pi (N-1) a_N^\beta + \frac{1}{2} \sum_{p \in \Lambda^*_+} \left[ - p^2  - \kappa \widehat{V} (0) + \sqrt{|p|^4 + 2 |p|^2 \kappa \widehat{V} (0)} + \frac{\kappa \widehat{V}^2 (0)}{2p^2} \right] \]
with $a_N^\beta$ defined as in (\ref{bN}). To prove Theorem \ref{thm:main}, we will compare the eigenvalues of $\cM_N^\beta - \wt{E}_N^\beta$ (which of course coincide with the eigenvalues of $H_N^\beta - \wt{E}_N^\beta$) with those of the diagonal quadratic operator 
\begin{equation}\label{eq:cDdef} \cD = \sum_{p \in \Lambda^*_+} \eps_p  a_p^* a_p ,  \end{equation}
acting on $\cF_+^{\leq N}$. Here we defined $\eps_p = (|p|^4 +2p^2 \kappa \widehat{V} (0))^{1/2}$ for all $p \in \Lambda^*_+$. For $m \in \bN$, let $\lambda_m$ denote the $m$-th eigenvalue of $\cM_N^\beta - \wt{E}_N^\beta$ and $\wt{\lambda}_m$ the $m$-th eigenvalue of $\cD$ (eigenvalues are counted with multiplicity). We will show that 
\begin{equation}\label{eq:lambdas} |\lambda_m - \wt{\lambda}_m | \leq C N^{-\alpha} (1+\zeta^3)  \end{equation}
for all $0< \alpha < \beta$ such that $\alpha \leq (1-\beta)/2$ and for all $m \in \bN$ such that $\wt{\lambda}_m < \zeta$. 

Since $\wt{\lambda}_0 = 0$, (\ref{eq:lambdas}) implies first of all that $E_N^\beta = \wt{E}_N^\beta + \cO (N^{-\alpha})$, for all $0< \alpha < \beta$ such that $\alpha \leq (1-\beta)/2$. Furthermore, since the eigenvalues $\wt{\lambda}_m$ of (\ref{eq:cDdef}) have the form
\[ \sum_{j=1}^k n_j \eps_{p_j} \]
for $k \in \bN$, $n_1, \dots , n_k \in \bN$ and $p_1, \dots ,p_k \in \Lambda^*_+$, (\ref{eq:lambdas}) also implies the relation (\ref{1.excitationSpectrum}) for the low-lying excitation energies of $H_N^\beta$.

To show (\ref{eq:lambdas}), we first prove an upper bound for $\lambda_m$, valid for all $m\in \bN$ with $\wt{\lambda}_m < \zeta$. To this end, we use the min-max principle, which implies that
\begin{equation}\label{eq:minmax-up} \lambda_m = \inf_{\substack{Y \subset \cF_+^{\leq N} : \\ \text{dim }Y=m} } \; \sup_{\substack{\xi \in Y : \\ \| \xi \| = 1}} \langle \xi , (\cM_N^\beta - \wt{E}_N^\beta) \xi \rangle \leq \sup_{\substack{\xi \in Y_{\cD}^m :\\ \| \xi \| = 1}} \langle \xi, (\cM_N^\beta - \wt{E}_N^\beta) \xi \rangle  
\end{equation}
where $Y_{\cD}^m$ denotes the space spanned by normalized eigenvectors $\xi_1, \dots , \xi_m$ of $\cD$, associated with the eigenvalues $\wt{\lambda}_1 \leq \dots \leq \wt{\lambda}_m < \zeta$. Without loss of generality, since $\cD$ commutes with $\cN_+ $ we may assume that $\xi_0, \dots , \xi_m$ are also eigenvectors 
of $\cN_+ $; we denote the corresponding eigenvalue by $r_0, \dots , r_m \in \bN$, i.e. $\cN_+  \xi_j = r_j \xi_j$. Since $\cD \geq c \cN_+ $, we find $r_j \leq C \zeta$. {F}rom Lemma \ref{lm:cVN} and since $\cK \leq \cD$ we obtain 
\[ \langle \xi , (\cN_+ +1) ( \cH_N^\beta + 1) \xi \rangle \leq C \langle \xi, (\cN_+ +1)^2 (\cK+1) \xi \rangle \leq C \langle \xi, (\cN_+ +1)^2 (\cD+1) \xi \rangle \leq C (1+\zeta^3) \]
for all $\xi \in Y_{\cD}^m$. With (\ref{eq:cor}), we conclude that 
\[ \langle \xi, (\cM_N^\beta -  \wt{E}_N^\beta) \xi \rangle \leq  \langle \xi, \cD \xi \rangle + C N^{-\alpha} (1+\zeta^3) \]
for all $\xi \in Y_{\cD}^m$ and all $0< \alpha < \beta$ such that $\alpha \leq (1-\beta)/2$. {F}rom (\ref{eq:minmax-up}), we obtain 
\[ \lambda_m \leq  \sup_{\xi \in Y_\cD^m : \| \xi \| =1} \langle \xi , \cD \xi \rangle + C N^{-\alpha} (1+\zeta^3) \leq \wt{\lambda}_m + C N^{-\alpha} (1+\zeta^3) \]
again for all $0< \alpha < \beta$ such that $\alpha \leq (1-\beta)/2$. 

Next, we prove the lower bound for $\lambda_m$. {F}rom the upper bound above and since we assumed that $\wt{\lambda}_m < \zeta$, we find that $\lambda_m \leq \zeta$ if $N$ is large enough. Denoting by $P_\zeta$ the spectral projection of $\cM_N^\beta - \wt{E}_N^\beta$ associated with the interval $(-\infty; \zeta]$, we find
\[ \begin{split} \lambda_m  &= \inf_{\substack{Y \subset 
\cF_+^{\leq N} :\\ \text{dim } Y = m}} \, \sup_{\substack{\xi \in Y : \\ \| \xi \| =1}} \, \langle \xi, (\cM_N^\beta - \wt{E}_N^\beta) \xi 
\rangle  
 \\ &\geq \inf_{\substack{Y \subset 
P_\zeta (\cF_+^{\leq N}) :\\ \text{dim } Y = m}} 
\, \sup_{\substack{\xi \in Y : \\ \| \xi \| =1}} \langle \xi, \cD \xi \rangle  - C N^{-\alpha} (1+\zeta^3) \\ &\geq \inf_{\substack{Y \subset 
\cF_+^{\leq N} :\\ \text{dim } Y = m}} \,
 \sup_{\substack{\xi \in Y : \\ \| \xi \| =1}} \langle \xi, \cD \xi \rangle  - C N^{-\alpha} (1+\zeta^3) \\ &= \wt{\lambda}_m - C N^{-\alpha} (1+\zeta^3) 
\end{split} \]
for all $0< \alpha < \beta$ such that $\alpha \leq (1-\beta)/2$. This concludes the proof of (\ref{eq:lambdas}) and the proof of Theorem~\ref{thm:main}.

\medskip

{\it Remark:} Theorem \ref{thm:main} states the convergence of low-lying eigenvalues of the Hamilton operator (\ref{eq:Ham0}) towards the eigenvalues of the quadratic Hamiltonian 
\begin{equation}\label{eq:Q} 
\mathcal{Q}_\infty = E_N^\beta + \sum_{p \in \Lambda^*_+} \sqrt{|p|^4 + 2p^2 \kappa \widehat{V} (0)} \, a_p^* a_p 
\end{equation}
In fact, using ideas from \cite{GS}, one can also show convergence of the corresponding eigenvectors. More precisely, for a fixed $j \in \bN$, let $P^{(j)}_{H_N^\beta}$ denote the orthogonal projection onto the subspace of $L^2_s (\bR^{3N})$ spanned by the eigenvectors associated with the $j$ smallest eigenvalues of $H_N^\beta$. Similarly, let $P_{Q}^{(j)}$ denote the orthogonal projection onto the subspace of $\cF_+^{\leq N}$ spanned by the eigenvectors associated with the $j$ smallest eigenvalues $E_N^\beta = \mu_1 \leq \mu_2 \leq \dots \leq \mu_j$ of the quadratic Hamiltonian $\cQ_\infty$. Then, assuming that $\mu_{j+1} > \mu_j$, we find
\begin{equation}\label{eq:eigvec} \Big\| e^{-B (\tau)} e^{-B(\eta)} U P^{(j)}_{H_N^\beta} U^* e^{B(\eta)} e^{B(\tau)} - P_{\cQ_\infty}^{(j)} \Big\|_\text{HS}^2 \leq \frac{C}{\mu_{j+1} - \mu_j} N^{-\alpha} \end{equation}
for all $0< \alpha < \beta$ such that $\alpha \leq (1-\beta)/2$. In particular, if $\psi_N^\beta$ denotes the ground state of the Hamiltonian $H_N^\beta$ defined in (\ref{eq:Ham0}), then there exists an appropriate 
phase $\theta \in [0 ; 2\pi )$ such that 
\[ \big\| \psi_N^{\beta} - e^{i\theta} U^* e^{B(\eta)} e^{B(\tau)} \Omega \big\|^2 \leq \frac{C}{\mu_1 - \mu_0} N^{-\alpha} \]
for all $0< \alpha < \beta$ such that $\alpha \leq (1-\beta)/2$. The proof of (\ref{eq:eigvec}) follows very closely the arguments used in Section 7 of \cite{GS}.

\section{Analysis of the excitation Hamiltonian}
\label{sec:prop}

The goal of this section is to show Theorem \ref{thm:gene}. We decompose 
\begin{equation}\label{eq:GN-dec} \cG^\beta_N = \cG_{N,\beta}^{(0)} + \cG_{N,\beta}^{(2)} + \cG_{N,\beta}^{(3)} + \cG_{N,\beta}^{(4)} \end{equation}
with
\[ \cG_{N,\beta}^{(j)} = e^{-B(\eta)} \cL_{N,\beta}^{(j)} e^{B(\eta)} \]
and with $\cL_{N,\beta}^{(j)}$ as defined in (\ref{eq:cLNj}), for $j=0,2,3,4$. We study the four contributions on the r.h.s. of (\ref{eq:GN-dec}) in the following subsections. 

First, in the next three lemmas, we collect some preliminary results that will be used later to analyze the operators $\cG_{N,\beta}^{(j)}$, $j=0,2,3,4$. 
In the first lemma, we show how to bound typical operators arising from expansions of nested commutators, as described in Lemma \ref{lm:indu} above.
\begin{lemma}\label{lm:aux}
Let $\xi \in \cF^{\leq N}$, $p,q \in \Lambda^*_+$, $i_1, i_2, k_1, k_2, \ell_1, \ell_2 \in \bN$, $j_1, \dots , j_{k_1}$, $m_1, \dots , m_{k_2} \in \bN \backslash \{0 \}$ and let $\alpha_{\ell_i} = (-1)^{\ell_i}$, for $i=1,2$. For every $s = 1, \dots , \max \{ i_1 , i_2 \}$, let $\Lambda_s$, $\Lambda'_s$ be either a factor $(N-\cN_+ )/N$, a factor $(N+1-\cN_+ )/N$ or a $\Pi^{(2)}$-operator of the form 
\begin{equation}\label{eq:Pi2-prel}
N^{-h} \Pi^{(2)}_{\underline{\sharp}, \underline{\flat}} (\eta^{z_1}, \dots , \eta^{z_h}) 
\end{equation}
for some $h \in \bN \backslash \{ 0 \}$ and $z_1, \dots, z_h \in \bN \backslash \{ 0 \}$. Suppose that the operators 
\[ \begin{split} &\Lambda_1 \dots \Lambda_{i_1} N^{-k_1} \Pi^{(1)}_{\sharp,\flat} (\eta^{j_1}, \dots , \eta^{j_{k_1}} ; \eta^{\ell_1}_p \ph_{\alpha_{\ell_1} p} ) \\ 
&\Lambda'_1 \dots \Lambda'_{i_2} N^{-k_2} \Pi^{(1)}_{\sharp',\flat'} (\eta^{m_1}, \dots , \eta^{m_{k_2}} ; \eta^{\ell_2}_q \ph_{\alpha_{\ell_2} q}) 
\end{split} \] 
appear in the expansion of $\text{ad}^{(n)}_{B(\eta)} (b_p)$ and of  $\text{ad}^{(k)}_{B(\eta)} (b_q)$, as described in Lemma \ref{lm:indu}, for some 
$n,k \in \bN$. 
\begin{itemize}
\item[i)] For any $\beta \in \bZ$, let 
\begin{equation}\label{eq:B} \text{B} = (\cN_+ +1)^{(\beta-1)/2} \Lambda_1 \dots \Lambda_{i_1} N^{-k_1} \Pi^{(1)}_{\sharp , \flat} (\eta^{j_1} , \dots , \eta^{j_{k_1}} ; \eta^{\ell_1}_p \ph_{\alpha_{\ell_1} p} ) \xi \end{equation}
and
\[\wt{B} = (\cN_+ +1)^{(\beta-1)/2}  N^{-k_1} \Pi^{(1)}_{\sharp , \flat} (\eta^{j_1} , \dots , \eta^{j_{k_1}} ; \eta^{\ell_1}_p \ph_{\alpha_{\ell_1} p} )^* \Lambda_{i_1}^* \dots \Lambda^*_1 \xi \] 
Then, we have
\begin{equation}\label{eq:D1} \| \text{B} \| , \| \wt{B}  \|  \leq C^{n} \kappa^{n} p^{-2\ell_1} \| (\cN_+ +1)^{\beta/2} \xi \| \end{equation}
If $\ell_1$ is even, we also find  
\begin{equation}\label{eq:D0} \| \text{B} \| \leq C^{n} \kappa^{n} p^{-2\ell_1} \| a_p (\cN_+ +1)^{(\beta-1)/2} \xi \| \end{equation}
\item[ii)] For $\beta \in \bZ$, let
\begin{equation}\label{eq:D} \begin{split} \text{D} = \; &(\cN_+ +1)^{(\beta-1)/2} \Lambda_1 \dots \Lambda_{i_1} N^{-k_1} \Pi^{(1)}_{\sharp,\flat} (\eta^{j_1}, \dots , \eta^{j_{k_1}} ; \eta^{\ell_1}_p \ph_{\alpha_{\ell_1} p}) \\ &\hspace{3cm} \times \Lambda'_1 \dots \Lambda'_{i_2} N^{-k_2} \Pi^{(1)}_{\sharp',\flat'} (\eta^{m_1}, \dots , \eta^{m_{k_2}} ; \eta^{\ell_2}_q \ph_{\alpha_{\ell_2} q} ) \xi \end{split} \end{equation}
Then, we have 
\begin{equation}\label{eq:E11} \| \text{D} \| \leq C^{n+k} \kappa^{n+k} p^{-2\ell_1} q^{-2\ell_2} \| (\cN_+ +1)^{(\beta + 1)/2} \xi \| \end{equation}
If $\ell_2$ is even, we find 
\begin{equation}\label{eq:E10} \| \text{D} \| \leq C^{n+k} \kappa^{n+k} p^{-2\ell_1} q^{-2\ell_2} \| a_q (\cN_+ +1)^{\beta/2} \xi \| \end{equation}
If $\ell_1$ is even, we have
\begin{equation}\label{eq:E01-b} \begin{split} \| \text{D} \| \leq \; &C^{n+k} k N^{-1} \kappa^{n+k} p^{-2(\ell_1+1)} q^{-2\ell_2} \| (\cN_+ +1)^{(\beta+1)/2} \xi \| \\ &+ C^{n+k} \kappa^{n+k}  p^{-2(\ell_1 + \ell_2)} \mu_{\ell_2} \delta_{p,-q}  \| (\cN_+ +1)^{(\beta-1)/2}\xi \| \\ &+ C^{n+k} \kappa^{n+k} p^{-2\ell_1} q^{-2\ell_2} \| a_p (\cN_+ +1)^{\beta/2} \xi \| \end{split} 
\end{equation}
where $\mu_{\ell_2} = 1$ if $\ell_2$ is odd and $\mu_{\ell_2}= 0$ if $\ell_2$ is even. If $\ell_1$ is even and either $k_1 > 0$ or $k_2 >0$ or there is at least one $\Lambda$- or $\Lambda'$-operator having the form (\ref{eq:Pi2-prel}), we obtain the improved bound 
\begin{equation}\label{eq:E01} \begin{split} \| \text{D} \| \leq \; &C^{n+k} k N^{-1} \kappa^{n+k} p^{-2(\ell_1+1)} q^{-2\ell_2} \| (\cN_+ +1)^{(\beta+1)/2} \xi \| \\ &+ C^{n+k} N^{-1} \kappa^{n+k}  p^{-2(\ell_1 + \ell_2)} \mu_{\ell_2} \delta_{p,-q} \| (\cN_+ +1)^{(\beta+1)/2}\xi \| \\ &+ C^{n+k} \kappa^{n+k} p^{-2\ell_1} q^{-2\ell_2} \| a_p (\cN_+ +1)^{\beta/2} \xi \| \end{split} \end{equation}
Finally, if $\ell_1 = \ell_2 = 0$, we can write
\begin{equation}\label{eq:Edeco}  \text{D} = \text{D}_1 (p,q) + D_2 \, a_p a_q \xi 
\end{equation}
where 
\[ \| \text{D}_1 (p,q) \| \leq C^{n+k} k N^{-1} \kappa^{n+k}  p^{-2} \| a_q (\cN_+ +1)^{\beta/2} \xi \| \]
and $D_2$ is a bounded operator on $\cF^{\leq N}_+$ with \begin{equation}\label{eq:E2-est} \| D_2^\natural \zeta \| \leq C^{n+k} \kappa^{n+k} \| (\cN_+ +1)^{(\beta-1)/2} \zeta \| \end{equation} 
for $\natural \in \{ \cdot , * \}$ and for all $\zeta \in \cF^{\leq N}_+$. If $k_1 > 0$ or $k_2 > 0$ or at least one of the $\Lambda$- or $\Lambda$'-operators has the form (\ref{eq:Pi2-prel}), we also have the improved bound \begin{equation}\label{eq:E2impr} \| D_2^{\natural} \zeta \| \leq C^{n+k} N^{-1} \kappa^{n+k}   \| (\cN_+ +1)^{(\beta+1)/2} \zeta \| \end{equation}
for $\natural \in \{ \cdot ,* \}$ and all $\zeta \in \cF^{\leq N}_+$.
\item[iii)] All the bounds in part ii) remain true if, in the definition of $\text{D}$, we replace the operator $\Lambda_1 \dots \Lambda_{i_1} N^{-k_1} \Pi^{(1)}_{\sharp,\flat} (\eta^{j_1}, \dots , \eta^{j_{k_1}} ; \eta_p^{\ell_1} \ph_{\alpha_{\ell_1}p})$ by the operator $\eta^n b^{\natural_n}_{\alpha_n p}$, where $\natural_n = \cdot$ and $\alpha_n =1$ if $n$ is even while $\natural_n = *$ and $\alpha_n =-1$ if $n$ is odd (in this case, $\ell_1 =n$). 
\end{itemize}
\end{lemma}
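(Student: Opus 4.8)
The plan is to prove Lemma~\ref{lm:aux} by carefully combining the structural description of nested commutators from Lemma~\ref{lm:indu} with the norm bounds for $\Pi^{(1)}$- and $\Pi^{(2)}$-operators from Lemma~\ref{lm:Pi-bds}, together with the decay estimate $|\eta_p| \le C\kappa/p^2$ from \eqref{eq:etap}. The whole lemma is really a bookkeeping exercise: each ``term'' arising in $\mathrm{ad}^{(n)}_{B(\eta)}(b_p)$ has the explicit form \eqref{eq:Lambdas}, and we just need to estimate products of such terms applied to $\xi$, keeping track of powers of $\cN_+$, powers of $N^{-1}$, and factors $p^{-2\ell_1}$, $q^{-2\ell_2}$.

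\medskip

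\textbf{Part i).} First I would bound $\mathrm{B}$ in \eqref{eq:B}. The operators $\Lambda_1 \dots \Lambda_{i_1}$ are, by Lemma~\ref{lm:Ngrow} and Lemma~\ref{lm:Pi-bds}, each bounded by a constant times a power of $(\cN_+{+}1)/N$ times $\ell^2$-norms of $\eta$-powers; crucially, by property (iii) of Lemma~\ref{lm:indu}, the total exponent of $\eta$ among all $\Lambda$-operators and the $\Pi^{(1)}$-operator together with $\eta_p^{\ell_1}$ equals $n$, so the cumulative factor is at most $C^n\kappa^n$ once we use $\|\eta^j\|_1, \|\eta^j\|_2 \le C\kappa$ for $j \ge 1$ (here property (vi) guarantees $i \ge 2$ whenever a non-normally-ordered self-contraction appears, so $\|\eta^i\|_1 < \infty$). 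The $\Pi^{(1)}$-operator of order $k_1$, by the second line of \eqref{eq:Pi-bds}, is bounded by $C^{k_1}\prod K_\ell \cdot \| \eta_p^{\ell_1}\ph_{\alpha_{\ell_1}p}\| \cdot \|(\cN_+{+}1)^{k_1+1/2}\xi\|$, and $\|\eta_p^{\ell_1}\ph_{\alpha_{\ell_1}p}\| = |\eta_p|^{\ell_1} \le C\kappa^{\ell_1}p^{-2\ell_1}$. The factor $N^{-k_1}$ absorbs the $(\cN_+/N)$-type powers produced. Collecting exponents of $(\cN_+{+}1)$ via property (ii) — the identity $m + \sum(h_i{+}1) + (k_1{+}1) = n{+}1$ — one checks the net power of $(\cN_+{+}1)$ acting on $\xi$ is at most $1$, matching the $(\beta{-}1)/2 \to \beta/2$ shift in \eqref{eq:D1}. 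For \eqref{eq:D0}, when $\ell_1$ is even property (v) tells us the innermost operator of the $\Pi^{(1)}$-piece (or the $k_1{=}0$ operator) is $\eta_p^{\ell_1}a_p$ acting last, so one extracts a genuine $a_p$ and replaces $\|(\cN_+{+}1)^{\beta/2}\xi\|$ by $\|a_p(\cN_+{+}1)^{(\beta-1)/2}\xi\|$. The bound for $\wt{\mathrm{B}}$ is the adjoint statement and follows by taking adjoints (using that $\Lambda_w^*$ obeys the same estimates as $\Lambda_w$).

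\medskip

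\textbf{Part ii).} This is the heart of the lemma. I would write $\mathrm{D}$ as (first $\Pi^{(1)}$-string)$\times$(second $\Pi^{(1)}$-string) and commute the right-hand string's creation/annihilation operators appropriately, or rather insert $(\cN_+{+}1)^{\pm\mathrm{const}}$ between the two strings and apply Part i) twice: once to the left string with a shifted exponent, once to the right string. The generic bound \eqref{eq:E11} is then immediate. The refinements \eqref{eq:E10}, \eqref{eq:E01-b}, \eqref{eq:E01} come from: (a) when $\ell_2$ is even, extracting $a_q$ from the right string via property (v), as in \eqref{eq:D0}; (b) when $\ell_1$ is even, extracting $a_p$ and then commuting it rightward through the right string — here the commutator $[a_p, (\text{right string})]$ produces either a $\delta_{p,-q}$ contraction (the middle term of \eqref{eq:E01-b}, nonzero only if the right string contains an unmatched $a^*$, i.e. $\ell_2$ odd, whence $\mu_{\ell_2}$) or an $N^{-1}$-suppressed term from the $b$-commutators \eqref{eq:comm-bp} (the first term). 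The improved version \eqref{eq:E01}: if there is at least one extra $\eta$-argument (i.e. $k_1>0$, $k_2>0$, or a $\Pi^{(2)}$ in the $\Lambda$-string), then the $\ell^2$-sum over the contracted momentum in the $\delta_{p,-q}$ term can be bounded by $N^{-1}$ times an extra $\|\eta^{\cdot}\|_2 \le C\kappa$, at the cost of one more power of $(\cN_+{+}1)^{1/2}$; this is where the structural hypothesis is used. Finally \eqref{eq:Edeco}: when $\ell_1=\ell_2=0$ there is no $\eta_p,\eta_q$ decay to exploit, so one must be honest and split off the ``leading'' operator $D_2 a_p a_q$ (where $D_2$ collects all the $\Lambda$-factors and internal contractions, bounded as in \eqref{eq:E2-est}, improved to \eqref{eq:E2impr} under the same extra-$\eta$ hypothesis), with the remainder $\mathrm{D}_1(p,q)$ carrying an $N^{-1}$ from the $b$-$b^*$ commutators and still one $a_q$.

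\medskip

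\textbf{Part iii)} is trivial: the operator $\eta_p^n b^{\natural_n}_{\alpha_n p}$ is exactly the special case \eqref{eq:iv1}--\eqref{eq:iv2} of a term \eqref{eq:Lambdas} with $k_1=0$ and all $\Lambda$'s equal to $(N{\pm}\cN_+)/N$-factors (using $\ell_1=n$), so every estimate in Part ii) applies verbatim.

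\medskip

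\textbf{Main obstacle.} The genuinely delicate point is the momentum-contraction bookkeeping in \eqref{eq:E01-b}--\eqref{eq:E01} and \eqref{eq:Edeco}: one must track precisely when commuting $a_p$ (extracted because $\ell_1$ is even) through the second $\Pi^{(1)}$-string produces a Kronecker delta versus an $N^{-1}$-suppressed term, and argue that the delta term only survives when $\ell_2$ is odd (unmatched $a^*_q$) while also showing that in the presence of any additional $\eta$-argument one gains an extra factor $N^{-1}\|\eta^{\cdot}\|_2$ by performing the internal sum first. Getting the exact powers of $(\cN_+{+}1)$ right in each of the three terms of \eqref{eq:E01-b}/\eqref{eq:E01} requires invoking property (ii) of Lemma~\ref{lm:indu} for \emph{both} strings simultaneously and carefully summing the exponents; this is routine but error-prone, and is where I expect most of the work to lie.
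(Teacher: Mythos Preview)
Your proposal is correct and outlines essentially the approach one expects; note that the paper itself does not prove this lemma but simply refers parts i) and ii) to \cite[Lemma~4.1]{BBCS} and remarks that part iii) is ``very similar'' (amounting to replacing the $\Lambda_1\dots\Lambda_{i_1}$ by the identity). Your sketch --- using Lemma~\ref{lm:Pi-bds} together with the structural properties (ii), (iii), (v), (vi) of Lemma~\ref{lm:indu} and the decay $|\eta_p|\le C\kappa/p^2$, and identifying the commutator bookkeeping behind \eqref{eq:E01-b}--\eqref{eq:Edeco} as the only genuinely delicate step --- is exactly the content of that deferred argument. One minor quibble: in part iii) the operator $\eta_p^n b^{\natural_n}_{\alpha_n p}$ is not literally \eqref{eq:iv1}--\eqref{eq:iv2} (those carry the $(N{-}\cN_+)/N$ prefactors), but rather what remains after stripping them off; since those factors are bounded by $1$ the estimates are only easier, so your conclusion stands.
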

The proof of Lemma \ref{lm:aux}, part i) and ii) can be found in \cite[Lemma 4.1]{BBCS}. The proof of part iii) is very similar to the proof of part ii). Notice that part iii) states essentially that all bounds in part ii) remain true if in the definition of $\text{D}$, we replace all operators $\Lambda_1,\dots, \Lambda_{i_1}$ by the identity). We will use part iii) of Lemma \ref{lm:aux} in the proof of Prop. \ref{prop:K} and Prop. \ref{prop:V} below. In some occasions, it will also be important to bound 
vectors of the form (\ref{eq:D}), expressed as functions in position space. To this end, we will use the following lemma, whose proof follows closely the proof of Lemma 5.2 in \cite{BS}. 
\begin{lemma} \label{lm:aux2}
Let $\xi \in \cF^{\leq N}$, $\beta \in \bN$, $i_1, i_2, k_1, k_2, \ell_1, \ell_2 \in \bN$, $j_1, \dots, j_{k_1}, m_1, \dots , m_{k_2} \in \bN\backslash \{ 0 \}$, For every $s=1, \dots, \max \{ i_1, i_2 \}$, let $\Lambda_s, \Lambda'_s$ be either a factor $(N-\cN_+ )/N$, $(N+1-\cN_+ )/N$ or a $\Pi^{(2)}$-operator of the form
\begin{equation}\label{eq:Pi-prel4} N^{-h} \Pi^{(2)}_{\underline{\sharp}, \underline{\flat}} (\eta^{z_1}, \dots , \eta^{z_h}) \end{equation}
for some $h, z_1, \dots ,z_h \in \bN \backslash \{ 0 \}$. Suppose that the operators 
\[ \begin{split} \Lambda_1 \dots \Lambda_{i_1} N^{-k_1} \Pi^{(1)}_{\sharp, \flat} (\eta^{j_1}, \dots , \eta^{j_{k_1}} ; \check{\eta}^{\ell_1}_x ) \\ \Lambda'_1 \dots \Lambda'_{i_2} N^{-k_2} \Pi^{(1)}_{\sharp', \flat'} (\eta^{m_1}, \dots , \eta^{m_{k_2}} ; \check{\eta}^{\ell_2}_y )
 \end{split} \]
appear in the expansion of $\text{ad}^{(n)}_{B(\eta)} (\check{b}_x), \text{ad}^{(k)}_{B(\eta)} (\check{b}_y)$, respectively, for some $n,k \in \bN$. Here we use the notation $\check{\eta}^{\ell_1}_x$ for the function $z \to \check{\eta}^{\ell_1} (x-z)$, where $\check{\eta}^{\ell_1}$ denotes the Fourier transform of the function $\eta^{\ell_1}$ defined on $\Lambda^*_+$. Let
\[ \begin{split} 
\text{S} = & \; (\cN_+ +1)^{\beta/2}  \Lambda'_1 \dots \Lambda'_{i_2} N^{-k_2} \Pi^{(1)}_{\sharp', \flat'} (\eta^{m_1}, \dots , \eta^{m_{k_2}} ; \check{\eta}^{\ell_2}_y )\\ &\hspace{2cm} \times  \Lambda_1 \dots \Lambda_{i_1} N^{-k_1} \Pi^{(1)}_{\sharp, \flat} (\eta^{j_1}, \dots , \eta^{j_{k_1}} ; \check{\eta}^{\ell_1}_x ) \xi  \end{split} \]
Then we have the following bounds. If $\ell_1, \ell_2 \geq 1$, 
\begin{equation}\label{eq:S1} \| \text{S} \| \leq C^{n+k} \kappa^{n+k} \| (\cN_+ +1)^{(\beta+2)/2} \xi \|  \end{equation}
If $\ell_1 =0$ and $\ell_2 \geq 1$, \[ \| \text{S} \| \leq C^{n+k} \kappa^{n+k}  \| \check{a}_x (\cN_+ +1)^{(\beta+1)/2} \xi \| \]
If $\ell_1 \geq 1$ and $\ell_2 =0$,
\begin{equation}\label{eq:S2} \begin{split} \| \text{S} \| \leq \; &C^{n+k} \kappa^{n+k} n N^{-1} \| (\cN_+ +1)^{(\beta+2)/2} \xi \| \\ &+ C^{n+k} \kappa^{n+k-\ell_2} \mu_{\ell_2} |\check{\eta}^{\ell_2} (x-y)| \| (\cN_+ +1)^{\beta/2} \xi \| \\ &+C^{n+k} \kappa^{n+k} \| \check{a}_y (\cN_+ +1)^{(\beta+1)/2} \xi \| \end{split} \end{equation}
where $\mu_{\ell_2} = 1$ if $\ell_2$ is odd, while $\mu_{\ell_2} = 0$ if $\ell_2$ is even. If $\ell_1 \geq 1$ and $\ell_2 =0$ and we additionally assume that $k_1 > 0$ or $k_2 > 0$ or at least one of the $\Lambda$- or $\Lambda'$-operators is a $\Pi^{(2)}$-operator of the form (\ref{eq:Pi-prel4}), we obtain the improved estimate
\begin{equation}\label{eq:S2-imp} 
\begin{split} \| \text{S} \|  \leq &C^{n+k} \kappa^{n+k} n N^{-1} \| (\cN_+ +1)^{(\beta+2)/2} \xi \| \\ &+ C^{n+k} \kappa^{n+k-\ell_2} \mu_{\ell_2} N^{-1} |\check{\eta}^{\ell_2} (x-y)| \| (\cN_+ +1)^{(\beta+2)/2} \xi \| \\ &+ C^{n+k} \kappa^{n+k} \| \check{a}_y (\cN_+ +1)^{(\beta+1)/2} \xi \| \end{split} \end{equation}
Finally, if $\ell_1 = \ell_2 = 0$, 
\[ \| \text{S} \| \leq C^{n+k} \kappa^{n+k} n N^{-1} \| \check{a}_x (\cN_+ +1)^{(\beta+1)/2} \xi \| +C^{n+k} \kappa^{n+k}  \| \check{a}_x \check{a}_y (\cN_+ +1)^{\beta/2} \xi \| \]
\end{lemma}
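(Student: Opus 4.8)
The plan is to reduce Lemma~\ref{lm:aux2} to Lemma~\ref{lm:aux} by passing carefully between momentum space and position space, since Lemma~\ref{lm:aux} already provides all the bounds we need for the operators $\text{ad}^{(n)}_{B(\eta)}(b_p)$ and $\text{ad}^{(k)}_{B(\eta)}(b_q)$ in momentum variables. The key observation is that $\check{b}_x = \sum_{p\in\Lambda^*_+} e^{ip\cdot x} b_p$ (up to normalization), and that the functions $\check\eta^{\ell}_x$ entering the $\Pi^{(1)}$-operators are Fourier transforms of $\eta^{\ell}$. Concretely, I would first use the fact (from the discussion preceding Lemma~\ref{lm:indu}, and from the definition of $\Pi^{(1)}$-operators in position space in \eqref{eq:Pi1-pos}) that the operators $\Lambda_1\dots\Lambda_{i_1} N^{-k_1} \Pi^{(1)}_{\sharp,\flat}(\eta^{j_1},\dots,\eta^{j_{k_1}};\check\eta^{\ell_1}_x)$ appearing in the expansion of $\text{ad}^{(n)}_{B(\eta)}(\check b_x)$ are exactly the position-space versions of the terms appearing in $\text{ad}^{(n)}_{B(\eta)}(b_p)$: one has $\text{ad}^{(n)}_{B(\eta)}(\check b_x) = \sum_{p} e^{ip\cdot x}\,\text{ad}^{(n)}_{B(\eta)}(b_p)$, and the individual summands correspond term by term, with $\eta^{\ell_1}_p \ph_{\alpha_{\ell_1}p}$ replaced by $\check\eta^{\ell_1}_x$ in the last slot of the $\Pi^{(1)}$-operator.

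Given this correspondence, I would write $\text{S}$ as a double sum over $p,q\in\Lambda^*_+$ of the operators $\text{D}$ from Lemma~\ref{lm:aux}, with the extra Fourier weights $e^{iq\cdot y} e^{ip\cdot x}$, so that $\text{S} = \sum_{p,q} e^{iq\cdot y}e^{ip\cdot x}\,\text{D}(p,q)$ (with $\beta$ in Lemma~\ref{lm:aux} matching the $\beta$ here). Then each of the five regimes in the statement is obtained by plugging in the corresponding bound from Lemma~\ref{lm:aux} part~ii):
\begin{itemize}
\item[(i)] For $\ell_1,\ell_2\geq 1$, apply \eqref{eq:E11}: the double sum $\sum_{p,q} p^{-2\ell_1} q^{-2\ell_2}$ converges (since $\ell_1,\ell_2\geq 1$ and the exponents are $\geq 2$), giving \eqref{eq:S1}.
\item[(ii)] For $\ell_1=0$, $\ell_2\geq 1$, apply \eqref{eq:E10}; summing $q^{-2\ell_2}\|a_q(\cN_++1)^{\beta/2}\xi\|$ over $q$ against the constant weight needs a Cauchy–Schwarz in $q$ to reassemble $\|\check a_x(\cN_++1)^{(\beta+1)/2}\xi\|$ after performing the $p$-sum by orthogonality of the exponentials $e^{ip\cdot x}$.
\item[(iii)] For $\ell_1\geq 1$, $\ell_2=0$, apply \eqref{eq:E01-b} (or \eqref{eq:E01} under the extra hypothesis): the three terms produce respectively the $nN^{-1}$ term (after summing the convergent $p^{-2(\ell_1+1)}q^{-2\ell_2}$ and collapsing $q$), the diagonal $\delta_{p,-q}$ term — here the sum $\sum_p p^{-2(\ell_1+\ell_2)} e^{ip\cdot(x-y)}$ reconstructs exactly $\check\eta^{\ell_1+\ell_2}(x-y)$, matching the $|\check\eta^{\ell_2}(x-y)|$ factor since $\ell_2$ counts the parity and $\ell_1$ is even — and the $\check a_p$ term which, again by Cauchy–Schwarz in $p$ and orthogonality in the $e^{ip\cdot x}$, gives $\|\check a_y(\cN_++1)^{(\beta+1)/2}\xi\|$. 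The improved case follows identically from \eqref{eq:E01}.
\item[(iv)] For $\ell_1=\ell_2=0$, use the decomposition \eqref{eq:Edeco}: the $\text{D}_1$ part sums as in case~(iii) to the $nN^{-1}\|\check a_x(\cN_++1)^{(\beta+1)/2}\xi\|$ term, and the $D_2\,a_p a_q\xi$ part, after summing $\sum_{p,q} e^{ip\cdot x}e^{iq\cdot y} a_p a_q = \check a_x\check a_y$ and using \eqref{eq:E2-est}, gives $\|\check a_x\check a_y(\cN_++1)^{\beta/2}\xi\|$.
\end{itemize}

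The main obstacle will be bookkeeping rather than any genuinely new estimate: one must verify that the term-by-term correspondence between the momentum-space and position-space expansions of the nested commutators is exact (including which argument sits in which slot, and that the combinatorial structure from Lemma~\ref{lm:indu} is preserved under Fourier transform), and one must be careful in the diagonal regime $\ell_2=0$ that the sum over the delta-contribution $\delta_{p,-q}$ correctly reassembles a Fourier transform $\check\eta^{\bullet}(x-y)$ with the right power of $\eta$, since that power is a sum of several indices scattered across the $\Lambda$- and $\Pi^{(1)}$-operators (part~iii) of Lemma~\ref{lm:indu} controls exactly this total). A secondary subtlety is that in cases (ii)--(iv) the final rearrangement from $\sum_q q^{-2\ell}\|a_q \cdots\|$ (or $\sum_p$) to a single norm $\|\check a_x\cdots\|$ or $\|\check a_y\cdots\|$ uses Cauchy–Schwarz together with $\sum_q \|a_q(\cN_++1)^{m}\xi\|^2 = \|(\cN_+)^{1/2}(\cN_++1)^{m}\xi\|^2 \leq \|(\cN_++1)^{m+1/2}\xi\|^2$, which accounts for the shift in the power of $(\cN_++1)$; one has to track that this matches the exponents claimed. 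Since all of this is a direct transcription of the proof of Lemma~5.2 in \cite{BS} with Lemma~\ref{lm:aux} playing the role of its momentum-space input, I would simply carry out these steps and point to \cite{BS} for the routine parts.
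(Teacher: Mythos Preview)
Your reduction to Lemma~\ref{lm:aux} via a Fourier decomposition
\[
\text{S} \;=\; \sum_{p,q\in\Lambda^*_+} e^{\pm i p\cdot x}\, e^{\pm i q\cdot y}\, \text{D}(p,q)
\]
followed by the triangle inequality $\|\text{S}\|\le \sum_{p,q}\|\text{D}(p,q)\|$ cannot produce the pointwise position-space norms $\|\check a_x(\cN_++1)^{\cdots}\xi\|$, $\|\check a_y(\cN_++1)^{\cdots}\xi\|$, $\|\check a_x\check a_y(\cN_++1)^{\cdots}\xi\|$ that appear in the statement. These quantities genuinely depend on $x,y$, whereas any bound of the form $\sum_{p,q}\|\text{D}(p,q)\|$ is $x,y$-independent. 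Your proposed Cauchy--Schwarz step, $\sum_q \|a_q(\cN_++1)^m\xi\|^2 = \|(\cN_+)^{1/2}(\cN_++1)^m\xi\|^2$, yields $\|(\cN_++1)^{m+1/2}\xi\|$, \emph{not} $\|\check a_x(\cN_++1)^{m}\xi\|$; these are different objects (the latter depends on $x$, and only its $L^2(\Lambda,dx)$-average equals the former). The appeal to ``orthogonality of the exponentials $e^{ip\cdot x}$'' does not help either: for fixed $x$, the vectors $e^{ip\cdot x}a_p\xi$ are not orthogonal in $\cF$, so $\|\sum_p e^{ip\cdot x}a_p\xi\|^2\ne \sum_p\|a_p\xi\|^2$ in general. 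The only pieces where your Fourier reassembly is legitimate are those where the $p,q$-dependence factors out cleanly (the $D_2\,a_pa_q\xi$ part, and the $\delta_{p,-q}$ contribution); for all the remaining terms the triangle inequality destroys exactly the structure you need.

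The approach in \cite{BS} (and what the paper invokes) is not to use Lemma~\ref{lm:aux} as a black box, but to rerun its proof directly in position space. The point is that the $\Pi^{(1)}$-operators are already written in position space in \eqref{eq:Pi1-pos}, with last argument $g=\check\eta^{\ell}_x$. For $\ell\ge 1$ one has $\check\eta^{\ell}_x\in L^2(\Lambda)$ with $\|\check\eta^{\ell}_x\|_2=\|\eta^{\ell}\|_{\ell^2}\le C\kappa^{\ell}$ \emph{independently of $x$}, so Lemma~\ref{lm:Pi-bds} applies verbatim and one gets bounds in terms of $(\cN_++1)$ only. For $\ell=0$ the last slot is the distribution $\delta_x$, so the trailing field is literally $\check a_x$ (by part v) of Lemma~\ref{lm:indu} it is an annihilation operator); one then commutes $\check a_x$ to the right through the $\Lambda$-, $\Lambda'$- and $\Pi$-factors using \eqref{eq:comm-b2}, so that it hits $\xi$ directly and produces $\|\check a_x(\cN_++1)^{\cdots}\xi\|$. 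The commutators generated along the way are of the form $a^{\sharp}(\check\eta^{j}_x)$ with $j\ge 1$, hence again controlled by $\|\eta^{j}\|_{\ell^2}$, and this is where the $nN^{-1}$- and the $|\check\eta^{\bullet}(x-y)|$-terms in \eqref{eq:S2}, \eqref{eq:S2-imp} come from (the latter from the single commutator $[\check a_y,\check a^*(\check\eta^{\ell_1}_x)]=\check\eta^{\ell_1}(x-y)$ when $\ell_2=0$ and $\ell_1$ is odd). This is the argument you should carry out; it parallels Lemma~\ref{lm:aux} step by step but never leaves position space.
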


Finally, in the next lemma, we show that the expectation of the potential energy operator is small, of the order $N^{\beta-1}$, on states with bounded expectation of $(\cN_+ +1)(\cK+1)$. This lemma will be important to show that, asymptotically the quadratic part of the generator $\cG_N^\beta$ is dominant.
\begin{lemma}\label{lm:cVN}
Suppose $V \in L^2 (\bR^3)$. Then there exists $C > 0$ such that
\[ \begin{split} \langle \xi, \cV_N \xi \rangle &= \frac{\kappa}{2N} \sum_{p,q \in \Lambda^*_+ , r\in \Lambda^*: r \not = -p,-q} \widehat{V} (r/N^\beta) a_{p+r}^* a_{q}^* a_{q+r} a_p \\ &\leq C \kappa N^{\beta-1} \| (\cK+1)^{1/2} (\cN_+ +1)^{1/2} \xi \|^2 \end{split} \]
for every $\xi \in \cF_+^{\leq N}$. Here $\cK = \sum_{p \in \Lambda^*_+} p^2 a_p^* a_p$ is the kinetic energy operator.  
\end{lemma}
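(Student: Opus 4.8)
The plan is to express the potential energy operator in position space and bound it by the kinetic energy operator using a weighted Cauchy--Schwarz argument, exploiting the shrinking support of the rescaled potential. First I would write, using the operator valued distributions $\check a_x, \check a_x^*$,
\[ \langle \xi, \cV_N \xi \rangle = \frac{\kappa}{2N} \int_{\Lambda \times \Lambda} dx\, dy\; N^{3\beta} V(N^\beta(x-y)) \, \| \check a_x \check a_y \xi \|^2 \,, \]
which is manifestly non-negative; here I have used that $V \geq 0$ so the quadratic form has a definite sign, and I have absorbed the $r \neq -p,-q$ restriction into the position-space expression (the diagonal terms only contribute a non-negative correction, so dropping them preserves the upper bound, or one argues directly that the full position-space integral equals the expression with the constraint up to terms controlled by $\cN_+/N$).

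Next I would introduce a weight to convert $L^2$ control in position space into kinetic energy control. The key inequality is that for any $f \in L^2(\Lambda^2)$,
\[ \int dx\, dy\; W(x-y) \|\check a_x \check a_y \xi\|^2 \leq \Big( \sup_{x} \int dy\; W(x-y) \frac{1}{\text{(weight)}_{x-y}} \Big) \cdot \| (\text{weight-dressed}) \check a_x \check a_y \xi \|^2 \,, \]
but more concretely the standard move (as in the Gross--Pitaevskii literature) is a Sobolev-type bound: for fixed $x$,
\[ \int dy\; N^{3\beta} V(N^\beta(x-y)) \|\check a_x \check a_y \xi\|^2 \leq \| N^{3\beta} V(N^\beta(x-\cdot)) \|_{?} \cdot \| \nabla_y \check a_x \check a_y \xi \|^2 \]
after which one uses that $\| N^{3\beta} V(N^\beta\, \cdot) \|$ in the appropriate (weak) norm scales like $N^{?}$. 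The cleanest route: pass to Fourier space in the $y$-variable, write $\check a_y = \sum_q e^{iq\cdot y} a_q$, and bound the convolution kernel $\widehat V(r/N^\beta)$ by $\| V \|_1 \leq C$ uniformly; then a direct computation gives
\[ \langle \xi, \cV_N \xi \rangle \leq \frac{C\kappa}{N} \sum_{p,q,r} |\widehat V(r/N^\beta)| \, \| a_p a_q \xi \| \, \| a_{p+r} a_{q+r} \xi \| \,. \]
One then splits $|\widehat V(r/N^\beta)| \leq |\widehat V(r/N^\beta)|^{1/2} |\widehat V(r/N^\beta)|^{1/2}$, applies Cauchy--Schwarz in $(p,q,r)$ pairing $a_p a_q$ with the shift, and uses $\sum_r |\widehat V(r/N^\beta)| \leq C N^{3\beta} \| V \|_1$ — wait, this naive count gives $N^{3\beta-1}$, which is too large. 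The correct gain comes from inserting the kinetic weights: bound $|\widehat V(r/N^\beta)| \leq C$ and use that $\widehat V \in L^2(\bR^3)$ (this is where $V \in L^2$ enters), so $\sum_{r} |\widehat V(r/N^\beta)|^2 \leq C N^{3\beta}$, combined with a factor $\sum_p 1/(p^2+1) \cdot (\text{smallness})$; the precise balance should produce exactly $N^{\beta-1}$ after pairing one $\widehat V$ factor with a momentum summation that costs $N^\beta$.

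The main obstacle — and the step requiring genuine care — is getting the scaling exponent exactly right: a crude estimate yields $N^{3\beta-1}$ (from $\|N^{3\beta}V(N^\beta\cdot)\|_1 \sim 1$ times a naive unit volume factor) rather than the claimed $N^{\beta-1}$. The improvement by a factor $N^{-2\beta}$ must come from the two kinetic weights $\cK^{1/2}$ sitting on both sides: after writing $\langle \xi, \cV_N \xi\rangle \leq \frac{C\kappa}{N} \sum_{p,q,r}|\widehat V(r/N^\beta)| \|a_p a_q \xi\|\|a_{p+r}a_{q+r}\xi\|$, I would bound $|\widehat V(r/N^\beta)| \leq C$, use Cauchy--Schwarz to symmetrize, and then in the resulting $\sum_{p,q,r} \|a_p a_q \xi\|^2$-type expression extract the factor $(p^2+1)(q^2+1)$ by dividing and multiplying, so that $\sum_{r: |r| \leq C N^\beta} 1 \sim N^{3\beta}$ is tamed — no, the point is that $r$ ranges effectively over $|r| \lesssim N^\beta$ because $\widehat V(r/N^\beta)$ decays (or because $V \in L^2$ forces $\widehat V \in L^2$), and after the kinetic weights absorb two factors of momentum one is left with $\frac{1}{N} \cdot N^\beta \cdot (\text{bounded sums}) \sim N^{\beta-1}$. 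Concretely: $\frac{1}{N}\sum_{p,q,r}|\widehat V(r/N^\beta)|\|a_pa_q\xi\|\|a_{p+r}a_{q+r}\xi\| \leq \frac{1}{N}\big(\sum_{p,q,r}\tfrac{|\widehat V(r/N^\beta)|}{(p+r)^2 q^2}(p+r)^2 q^2\|a_{p+r}a_{q+r}\xi\|^2\big)^{1/2}(\cdots)^{1/2}$, and using $\sup_p \sum_r \tfrac{|\widehat V(r/N^\beta)|}{(p+r)^2} \leq C N^\beta$ (obtained by splitting $|r| \leq N^\beta$ and $|r| > N^\beta$, with $\|\widehat V\|_\infty \leq \|V\|_1 \leq C$ on the first region giving $\sum_{|r|\leq N^\beta}(p+r)^{-2} \sim N^\beta$, and $\widehat V$ decay on the second) one arrives at the claimed $N^{\beta-1}\|(\cK+1)^{1/2}(\cN_++1)^{1/2}\xi\|^2$. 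I would carry out exactly this splitting of the $r$-sum at the two scales $|r| \lessgtr N^\beta$ as the technical heart of the argument.
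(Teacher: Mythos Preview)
Your final paragraph lands on essentially the same proof as the paper: a weighted Cauchy--Schwarz in momentum space together with the key estimate
\[
\sup_{q \in \Lambda^*_+} \frac{1}{N}\sum_{r \neq -q} \frac{|\widehat V(r/N^\beta)|}{(q+r)^2} \leq C N^{\beta-1},
\]
proved by splitting the $r$-sum at scale $N^\beta$ (using $\|\widehat V\|_\infty$ on $|q+r|\leq N^\beta$ and $\widehat V \in \ell^2$ from $V\in L^2$ on the complement). The earlier detours through position space and Sobolev bounds are unnecessary; the paper works entirely in momentum space from the start.

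One concrete slip in your execution: the weights you insert in Cauchy--Schwarz do not do the job as written. Writing $\tfrac{|\widehat V|}{(p+r)^2 q^2}\cdot (p+r)^2 q^2$ just cancels, and the $q^2$ factor does not match the mode $a_{q+r}$ appearing in your second norm. The paper instead pairs $\|a_{p+r}a_q\xi\|$ with $\|a_{q+r}a_p\xi\|$ (which is what the operator $a_{p+r}^*a_q^*a_{q+r}a_p$ actually gives, not your $\|a_pa_q\xi\|\|a_{p+r}a_{q+r}\xi\|$) and inserts the single ratio weight $|p+r|/|q+r|$, so that after Cauchy--Schwarz one obtains
\[
\frac{\kappa}{N}\sum_{p,q,r}\frac{|\widehat V(r/N^\beta)|}{(q+r)^2}\,(p+r)^2\,\|a_{p+r}a_q\xi\|^2
\;\leq\; \kappa\Big(\sup_q \tfrac{1}{N}\sum_r \tfrac{|\widehat V(r/N^\beta)|}{(q+r)^2}\Big)\,\|\cK^{1/2}\cN_+^{1/2}\xi\|^2.
\]
With this correction your plan is complete and coincides with the paper's argument.
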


\begin{proof}
We observe that
\[ \begin{split} \langle \xi, \cV_N \xi \rangle &\leq \frac{\kappa}{2N} \sum_{p,q \in \Lambda_+^* , r \in \Lambda^* : r \not = -p,-q} |\widehat{V} (r/N^\beta)| \frac{|p+r|}{|q+r|} \| a_{p+r} a_q \xi \|  \frac{|q+r|}{|p+r|} \| a_{q+r} a_p \xi \|  \\ &\leq \frac{\kappa}{N} \sum_{p,q \in \Lambda_+^* , r \in \Lambda^* : r \not = -p,-q} \frac{|\widehat{V} (r/N^\beta)|}{(q+r)^2}  (p+r)^2 \| a_{p+r} a_q \xi \|^2 \\  &\leq \kappa \left\{ \sup_{q \in \Lambda^*_+} \frac{1}{N} \sum_{r \in \Lambda^* : r \not = -q} \frac{|\widehat{V} (r/N^\beta)|}{(q+r)^2} \right\} \| \cN_+ ^{1/2} \cK^{1/2} \xi \|^2 \end{split} \]
The claim follows from the estimate  
\begin{equation}\label{eq:eta-bd1} \begin{split} 
\frac{1}{N} \sum_{r \in \Lambda^* : r \not = -q} \frac{|\widehat{V} (r/N^\beta)|}{(q+r)^2} \leq \; &\frac{\| \widehat{V} \|_\infty}{N} \sum_{r \in \Lambda^* : |r+q| \leq N^{\beta}} \frac{1}{(r+q)^2} \\ &+ \frac{1}{N} \Big[ \sum_{r \in \Lambda^*} |\widehat{V} (r/N^\beta)|^2 \Big]^{1/2} \Big[ \sum_{r \in \Lambda^* : |r+q| > N^\beta} \frac{1}{|r+q|^4} \Big]^{1/2} \\ \leq \; &C N^{\beta-1} \end{split} \end{equation}
uniformly in $q \in \Lambda_+^*$. 
\end{proof}

\subsection{Analysis of $\cG_N^{(0)}$}

{F}rom (\ref{eq:cLNj}), we have
\[ \cG_{N,\beta}^{(0)} = e^{-B(\eta)} \cL_{N,\beta}^{(0)} e^{B(\eta)} = \frac{(N-1)}{2} \kappa \widehat{V} (0) + \cE_{N,\beta}^{(0)} \]
where 
\[ \cE_{N,\beta}^{(0)} = \frac{\kappa \widehat{V} (0)}{2N} e^{-B(\eta)} \cN_+  e^{B(\eta)} - \frac{\kappa \widehat{V} (0)}{2N} e^{-B(\eta)} \cN_+ ^2 e^{B(\eta)} \]
We collect the properties of $\cE_{N,\beta}^{(0)}$ in the next proposition. 
\begin{prop}\label{prop:G0} Under the assumptions of Theorem \ref{thm:gene}, there exists $C >0$ such that, on $\cF_+^{\leq N}$,  
\begin{equation}\label{eq:GN0} \begin{split} \pm \cE_{N,\beta}^{(0)} &\leq \frac{C \kappa}{N} (\cN_+ +1)^2 \leq C \kappa (\cN_+ +1) \\
\pm [\cE_{N,\beta}^{(0)} , i \cN_+ ] &\leq C (\cN_+ +1)  \end{split} \end{equation}
\end{prop}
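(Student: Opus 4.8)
The plan is to deduce both estimates from Lemma~\ref{lm:Ngrow}, which controls the growth of $\cN_+$ and of its powers under the generalized Bogoliubov transformation $e^{B(\eta)}$, combined with the constraint $\cN_+\le N$ on $\cF_+^{\le N}$ and, for the commutator, with the observation that $[\cN_+,B(\eta)]$ is itself a translation invariant quadratic expression in the $b$-fields of the type bounded in Lemma~\ref{lm:Bbds}.

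For the first bound I would start from the explicit expression $\cE_{N,\beta}^{(0)}=\frac{\kappa\widehat{V}(0)}{2N}\big(e^{-B(\eta)}\cN_+ e^{B(\eta)}-e^{-B(\eta)}\cN_+^2 e^{B(\eta)}\big)$. Since $\eta\in\ell^2(\Lambda^*_+)$ uniformly in $N$ by (\ref{eq:etap}), Lemma~\ref{lm:Ngrow} (with $n_1=1,2$ and $n_2=0$, together with $\cN_+\le\cN_+ +1$) gives $0\le e^{-B(\eta)}\cN_+ e^{B(\eta)}\le C(\cN_+ +1)$ and $0\le e^{-B(\eta)}\cN_+^2 e^{B(\eta)}\le C(\cN_+ +1)^2$. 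As both conjugated operators are nonnegative and bounded above by $C(\cN_+ +1)^2$, their difference satisfies $\pm\big(e^{-B(\eta)}\cN_+ e^{B(\eta)}-e^{-B(\eta)}\cN_+^2 e^{B(\eta)}\big)\le C(\cN_+ +1)^2$; dividing by $N$ yields $\pm\cE_{N,\beta}^{(0)}\le\frac{C\kappa}{N}(\cN_+ +1)^2$, and the bound $(\cN_+ +1)^2\le(N+1)(\cN_+ +1)$ on $\cF_+^{\le N}$ gives the second inequality in the first line of (\ref{eq:GN0}).

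For the commutator bound I would use that $\cN_+-\cN_+^2$ commutes with $\cN_+$. Writing $M:=e^{-B(\eta)}\cN_+ e^{B(\eta)}$, so that $e^{-B(\eta)}\cN_+^2 e^{B(\eta)}=M^2$, one obtains the identity $[\cE_{N,\beta}^{(0)},i\cN_+]=\frac{i\kappa\widehat{V}(0)}{2N}\big([M,\cN_+]-M[M,\cN_+]-[M,\cN_+]M\big)$, so everything reduces to controlling $[M,\cN_+]$. Since $[\cN_+,\cN_+]=0$ we have $[M,\cN_+]=[M-\cN_+,\cN_+]$, and by Duhamel $M-\cN_+=\int_0^1 e^{-sB(\eta)}\,[\cN_+,B(\eta)]\,e^{sB(\eta)}\,ds$, where a direct computation from $[b_p,\cN_+]=b_p$, $[b_p^*,\cN_+]=-b_p^*$ gives $[\cN_+,B(\eta)]=\sum_{p\in\Lambda^*_+}\big(\eta_p b_p^* b_{-p}^*+\bar\eta_p b_p b_{-p}\big)$, a quadratic operator of the type estimated in Lemma~\ref{lm:Bbds}. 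Iterating the identity $[e^{-sB}C e^{sB},\cN_+]=e^{-sB}[C,e^{sB}\cN_+ e^{-sB}]e^{sB}$ (valid for any $C$), expanding each conjugation by Lemma~\ref{lm:conv-series} and Lemma~\ref{lm:indu}, and using the $b$-commutation relations (\ref{eq:comm-bp}), one sees that $[M,\cN_+]$ is a norm-convergent series of $\Pi^{(2)}$-type operators (plus $c$-numbers) with coefficients controlled by powers of $\|\eta\|$; by Lemma~\ref{lm:Pi-bds} and Lemma~\ref{lm:Ngrow} this yields $\pm i[M,\cN_+]\le C(\cN_+ +1)$ together with $\|[M,\cN_+]\xi\|\le C\|(\cN_+ +1)\xi\|$. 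Since also $\|M\xi\|\le C\|(\cN_+ +1)\xi\|$ by Lemma~\ref{lm:Ngrow}, each of the three terms in the displayed identity is bounded, as a quadratic form, by $C(\cN_+ +1)^2$; inserting the $1/N$ prefactor and using once more $(\cN_+ +1)^2\le(N+1)(\cN_+ +1)$ on $\cF_+^{\le N}$ we arrive at $\pm[\cE_{N,\beta}^{(0)},i\cN_+]\le C(\cN_+ +1)$.

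The first bound is routine; the main obstacle is the commutator bound, where the crude estimates $\|M\|\le CN$ and $\|\cN_+\|\le N$ only give $O(N)$. What makes the argument work is the structural fact, encoded in Lemma~\ref{lm:indu}, that $[\cN_+,B(\eta)]$ is quadratic in the $b$-fields and that further commutators with $\cN_+$ and conjugations by $e^{B(\eta)}$ do not raise the order; this is precisely why $[M,\cN_+]$ is one power of $\cN_+$ smaller than $M$ itself, so that the $1/N$ prefactor of $\cE_{N,\beta}^{(0)}$ can be absorbed using $\cN_+\le N$.
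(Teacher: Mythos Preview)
Your argument is correct and follows the same strategy as the paper: Lemma~\ref{lm:Ngrow} for the first bound, and the Duhamel/series expansion of $e^{-B(\eta)}\cN_+e^{B(\eta)}$ via Lemma~\ref{lm:conv-series} and Lemma~\ref{lm:indu} for the commutator. The paper's presentation differs only in that it works with the bilinear form estimate $|\langle\xi_1,[M,\cN_+]\xi_2\rangle|\le C\kappa\,\|(\cN_++1)^{1/2}\xi_1\|\,\|(\cN_++1)^{1/2}\xi_2\|$, obtained directly from the expansion (\ref{eq:S-sum}) using Lemma~\ref{lm:aux}, part~ii), rather than your norm bound via Lemma~\ref{lm:Pi-bds}; note in this connection that the terms arising in the expansion are products of $\Pi^{(1)}$-operators and $\Lambda$-factors (not pure $\Pi^{(2)}$-operators as you write), and Lemma~\ref{lm:aux} is the tool tailored to bound such products with the correct power of $\cN_+$.
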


\begin{proof}
The first bound in (\ref{eq:GN0}) follows directly from Lemma \ref{lm:Ngrow}. To prove the second estimate in (\ref{eq:GN0}), we write
\[ \begin{split} e^{-B(\eta)} \cN_+  e^{B(\eta)} &= \cN_+  + \sum_{p \in \Lambda^*_+}  \int_0^1 ds \, e^{-sB(\eta)} [a_p^* a_p , B(\eta) ] e^{sB(\eta)} \\ &= \cN_+  + \sum_{p \in \Lambda^*_+} \eta_p \int_0^1 ds \, e^{-sB(\eta)} (b_p b_{-p} + b^*_p b^*_{-p}) e^{sB(\eta)} \end{split} \]
With Lemma \ref{lm:conv-series}, we obtain
\begin{equation}\label{eq:S-sum}  e^{-B(\eta)} \cN_+  e^{B(\eta)} = \cN_+  + \sum_{n,m \geq 0} \frac{(-1)^{n+m}}{n!m! (n+m+1)} \sum_{p \in \Lambda^*_+} \eta_p \left( \text{ad}^{(n)}_{B(\eta)} (b_p) \text{ad}^{(m)}_{B(\eta)} (b_{-p}) + \text{h.c.} \right) \end{equation}
It follows from Lemma \ref{lm:indu} that the operator 
\[ \sum_{p \in \Lambda^*_+} \eta_p \, \text{ad}^{(n)}_{B(\eta)} (b_p) \text{ad}^{(m)}_{B(\eta)} (b_{-p}) \]
can be written as the sum of $2^n n!$ terms of the form
\begin{equation}\label{eq:E-def}
\begin{split} \text{E} = \; &\sum_{p \in \Lambda^*_+} \eta_p \Lambda_1 \dots \Lambda_{i_1} N^{-k_1} \Pi^{(1)}_{\sharp, \flat} (\eta^{j_1}, \dots , \eta^{j_{k_1}} ; \eta^{\ell_1}_p \ph_{\alpha_{\ell_1} p} ) \\ &\hspace{3cm} \times \Lambda'_1 \dots \Lambda'_{i_2} N^{-k_2} \Pi^{(1)}_{\sharp, \flat} (\eta^{m_1}, \dots , \eta^{m_{k_2}} ; \eta^{\ell_2}_p \ph_{-\alpha_{\ell_2} p} )  \end{split} \end{equation}
where $i_1, i_2, k_1, k_2, \ell_1 , \ell_2 \in \bN$, $j_1, \dots , j_{k_1}, m_1, \dots , m_{k_2} \in \bN \backslash \{ 0 \}$, $\alpha_{\ell_1} = (- 1)^{\ell_1}$ and where each $\Lambda_r, \Lambda'_r$ is either a factor $(N-\cN_+ )/N$, a factor $(N+1-\cN_+ )/N$ or a $\Pi^{(2)}$-operator of the form 
\[ N^{-h} \Pi^{(2)}_{\underline{\sharp}, \underline{\flat}} (\eta^{z_1}, \dots , \eta^{z_h}) \] 
with $h, z_1, \dots, z_h \in \bN \backslash \{ 0 \}$. Lemma \ref{lm:aux}, part ii), allows us to bound matrix-elements of (\ref{eq:E-def}) by 
\begin{equation} \label{eq:S-bd} \begin{split} 
|\langle \xi_1, \text{E} \, \xi_2 \rangle | \leq \; &\sum_{p \in \Lambda^*_+} |\eta_p| \| (\cN_+ +1)^{1/2} \xi_1 \| \\ &\hspace{.5cm} \times   \| (\cN_+ +1)^{-1/2} \Lambda_1 \dots \Lambda_{i_1}   N^{-k_1} \Pi^{(1)}_{\sharp, \flat} (\eta^{j_1}, \dots , \eta^{j_{k_1}} ; \eta^{\ell_1} \ph_{\alpha_{\ell_1} p} ) \\ &\hspace{2.5cm} \times    \Lambda'_1 \dots \Lambda'_{i_2} N^{-k_2} \Pi^{(1)}_{\sharp, \flat} (\eta^{m_1}, \dots , \eta^{m_{k_2}} ; \eta^{\ell_2} \ph_{-\alpha_{\ell_2} p} )  \xi_2 \| \\ \leq \; &C^{n+m} \kappa^{n+m+1} \| (\cN_+ +1)^{1/2} \xi_1 \| \sum_{p \in  \Lambda^*_+} \left\{ \|p|^{-4} \| (\cN_+ +1)^{1/2} \xi_2 \| + |p|^{-2} \| a_p \xi_2 \| \right\} 
\\ \leq \; &C^{n+m} \kappa^{n+m+1} \| (\cN_+ +1)^{1/2} \xi_1 \| \| (\cN_+ +1)^{1/2} \xi_2 \| \end{split} \end{equation}
Since $[\cN_+ , \text{E}]$ has again the form $\text{E}$, up to a multiplicative constant bounded by $(n+m)$,  the bound (\ref{eq:S-bd}), with (\ref{eq:S-sum}), also implies that 
\begin{equation}\label{eq:comm1} \Big|\langle \xi_1, \left[ e^{-B(\eta)} \cN_+  e^{B(\eta)} , \cN_+  \right] \xi_2 \rangle \Big| \leq C \kappa \| (\cN_+ +1)^{1/2} \xi_1 \| \| (\cN_+ +1)^{1/2} \xi_2 \| \end{equation}
for all $\xi_1, \xi_2 \in \cF_+^{\leq N}$. With Lemma \ref{lm:Ngrow}, we obtain 
\[ \begin{split} \Big| \big\langle \xi, \left[ e^{-B(\eta)} \cN_+ ^2 e^{B(\eta)} , \cN_+  \right] \xi_2 \big\rangle \Big| &= \Big| \langle \xi, e^{-B(\eta)} \cN_+   e^{B(\eta)} \left[ e^{-B(\eta)} \cN_+  e^{B(\eta)} , \cN_+  \right] \xi \rangle \\ &\hspace{2cm} + \langle \xi , \left[ e^{-B(\eta)} \cN_+  e^{B(\eta)} , \cN_+  \right] e^{-B(\eta)} \cN_+  e^{B(\eta)} \xi \rangle \Big| \\ &\leq C \kappa \| (\cN_+ +1)^{1/2} e^{-B(\eta)} \cN_+  e^{B(\eta)}  \xi \| \| (\cN_+ +1)^{1/2} \xi \| \\ &\leq C \kappa \| (\cN_+ +1)^{3/2} \xi \| \| (\cN_+ +1)^{1/2} \xi \| \\ &\leq C \kappa N \| (\cN_+ +1)^{1/2} \xi \|^2 \end{split} \]
Together with (\ref{eq:comm1}), this concludes the proof of the second estimate in (\ref{eq:GN0}).

\end{proof}

\subsection{Analysis of $\cG_N^{(2)}$}

{F}rom (\ref{eq:cLNj}), we have 
\[ \cG_{N,\beta}^{(2)} = e^{-B(\eta)} \cL_{N,\beta}^{(2)} e^{B(\eta)}  = e^{-B(\eta)} \cK e^{B(\eta)} + e^{-B(\eta)} \cL_{N,\beta}^{(V)} e^{B(\eta)}  \]
where $\cK = \sum_{p \in  \Lambda_+^*} p^2 a_p^* a_p$ and 
\begin{equation}\label{eq:wtL2N} 
\cL^{(V)}_{N,\beta} = \sum_{p \in \Lambda_+^*} \kappa \widehat{V} (p/N^\beta) \left[ b_p^* b_p - \frac{1}{N} a_p^* a_p \right] + \frac{\kappa}{2} \sum_{p \in \Lambda_+^*} \widehat{V} (p/N^\beta) \big[ b_p^* b_{-p}^* + b_p b_{-p} \big] 
\end{equation}

We study first the contribution arising from the kinetic energy operator $\cK$. We define the operator $\wt{\cE}_{N,\beta}^{(K)}$ through 
\begin{equation}\label{eq:E_NK} e^{-B(\eta)} \cK e^{B(\eta)} = \cK + \sum_{p \in \Lambda^*_+} p^2 \eta_p^2 + \sum_{p\in \Lambda^*_+} p^2 \eta_p \big[ b_p^* b_{-p}^* + b_p b_{-p} \big] + \wt{\cE}_{N,\beta}^{(K)} \end{equation}
To prove part b) of Theorem \ref{thm:gene}, we need to keep track of more order one terms arising from the conjugation of $\cK$. We define the operator $\cE_{N,\beta}^{(K)}$ through 
\begin{equation}\label{eq:wtE_NK} \begin{split} e^{-B(\eta)} \cK e^{B(\eta)} = \; &\cK + \sum_{p \in \Lambda^*_+} \Big[ p^2 \sigma_p^2 + p^2 \sigma_p \gamma_p  \big( b_p b_{-p} + b_p^* b_{-p}^* \big) + 2 p^2 \sigma_p^2 b_p^* b_p \Big] + \cE_{N,\beta}^{(K)} \end{split} \end{equation}
In the next proposition, we study the properties of the error terms $\wt{\cE}_{N,\beta}^{(K)}$, 
$\cE_{N,\beta}^{(K)}$. 
\begin{prop} \label{prop:K}
Under the assumptions of Theorem \ref{thm:gene}, for every $\delta > 0$ there exists $C >0$ such that, on $\cF_+^{\leq N}$, 
\begin{equation}\label{eq:bdK} \begin{split} 
\pm \, \wt{\cE}_{N,\beta}^{(K)} &\leq \delta \cH_N^\beta + C \kappa (\cN_+ +1) \\ \pm \, \left[ \wt{\cE}_{N,\beta}^{(K)} , i \cN_+  \right] &\leq C (\cH_N^\beta + 1) \end{split}
\end{equation}
Furthermore, there exists $C > 0$ such that 
\begin{equation}\label{eq:bdK2} 
\pm \, \cE_{N,\beta}^{(K)} \leq C N^{\beta -1} (\cN_+ +1) (\cK +1) \end{equation}
\end{prop}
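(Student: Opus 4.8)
The plan is to conjugate the kinetic energy operator term by term. Starting from
\[ e^{-B(\eta)} \cK e^{B(\eta)} = \cK + \int_0^1 ds \, e^{-sB(\eta)} [\cK, B(\eta)] e^{sB(\eta)} , \]
I would first compute $[\cK, B(\eta)]$. Since $[\cK, b_p^*] = p^2 b_p^*$, $[\cK, b_p] = - p^2 b_p$ and $[\cK, \cN_+] = 0$, one finds $[\cK, B(\eta)] = \sum_{p \in \Lambda^*_+} p^2 \eta_p (b_p^* b_{-p}^* + b_p b_{-p})$. Inserting the expansion of Lemma \ref{lm:conv-series} for each factor, $e^{-sB(\eta)} b_p^\sharp e^{sB(\eta)} = \sum_{n \geq 0} \frac{(-s)^n}{n!} \text{ad}^{(n)}_{B(\eta)} (b_p^\sharp)$, and performing the $s$-integration yields
\[ e^{-B(\eta)} \cK e^{B(\eta)} = \cK + \sum_{p \in \Lambda^*_+} p^2 \eta_p \sum_{n,m \geq 0} \frac{(-1)^{n+m}}{n! \, m! \, (n+m+1)} \Big( \text{ad}^{(n)}_{B(\eta)} (b_p^*) \, \text{ad}^{(m)}_{B(\eta)} (b_{-p}^*) + \text{ad}^{(n)}_{B(\eta)} (b_p) \, \text{ad}^{(m)}_{B(\eta)} (b_{-p}) \Big) , \]
a series which is absolutely convergent for $\| \eta \|$ small by Lemma \ref{lm:indu}.

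The next step is to isolate the main terms. The contribution with $n=m=0$ is precisely $\sum_p p^2 \eta_p (b_p^* b_{-p}^* + b_p b_{-p})$, the linear-in-$\eta$ quadratic operator subtracted in both (\ref{eq:E_NK}) and (\ref{eq:wtE_NK}). For the remaining terms I would use part iv) of Lemma \ref{lm:indu}: each $\text{ad}^{(n)}_{B(\eta)} (b_p)$ contains one distinguished ``main part'', equal up to $(N-\cN_+)/N$-dressings to $\eta_p^n b_p$ for $n$ even and to $- \eta_p^n b^*_{-p}$ for $n$ odd (see (\ref{eq:iv1}), (\ref{eq:iv2})), while all the other terms carry a $\Pi^{(1)}$-operator of order $\geq 1$ or at least one $\Pi^{(2)}$-factor. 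Collecting the main parts across all orders and resumming the resulting $\sinh$ and $\cosh$ series, one recovers, after replacing the $(N-\cN_+)/N$ factors by $1$ and $\eta_p$ by $\sigma_p, \gamma_p$, the constant $\sum_p p^2 \eta_p^2$ in the case of $\wt{\cE}_{N,\beta}^{(K)}$ and the full quasi-free quadratic expression $\sum_p \big[ p^2 \sigma_p^2 + p^2 \sigma_p \gamma_p (b_p b_{-p} + b_p^* b_{-p}^*) + 2 p^2 \sigma_p^2 b_p^* b_p \big]$ in the case of $\cE_{N,\beta}^{(K)}$. Everything that is left over --- the deviations $p^2 (\sigma_p^2 - \eta_p^2)$ and $p^2(\sigma_p \gamma_p - \eta_p)$, the $O(\cN_+/N)$ errors from the $(N-\cN_+)/N$ factors, and all the non-main terms from Lemma \ref{lm:indu} --- is, by definition, $\wt{\cE}_{N,\beta}^{(K)}$, resp. $\cE_{N,\beta}^{(K)}$.

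For the bounds I would estimate each surviving term with Lemma \ref{lm:aux}, parts ii) and iii) (part iii) being used exactly for the cross terms in which one factor is replaced by its main part $\eta_p^n b_{\pm p}^\sharp$). The decisive point is that every such term carries the weight $p^2 |\eta_p| \leq C \kappa$ coming from $[\cK, B(\eta)]$, by (\ref{eq:etap}), so the kinetic factor is absorbed by one power of $\eta$; the remaining powers of $\eta$, together with $\sum_p |\eta_p| < \infty$ and the smallness of $\| \eta \|$, make the sums over $p$ and over $n, m$ convergent with an overall constant of order $\kappa$. The terms retaining a free $a_p$-leg (the case $\ell_1 = 0$ of Lemma \ref{lm:aux}), which by part iv) of Lemma \ref{lm:indu} necessarily have $k_1 \geq 1$ or a $\Pi^{(2)}$-factor and hence obey the improved $N^{-1}$-bounds of Lemma \ref{lm:aux}, are controlled by Cauchy--Schwarz in the form $\sum_p \kappa \| a_p \xi \| \leq C \kappa \| \cK^{1/2} \xi \|$; this produces the term $\delta \cK \leq \delta \cH_N^\beta$ in the first estimate of (\ref{eq:bdK}), and is the reason for the kinetic energy on the right-hand side. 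The commutator estimate $\pm [ \wt{\cE}_{N,\beta}^{(K)}, i \cN_+ ] \leq C(\cH_N^\beta + 1)$ follows from the same analysis, noting that $[\,\cdot\,, \cN_+]$ maps each term of the expansion to one of the same structure times a combinatorial factor $\leq n+m$, so that convergence persists; one keeps the full (not $\delta$-small) bound and uses $\cN_+ \leq C \cK \leq C \cH_N^\beta$ on $\cF_+^{\leq N}$. Finally, for (\ref{eq:bdK2}) one tracks powers of $N$ more carefully: every term contributing to $\cE_{N,\beta}^{(K)}$ either carries an explicit $N^{-1}$ --- from a $\Pi^{(2)}$-factor, from an $(N-\cN_+)/N$-deviation, or from an improved bound of Lemma \ref{lm:aux} --- or is of order $\eta^3$ or higher with fast decay in $p$; combining the $N^{-1}$ with the sum $\sum_p p^2 |\eta_p|^2 \leq C N^\beta \kappa^2$ from (\ref{eq:etapN}) gives the prefactor $N^{\beta-1}$ together with the operator bound $(\cN_+ + 1)(\cK + 1)$.

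The main obstacle is the combinatorial bookkeeping: one has to verify that the terms which do \emph{not} acquire an $N^{-1}$-improvement are precisely those assembled into the subtracted main parts of (\ref{eq:E_NK}) and (\ref{eq:wtE_NK}), so that the genuine remainder always carries the required smallness ($\kappa$ for (\ref{eq:bdK}), $N^{\beta-1}$ for (\ref{eq:bdK2})); matching the resummation of the main parts with the precise quasi-free coefficients $\sigma_p, \gamma_p$ and propagating the estimates through the commutator with $\cN_+$ is lengthy but routine given Lemmas \ref{lm:indu}, \ref{lm:conv-series} and \ref{lm:aux}.
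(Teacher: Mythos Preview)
Your outline matches the paper's proof in overall structure: Duhamel expand, compute $[\cK,B(\eta)]=\sum_p p^2\eta_p(b_p^*b_{-p}^*+b_pb_{-p})$, insert the series from Lemma~\ref{lm:conv-series}, split each nested commutator into the main part from Lemma~\ref{lm:indu}\,iv) and the rest, resum the main parts into the $\sigma_p,\gamma_p$ expressions, and bound the remainder term by term with Lemma~\ref{lm:aux} (parts ii) and iii)). The paper organizes the remainder into three pieces $\cE^{(K)}_1,\cE^{(K)}_2,\cE^{(K)}_3$ according to which of the two factors is main/non-main, and further splits the non-main contributions into a piece $\text{F}_1$ carrying the $(N-\cN_+)/N$-deviation and pieces $\text{F}_2$ with $k_1>0$ or a $\Pi^{(2)}$-factor; your description captures this.

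There is one genuine difference in the handling of the most delicate term, namely $\text{F}_2$ with $n=\ell_1=0$ (both operator factors carry a free $a_{\pm p}$, so the weight is $p^2\eta_p$ with no extra decay). The paper does \emph{not} control this by $\cK$ as you propose; instead it invokes the scattering relation \eqref{eq:eta-scat} to rewrite $p^2\eta_p=-\tfrac{\kappa}{2}\widehat V(p/N^\beta)+\text{(lower order)}$, passes to position space, and bounds the resulting expression by $\|\cV_N^{1/2}\xi\|$. This is why the $\delta\cH_N^\beta$ in \eqref{eq:bdK} actually enters as $\delta\cV_N$ in the paper, not as $\delta\cK$. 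Your alternative---use $|p^2\eta_p|\le C\kappa$, the improved $N^{-1}$-bound \eqref{eq:E2impr} for $D_2$ (available because the non-main factor has $k_1>0$ or a $\Pi^{(2)}$), and Cauchy--Schwarz on $\sum_p\|a_pa_{-p}\xi\|\le C\|\cK^{1/2}\cN_+^{1/2}\xi\|$---also works and in fact gives $N^{-1}$ rather than $N^{\beta-1}$ for this particular contribution to \eqref{eq:bdK2}. But note that your stated formula ``$\sum_p\kappa\|a_p\xi\|\le C\kappa\|\cK^{1/2}\xi\|$'' is too loose for this term: both factors contribute an annihilation operator, so you are really bounding $\sum_p\|a_pa_{-p}\xi\|$, and the extra $\cN_+^{1/2}$ must be absorbed via $\cN_+\le N$ for \eqref{eq:bdK} or kept as part of the $(\cN_++1)(\cK+1)$ in \eqref{eq:bdK2}.
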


\begin{proof}
We compute 
\[ \begin{split} e^{-B(\eta)} \cK e^{B(\eta)} = \; &\cK + \int_0^1 ds \frac{d}{ds} \, e^{-sB(\eta)} \cK e^{s B(\eta)} \\ = \; &\cK + \int_0^1 ds \, e^{-s B(\eta)} [\cK , B(\eta) ] e^{s B(\eta)} \\ = \; &\cK + \int_0^1 ds \sum_{p \in \Lambda^*_+} p^2 \eta_p \, e^{-s B(\eta)} \left(  b_p b_{-p}  + b_p^* b_{-p}^* \right)  e^{s B(\eta)} \end{split} \]
With Lemma \ref{lm:conv-series} we find
\[ \begin{split} e^{-B(\eta)} \cK e^{B(\eta)} = \; &\cK + \int_0^1 ds \sum_{n,m \geq 0} \frac{(-1)^{n+m}}{n!m!}   \sum_{p \in \Lambda^*_+} p^2 \eta_p \left[ \text{ad}^{(n)}_{sB(\eta)} (b_p) \text{ad}^{(m)}_{sB(\eta)} ( b_{-p}) + \text{h.c.} \right]  \\
= \; &\cK + \int_0^1 ds \sum_{n,m \geq 0} \frac{(-1)^{n+m}}{n!m!}  \sum_{p \in \Lambda^*_+} p^2 \eta_p \\ &\hspace{1cm} \times \Big\{ \Big[ s^n \eta_p^n b_{\alpha_n}^{\sharp_n} + \text{ad}_{sB(\eta)}^{(n)} (b_p) - s^n \eta_p^n b_{\alpha_n}^{\sharp_n} \Big] \\ &\hspace{3cm} \times  \Big[ s^m \eta_p^m b_{\alpha_m}^{\sharp_m} + \text{ad}_{sB(\eta)}^{(m)} (b_p) - s^m \eta_p^m b_{\alpha_m}^{\sharp_m} \Big] + \text{h.c.} \Big\} 
\end{split} \]
where we defined $\alpha_n = +1$ and $\sharp_n =\cdot$ if $n$ is even while $\alpha_n = -1$ and $\sharp_n = *$ if $n$ is odd. Integrating over $s$, and using \[ \begin{split} \eta_p \int_0^1  \big( \cosh^2 (s \eta_p) + \sinh^2 (s\eta_p) \big) \, ds &= \cosh (\eta_p) \sinh (\eta_p) \\ 2\eta_p \int_0^1  \sinh (s\eta_p) \cosh (s\eta_p) \, ds &=  \sinh^2 (\eta_p)  \end{split} \]
we easily find, with the notation $\gamma_p = \cosh \eta_p$ and $\sigma_p = \sinh \eta_p$,
\[ \begin{split} e^{-B(\eta)} &\cK e^{B(\eta)} \\ = \; &\cK + \sum_{p \in \Lambda^*_+} p^2 \sigma^2_p + \sum_{p \in \Lambda^*_+} p^2 \gamma_p \sigma_p \big( b_p b_{-p} + b_p^* b_{-p}^* \big) + 2 \sum_{p \in \Lambda^*_+}  p^2 \sigma^2_p b_p^* b_p \\ &+ \sum_{n,m \geq 0} \frac{(-1)^{n+m}}{n!m!(n+m+1)} \sum_{p\in \Lambda^*_+} p^2  \eta_p^{n+1} b_{\alpha_n p}^{\sharp_n} \left[ \text{ad}_{B(\eta)}^{(m)} (b_{-p}) - \eta_p^m b_{-\alpha_m p}^{\sharp_m} \right] + \text{h.c.} \\ &+ \sum_{n,m \geq 0} \frac{(-1)^{n+m}}{n!m!(n+m+1)} \sum_{p\in \Lambda^*_+} p^2 \left[ \text{ad}_{B(\eta)}^{(n)} (b_p) - \eta_p^n b_{\alpha_n p}^{\sharp_n} \right] \eta_p^{m+1} b_{-\alpha_m p}^{\sharp_m}   + \text{h.c.}
\\ &+ \sum_{n,m \geq 0} \frac{(-1)^{n+m}}{n!m!(n+m+1)}\sum_{p\in \Lambda^*_+} p^2 \eta_p \left[ \text{ad}_{B(\eta)}^{(n)} (b_p) -  \eta_p^n b_{\alpha_n p}^{\sharp_n} \right] \\ &\hspace{7cm} \times \left[ \text{ad}_{B(\eta)}^{(m)} (b_{-p}) - \eta_p^m b_{-\alpha_m p}^{\sharp_m} \right]  + \text{h.c.} \\ 
=: \; & \cK + \sum_{p \in \Lambda^*_+} \Big[ p^2 \sigma^2_p + p^2  \gamma_p \sigma_p \big( b_p b_{-p} + b^*_p b^*_{-p} \big) + 2p^2 \sigma_p^2 b_p^* b_p \Big] \\ &+ \cE^{(K)}_1 + \cE^{(K)}_2 + \cE^{(K)}_3 \, .
\end{split} \]
Comparing with (\ref{eq:E_NK}) and (\ref{eq:wtE_NK}), we conclude that $\cE_{N,\beta}^{(K)} = \cE^{(K)}_1 + \cE^{(K)}_2 + \cE^{(K)}_3$ and 
\[ \begin{split} \wt{\cE}_{N,\beta}^{(K)} = \; &\sum_{p \in \Lambda^*_+} p^2 \big[\sigma_p^2 - \eta_p^2 \big] + 2p^2 \sigma_p^2 b_p^* b_p +  p^2 \big[ \sigma_p \gamma_p - \eta_p \big] \big[ b_p^* b_{-p}^* + b_p b_{-p} \big] \\ &+ \cE^{(K)}_1 + \cE^{(K)}_2 + \cE^{(K)}_3 \\
=: \; & \cE_0^{(K)} + \cE^{(K)}_1 + \cE^{(K)}_2 + \cE^{(K)}_3 \end{split} \]
Since, by (\ref{eq:etap}), $|\sigma_p^2 - \eta_p^2| \leq C \kappa^2 |p|^{-8}$, $p^2 \sigma_p^2 \leq C \kappa^2$ and $|\sigma_p \gamma_p - \eta_p| \leq C \kappa^3 |p|^{-6}$, it is easy to check that 
\[ \begin{split}  | \langle \xi , \cE_0^{(K)} \xi \rangle | &\leq C \kappa \| (\cN_+ +1)^{1/2} \xi \|^2 \\  |\langle \xi , [ \cE_0^{(K)} , \cN_+  ] \xi \rangle | &\leq C \kappa \| (\cN_+ +1)^{1/2} \xi \|^2 \end{split} \]
Hence, Proposition \ref{prop:K} follows if we can show that the three error terms $\cE_1^{(K)}, \cE^{(K)}_2, \cE_3^{(K)}$ satisfy the three bounds in (\ref{eq:bdK}), (\ref{eq:bdK2}). 

We consider first the term $\cE^{(K)}_1$. According to Lemma \ref{lm:indu}, the operator
\begin{equation}\label{eq:EK1-tms} \sum_{p \in \Lambda^*_+} p^2 \eta_p^{n+1} b^{\sharp_n}_{\alpha_n p} \big[\text{ad}^{(m)}_{B(\eta)} (b_{-p}) - \eta_p^m b^{\sharp_m}_{-\alpha_m p}\big] \end{equation}
is given by the sum of one term of the form 
\begin{equation}\label{eq:F1}\begin{split} 
\text{F}_{1} = \; & \sum_{p \in \Lambda^*_+} p^2 \eta_p^{m+n+1} b^{\sharp_n}_{\alpha_n p} \\ &\hspace{2cm} \times \left\{ \left( \frac{N-\cN_+ }{N} \right)^{\frac{m+(1-\alpha_m)/2}{2}} \left( \frac{N+1-\cN_+ }{N} \right)^{\frac{m-(1-\alpha_m)/2}{2}} -1 \right\} b^{\sharp_m}_{-\alpha_m p} \end{split} \end{equation}
and of $2^m m! -1$ terms of the form
\begin{equation}\label{eq:F2} \text{F}_{2} = \sum_{p \in \Lambda^*_+} p^2 \eta_p^{n+1} b^{\sharp_n}_{\alpha_n p} \Lambda_1 \dots \Lambda_{i_1} N^{-k_1} \Pi^{(1)}_{\sharp,\flat} (\eta^{j_1} , \dots , \eta^{j_{k_1}} ; \eta^{\ell_1}_p \ph_{-\alpha_{\ell_1} p}) \end{equation}
where $i_1, k_1, \ell_1 \in \bN$, $j_1, \dots , j_{k_1} \in \bN \backslash \{ 0 \}$, $\alpha_{\ell_1} = (-1)^{\ell_1}$ and where each $\Lambda_r$ is either a factor $(N-\cN_+ )/N$, $(N+1-\cN_+ )/N$ or a $\Pi^{(2)}$-operator of the form 
\begin{equation}\label{eq:Pi2-K} N^{-h} \Pi^{(2)}_{\underline{\sharp}, \underline{\flat}} (\eta^{z_1}, \dots , \eta^{z_h}) \end{equation}
with $h, z_1, \dots, z_h \in \bN \backslash \{ 0 \}$. Furthermore, since we are considering the term (\ref{eq:F1}) separately, each term of the form (\ref{eq:F2}) must have either $k_1 > 0$ or it must contain at least one $\Lambda$-operator of the form (\ref{eq:Pi2-K}) for some $p > 0$. 

To estimate (\ref{eq:F1}), we define 
\[ f(\cN_+ ) = \left\{ 1 - \left( \frac{N-\cN_+ }{N} \right)^{\frac{m+(1-\alpha_m)/2}{2}} \left( \frac{N+1-\cN_+ }{N} \right)^{\frac{m-(1-\alpha_m)/2}{2}} \right\} \]
and we notice that 
\begin{equation}\label{eq:fN} -C m /N \leq f(\cN_+ ) \leq C m \cN_+ /N \end{equation}
Since $f (\cN_+) = 0$ when $m=0$, distinguishing the two cases $n+m \geq 2$ and $n=0,m=1$ we conclude that 
\begin{equation}\label{eq:F1-1} \begin{split} 
|\langle \xi , \text{F}_1 \xi \rangle | \leq \; &C^{n+m+1} \kappa^{n+m+1} \sum_{p \in \Lambda^*_+} \left\{ \frac{(m+1)}{N |p|^4} \| (\cN_+ +1) \xi \|^2 + \frac{m}{N p^2} \| b_p (\cN_+ +1)^{1/2} \xi \|^2 \right\}  \\ &+ \frac{m}{N} \| (\cN_+ +1)^{1/2} \xi \|^2 \sum_{p \in \Lambda^*_+} p^2 \eta^2_p \\ \leq \; &C^{n+m+1} \kappa^{n+m+1} (m+1) \Big\{ N^{-1} \| (\cN_+ +1) \xi \|^2 + N^{\beta -1} \| (\cN_+ +1)^{1/2} \xi \|^2 \Big\} \end{split} \end{equation}
for all $n,m \in \bN$ (the second line bounds the term with the commutator $[b_p, b_p^*]$ arising when $n=0$ and $m=1$). Since $\cN_+ \leq N$ on $\cF_+^{\leq N}$, (\ref{eq:F1-1}) also implies that
\begin{equation}\label{eq:F1-2}
|\langle \xi , \text{F}_1 \xi \rangle | \leq C^{n+m+1} \kappa^{n+m+1} (m+1) \| (\cN_+ + 1)^{1/2} \xi \|^2 
\end{equation}
Eq. (\ref{eq:F1-1}) will be used in the proof of (\ref{eq:bdK2}), while (\ref{eq:F1-2}) will be used to show (\ref{eq:bdK}).

Let us now consider the expectation of (\ref{eq:F2}). First, assume that $\ell_1 + n \geq 1$. Then, Lemma \ref{lm:aux}, part iii), implies that 
\begin{equation}\label{eq:F2-1}\begin{split} |\langle \xi , \text{F}_2 \xi \rangle | \leq \; &\sum_{p \in \Lambda^*_+} p^2 |\eta_p| \| (\cN_+ +1)^{1/2} \xi \| \\  &\hspace{.1cm} \times \| (\cN_+ +1)^{-1/2} b^{\sharp_n}_{\alpha_n p} \Lambda_1 \dots \Lambda_{i_1} N^{-k_1} \Pi^{(1)}_{\sharp,\flat} (\eta^{j_1}, \dots, \eta^{j_{k_1}} ; \eta_p^{\ell_1} \ph_{-\alpha_{\ell_1} p} ) \xi \| \\
\leq \; & C^{n+m} \kappa^{n+m+1} \| (\cN_+ +1)^{1/2} \xi \| \\ &\hspace{.1cm} \times \sum_{p \in \Lambda^*_+} \left\{ \frac{(1+m/N)}{|p|^4} \| (\cN_+ +1)^{1/2} \xi \| + \frac{1}{|p|^2} \| a_p \xi \| \right\} \\ &+ C^{n+m} \kappa^{n+m-1} \| (\cN_+ +1)^{1/2} \xi \|^2 \frac{1}{N} \sum_{p \in \Lambda^*_+} p^2 \eta_p^2  \\ 
\leq \; &C^{n+m} \kappa^{n+m+1} (m+1) \| (\cN_+ +1)^{1/2} \xi \|^2 \end{split} \end{equation}
by (\ref{eq:etapN}), which will be used in the proof of (\ref{eq:bdK}). Also here we will need a slightly different estimate to show (\ref{eq:bdK2}). Using again Lemma \ref{lm:aux}, part iii), under the assumption $\ell_1 + n \geq 1$, we find 
\begin{equation}\label{eq:F2-2} \begin{split} |\langle \xi , \text{F}_2 \xi \rangle | \leq \; &\sum_{p \in \Lambda^*_+} p^2 |\eta_p| \| (\cN_+ +1) \xi \| \\  &\hspace{.1cm} \times \| (\cN_+ +1)^{-1} b^{\sharp_n}_{\alpha_n p} \Lambda_1 \dots \Lambda_{i_1} N^{-k_1} \Pi^{(1)}_{\sharp,\flat} (\eta^{j_1}, \dots, \eta^{j_{k_1}} ; \eta_p^{\ell_1} \ph_{-\alpha_{\ell_1} p} ) \xi \| \\
\leq \; & \frac{C^{n+m} \kappa^{n+m+1}}{N} \| (\cN_+ +1) \xi \| \\ &\hspace{.1cm} \times \sum_{p \in \Lambda^*_+} \left\{ \frac{(1+m)}{|p|^4} \| (\cN_+ +1) \xi \| + \frac{1}{|p|^2} \| a_p (\cN_+ +1)^{1/2} \xi \| \right\} \\ &+ \frac{C^{n+m} \kappa^{n+m-1}}{N} \| (\cN_+ +1) \xi \|  \| \xi \| \sum_{p \in \Lambda^*_+} p^2 \eta_p^2  \\ 
\leq \; & C^{n+m} \kappa^{n+m+1} (m+1) N^{\beta-1} \| (\cN_+ +1) \xi \|^2 \end{split} \end{equation}

In the case $n=\ell_1=0$, Lemma \ref{lm:aux}, part iii), allows us to write
\[ \langle \xi , \text{F}_{2} \xi \rangle = \sum_{p \in \Lambda_+^*} p^2 \eta_p \langle \xi , \text{D}_{1} (p) \rangle + \sum_{p \in \Lambda_+^*} p^2 \eta_p \langle \xi ,\text{D}_2 \, a_p a_{-p} \xi \rangle  \]
where
\[ \| (\cN_+ +1)^{-1} \text{D}_1 (p) \| \leq C^{m} \kappa^{m} m N^{-1} |p|^{-2} \| a_p (\cN_+ +1)^{-1/2} \xi \| \]
and $\| \text{D}^*_2 \xi \| \leq C^{m} \kappa^{m} N^{-1}\| (\cN_+ +1) \xi \|$. Hence, in this case,
\[ \begin{split} |\langle \xi, \text{F}_{2} \xi \rangle| \leq \; &\frac{C^{m} \kappa^{m+1} m}{N} \| (\cN_+ +1) \xi \| \sum_{p \in \Lambda_+^*} |p|^{-2} \| a_p (\cN_+ +1)^{-1/2} \xi \| \\ &+ \Big| \sum_{p \in \Lambda^*_+} p^2 \eta_p \langle \xi , \text{D}_{2} \, a_p a_{-p} \xi \rangle  \Big| \\
\leq \; &C^{m} \kappa^{m+1} m N^{-1} \| (\cN_+ +1) \xi \|^2 + \Big| \sum_{p \in \Lambda^*_+} p^2 \eta_p \langle  \text{D}^*_2 \, \xi ,  a_p a_{-p} \xi \rangle  \Big| \end{split} \]
To control the last term, we  use (\ref{eq:eta-scat}) to replace
\begin{equation}\label{eq:eta-scat2b} 
p^2 \eta_p = -\frac{\kappa}{2} \widehat{V} (p/N^\beta) - \frac{\kappa}{2N} \sum_{q \in \Lambda^*} \widehat{V} ((p-q)/N^\beta) \wt{\eta}_q + N \lambda_{N,\ell} \widehat{\chi}_\ell (p) + \lambda_{N,\ell} \sum_{q \in \Lambda^*} \widehat{\chi}_\ell (p-q) \wt{\eta}_q 
 \end{equation}
To bound the contribution proportional to $\kappa \widehat{V} (p/N^\beta)$, we switch to position space. We find 
\[ \begin{split} \Big| \kappa \sum_{p \in \Lambda^*_+} \widehat{V} (p/N^\beta) &\langle \text{D}^*_2 \, \xi ,  a_p a_{-p} \xi \rangle \Big| \\ 
&= \left| \kappa \int_{\Lambda \times \Lambda} dx dy N^{3\beta} V(N^\beta (x-y))\langle \text{D}^*_2 \xi,  \check{a}_x \check{a}_y \xi \rangle \right| \\ &\leq \frac{C^{m} \kappa^{m+1}}{N} \int_{\Lambda \times \Lambda} dx dy N^{3\beta} V(N^\beta (x-y)) \| (\cN_+ +1) \xi \| \| \check{a}_x \check{a}_y \xi  \| \\ &\leq  C^{m} \kappa^{m+1/2} N^{-1/2} \| \cV_N^{1/2} \xi \| \| (\cN_+ +1) \xi \|  \end{split} \]
The contribution of the other terms on the r.h.s. of (\ref{eq:eta-scat2b}) can be bounded similarly. We conclude that, for $n=\ell_1 = 0$, 
\[ |\langle \xi , \text{F}_2 \xi \rangle| \leq \frac{C^m \kappa^{m+1} (m+1)}{N} \| (\cN_+ +1) \xi \|^2 + \frac{C^m \kappa^{m+1/2}}{\sqrt{N}} \| (\cN_+ +1) \xi \| \| \cV_N^{1/2} \xi \| \]
Since $\cN_+  \leq N$ on $\cF_+^{\leq N}$, the last estimate also implies that
\[ |\langle \xi , \text{F}_2 \xi \rangle| \leq C^m \kappa^{m+1} (m+1) \| (\cN_+ +1)^{1/2} \xi \|^2 + C^m \kappa^{m+1/2} \| (\cN_+ +1)^{1/2} \xi \| \| \cV_N^{1/2} \xi \| \]
Combining the last two bounds with (\ref{eq:F2-1}) and (\ref{eq:F2-2}) we obtain that, for 
every $n,m \in \bN$, 
\begin{equation}\label{eq:F2-1-f} |\langle \xi , \text{F}_2 \xi \rangle|  \leq C^{n+m} \kappa^{n+m+1} (m+1) \| (\cN_+ +1)^{1/2} \xi \|^2 + C^{n+m} \kappa^{n+m+1/2}  \| (\cN_+ +1)^{1/2} \xi \| \| \cV_N^{1/2} \xi \| \end{equation}
and, with Lemma \ref{lm:cVN},  
\begin{equation}\label{eq:F2-2-f} \begin{split} 
|\langle \xi , \text{F}_2 \xi \rangle| \leq \; &C^{n+m} \kappa^{n+m+1} (m+1) N^{\beta-1} \| (\cN_+ +1) \xi \|^2 + C^{n+m} \kappa^{n+m} \| \cV_N^{1/2} \xi \|^2 \\ \leq \; &C^{n+m} \kappa^{n+m} (m+1) N^{\beta-1} \|(\cN_+ +1)^{1/2} (\cK+1)^{1/2} \xi \|^2 \, . \end{split} \end{equation}

From (\ref{eq:F1-2}) and (\ref{eq:F2-1-f}) we conclude that, if $\kappa > 0$ is small enough, 
\[ |\langle \xi, \cE^{(K)}_1 \xi \rangle| \leq C \kappa \| (\cN_+ +1)^{1/2} \xi \| + C \kappa^{1/2} \|(\cN_+ +1)^{1/2} \xi \| \| \cV_N^{1/2} \xi \| \]
Hence, for every $\delta > 0$ we can find $C > 0$ such that 
\[ |\langle \xi, \cE^{(K)}_1 \xi \rangle| \leq \delta \| \cV_N^{1/2} \xi \|^2 + C \kappa \| (\cN_+ +1)^{1/2} \xi \|^2 \]
From (\ref{eq:F2-2-f}) and (\ref{eq:F1-2}), we can also estimate, if $\kappa > 0$ is small enough, 
\[  |\langle \xi, \cE^{(K)}_1 \xi \rangle| \leq C  N^{\beta-1} \| (\cN_+ +1)^{1/2} (\cK +1)^{1/2} \xi \|^2 \]
This proves that the error term $\cE^{(K)}_1$ satisfies the first bound in (\ref{eq:bdK}) and (\ref{eq:bdK2}). In fact, it also satisfies the second bound in (\ref{eq:bdK}), because the commutator of every term of the form (\ref{eq:EK1-tms}) with $\cN_+ $ has again the same form, up to multiplication with a constant, bounded by $C (m+1)$ (because the difference between the number of creation and the number of annihilation operators in (\ref{eq:F1}), (\ref{eq:F2}) is at most proportional to $m$). 

The error term $\cE^{(K)}_2$ can be controlled exactly as we did with $\cE^{(K)}_1$. Also the error term $\cE^{(K)}_3$ can be controlled similarly. The difference is that, now, the operator 
\[ \sum_{p \in \Lambda^*_+} p^2 \eta_p \big[ \text{ad}^{(n)}_{B(\eta)} (b_p) - \eta_p^n b^{\sharp_n}_{\alpha_n p} \big] \big[ \text{ad}^{(m)}_{B(\eta)} (b_{-p}) - \eta_p^m b^{\sharp_m}_{-\alpha_m p} \big] \]
can be written as the sum of $(2^m m! -1) (2^n n! -1)$ terms of the form 
\begin{equation}\label{eq:F3} 
\begin{split} \text{F}_3 = \; &\sum_{p \in \Lambda^*_+} p^2 \eta_p \Lambda_1 \dots \Lambda_{i_1} N^{-k_1} \Pi^{(1)}_{\sharp,\flat} (\eta^{j_1} , \dots , \eta^{j_{k_1}} ; \eta^{\ell_1}_p \ph_{\alpha_{\ell_1} p}) \\ &\hspace{3cm} \times \Lambda'_1 \dots \Lambda'_{i_2} N^{-k_2} \Pi^{(1)}_{\sharp',\flat'} (\eta^{m_1} , \dots , \eta^{m_{k_2}} ; \eta^{\ell_2}_p \ph_{-\alpha_{\ell_2} p})\end{split} \end{equation}
of $(2^m m!-1)$ terms of the form
\begin{equation}\label{eq:F4} \begin{split} \text{F}_4 = \; &\sum_{p \in \Lambda^*_+} p^2 \eta_p \left\{ \left( \frac{N-\cN_+ }{N} \right)^{\frac{n+(1-\alpha_n)/2}{2}} \left( \frac{N+1-\cN_+ }{N} \right)^{\frac{n-(1-\alpha_n)/2}{2}} -1 \right\} b^{\sharp_n}_{\alpha_n p}  \\ &\hspace{3cm} \times  \Lambda'_1 \dots \Lambda'_{i_2} N^{-k_2} \Pi^{(1)}_{\sharp',\flat'} (\eta^{m_1} , \dots , \eta^{m_{k_2}} ; \eta^{\ell_2}_p \ph_{-\alpha_{\ell_2} p}) \end{split} \end{equation}
of $(2^n n! -1)$ terms of the form 
\begin{equation}\label{eq:F5} \begin{split} \text{F}_5 = \; &\sum_{p \in \Lambda^*_+} p^2 \eta_p  \, \Lambda_1 \dots \Lambda_{i_1} N^{-k_1} \Pi^{(1)}_{\sharp,\flat} (\eta^{j_1} , \dots , \eta^{j_{k_1}} ; \eta^{\ell_1}_p \ph_{\alpha_{\ell_1} p}) \\ &\hspace{1cm} \times \left\{ \left( \frac{N-\cN_+ }{N} \right)^{\frac{m+(1-\alpha_m)/2}{2}} \left( \frac{N+1-\cN_+ }{N} \right)^{\frac{m-(1-\alpha_m)/2}{2}} -1 \right\} b^{\sharp_m}_{-\alpha_m p} 
\end{split} \end{equation}
and of one term of the form
\begin{equation}\label{eq:F6} \begin{split} \text{F}_6 = \; &\sum_{p \in \Lambda^*_+} p^2 \eta_p \left\{ \left( \frac{N-\cN_+ }{N} \right)^{\frac{n+(1-\alpha_n)/2}{2}} \left( \frac{N+1-\cN_+ }{N} \right)^{\frac{n-(1-\alpha_n)/2}{2}} -1 \right\} b^{\sharp_n}_{\alpha_n p}  \\ &\hspace{1cm} \times \left\{ \left( \frac{N-\cN_+ }{N} \right)^{\frac{m+(1-\alpha_m)/2}{2}} \left( \frac{N+1-\cN_+ }{N} \right)^{\frac{m-(1-\alpha_m)/2}{2}} -1 \right\} b^{\sharp_m}_{-\alpha_m p} \end{split} 
\end{equation}
where $i_1, i_2, k_1, k_2, \ell_1, \ell_2 \in \bN$, $j_1, \dots , j_{k_1}, m_1, \dots , m_{k_2} \in \bN \backslash \{ 0 \}$, $\alpha_r = (-1)^r$ and where each $\Lambda_r$- and $\Lambda'_r$-operator is either a factor $(N-\cN_+ )/N$, a factor $(N+1-\cN_+ )/N$ or a $\Pi^{(2)}$-operator of the form (\ref{eq:Pi2-K}). Furthermore, in (\ref{eq:F3}), we must have $k_1 > 0$ or at least one
$\Lambda$-operator of the form (\ref{eq:Pi2}) and $k_2 > 0$ or at least one $\Lambda'$-operator of the form (\ref{eq:Pi2}). Similarly, in (\ref{eq:F4}) we must have $k_2 > 0$ or at least one $\Lambda'$-operator of the form (\ref{eq:Pi2}) and in (\ref{eq:F5}) we must have $k_1 > 0$ or at least one $\Lambda$-operator of the form (\ref{eq:Pi2}). The terms (\ref{eq:F3}), (\ref{eq:F4}), (\ref{eq:F5}) and (\ref{eq:F6}) can therefore be estimated using Lemma \ref{lm:aux} 
as we did above with the terms $\text{F}_1$ defined in (\ref{eq:F1}) and the terms $\text{F}_2$ defined in (\ref{eq:F2}). We omit the details. 
\end{proof}

Next, we focus on the quadratic terms in (\ref{eq:wtL2N}). We define the operator $\wt{\cE}^{(V)}_N$ through
\begin{equation}\label{eq:cEV} e^{-B(\eta)} \cL^{(V)}_{N,\beta} e^{B(\eta)}  = \sum_{p \in \Lambda_+^*} \left[ \kappa \widehat{V} (p/N^\beta) \eta_p + \frac{\kappa  \widehat{V} (p/N^\beta)}{2} (b_p b_{-p} + b_p^* b_{-p}^*) \right] + \wt{\cE}^{(V)}_{N,\beta} \end{equation}
To prove part b) of Theorem \ref{thm:gene}, we will need to keep track of more contributions to $\cL^{(V)}_N$, so that the error has a vanishing expectation, in the limit of large $N$, on low-energy states. We define therefore the operator $\cE^{(V)}_{N,\beta}$ through
\begin{equation}\label{eq:cEV-def} \begin{split} e^{-B(\eta)} \cL^{(V)}_{N,\beta}  e^{B(\eta)} = \; &\sum_{p \in \Lambda^*_+} \left[ \kappa \widehat{V} (p/N^\beta) \sigma_p^2 + \kappa \widehat{V} (p/N^\beta) \sigma_p \gamma_p \right]  \\ &+ \sum_{p \in \Lambda^*_+} \kappa \widehat{V} (p/N^\beta) (\gamma_p + \sigma_p)^2 b_p^* b_p \\ &+ \frac{1}{2} \sum_{p \in \Lambda^*_+} \kappa \widehat{V} (p/N^\beta) (\gamma_p+\sigma_p)^2 (b_pb_{-p} + b_p^* b_{-p}^*) + \cE_{N,\beta}^{(V)} \end{split} \end{equation}
In the next proposition, we establish bounds for the error terms $\wt{\cE}_{N,\beta}^{(V)}$ and $\cE_{N,\beta}^{(V)}$.
\begin{prop}\label{prop:V}
Under the assumptions of Theorem \ref{thm:gene}, for every $\delta > 0$ there exists $C > 0$ such that, on $\cF_+^{\leq N}$,  
\[ \begin{split} 
\pm \wt{\cE}_{N,\beta}^{(V)} &\leq \delta \cV_N + C \kappa (\cN_+ +1) 
\\ \pm \left[ \wt{\cE}^{(V)}_{N,\beta} , i \cN_+  \right]
&\leq C (\cH_N^\beta +1) \end{split} \]
Furthermore,  \[  \pm \cE_{N,\beta}^{(V)} \leq C N^{\beta-1} \| (\cN_+ +1)^{1/2} (\cK+1)^{1/2} \xi \|^2 \]
\end{prop}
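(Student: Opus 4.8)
The proof runs parallel to that of Proposition \ref{prop:K}. Writing $\cL^{(V)}_{N,\beta}$ from (\ref{eq:wtL2N}) in terms of the $b$-fields and replacing each factor $b_p$, $b_p^*$ by its conjugate
\[ e^{-B(\eta)} b_p e^{B(\eta)} = \sum_{n \geq 0} \frac{(-1)^n}{n!} \, \text{ad}^{(n)}_{B(\eta)} (b_p) , \]
which converges absolutely for $\kappa$ small by Lemma \ref{lm:conv-series}, one obtains $e^{-B(\eta)} \cL^{(V)}_{N,\beta} e^{B(\eta)}$ as an absolutely convergent series of operators built from the nested commutators of Lemma \ref{lm:indu}; the piece $-\tfrac{1}{N}\sum_p \kappa\widehat V(p/N^\beta) a_p^* a_p$, being bounded by $C\kappa(\cN_++1)/N$ and stable under conjugation by Lemma \ref{lm:Ngrow}, is harmless and goes directly into the error. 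Isolating in each $\text{ad}^{(n)}_{B(\eta)}(b_p)$ the distinguished term $\simeq \eta_p^n b^{\sharp_n}_{\alpha_n p}$ of part iv) of Lemma \ref{lm:indu} and resumming the geometric series in $\eta_p$ (using $\gamma_p = \cosh\eta_p$, $\sigma_p = \sinh\eta_p$, the symmetry $\widehat V(p/N^\beta) = \widehat V(-p/N^\beta)$, $\eta_p = \eta_{-p}$, and normal ordering via (\ref{eq:comm-bp})), one reconstructs exactly the constant and quadratic terms displayed on the right-hand sides of (\ref{eq:cEV}) and (\ref{eq:cEV-def}); whatever is left over is $\wt{\cE}_{N,\beta}^{(V)}$, resp. $\cE_{N,\beta}^{(V)}$.

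Every remainder term then has, up to a sign and a harmless combinatorial factor $C^{n+m}\kappa^{n+m}$, the structure of (\ref{eq:F1})--(\ref{eq:F6}): a momentum sum $\sum_{p\in\Lambda^*_+} \kappa\widehat V(p/N^\beta)\,\eta_p^{a}$ times one or two blocks $\Lambda_1\cdots\Lambda_{i}\,N^{-k}\,\Pi^{(1)}_{\sharp,\flat}(\eta^{j_1},\dots,\eta^{j_k};\eta_p^{\ell}\ph_{\alpha p})$, with the $\Lambda$'s being factors $(N-\cN_+)/N$, $(N+1-\cN_+)/N$ or $\Pi^{(2)}$-operators $N^{-h}\Pi^{(2)}_{\underline\sharp,\underline\flat}(\eta^{z_1},\dots,\eta^{z_h})$. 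These are estimated using parts ii) and iii) of Lemma \ref{lm:aux}, exactly as the terms $\text{F}_1,\dots,\text{F}_6$ in the proof of Proposition \ref{prop:K}. The one genuinely new point, compared to the kinetic case, is that here the weight is $\kappa\widehat V(p/N^\beta)$, which is uniformly bounded but does \emph{not} decay in $p$; so whenever Lemma \ref{lm:aux} leaves a momentum sum in which the innermost operator is an undamped $b_p$ or $b_{-p}^*$ (the analogue of the case $n=\ell=0$ in Proposition \ref{prop:K}), one cannot afford any loss in $p$. In that situation I would pass to position space, rewriting $\sum_p \kappa\widehat V(p/N^\beta)(\cdots)_p = \tfrac{\kappa}{2}\int_{\Lambda\times\Lambda} N^{3\beta}V(N^\beta(x-y))\,(\cdots)_{x,y}\,dx\,dy$, using Lemma \ref{lm:aux2} for the position-space blocks and the non-negativity of $V$ through the Cauchy--Schwarz bound $\int N^{3\beta}V(N^\beta(x-y))\|\check a_x\check a_y\xi\|\,\|\cdots\|\,dx\,dy \leq C\kappa^{1/2}\|\cV_N^{1/2}\xi\|\,\|\cdots\|$. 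Note that, unlike in the proof of Proposition \ref{prop:K}, the scattering equation (\ref{eq:eta-scat}) is \emph{not} needed here, since $\kappa\widehat V(p/N^\beta)$ already appears as the natural weight.

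Collecting: terms carrying enough powers of $\eta$, or with an explicit $1/N$ from normal ordering, from a factor $(N-\cN_+)/N$ or $(N+1-\cN_+)/N$, or from a $\Pi^{(2)}$-operator (where Lemma \ref{lm:aux} provides an extra $N^{-1}$), are estimated directly; the $\eta$-decay (via (\ref{eq:etap}) and $\sum_p p^2\eta_p^2 \leq CN^\beta\kappa^2$ from (\ref{eq:etapN})) together with the $1/N$ factors yields bounds of size $O(\kappa(\cN_++1))$, resp. $O(N^{\beta-1}(\cN_++1)(\cK+1))$; the remaining, position-space terms are bounded by $\delta\cV_N + C\kappa(\cN_++1)$ (Young's inequality) for $\wt{\cE}_{N,\beta}^{(V)}$, and by $CN^{\beta-1}\|(\cN_++1)^{1/2}(\cK+1)^{1/2}\xi\|^2$ for $\cE_{N,\beta}^{(V)}$ after combining a $1/N$ or $1/\sqrt N$ prefactor with the quantitative estimate $\langle\xi,\cV_N\xi\rangle\leq C\kappa N^{\beta-1}\|(\cK+1)^{1/2}(\cN_++1)^{1/2}\xi\|^2$ of Lemma \ref{lm:cVN}. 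For the commutator bound $\pm[\wt{\cE}_{N,\beta}^{(V)},i\cN_+]\leq C(\cH_N^\beta+1)$ one uses that $[\,\cdot\,,\cN_+]$ sends every remainder term into a sum of remainder terms of the same type, times a constant bounded by the order $n+m$ of the expansion (the creation/annihilation imbalance of such a term is $O(n+m)$), which is absorbed by $\kappa^{n+m}$; re-running the above yields bounds by $\cV_N$ and $(\cN_++1)$, and $\cV_N\leq\cH_N^\beta$, $\cN_+\leq(2\pi)^{-2}\cH_N^\beta$ on $\cF_+^{\leq N}$ finish it.

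I expect the main difficulty to be the bookkeeping of the borderline remainder terms — those whose $\Pi^{(1)}$-operator terminates in an undamped $b_p$ or $b_{-p}^*$ (i.e.\ $\ell = 0$ and $n = 0$ in the inner block) — for which the uniform but non-decaying weight $\kappa\widehat V(p/N^\beta)$ forces the position-space argument and the use of $\cV_N$: one must track carefully, across all of (\ref{eq:F1})--(\ref{eq:F6}) and their analogues for the off-diagonal and the $b_p^*b_p$ parts, whether such an undamped field occurs, whether an extra $1/N$ is available ($k>0$ or a $\Pi^{(2)}$-factor present), and which of the two target bounds is being proved — this is the only place where the argument is genuinely delicate rather than a routine adaptation of Proposition \ref{prop:K}.
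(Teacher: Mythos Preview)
Your proposal is correct and follows essentially the same strategy as the paper: decompose $\cL^{(V)}_{N,\beta}$ into the $b_p^*b_p$, $-N^{-1}a_p^*a_p$, and off-diagonal pieces, expand the $b$-fields via Lemma~\ref{lm:conv-series}, isolate the leading $\eta_p^n b^{\sharp_n}$ contributions, and control the remainders with Lemma~\ref{lm:aux} (switching to position space and invoking $\cV_N$ precisely when both inner fields are undamped). Your observation that the scattering equation is not needed here, because the weight is already $\kappa\widehat V(p/N^\beta)$, is correct and matches the paper.

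One small point: your shortcut for the piece $-\tfrac{\kappa}{N}\sum_p\widehat V(p/N^\beta)a_p^*a_p$ via the operator bound $\pm(\cdot)\leq C\kappa N^{-1}\cN_+$ and Lemma~\ref{lm:Ngrow} is cleaner than the paper's treatment for the two \emph{direct} bounds, but it does not by itself yield the commutator estimate, since $[e^{-B(\eta)}a_p^*a_pe^{B(\eta)},\cN_+]\neq 0$ even though $[a_p^*a_p,\cN_+]=0$. For the commutator you still have to expand this piece once (writing $e^{-B}a_p^*a_pe^{B}=a_p^*a_p+\int_0^1 e^{-sB}\eta_p(b_pb_{-p}+b_p^*b_{-p}^*)e^{sB}\,ds$ and then iterating), exactly as the paper does and as in the proof of Proposition~\ref{prop:G0}; after that, your general ``$[\cdot,\cN_+]$ reproduces the same structure'' argument applies.
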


\begin{proof}
From the definition of $\cL^{(V)}_{N,\beta}$ in (\ref{eq:wtL2N}), we find
\begin{equation}\label{eq:G2-deco} \begin{split} 
\cG_{N,\beta}^{(2)} =\; &\kappa \sum_{p \in \Lambda^*_+}  \widehat{V} (p/N^\beta) e^{-B(\eta)} b_p^* b_p e^{B(\eta)} - \frac{\kappa}{N} \sum_{p \in \Lambda^*_+} \widehat{V} (p/N^\beta) e^{B(\eta)} a_p^* a_p e^{-B(\eta)} \\ &+ \frac{\kappa}{2} \sum_{p \in \Lambda^*_+} \widehat{V} (p/N^\beta) e^{-B(\eta)} \big[ b_p b_{-p} + b_p^* b_{-p}^* \big] e^{B(\eta)} \\ =: \; &\cG^{(2,1)}_{N,\beta} + \cG_{N,\beta}^{(2,2)} + \cG_{N,\beta}^{(2,3)} \end{split} \end{equation}
{F}rom Lemma \ref{lm:conv-series}, the term $\cG_{N,\beta}^{(2,1)}$ can be written (using again the notation $\gamma_p = \cosh \eta_p$, $\sigma_p = \sinh \eta_p$) as
\begin{equation}\label{eq:G21-deco} \begin{split} 
\cG_{N,\beta}^{(2,1)}  = \; & \sum_{m,n \geq 0} \frac{(-1)^{m+n}}{m! n!} \kappa \sum_{ p\in \Lambda^*_+} \widehat{V} (p/N^\beta) \, \left[ \text{ad}^{(m)}_{B(\eta)} (b^*_p) - \eta_p^m b_{\alpha_m p}^{\bar{\sharp}_m} + \eta_p^m b_{\alpha_m p}^{\bar{\sharp}_m} \right] \\ &\hspace{5cm} \times  \left[ \text{ad}^{(n)}_{B(\eta)} (b_p) - \eta_p^n b_{\alpha_n p}^{\sharp_n} + \eta_p^n b_{\alpha_n p}^{\sharp_n} \right] \\ =\; &\sum_{m,n \geq 0} \frac{(-1)^{m+n}}{m! n!} \kappa \sum_{ p\in \Lambda^*_+} \widehat{V} (p/N^\beta) \eta_p^{m+n} b^{\bar{\sharp}_m}_{\alpha_m p} b^{\sharp_n}_{\alpha_n p} + \cE^{(V)}_1  \\
=\; &\kappa \sum_{ p\in \Lambda^*_+} \widehat{V} (p/N^\beta) \big[ \gamma_p b_p^* + \sigma_p b_{-p} \big] \big[ \gamma_p b_p + \sigma_p b_{-p}^*]  + \cE^{(V)}_1 \end{split} \end{equation}
with $\alpha_n = 1$ and $\sharp_n = \cdot$ if $n$ is even while $\alpha_n = -1$ and $\sharp_n =*$ if $n$ is odd (and $\bar{\sharp}_n = *$ if $\sharp_n = \cdot$ and $\bar{\sharp}_n = \cdot$ if $\sharp_n = *$) and with the error term
\begin{equation}\label{eq:EV1} \begin{split} \cE^{(V)}_1 =  \; &\sum_{m,n \geq 0} \frac{(-1)^{m+n}}{m! n!} \kappa \sum_{ p\in \Lambda^*_+} \widehat{V} (p/N^\beta) \eta_p^m b_{\alpha_m p}^{\bar{\sharp}_m} \big[ \text{ad}^{(n)}_{B(\eta)} (b_p) - \eta_p^n b^{\sharp_n}_{\alpha_n p} \big] \\ &+ \sum_{m,n \geq 0} \frac{(-1)^{m+n}}{m! n!} \kappa \sum_{ p\in \Lambda^*_+} \widehat{V} (p/N^\beta)  \big[ \text{ad}^{(m)}_{B(\eta)} (b^*_p) - \eta_p^m b^{\bar{\sharp}_m}_{\alpha_m p} \big] \eta_p^n b_{\alpha_n p}^{\sharp_n} \\ &+ \sum_{m,n \geq 0} \frac{(-1)^{m+n}}{m! n!} \kappa \sum_{ p\in \Lambda^*_+} \widehat{V} (p/N^\beta)  \big[ \text{ad}^{(m)}_{B(\eta)} (b^*_p) - \eta_p^m b^{\bar{\sharp}_m}_{\alpha_m p} \big] \big[ \text{ad}^{(n)}_{B(\eta)} (b_p) - \eta_p^n b^{\sharp_n}_{\alpha_n p} \big] 
\end{split} \end{equation}
According to Lemma \ref{lm:indu}, the operator 
\[ \kappa \sum_{p \in \Lambda^*_+} \widehat{V} (p/N^\beta) \eta_p^m b^{\bar{\sharp}_m}_{\alpha_m p} \big[ \text{ad}^{(n)}_{B(\eta)} (b_p) - \eta_p^n b^{\sharp_n}_{\alpha_n p} \big]  \]
can be written as the sum of one term of the form
\begin{equation}\label{eq:G1-EV1} \begin{split} \text{G}_1 = &\; \kappa \sum_{p \in \Lambda^*_+} \widehat{V} (p/N^\beta) \eta_p^{m+n} b^{\bar{\sharp}_m}_{\alpha_m p} \\ &\hspace{2cm} \times \left\{ \left( \frac{N-\cN_+ }{N} \right)^{\frac{n+(1-\alpha_n)/2}{2}} \left( \frac{N+1-\cN_+ }{N} \right)^{\frac{n-(1-\alpha_n)/2}{2}} - 1 \right\} b^{\sharp_n}_{\alpha_n p} \end{split} \end{equation}
and of $2^n n! -1$ terms of the form
\begin{equation}\label{eq:G2-EV1} \text{G}_2 = \kappa \sum_{p \in \Lambda^*_+} \widehat{V} (p/N^\beta) \eta_p^m b^{\bar{\sharp}_m}_{\alpha_m p} \Lambda_1 \dots \Lambda_{i_1} N^{-k_1} \Pi^{(1)}_{\sharp, \flat} (\eta^{j_1}, \dots , \eta^{j_{k_1}} ; \eta_p^{\ell_1} \ph_{\alpha_{\ell_1} p}) 
\end{equation}
with $i_1, k_1, \ell_1 \in \bN$, $j_1, \dots, j_{k_1} \in \bN \backslash \{ 0 \}$, $\alpha_r = (-1)^r$ and where each $\Lambda_r$ is either a factor $(N-\cN_+ )/N$, a factor $(N+1-\cN_+ )/N$ or a $\Pi^{(2)}$-operator of the form
\begin{equation}\label{eq:Pi2-EV1} N^{-h} \Pi^{(2)}_{\underline{\sharp}, \underline{\flat}} (\eta^{z_1}, \dots , \eta^{z_p}) \end{equation}
for $h, z_1, \dots , z_p \in \bN \backslash \{ 0 \}$.
Furthermore, each operator of the form (\ref{eq:G2-EV1}) must have either $k_1 > 0$ or at least one $\Lambda$-operator having the form (\ref{eq:Pi2-EV1}). 

Noticing that $\text{G}_1 = 0$ if $n=0$, the expectation of (\ref{eq:G1-EV1}) can be bounded by 
\[ \begin{split} 
|\langle \xi , \text{G}_1 \xi \rangle | \leq \; &\frac{C^{n+m} \kappa^{n+m+1}}{N}  \| (\cN_+ +1) \xi \| \\ &\hspace{2cm} \times \sum_{p \in \Lambda^*_+} \frac{|\widehat{V} (p/N^\beta)|}{p^2} \left[ \| a_p (\cN_+ +1)^{1/2} \xi \| + \frac{1}{p^{2}} \| (\cN_+ +1) \xi \| \right]  \\ \leq \; & 
\frac{C^{n+m} \kappa^{n+m+1}}{N} \, \| (\cN_+ +1) \xi \|^2  \end{split} \]

As for the term $\text{G}_2$ defined in (\ref{eq:G2-EV1}), its expectation can be bounded with Lemma~\ref{lm:aux} part iii) by
\[ \begin{split}
|\langle \xi , \text{G}_2 \xi \rangle | &\leq \frac{C^{n+m} \kappa^{n+m+1}}{N} \sum_{p \in \Lambda^*_+} |\widehat{V} (p/N^\beta)| \left\{ \| a_p (\cN_+ +1)^{1/2} \xi \| + \frac{1}{p^2} \| (\cN_+ +1) \xi \| \right\}^2 \\ &\leq \frac{C^{n+m} \kappa^{n+m+1}}{N}  \| (\cN_+ +1) \xi \|^2 
\end{split} \]
for all $\xi \in \cF_+^{\leq N}$. The expectation of the operators appearing on the second and third line in (\ref{eq:EV1}) can be controlled similarly, using again Lemma \ref{lm:aux}. Therefore, if $\kappa > 0$ is small enough, we can sum over $m,n \in \bN$, and from (\ref{eq:EV1}) we conclude that 
\begin{equation}\label{eq:EV1-bd} |\langle \xi , \cE_1^{(V)} \xi \rangle | \leq \frac{C \kappa}{N} \, \| (\cN_+ +1) \xi \|^2 \, \leq C \kappa \| (\cN_+ +1)^{1/2} \xi \|^2 . 
\end{equation}
Since commutators of $\cN_+ $ with operators of the form (\ref{eq:G1-EV1}), (\ref{eq:G2-EV1}) have again the same form (up to a multiplicative constant bounded by $C(n+1)$), we also find
\begin{equation}\label{eq:EV1-comm} |\langle \xi , \big[ \cN_+  , \cE_1^{(V)} \big] \xi \rangle | \leq C \kappa \| (\cN_+ +1)^{1/2} \xi \|^2 . 
\end{equation}

Let us now consider the second contribution to $\cG_{N,\beta}^{(2)}$ on the r.h.s. of (\ref{eq:G2-deco}). 
We observe that
\begin{equation}\label{eq:G22-def} \begin{split} -\cG_{N,\beta}^{(2,2)} = \; & \frac{\kappa}{N} \sum_{p \in \Lambda^*_+} \widehat{V} (p/N^\beta) e^{-B(\eta)} a_p^* a_p e^{B(\eta)}  \\ = \; &\frac{\kappa}{N} \sum_{p \in \Lambda^*_+} \widehat{V} (p/N^\beta) \left[ a_p^* a_p  + \int_0^1 ds \, e^{-sB(\eta)} [a_p^* a_p , B(\eta)] e^{sB(\eta)} \right] \\
= \; &\frac{\kappa}{N} \sum_{p \in \Lambda^*_+} \widehat{V} (p/N^\beta) a_p^* a_p  + \int_0^1 ds \, \frac{\kappa}{N} \sum_{p \in \Lambda^*_+} \widehat{V} (p/N^\beta) e^{-sB(\eta)} (b_p b_{-p} + b^*_p b^*_{-p}) e^{sB(\eta)} \\  
= \; &\frac{\kappa}{N} \sum_{p \in \Lambda^*_+} \widehat{V} (p/N^\beta) a_p^* a_p  \\ &+ \sum_{n,m \geq 0} \frac{(-1)^{m+n}}{m!n!(m+n+1)}  \frac{\kappa}{N} \sum_{p \in \Lambda^*_+} \widehat{V} (p/N^\beta) \left[ \text{ad}^{(n)}_{B(\eta)} (b_p) \text{ad}^{(m)}_{B(\eta)} (b_{-p}) + \text{h.c.} \right] 
\end{split} \end{equation}
The first term on the r.h.s. of (\ref{eq:G22-def}) is clearly bounded by $C\kappa \cN_+ /N$. Let us focus now 
on the sum over $m,n$. By Lemma \ref{lm:indu}, the operator
\[ \frac{\kappa}{N} \sum_{p \in \Lambda^*_+} \widehat{V} (p/N^\beta) \text{ad}^{(n)}_{B(\eta)} (b_p) \text{ad}^{(m)}_{B(\eta)} (b_{-p}) \]
can be written as the sum of $2^{n+m} n!m!$ terms of the form
\begin{equation}\label{eq:Ldef}\begin{split} 
\text{L} = \; & \frac{\kappa}{N} \sum_{p \in \Lambda_+^*} \widehat{V} (p/N^\beta) \Lambda_1 \dots \Lambda_{i_1} N^{-k_1} \Pi^{(1)}_{\sharp, \flat} (\eta^{j_1}, \dots \eta^{j_{k_1}} ; \eta_p^{\ell_1} \ph_{\alpha_{\ell_1} p}) \\ &\hspace{3cm} \times \Lambda'_1 \dots \Lambda'_{i_2} N^{-k_2} \Pi^{(1)}_{\sharp',\flat'} (\eta^{m_1}, \dots , \eta^{m_{k_2}} ; \eta^{\ell_2}_p \ph_{-\alpha_{\ell_2} p}) \end{split} \end{equation}
with $i_1, i_2, k_1, k_2, \ell_1, \ell_2 \in \bN$, $j_1, \dots, j_{k_1}, m_1, \dots , m_{k_2} \in \bN \backslash \{ 0 \}$, $\alpha_r =(- 1)^r$ and where each $\Lambda_r$ and $\Lambda'_r$-operator is either a factor $(N-\cN_+ )/N$, a factor $(N+1-\cN_+ )/N$ or a $\Pi^{(2)}$-operator of the form (\ref{eq:Pi2-EV1}). 

If $\ell_1 + \ell_2 \geq 1$, Lemma \ref{lm:aux}, part ii), implies that  
\begin{equation}\label{eq:Ln00} 
\begin{split} 
|\langle \xi , \text{L} \xi \rangle | &\leq \frac{C^{n+m} \kappa^{n+m+1}}{N} \sum_{p \in \Lambda^*_+} \frac{|\widehat{V} (p/N^\beta)|}{p^2} \| (\cN_+ +1)^{1/2}  \xi \|^2 \\ &\leq C^{n+m} \kappa^{n+m+1} N^{\beta-1} \| (\cN_+ +1)^{1/2} \xi \|^2 \end{split} \end{equation} 
If instead $\ell_1 = \ell_2 = 0$, we use Lemma \ref{lm:aux}, part ii), to write
\begin{equation}\label{eq:L00} \langle \xi , \text{L} \xi \rangle = \frac{\kappa}{N} \sum_{p \in \Lambda^*_+} \widehat{V} (p/N^\beta) \langle \xi , \text{D}_1 (p) \rangle + \frac{\kappa}{N} \sum_{p \in \Lambda^*_+} \widehat{V} (p/N^\beta) \langle \xi , \text{D}_2 \, a_p a_{-p} \xi \rangle  \end{equation}
with 
\begin{equation}\label{eq:E1-L} \| (\cN_+ +1)^{-1/2} \text{D}_1 (p) \| \leq \frac{C^{m+n} \kappa^{m+n} m}{N p^2} \| a_p \xi \| \qquad \text{and } \quad \| \text{D}_2 \| \leq C^{m+n} \kappa^{m+n} \, . \end{equation}
Switching to position space to estimate the second term on the r.h.s. of (\ref{eq:L00}) we find, for $\ell_1 = \ell_2 = 0$,  
\begin{equation}\label{eq:L00-f} \begin{split} |\langle  \xi , \text{L} \xi \rangle | \leq \; &\frac{C^{m+n} \kappa^{m+n+1} m}{N} \| (\cN_+ +1)^{1/2} \xi \|^2 + \left| \frac{\kappa}{N} \int_{\Lambda \times \Lambda} N^{3\beta} V(N^\beta (x-y)) \langle \xi , \text{D}_2 \check{a}_x \check{a}_y \xi \rangle \right| \\ \leq \; &\frac{C^{m+n} \kappa^{m+n+1} m}{N} \| (\cN_+ +1)^{1/2} \xi \|^2 \\ &+ \frac{C^{m+n} \kappa^{m+n+1}}{N} \int_{\Lambda \times \Lambda} N^{3\beta} V(N^\beta (x-y)) \| \xi \| \| \check{a}_x \check{a}_y \xi \| 
\\ \leq \; &\frac{C^{m+n} \kappa^{m+n+1} m}{N} \| (\cN_+ +1)^{1/2} \xi \|^2 + \frac{C^{m+n} \kappa^{m+n+1/2}}{\sqrt{N}}  \| \cV_N^{1/2} \xi \| \| \xi \| \end{split} \end{equation}
Combining (\ref{eq:Ln00}) with (\ref{eq:L00-f}) we conclude that 
\[ |\langle \xi , \text{L} \xi \rangle | \leq C^{m+n} \kappa^{m+n+1} (m+1) N^{\beta-1} \| (\cN_+ +1)^{1/2} \xi \|^2 + \frac{C^{m+n} \kappa^{m+n+1/2}}{\sqrt{N}} \| \xi \| \| \cV_N^{1/2} \xi \| \]
Hence, for $\kappa > 0$ small enough (so that we can sum over $m,n \in \bN$), (\ref{eq:G22-def}) implies 
\begin{equation}\label{eq:fin-G220} |\langle \xi , \cG_{N,\beta}^{(2,2)} \xi \rangle | \leq C \kappa N^{\beta-1} \| (\cN_+ +1)^{1/2} \xi \|^2 + \frac{C\kappa^{1/2}}{\sqrt{N}} \| \xi \| \| \cV_N^{1/2} \xi \| \end{equation}
This shows, on the one hand, that for every $\delta > 0$ there exists $C >0$ such that 
\begin{equation}\label{eq:fin-G22} |\langle \xi , \cG_{N,\beta}^{(2,2)} \xi \rangle | \leq \delta \| \cV_N^{1/2} \xi \|^2 + C \kappa \| (\cN_+ +1)^{1/2} \xi \|^2 \end{equation}
and, since as usual the commutator of $\cN_+ $ with operators of the form (\ref{eq:Ldef}) has again the same form, 
\begin{equation}\label{eq:fin-G22-comm} |\langle \xi , \big[ \cG_{N,\beta}^{(2,2)}, \cN_+  \big] \xi \rangle | \leq \delta \| \cV_N^{1/2} \xi \|^2 + C \kappa \| (\cN_+ +1)^{1/2} \xi \|^2 \end{equation}
On the other hand, taking into account Lemma \ref{lm:cVN}, (\ref{eq:fin-G220}) also proves that 
\[ |\langle \xi , \cG_{N,\beta}^{(2,2)} \xi \rangle | \leq C N^{\beta-1} \| (\cN_+ +1)^{1/2} (\cK +1)^{1/2} \xi \|^2 \]
 
Finally, let us consider the third contribution to $\cG^{(2)}_{N,\beta}$ on the r.h.s. of (\ref{eq:G2-deco}). We have, with Lemma \ref{lm:conv-series}, 
\begin{equation}\label{eq:G23-deco} \begin{split} \cG_{N,\beta}^{(2,3)} = \; & \frac{\kappa}{2} \sum_{p \in \Lambda^*_+} \widehat{V} (p/N^\beta) e^{-B(\eta)} \big[ b_p b_{-p} + b_p^* b_{-p}^* \big] e^{B(\eta)} \\
= \; &\sum_{m,n \geq 0} \frac{(-1)^{m+n}}{m!n!} \kappa \sum_{p \in \Lambda^*_+} \widehat{V} (p/N^\beta) \, \left[ \text{ad}^{(m)}_{B(\eta)} (b_p) \text{ad}^{(n)}_{B(\eta)} (b_{-p}) + \text{h.c.} \right] \\
= \; &\frac{\kappa}{2} \sum_{p \in \Lambda^*_+} \widehat{V} (p/N^\beta) \\ &\hspace{.1cm} \times \left\{ \left[ \gamma_p b_p + \sigma_p b_{-p}^* \right] \left[ \gamma_p b_{-p} + \sigma_p b_p^* \right] + \left[ \gamma_p b^*_p + \sigma_p b_{-p} \right] \left[ \gamma_p b^*_{-p} + \sigma_p b_p \right] \right\}\\ & + \cE^{(V)}_3 \end{split} \end{equation}
with the error term
\begin{equation}\label{eq:EV2} \begin{split} \cE_3^{(V)} = \; &\sum_{m,n \geq 0} \frac{(-1)^{m+n}}{m!n!} \frac{\kappa}{2} \sum_{p \in \Lambda^*_+} \widehat{V} (p/N^\beta) \, \eta_p^m b_{\alpha_m p}^{\sharp_m} \, \big[ \text{ad}^{(n)}_{B(\eta)} (b_{-p}) - \eta^n_p b_{-\alpha_n p}^{\sharp_n} \big] \\ &+ \sum_{m,n \geq 0} \frac{(-1)^{m+n}}{m!n!} \frac{\kappa}{2} \sum_{p \in \Lambda^*_+} \widehat{V} (p/N^\beta)  \, \big[ \text{ad}^{(m)}_{B(\eta)} (b_{p}) - \eta^m_p b_{\alpha_m p}^{\sharp_m} \big] \, \eta_p^n b_{-\alpha_n p}^{\sharp_n}
\\ &+ \sum_{m,n \geq 0} \frac{(-1)^{m+n}}{m!n!} \frac{\kappa}{2} \sum_{p \in \Lambda^*_+} \widehat{V} (p/N^\beta)  \\ &\hspace{3cm} \times  \big[ \text{ad}^{(m)}_{B(\eta)} (b_{p}) - \eta^m_p b_{\alpha_m p}^{\sharp_m} \big] \,
\big[ \text{ad}^{(n)}_{B(\eta)} (b_{-p}) - \eta^n_p b_{-\alpha_n p}^{\sharp_n} \big] \\&+ \text{h.c.} 
\end{split} \end{equation}
We consider the first sum on the r.h.s. of (\ref{eq:EV2}). According to Lemma \ref{lm:conv-series}, the operator

\[ \frac{\kappa}{2} \sum_{p \in \Lambda^*_+} \widehat{V} (p/N^\beta) \eta_p^{m} b^{\sharp_m}_{\alpha_m p} \big[ \text{ad}^{(n)}_{B(\eta)} (b_{-p}) - \eta_p^n b^{\sharp_n}_{-\alpha_n p} \big] \]
can be written as the sum of the one term
\begin{equation}\label{eq:M1} \begin{split} \text{M}_1 = \; &\frac{\kappa}{2} \sum_{p \in \Lambda^*_+} \widehat{V} (p/N^\beta) \eta_p^{m+n} b_{\alpha_m p}^{\sharp_m} \\ &\hspace{1cm} \left\{ \left( \frac{N-\cN_+ }{N} \right)^{\frac{n+(1-\alpha_n)/2}{2}} \left( \frac{N+1-\cN_+ }{N} \right)^{\frac{n-(1-\alpha_n)/2}{2}} - 1 \right\} b^{\sharp_n}_{-\alpha_n p} \end{split}\end{equation}
and of $2^n n! -1$ terms of the form 
\begin{equation}\label{eq:M2}\text{M}_2 = \frac{\kappa}{2} \sum_{p \in \Lambda^*_+} \widehat{V} (p/N^\beta) \eta_p^{m} b_{\alpha_m p}^{\sharp_m} \Lambda_1 \dots \Lambda_{i_1} N^{-k_1} \Pi^{(1)}_{\sharp, \flat} (\eta^{j_1}, \dots , \eta^{j_{k_1}} ; \eta^{\ell_1}_p \ph_{-\alpha_{\ell_1} p}) \end{equation}
where $i_1, k_1, \ell_1 \in \bN$, $j_1, \dots ,j_{k_1} \in \bN \backslash \{ 0 \}$ and where each $\Lambda_r$-operator is either a factor $(N-\cN_+ )/N$, a factor $(N+1-\cN_+ )/N$ or a $\Pi^{(2)}$-operator of the form (\ref{eq:Pi2-EV1}). In every term of the form (\ref{eq:M2}) we have $k_1 > 0$ or at least one of the $\Lambda$-operators must have the form (\ref{eq:Pi2-EV1}). 

To estimate the expectation of (\ref{eq:M1}) we proceed very similarly as we did in the proof of Prop. \ref{prop:K}, to show (\ref{eq:F1-1}), (\ref{eq:F1-2}). We obtain 
\begin{equation}\label{eq:M1-1} |\langle \xi , \text{M}_1 \xi \rangle | \leq C^{m+n} \kappa^{m+n+1} (n+1) \| (\cN_+ +1)^{1/2} \xi \|^2 \end{equation}
and also 
\begin{equation}\label{eq:M1-2} \begin{split} |\langle \xi, \text{M}_1 \xi \rangle | &\leq  C^{m+n} \kappa^{m+n+1} (n+1) N^{\beta-1} \| (\cN_+ +1) \xi \|^2
\end{split} \end{equation}

Next, we bound the expectation of the term $\text{M}_2$, defined in (\ref{eq:M2}). If $m+\ell_1 \geq 1$, we can use Lemma \ref{lm:aux}, part iii), to estimate 
\[ \begin{split} 
|\langle \xi , \text{M}_2 \xi \rangle | \leq \; & \frac{\kappa}{2} \sum_{p \in \Lambda^*_+} |\widehat{V} (p/N^\beta)| |\eta_p|^m \| (\cN_+ +1)^{1/2} \xi \| \\ &\hspace{.5cm} \times 
\| (\cN_+ +1)^{-1/2} b_{\alpha_m p}^{\sharp_m} \Lambda_1 \dots \Lambda_{i_1} N^{-k_1} \Pi^{(1)}_{\sharp,\flat} (\eta^{j_1} , \dots ,\eta^{j_{k_1}} ; \eta_p^{\ell_1} \ph_{-\alpha_{\ell_1} p}) \xi \| \\
\leq \; & C^{n+m} \kappa^{n+m+1} \| (\cN_+ +1)^{1/2} \xi \|\sum_{p \in \Lambda^*_+} |\widehat{V} (p/N^\beta)|  \\ &\hspace{.5cm} \times \left\{ \frac{(1+m/N)}{|p|^{4}} \| (\cN_+ +1)^{1/2} \xi \| + \frac{1}{|p|^2} \| a_p \xi \| + \frac{1}{N|p|^2} \| (\cN_+ +1)^{1/2} \xi \| \right\} \\
\leq \; &C^{n+m} \kappa^{n+m+1} \| (\cN_+ +1)^{1/2} \xi \|^2 \end{split} \]
Alternatively, again for $m+\ell_1 \geq 1$, we can also use Lemma \ref{lm:aux}, part iii), to show the bound
\[ \begin{split} |\langle \xi, \text{M}_2 \xi \rangle| \leq \; & \frac{\kappa}{2} \sum_{p \in \Lambda^*_+} |\widehat{V} (p/N^\beta)| |\eta_p|^{m} \| (\cN_+ +1) \xi \| \\ &\hspace{1cm} \times \| (\cN_+ +1)^{-1} b_{\alpha_m p}^{\sharp_m} \Lambda_1 \dots \Lambda_{i_1} N^{-k_1} \Pi^{(1)}_{\sharp,\flat} (\eta^{j_1} , \dots , \eta^{j_{k_1}} ; \eta_p^{\ell_1} \ph_{-\alpha_{\ell_1} p}) \xi \| \\
\leq \; &\frac{C^{m+n} \kappa^{m+n+1} (m+1)}{N} \| (\cN_+ +1) \xi \| \\ &\hspace{1cm} \times \sum_{p \in \Lambda^*_+} |\widehat{V} (p/N^\beta)| \left[ \frac{1}{|p|^4} \| (\cN_+ +1) \xi \| + \frac{1}{p^2} \| a_p (\cN_+ +1)^{1/2} \xi \| + \frac{1}{p^2} \| \xi \| \right] \\
\leq \; &C^{m+n} \kappa^{m+n+1} N^{\beta-1} \| (\cN_+ +1) \xi \|^2 \end{split} \] 
If now $m=\ell_1 = 0$, we write
\begin{equation}\label{eq:M2-00} \langle \xi, \text{M}_2 \xi \rangle = \frac{\kappa}{2} \sum_{p \in \Lambda^*_+} \widehat{V} (p/N^\beta) \langle \xi, \text{D}_1 (p) \rangle + \frac{\kappa}{2} \sum_{p \in \Lambda^*_+} \widehat{V} (p/N^\beta) \langle  \xi , \text{D}_2 \, a_p a_{-p} \xi \rangle \end{equation}
with $\text{D}_1 (p)$ and the operator $\text{D}_2$ satisfying 
\begin{equation}\label{eq:E1-M} \begin{split} \| (\cN_+ +1)^{-1/2} \text{D}_1 (p) \| &\leq \frac{C^{n} \kappa^{n+1} n}{N p^2} \| a_p \xi \| \qquad \text{and } \\ \| \text{D}_2^* \xi \| &\leq \frac{C^{n} \kappa^{n+1}}{N} \| (\cN_+ +1) \xi \| \, . \end{split} \end{equation}
Switching to position space to estimate the second term on the r.h.s. of (\ref{eq:M2-00}), we conclude that
\begin{equation} \label{eq:M2-1-f} \begin{split} 
|\langle \xi, \text{M}_2 \xi \rangle| \leq \; & \frac{C^n \kappa^{n+1} n}{N} \| (\cN_+ +1)^{1/2} \xi \|^2 \\ &+ \frac{C^n \kappa^{n+1}}{N} \int_{\Lambda \times \Lambda} dx dy N^{3\beta} V(N^\beta (x-y)) \| \check{a}_x \check{a}_y \xi \| \| (\cN_+ +1) \xi \| \\ 
\leq \; &\frac{C^n \kappa^{n+1}}{N} \| (\cN_+ +1)^{1/2} \xi \|^2 + \frac{C^n \kappa^{n+1/2}}{\sqrt{N}} \| \cV_N^{1/2} \xi \| \| (\cN_+ +1) \xi \|  \end{split} \end{equation}
With (\ref{eq:M1-1}) and (\ref{eq:M2-1-f}), we can control the first sum on the r.h.s. of (\ref{eq:EV2}). The second and third sum can be controlled similarly. We conclude that, if $\kappa > 0$ is small enough (so that we can sum over $n,m \in \bN$), 
\[ |\langle \xi, \cE_3^{(V)} \xi \rangle | \leq C \kappa \| (\cN_+ +1)^{1/2} \xi \|^2 + C \kappa^{1/2} \| (\cN_+ +1)^{1/2} \xi \| \| \cV_N^{1/2} \xi \| \]
Hence, for every $\delta > 0$ we can find $C > 0$ such that 
\begin{equation} \label{eq:fin-EV3} |\langle \xi, \cE_3^{(V)} \xi \rangle | \leq \delta \| \cV_N^{1/2} \xi \|^2 + C \kappa \| (\cN_+ +1)^{1/2} \xi \|^2 \end{equation}
and (since the commutator with $\cN_+ $ of every term of the form (\ref{eq:M1}), (\ref{eq:M2}) is again an operator with the same form, up to a constant bounded by $C(n+1)$)
\begin{equation}\label{eq:EV3-comm} |\langle \xi, \big[ \cN_+ , \cE_3^{(V)} \big] \xi \rangle | \leq \delta \| \cV_N^{1/2} \xi \|^2 + C \kappa \| (\cN_+ +1)^{1/2} \xi \|^2 \end{equation}
Combining (\ref{eq:M1-2}) with (\ref{eq:M2-1-f}), we arrive moreover with Lemma \ref{lm:cVN} at the bound
\begin{equation}\label{eq:EV3-bd} |\langle  \xi , \cE_3^{(V)} \xi \rangle | \leq C N^{\beta-1} \| (\cN_+ +1)^{1/2} (\cK+1)^{1/2} \xi \|^2 \, . \end{equation} 

From (\ref{eq:G2-deco}), (\ref{eq:G21-deco}), (\ref{eq:G23-deco}) and from the definition (\ref{eq:cEV-def}), we obtain  
\[ \cE_{N,\beta}^{(V)} = \cE_1^{(V)} + \cG_{N,\beta}^{(2,2)} + \cE^{(V)}_3 \]
Hence, from the bounds (\ref{eq:EV1-bd}), (\ref{eq:fin-G22}) and (\ref{eq:EV3-bd}), we conclude that
\[ |\langle \xi , \cE_{N,\beta}^{(V)} \xi \rangle | \leq C N^{\beta-1} \| (\cK+1)^{1/2} (\cN_+ +1)^{1/2}  \xi \|^2   \]
Furthermore, with the definition (\ref{eq:cEV}), we find that 
\[ \wt{\cE}_{N,\beta}^{(V)} =  \cE_0^{(V)} + \cE_1^{(V)} + \cG_{N,\beta}^{(2,2)} + \cE^{(V)}_3 \]
with the additional term 
\[ \begin{split} \cE^{(V)}_0 = \; & \sum_{p \in \Lambda^*_+} \kappa \widehat{V} (p/N^\beta) [\sigma_p^2 + \sigma_p \gamma_p - \eta_p] + \sum_{p \in \Lambda^*_+} \kappa \widehat{V} (p/N^\beta) (\gamma_p + \sigma_p)^2 b_p^* b_p \\ &+ \frac{\kappa}{2} \sum_{p \in \Lambda^*_+} \widehat{V} (p/N^\beta) [(\gamma_p + \sigma_p)^2 - 1] (b_p b_{-p} + b_p^* b_{-p}^* ) \end{split}  \] 
Since $|\sigma_p^2 + \sigma_p \gamma_p - \eta_p| \leq C \kappa^2 |p|^{-4}$, $|\gamma_p + \sigma_p|^2 \leq C$ and $|(\gamma_p + \sigma_p)^2 -1| \leq C \kappa |p|^{-2}$, we easily find that 
\[ \begin{split} 
|\langle \xi , \cE_0^{(V)} \xi \rangle | &\leq C \kappa \| (\cN_+ +1)^{1/2} \xi \|^2 \\ 
|\langle \xi , \big[ \cN_+ , \cE_0^{(V)} \big] \xi \rangle | &\leq C \kappa \| (\cN_+ +1)^{1/2} \xi \|^2 
\end{split} \]
for all $\xi \in \cF_+^{\leq N}$. Together with the estimates (\ref{eq:EV1-bd}), (\ref{eq:EV1-comm}),  (\ref{eq:fin-G22}), (\ref{eq:fin-G22-comm}), (\ref{eq:fin-EV3}), (\ref{eq:EV3-comm}), we conclude that for every $\delta > 0$ there exists $C > 0$ such that 
\[ \begin{split} 
|\langle \xi , \wt{\cE}_{N,\beta}^{(V)} \xi \rangle | &\leq \delta \| \cV_N^{1/2} \xi \|^2 + C \kappa \| (\cN_+ +1)^{1/2} \xi \|^2 \\ 
|\langle \xi , \big[ \cN_+ , \wt{\cE}_{N,\beta}^{(V)} \big] \xi \rangle | &\leq \delta \| \cV_N^{1/2} \xi \|^2 + C \kappa \| (\cN_+ +1)^{1/2} \xi \|^2 
\end{split} \]
\end{proof}

\subsection{Analysis of $\cG_N^{(3)}$}

Recall from (\ref{eq:cLNj}) that
\begin{equation}\label{eq:cGN3} \begin{split} 
\cG_{N,\beta}^{(3)} &= e^{-B(\eta)} \cL^{(3)}_{N,\beta} e^{B(\eta)} \\ &= \frac{1}{\sqrt{N}} \sum_{p,q \in \Lambda^*_+ : p+q \not = 0} \widehat{V} (p/N^\beta) e^{-B(\eta)} \big[ b^*_{p+q} a_{-p}^* a_q + a_q^* a_{-p} b_{p+q} \big] e^{B(\eta)} \end{split} 
\end{equation}
In the next proposition, we show how to control the operator $\cG_{N,\beta}^{(3)}$. 
\begin{prop}\label{prop:G3}
Under the assumptions of Theorem \ref{thm:gene}, for every $\delta > 0$ there exists $C > 0$ such that, on $\cF_+^{\leq N}$,  
\[ \begin{split} 
\pm \cG_{N,\beta}^{(3)} &\leq \delta \cV_N + C \kappa (\cN_+ +1) \\
\pm \left[ \cG_{N,\beta}^{(3)} , i \cN_+  \right] &\leq C (\cH_N^\beta + 1) 
\end{split} \]
Furthermore, we have 
\[ \pm \cG_{N,\beta}^{(3)} \leq C N^{(\beta-1)/2} (\cK + 1) (\cN_+ +1) \]
\end{prop}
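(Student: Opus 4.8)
\textbf{Proof plan for Proposition \ref{prop:G3}.}
The plan is to conjugate the cubic term $\cL_{N,\beta}^{(3)}$ with $e^{B(\eta)}$ by expanding each of the three factors $b_{p+q}$, $a_{-p}$, $a_q$ (and their adjoints) appearing in (\ref{eq:cGN3}) via the convergent series of Lemma \ref{lm:conv-series}. Only $b_{p+q}$ (and $b^*_{p+q}$) needs to be expanded with the nested-commutator machinery of Lemma \ref{lm:indu}; the operators $a_{-p}, a_q$ remain untouched since $B(\eta)$ is built from $b$-fields with momenta in $\Lambda^*_+$ and these commute past $a_{-p}, a_q$ only up to the lower-order terms $N^{-1}a^*a$, which I will absorb. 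More precisely, writing $e^{-B(\eta)} b_{p+q} e^{B(\eta)} = \sum_{n \geq 0} \frac{(-1)^n}{n!}\mathrm{ad}^{(n)}_{B(\eta)}(b_{p+q})$ and substituting Lemma \ref{lm:indu}, each resulting contribution to $\cG_{N,\beta}^{(3)}$ has the schematic form of a $\Pi^{(1)}$- or $\Pi^{(2)}$-type monomial dressed by $\Lambda$-operators, paired against $a_q^* a_{-p}$, with the external coupling $\widehat V(p/N^\beta)/\sqrt N$ summed over $p, q$. I would organize the estimate by splitting off the "leading" term $n=0$, which is $N^{-1/2}\sum_{p,q} \widehat V(p/N^\beta) [\,\gamma_{p+q} b^*_{p+q} + \sigma_{p+q} b_{-(p+q)}\,] a_{-p}^* a_q + \mathrm{h.c.}$, plus the correction where $\eta$ enters the argument.

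The three bounds are proved by three variants of the same Cauchy–Schwarz splitting. For the positivity bound $\pm\cG_{N,\beta}^{(3)} \leq \delta \cV_N + C\kappa(\cN_+ +1)$, I would pass to position space: $\cG_{N,\beta}^{(3)}$ is a sum of terms of the form $N^{-1/2}\int dxdy\, N^{3\beta/2}\,(\text{something})$ where one recognizes $N^{3\beta}V(N^\beta(x-y))$ once a factor of the potential is formed, so that $\langle \xi, \cG_{N,\beta}^{(3)}\xi\rangle$ is bounded by $\|\cV_N^{1/2}\xi\|$ times $\|(\cN_++1)^{1/2}\xi\|$ (using Lemma \ref{lm:aux2} to control the dressed $b$-operator factors and Lemma \ref{3.0.sceqlemma} together with $\|\eta\|_2 \leq C\kappa$ to sum the series in $n$); then Young's inequality $ab \leq \delta a^2 + (4\delta)^{-1}b^2$ gives the claim, with the extra smallness in $\kappa$ coming from the explicit prefactor $\kappa^{n+1}$ in the series. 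The commutator bound follows because $[\cG_{N,\beta}^{(3)},\cN_+]$ is again a sum of terms of the same type (the net imbalance between creation and annihilation operators being bounded by a constant times $n$, which is controlled by the convergent series), and now one is allowed to lose a factor $\cK^{1/2}$ rather than only $(\cN_++1)^{1/2}$, giving $\pm[\cG_{N,\beta}^{(3)},i\cN_+] \leq C(\cH_N^\beta+1)$.

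For the refined bound $\pm\cG_{N,\beta}^{(3)} \leq C N^{(\beta-1)/2}(\cK+1)(\cN_++1)$, the key is that the prefactor $N^{-1/2}$ combined with the relation (\ref{eq:eta-scat}) for $\eta$ produces the gain. Concretely, in the leading $n=0$ term I would use (\ref{eq:eta-scat}) to trade $\widehat V(p/N^\beta)$ (paired with a factor $\eta_{p+q}$ hidden in $\gamma, \sigma$, or produced by a further commutator) against $(p+q)^2\eta_{p+q}$ plus lower-order pieces; the extra $(p+q)^2$ is then absorbed into a $\cK^{1/2}$ acting on one of the vectors, and the $\ell^2$-norm $\sum_{p,q} \widehat V^2(p/N^\beta)/((p+q)^2\cdots)$ combined with $N^{-1/2}$ and the support bound $|\widehat V(p/N^\beta)|$ essentially constant for $|p| \lesssim N^\beta$ yields the factor $N^{(\beta-1)/2}$ exactly as in the estimate (\ref{eq:eta-bd1}) of Lemma \ref{lm:cVN}. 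All the subleading terms ($n \geq 1$ in the expansion, or non-normally-ordered pieces, or the $N^{-1}a^*a$ corrections from the commutators (\ref{eq:comm-bp})) carry additional powers of $N^{-1}$ or additional factors of $\eta$ and are therefore even smaller; I would dispatch them with Lemma \ref{lm:aux} parts (ii)–(iii) exactly as was done for $\cE_3^{(K)}$ and $\cE_3^{(V)}$ in the proofs of Propositions \ref{prop:K} and \ref{prop:V}. The main obstacle is purely bookkeeping: one must check that \emph{every} term produced by Lemma \ref{lm:indu} that is paired with $a_q^* a_{-p}$ either already contains a factor of the potential (when passed to position space) or can be made to contain an extra $(p+q)^2$ via (\ref{eq:eta-scat}), with no term escaping both mechanisms — this is where the case analysis $\ell_1 = 0$ versus $\ell_1 \geq 1$, and whether the innermost operator is normally ordered, has to be carried out carefully, but it is entirely parallel to the analysis already performed in Sections \ref{sec:prop} for $\cG_{N,\beta}^{(2)}$.
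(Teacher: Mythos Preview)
Your overall architecture is right (expand via Lemma \ref{lm:conv-series}, estimate term-by-term with Lemmas \ref{lm:aux}/\ref{lm:aux2}, pass to position space where $\cV_N$ is needed), but there are two concrete gaps.

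\medskip

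\textbf{1. The factor $a_{-p}^* a_q$ does not ``remain untouched''.} Your claim that $B(\eta)$ commutes with $a_{-p}, a_q$ ``only up to lower-order terms $N^{-1}a^*a$'' is false: from $[a_p^* a_q, B(\eta)]$ one gets
\[
[\,a_{-p}^* a_q , B(\eta)\,] \;=\; \eta_q\, b_{-p}^* b_{-q}^* \;+\; \eta_p\, b_q b_p,
\]
which is order one, not $N^{-1}$. Accordingly the paper writes $e^{-B(\eta)} a_{-p}^* a_q e^{B(\eta)} = a_{-p}^* a_q + \int_0^1 e^{-sB(\eta)}(\eta_q b_{-p}^* b_{-q}^* + \eta_p b_q b_p) e^{sB(\eta)}\,ds$ and splits $\cG_{N,\beta}^{(3)}$ into three pieces $\cG_{N,\beta}^{(3,1)} + \cG_{N,\beta}^{(3,2)} + \cG_{N,\beta}^{(3,3)}$. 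The piece $\cG_{N,\beta}^{(3,2)}$ (coming from $\eta_q b_{-p}^* b_{-q}^*$) is \emph{not} absorbable into $(\cN_+ +1)$ alone: it genuinely needs position space and a factor $\|\cV_N^{1/2}\xi\|$, via Lemma \ref{lm:aux2}. If you treat the commutator as negligible you will miss this.

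\medskip

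\textbf{2. The refined bound does not come from the scattering equation.} For $\pm \cG_{N,\beta}^{(3)} \leq C N^{(\beta-1)/2}(\cK+1)(\cN_+ +1)$ you propose to use (\ref{eq:eta-scat}) to trade $\widehat V(p/N^\beta)$ against $(p+q)^2\eta_{p+q}$. But the dominant contribution (the $\ell_1=0$ part of $\cG_{N,\beta}^{(3,1)}$, essentially $\kappa N^{-1/2}\sum_{p,q}\widehat V(p/N^\beta)\, b_{p+q}^* a_{-p}^* a_q$) carries \emph{no} factor $\eta_{p+q}$ to which (\ref{eq:eta-scat}) could be applied; there is nothing hidden in $\gamma_{p+q}$ or $\sigma_{p+q}$ that helps here, since the relevant piece is $\gamma_{p+q} \approx 1$. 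The paper's mechanism is simpler: one first proves
\[
|\langle\xi,\cG_{N,\beta}^{(3)}\xi\rangle| \;\leq\; \frac{C\kappa}{\sqrt N}\,\|(\cN_++1)^{1/2}\xi\|\,\|(\cN_++1)\xi\| \;+\; C\kappa^{1/2}\,\|\cV_N^{1/2}\xi\|\,\|(\cN_++1)^{1/2}\xi\|
\]
(position space for the $\ell_1=0$ piece, Lemma \ref{lm:aux} for the rest), and then invokes Lemma \ref{lm:cVN} to replace $\|\cV_N^{1/2}\xi\|^2$ by $C\kappa N^{\beta-1}\|(\cK+1)^{1/2}(\cN_++1)^{1/2}\xi\|^2$. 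No use of the scattering equation is made anywhere in the proof of this proposition.
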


\begin{proof}
With Lemma \ref{lm:conv-series}, we write 
\[ \begin{split} e^{-B(\eta)} &a_{-p}^* a_q e^{B(\eta)} \\ = \; &a_{-p}^* a_q + \int_0^1 ds \, e^{-sB(\eta)} [a_{-p}^* a_q , B(\eta)] e^{sB(\eta)} \\ = \; &a_{-p}^* a_q + \int_0^1 e^{-sB(\eta)} (\eta_q b_{-p}^* b^*_{-q} + \eta_p b_q b_p) e^{s B(\eta)}  \\ = \; &a_{-p}^* a_q \\ &+ \sum_{n,k \geq 0} \frac{(-1)^{n+k}}{n!k!(n+k+1)} \left[ \eta_q \text{ad}^{(n)}_{B(\eta)} (b_{-p}^*) \text{ad}^{(k)}_{B(\eta)} (b^*_{-q}) + \eta_p \text{ad}^{(n)}_{B(\eta)} (b_q) \text{ad}^{(k)}_{B(\eta)} (b_p) \right] \end{split} \]
Inserting this identity in (\ref{eq:cGN3}), we find
\begin{equation}\label{eq:GN3-in0} 
\cG_{N,\beta}^{(3)}  = \cG_{N,\beta}^{(3,1)} + \cG_{N,\beta}^{(3,2)} + \cG_{N,\beta}^{(3,3)} 
\end{equation}
with 
\begin{equation}\label{eq:GN3-in}
\begin{split} \cG_{N,\beta}^{(3,1)} = \; &\sum_{r \geq 0} \frac{(-1)^r}{r!} \frac{\kappa}{\sqrt{N}} \sum_{p,q \in \Lambda^*_+ : p+q \not = 0} \widehat{V} (p/N^\beta) \text{ad}^{(r)}_{B(\eta)} (b^*_{p+q}) a_{-p}^* a_q + \text{h.c.} \\
\cG_{N,\beta}^{(3,2)} = \; & \sum_{n,k,r \geq 0} \frac{(-1)^{n+k+r}}{n!k!r!(n+k+1)} \\ &\hspace{.3cm} \times \frac{\kappa}{\sqrt{N}} \sum_{p,q \in \Lambda_+^* , p+q \not = 0} \widehat{V} (p/N^\beta) \eta_q \, \text{ad}^{(r)}_{B(\eta)} (b^*_{p+q}) \text{ad}^{(n)}_{B(\eta)} (b_{-p}^*) \text{ad}^{(k)}_{B(\eta)} (b^*_{-q}) + \text{h.c.} \\
\cG_{N,\beta}^{(3,3)} = \; & \sum_{n,k,r \geq 0} \frac{(-1)^{n+k+r}}{n!k!r!(n+k+1)} \\ &\hspace{.3cm} \times \frac{\kappa}{\sqrt{N}} \sum_{p,q \in \Lambda_+^* , p+q \not = 0} \widehat{V} (p/N^\beta) \eta_p \, \text{ad}^{(r)}_{B(\eta)} (b^*_{p+q}) \text{ad}^{(n)}_{B(\eta)} (b_{p}) \text{ad}^{(k)}_{B(\eta)} (b_{q}) +\text{h.c.} 
\end{split} \end{equation}

Let us consider first the term $\cG_{N,\beta}^{(3,3)}$. With Lemma \ref{lm:indu}, the operator
\begin{equation}\label{eq:L-in} \frac{\kappa}{\sqrt{N}} \sum_{p,q \in \Lambda_+^* , p+q \not = 0} \widehat{V} (p/N^\beta) \eta_p \, \text{ad}^{(r)}_{B(\eta)} (b^*_{p+q}) \text{ad}^{(n)}_{B(\eta)} (b_{p}) \text{ad}^{(k)}_{B(\eta)} (b_{q}) \end{equation}
can be expanded in the sum of $2^{n+k+r} n!k!r!$ terms having the form 
\begin{equation}\label{eq:typG}
\begin{split} \text{P} = \; &\frac{\kappa}{\sqrt{N}} \sum_{p,q \in \Lambda_+^* , p+q \not = 0} \widehat{V} (p/N^\beta) \eta_p \, \Pi^{(1)}_{\sharp , \flat} (\eta^{j_1} , \dots , \eta^{j_{k_1}}; \eta^{\ell_1}_{p+q} \ph_{\alpha_{\ell_1} (p+q)})^*  \Lambda^*_{i_1} \dots \Lambda^*_{i_1} \\ &\hspace{3cm} \times  \Lambda'_1 \dots \Lambda'_{i_2} N^{-k_2} \Pi^{(1)}_{\sharp' , \flat'} (\eta^{m_1} , \dots , \eta^{m_{k_2}}; \eta^{\ell_2}_p \ph_{\alpha_{\ell_2} p})  \\ &\hspace{3cm} \times \Lambda''_1 \dots \Lambda''_{i_3} N^{-k_3} \Pi^{(1)}_{\sharp'' , \flat''} (\eta^{s_1} , \dots , \eta^{s_{k_3}}; \eta^{\ell_3} \ph_{\alpha_{\ell_3} q}) \end{split} \end{equation}
for $i_1, i_2, i_3, k_1, k_2, k_3, \ell_1, \ell_2 , \ell_3 \in \bN$, $j_1, \dots , j_{k_1}, m_1, \dots, m_{k_2}, s_1 , \dots , s_{k_3} \in \bN \backslash \{ 0 \}$, $\alpha_{\ell_i} = (-1)^{\ell_i}$ and where each $\Lambda_i, \Lambda'_i, \Lambda''_i$ is either a factor $(N-\cN_+ )/N$, a factor $(N+1 - \cN_+ )/N$ or a $\Pi^{(2)}$-operator of the form
\begin{equation}\label{eq:Pi2-GN3} N^{-h} \Pi^{(2)}_{\underline{\sharp}, \underline{\flat}} (\eta^{z_1} , \dots , \eta^{z_s}) \end{equation}
for some $h, z_1 \dots, z_h \in \bN\backslash \{ 0 \}$. We bound the expectation of (\ref{eq:typG}) by 
\[ \begin{split} 
|\langle \xi , \text{P} \xi \rangle | \leq \; &\frac{\kappa}{\sqrt{N}} \sum_{p,q \in \Lambda^*_+ : p \not =-q} |\widehat{V} (p/N^\beta)| \eta_p \\ &\hspace{2cm} \times \| \Lambda_1 \dots \Lambda_{i_1} N^{-{k_1}} \Pi^{(1)}_{\sharp, \flat} (\eta^{j_1}, \dots , \eta^{j_{k_1}} ; \eta^{\ell_1}_{p+q} \ph_{\alpha_{\ell_1} (p+q)} ) \xi \| \\ &\hspace{2cm} \times \| \Lambda'_1 \dots \Lambda'_{i_2} N^{-k_2} \Pi^{(1)}_{\sharp' , \flat'} (\eta^{m_1} , \dots , \eta^{m_{k_2}}; \eta^{\ell_2}_p \ph_{\alpha_{\ell_2} p})  \\ &\hspace{3cm} \times \Lambda''_1 \dots \Lambda''_{i_3} N^{-k_3} \Pi^{(1)}_{\sharp'' , \flat''} (\eta^{s_1} , \dots , \eta^{s_{k_3}}; \eta^{\ell_3} \ph_{\alpha_{\ell_3} q}) \xi \| \end{split} \]
{F}rom Lemma \ref{lm:aux}, part i) and ii), we conclude that 
\[  \begin{split} 
|\langle \xi , \text{P} &\xi \rangle |\\  \leq \; &C^{n+k+r} \kappa^{n+k+r+2} \\ & \times \frac{1}{\sqrt{N}}\sum_{p,q \in \Lambda^*_+ : p \not = -q} \frac{1}{p^{2}} \Big\{ \frac{1}{(p+q)^{2}} \| (\cN_+ +1)^{1/2} \xi \| + \| a_{p+q} \xi \| \Big\} \\ &\hspace{1.5cm} \times \Big\{ \frac{(1+ r/N)}{p^{2} q^{2}} \| (\cN_+ +1) \xi \|  + \frac{(1+r/N)}{p^{2}} \| a_q (\cN_+ +1)^{1/2} \xi \| \\ & \hspace{6cm} + \frac{1}{q^{2}} \| a_p (\cN_+ +1)^{1/2} \xi \| + \| a_p a_q \xi \| \Big\} \\
\leq \; &\frac{C^{n+k+r} (1+r) \kappa^{n+k+r+2}}{\sqrt{N}}  \| (\cN_+ +1) \xi \| \| (\cN_+ +1)^{1/2} \xi \|  \end{split} \] 
Hence, for $\kappa > 0$ sufficiently small, we obtain 
\begin{equation}\label{eq:fir-3}\begin{split} \big| \langle \xi , \cG_{N,\beta}^{(3,3)} \xi \rangle \big| \leq  \frac{C \kappa^2}{\sqrt{N}} \| (\cN_+ +1) \xi \| \| (\cN_+ +1)^{1/2} \xi \| \end{split} \end{equation}

Next, we consider the term $\cG_{N,\beta}^{(3,2)}$ in  (\ref{eq:GN3-in}) (we take its hermitian conjugate). Since we will use the potential energy operator to control this term, it is convenient to switch to position space. We write
\begin{equation}\label{eq:adbadbadb}  \begin{split} 
\frac{\kappa}{\sqrt{N}} &\sum_{p,q \in \Lambda_+^* , p+q \not = 0} \widehat{V} (p/N^\beta) \eta_q \, \text{ad}^{(r)}_{B(\eta)} (b_{-q}) \text{ad}^{(n)}_{B(\eta)} (b_{-p}) \text{ad}^{(k)}_{B(\eta)} (b_{p+q})  \\ &= \frac{\kappa}{\sqrt{N}} \int_{\Lambda \times \Lambda} dx dy \, N^{3\beta} V(N^\beta(x-y)) \text{ad}^{(r)}_{B(\eta)} (b (\check{\eta}^{1+\ell_1}_x) \text{ad}^{(n)}_{B(\eta)} (\check{b}_y) \text{ad}^{(k)}_{B(\eta)} (\check{b}_x) 
\end{split} \end{equation}
where we used the notation $\check{\eta}^{s}$ to indicate the Fourier transform of the sequence $\Lambda^* \ni p \to \eta^s_p$, and $\check{\eta}^s_x$ denotes the function (or the distribution, if $s =0$) $z \to \check{\eta}^s_x (z) = \check{\eta}^s (z-x)$.  
With Lemma \ref{lm:indu}, the r.h.s. of (\ref{eq:adbadbadb}) can be written as the sum of $2^{n+k+r} n!k!r!$ terms, all having the form 
\begin{equation}\label{eq:Hop} \begin{split} 
\text{Q}  = \; & \frac{\kappa}{\sqrt{N}} \int_{\Lambda \times \Lambda} dx dy \, N^{3\beta} V(N^\beta (x-y)) \, \Lambda_1 \dots \Lambda_{i_1} N^{-k_1} 
\Pi^{(1)}_{\sharp , \flat} (\eta^{j_1} , \dots , \eta^{j_{k_1}}; \check{\eta}^{1+\ell_1}_x )  \\ & \hspace{3cm} \times \Lambda'_1 \dots \Lambda'_{i_2}  N^{-k_2} \Pi^{(1)}_{\sharp' , \flat'} (\eta^{m_1} , \dots , \eta^{m_{k_2}}; \check{\eta}^{\ell_2}_y)  \\ & \hspace{3cm} \times \Lambda^{''}_1 \dots \Lambda^{''}_{i_3} N^{-k_3} \Pi^{(1)}_{\sharp^{''} , \flat^{''}} (\eta^{s_1} , \dots , \eta^{s_{k_3}}; \check{\eta}^{\ell_3}_{x}) \end{split} 
\end{equation}
where $i_1, i_2, i_3, k_1, k_2, k_3, \ell_1, \ell_2 , \ell_3 \in \bN$, $j_1, \dots , j_{k_1}, m_1, \dots, m_{k_2}, s_1 , \dots , s_{k_3} \in \bN \backslash \{ 0 \}$ and where each operator $\Lambda_i, \Lambda'_i, \Lambda''_i$ is either a factor $(N-\cN_+ )/N$, a factor $(N+1 - \cN_+ )/N$ or a $\Pi^{(2)}$-operator of the form (\ref{eq:Pi2-GN3}). To estimate the expectation of (\ref{eq:Hop}), we first assume that $(\ell_2, \ell_3) \not = (0,1)$. Under this assumption, we bound 
\begin{equation}\label{eq:MM} \begin{split} 
|\langle \xi , \text{Q} \xi \rangle | \leq \; &\frac{\kappa}{\sqrt{N}} \int_{\Lambda \times \Lambda} dx dy \, N^{3\beta} V(N^\beta (x-y))\\ &\hspace{1cm} \times \| N^{-k_1} \Pi^{(1)}_{\sharp,\flat} (\eta^{j_1}, \dots, \eta^{j_{k_1}} ; \check{\eta}^{\ell_1+ 1}_x )^* \Lambda_{i_1}^* \dots \Lambda^*_1 \xi \| \\ &\hspace{1cm} \times \Big\| \Lambda'_1 \dots \Lambda'_{i_2}  N^{-k_2} \Pi^{(1)}_{\sharp' , \flat'} (\eta^{m_1} , \dots , \eta^{m_{k_2}}; \check{\eta}^{\ell_2}_y)  \\ & \hspace{2cm} \times \Lambda^{''}_1 \dots \Lambda^{''}_{i_3} N^{-k_3} \Pi^{(1)}_{\sharp^{''} , \flat^{''}} (\eta^{s_1} , \dots , \eta^{s_{k_3}}; \check{\eta}^{\ell_3}_{x}) \xi \Big\| \end{split} \end{equation}
With Lemma \ref{lm:aux2} we estimate
\begin{equation}\label{eq:MM1} \| N^{-k_1} \Pi^{(1)}_{\sharp,\flat} (\eta^{j_1}, \dots, \eta^{j_{k_1}} ; \check{\eta}^{\ell_1+ 1}_x )^* \Lambda_{i_1}^* \dots \Lambda^*_1 \xi \| \leq C^r \kappa^{r+1} 
\| (\cN_+ +1)^{1/2} \xi \| \end{equation}
and, using the condition $(\ell_2, \ell_3) \not = (0,1)$, 
\begin{equation*} \begin{split}  &\Big\| \Lambda'_1 \dots \Lambda'_{i_2}  N^{-k_2} \Pi^{(1)}_{\sharp' , \flat'} (\eta^{m_1} , \dots , \eta^{m_{k_2}}; \check{\eta}^{\ell_2}_y)   \Lambda^{''}_1 \dots \Lambda^{''}_{i_3} N^{-k_3} \Pi^{(1)}_{\sharp^{''} , \flat^{''}} (\eta^{s_1} , \dots , \eta^{s_{k_3}}; \check{\eta}^{\ell_3}_{x}) \xi \Big\| \\ &\hspace{1cm} \leq C^{n+k} \kappa^{n+k} \Big\{ (1+k/N) \| (\cN_+ +1) \xi \| + (1+k/N) \| \check{a}_x (\cN_+ +1)^{1/2} \xi \| \\ &\hspace{5cm} + \|\check{a}_y (\cN_+  +1)^{1/2} \xi \| + \| \check{a}_x \check{a}_y \xi \|  \Big\} \, . \end{split} 
\end{equation*}
Inserting these bounds in (\ref{eq:MM}), we arrive at
\begin{equation}\label{eq:Mbd} \begin{split} |\langle \xi , \text{Q} \xi \rangle | \leq \; &C^{n+k+r} 
\kappa^{n+k+r+2} (1+k) \| (\cN_+ +1)^{1/2} \xi \| \\ &\times \frac{1}{\sqrt{N}} \int_{\Lambda \times \Lambda} dx dy \, N^{3\beta} V(N^\beta (x-y)) \\ &\hspace{1cm} \times \Big\{ \| (\cN_+ +1) \xi \| + \| \check{a}_x (\cN_+ +1)^{1/2} \| + \| \check{a}_y (\cN_+ +1)^{1/2} \xi \| + \| \check{a}_x \check{a}_y \xi \| \Big\}  \\ \leq \; & \frac{C^{n+k+r} \kappa^{n+k+r+2} (1+k)}{\sqrt{N}}  \| (\cN_+ +1) \xi \| \| (\cN_+ +1)^{1/2} \xi \|  \\ &+ C^{n+k+r} \kappa^{n+k+r+1} (1+k) \| \cV_N^{1/2} \xi \| \| (\cN_+ +1)^{1/2} \xi \| \,.
\end{split} \end{equation}
For $(\ell_2,\ell_3) = (0,1)$ we can proceed similarly. The only additional remark is that, in this case, the the commutator \[ [\check{a}_y , a^* (\check{\eta}_x)] = \check{\eta} (x-y) \]
between the annihilation operator associated with the second $\Pi^{(1)}$-factor (the one containing $\check{\eta}_y^{\ell_2}$) and the creation operator $a^* (\check{\eta}_x)$ associated with the third $\Pi^{(1)}$-operator, gives a vanishing contribution to the expectation $\langle \xi , \text{Q} \xi \rangle$, for all $\xi \in \cF_+^{\leq N}$ (because of the assumption that $\xi$ is orthogonal to $\ph_0$).

With (\ref{eq:Mbd}) we conclude that, if $\kappa > 0$ is small enough, 
\begin{equation}\label{eq:sec-3} \begin{split} 
\big| \langle \xi , \cG_{N,\beta}^{(3,2)} \xi \rangle \big| \leq \frac{C \kappa^2}{\sqrt{N}} \| (\cN_+ +1)^{1/2} \xi \| \| (\cN_+ +1) \xi \| + C \kappa \| \cV_N^{1/2} \xi \| \| (\cN_+ +1)^{1/2} \xi \| \end{split} 
\end{equation}

Finally, we consider the term $\cG_{N,\beta}^{(3,1)}$ in (\ref{eq:GN3-in}). {F}rom Lemma \ref{lm:indu}, each operator
\begin{equation}\label{eq:first-3} \frac{\kappa}{\sqrt{N}} \sum_{p,q \in \Lambda^*_+ : p+q \not = 0} \widehat{V} (p/N^\beta) \text{ad}^{(r)}_{B(\eta)} (b^*_{p+q}) a^*_{-p} a_q \end{equation}
can be written as the sum of $2^r r!$ terms having the form
\begin{equation}\label{eq:rsum} \text{R} = \frac{\kappa}{\sqrt{N}} \sum_{p,q \in \Lambda^*_+ : p+q \not = 0} \widehat{V} (p/N^\beta) N^{-k_1} \Pi^{(1)}_{\sharp, \flat} (\eta^{j_1}, \dots , \eta^{j_{k_1}} ; \eta_{p+q}^{\ell_1}  \ph_{\alpha_{\ell_1} (p+q)} )^* \Lambda_{i_1}^* \dots \Lambda_1^* a_{-p}^* a_q \end{equation}
for $i_1, k_1, \ell_1 \in \bN$, $j_1, \dots , j_{k_1} \in \bN \backslash \{ 0 \}$, $\alpha_{\ell_1} = (-1)^{\ell_1}$, and where each $\Lambda_j$ operator is either a factor $(N-\cN_+ )/N$, a factor $(N+1-\cN_+ )/N$ of a $\Pi^{(2)}$-operator of the form (\ref{eq:Pi2-GN3}). If $\ell_1 \geq 2$, we use Lemma \ref{lm:aux}, part iii), to bound 
\begin{equation}\label{eq:G31-1} \begin{split} |\langle \xi , \text{R} \xi \rangle | &\leq \frac{C \kappa}{\sqrt{N}} \sum_{p,q \in \Lambda^*_+ : p \not = -q} |\eta_{p+q}|^{\ell_1} \, \| a_q \xi \| \\ &\hspace{1cm} \times  \| a_{-p} \Lambda_1 \dots \Lambda_{i_1} N^{-k_1} \Pi^{(1)}_{\sharp, \flat} (\eta^{j_1}, \dots , \eta^{j_{k_1}} ;  \ph_{\alpha_{\ell_1} (p+q)}) \xi \|  \\ &\leq \frac{C^r \kappa^{r+1}}{\sqrt{N}} \sum_{p,q \in \Lambda^*_+ : p \not = -q} \frac{1}{(p+q)^{4}} \| a_q \xi \| \left\{ \| a_{-p} (\cN_+ +1)^{1/2} \xi \| + \frac{r}{N p^2} \| (\cN_+ +1) \xi \| \right\} 
\\ &\leq \frac{C^r \kappa^{r+1} (1+r)}{\sqrt{N}}  \| (\cN_+ +1)^{1/2} \xi \| \| (\cN_+ +1) \xi \|  \end{split} \end{equation}

If $\ell_1 = 1$, we commute the operator $a_{-(p+q)}$ (or the $b_{-(p+q)}$ operator) appearing in the $\Pi^{(1)}$-operator in (\ref{eq:rsum}) to the right, and the operator $a_{-p}^*$ to the left (it is important to note that $[a_{-(p+q)}, a_{-p}^*] = 0$ since $q \not = 0$). Lemma \ref{lm:aux}, part iii), implies that 
\begin{equation}\label{eq:G31-2} \begin{split} 
|\langle \xi , \text{R} \xi \rangle | &\leq \frac{C^r \kappa^{r+1}}{\sqrt{N}} \sum_{p,q \in \Lambda^*_+: p \not = -q} |\widehat{V} (p/N^\beta)| \frac{1}{(p+q)^2} \\ & \hspace{1cm} \times \Big\{ \frac{r}{N p^2} \| (\cN_+ +1) \xi \| \| a_q \xi \|  + \frac{1}{N (p+q)^2} \| a_{-p} (\cN_+ +1)^{1/2} \xi \| \| a_q \xi \| \\ &\hspace{7cm}  + \| a_{-p} \xi \| \| a_{-(p+q)} a_q \xi \| \Big\} \\ &\leq \frac{C^r \kappa^{r+1}}{\sqrt{N}} \| (\cN_+ +1) \xi \| \| (\cN_+ +1)^{1/2} \xi \| \end{split} \end{equation}

Finally, if $\ell_1 = 0$, we commute $a^*_{-p}$ to the left. With Lemma \ref{lm:aux}, we find  
\begin{equation}\label{eq:ell10} \begin{split} \langle \xi , \text{R} \xi \rangle = \; &\frac{\kappa}{\sqrt{N}} \sum_{p,q \in \Lambda_+^* : p+q \not = 0} \widehat{V} (p/N^\beta) \langle \text{D}_1 (p,q) , a_q \xi \rangle \\ &+ \frac{\kappa}{\sqrt{N}} \sum_{p,q \in \Lambda_+^* : p+q \not = 0} \widehat{V} (p/N^\beta) \langle \text{D}_2 \, a_{-p} a_{p+q} \xi , a_q \xi \rangle \end{split}  \end{equation}
where \[ \| \text{D}_1 (p,q) \| \leq \frac{C^r \kappa^r r}{N p^2} \| a_{p+q} (\cN_+ +1)^{1/2}  \xi \|  \]
and $\| \text{D}_2 \| \leq C^r \kappa^r$. Switching to position space to control the second term on the r.h.s. of (\ref{eq:ell10}), we conclude therefore that 
\[ \begin{split} |\langle \xi, \text{R} \xi \rangle| \leq \; &\frac{C^r \kappa^{r+1} r}{N^{3/2}} \sum_{p,q \in \Lambda^*_+} \frac{|\widehat{V} (p/N^\beta)|}{p^2} \| a_{q+p} (\cN_+ +1)^{1/2} \xi \| \| a_q \xi \| \\ &+ \frac{C^r \kappa^{r+1}}{\sqrt{N}} \int_{\Lambda \times \Lambda} dx dy \, N^{3\beta} V(N^\beta (x-y)) \| \check{a}_x \check{a}_y \xi \| \| \check{a}_y \xi \|  \\ \leq\;& \frac{C^r \kappa^{r+1} r}{\sqrt{N}} \| (\cN_+ +1) \xi \| \| (\cN_+ +1)^{1/2} \xi \| + C^r \kappa^{r+1/2} \| \cV_N^{1/2} \xi \| \| (\cN_+ +1)^{1/2} \xi \| \end{split} \]
Together with (\ref{eq:G31-1}) and (\ref{eq:G31-2}), the last estimate implies that, if $\kappa > 0$ is small enough, 
\begin{equation}\label{eq:G31-fin} |\langle \xi, \cG_{N,\beta}^{(3,1)} \xi \rangle | \leq \frac{C \kappa}{\sqrt{N}} \| (\cN_+ +1)^{1/2} \xi \| \| (\cN_+ +1) \xi \| + C \kappa^{1/2} \| \cV_N^{1/2} \xi \| \| (\cN_+ +1)^{1/2} \xi \| \end{equation}
Combining the last bound with (\ref{eq:fir-3}) and (\ref{eq:sec-3}) (and using the fact that $\cN_+  \leq N$ on $\cF_+^{\leq N}$), we easily obtain that for every $\delta > 0$ there exists $C > 0$ with 
\[ \pm \cG_{N,\beta}^{(3)}  \leq \delta \cV_N + C \kappa (\cN_+ +1) \]
As usual, we can show the same bound for the commutator of $\cG_{N,\beta}^{(3)}$ with $\cN_+ $ (simply because the commutator of $\cN_+ $ with all terms of the form (\ref{eq:typG}), (\ref{eq:Hop}) and (\ref{eq:rsum}) has again the same form, up to a constant bounded by $C(n+k+r)$), i.e. 
\[ \pm \big[ \cG_{N,\beta}^{(3)} , i \cN_+  \big]   \leq \delta \cV_N + C \kappa (\cN_+ +1) \]
Finally, combining (\ref{eq:fir-3}), (\ref{eq:sec-3}) and (\ref{eq:G31-fin}) with Lemma \ref{lm:cVN}, we also  obtain 
\[ \pm \cG_{N,\beta}^{(3)} \leq C N^{(\beta-1)/2} (\cN_+ +1) (\cK +1) \, . \]

\end{proof}

\subsection{Analysis of $\cG_N^{(4)}$}

{F}rom (\ref{eq:cLNj}) we have
\[ \cG_{N,\beta}^{(4)} = e^{-B(\eta)} \cL_{N,\beta}^{(4)} e^{B(\eta)} = \frac{1}{2N} \sum_{p,q \in \Lambda_+^* , r \in \Lambda^* : r \not = -p, -q} \widehat{V} (r/N^\beta) e^{-B(\eta)} a_{p+r}^* a_q^* a_p a_{q+r} e^{B(\eta)} \] 
We define the operator $\cE_N^{(4)}$ through 
\begin{equation}\label{eq:def-cEN4} \begin{split} \cG_{N,\beta}^{(4)} = \; &\cV_N + \frac{1}{2N} \sum_{p,q \in \Lambda^*_+} \widehat{V} ((p-q)/N^\beta) \eta_p \eta_q \\ &+ \frac{1}{2N} \sum_{p,q \in \Lambda^*_+} \widehat{V} ((p-q)/N^\beta) \eta_q (b_p^* b_{-p}^* + b_p b_{-p} ) + \cE_{N,\beta}^{(4)} \end{split} \end{equation}
In the next proposition, we estimate the error term $\cE_{N,\beta}^{(4)}$.
\begin{prop}\label{prop:G4} 
Under the assumptions of Theorem \ref{thm:gene}, for every $\delta > 0$ there exists $C > 0$ such that, on $\cF_+^{\leq N}$,  
\[ \begin{split} 
\pm \cE_{N,\beta}^{(4)} &\leq \delta \cV_N + C \kappa (\cN_+ +1) 
\\
\pm \left[ \cE_{N,\beta}^{(4)} , i \cN_+  \right] &\leq C (\cH_N^\beta + 1) \end{split} \]
Furthermore, we find 
\[ \pm \cE_{N,\beta}^{(4)} \leq C N^{(\beta-1)/2} (\cH_N^\beta + 1) ( \cN_+ +1) \]
\end{prop}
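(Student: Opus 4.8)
The plan is to follow the same scheme that was used for $\cG_{N,\beta}^{(2)}$ and $\cG_{N,\beta}^{(3)}$ in Propositions~\ref{prop:K}, \ref{prop:V} and \ref{prop:G3}, now applied to the quartic operator $\cG_{N,\beta}^{(4)} = e^{-B(\eta)} \cV_N e^{B(\eta)}$. First I would write $\cV_N$ in position space,
\[ \cV_N = \frac{\kappa}{2N} \int_{\Lambda\times\Lambda} dx\, dy\; N^{3\beta} V\big(N^\beta(x-y)\big)\, \check a_x^* \check a_y^* \check a_y \check a_x , \]
insert $e^{B(\eta)} e^{-B(\eta)}$ between consecutive operator-valued distributions, and expand each conjugated factor $e^{-B(\eta)} \check a_x^{\sharp} e^{B(\eta)}$. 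Using $\check b_x = \sqrt{(N-\cN_+)/N}\,\check a_x$, together with Lemma~\ref{lm:conv-series} and Lemma~\ref{lm:Ngrow} to control the conjugated $\sqrt{(N-\cN_+)/N}$ factors, each conjugated operator becomes an absolutely convergent series whose terms have the explicit structure $\Lambda_1 \cdots \Lambda_i\, N^{-k} \Pi^{(1)}_{\sharp,\flat}(\eta^{j_1}, \dots, \eta^{j_k}; \check\eta^{\ell}_x)$ described in Lemma~\ref{lm:indu}.

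Next I would isolate the main terms. Keeping in each of the four expansions only the leading contribution $\check b_x^{\sharp}$ (the $\ell = 0$, no-$\Lambda$ term of Lemma~\ref{lm:indu}~iv)) reproduces $\cV_N$ up to operators carrying at least one $\check\eta$-factor; normal ordering the remaining contributions — pushing annihilation operators to the right, creation operators to the left, and retaining only those contractions that survive on $\cF_+^{\leq N}$ (using $\eta_p = -N\widehat w_{N,\ell}(p)$ as in \eqref{eq:etap}) — produces, after a lengthy but straightforward computation of exactly the standard Bogoliubov type, the constant $\frac{1}{2N}\sum_{p,q}\widehat V((p-q)/N^\beta)\eta_p\eta_q$ and the quadratic term $\frac{1}{2N}\sum_{p,q}\widehat V((p-q)/N^\beta)\eta_q(b_p^* b_{-p}^* + b_p b_{-p})$ appearing in \eqref{eq:def-cEN4}. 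By definition, whatever is left over is $\cE_{N,\beta}^{(4)}$.

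Then I would estimate $\cE_{N,\beta}^{(4)}$ term by term, exactly as in Propositions~\ref{prop:K}--\ref{prop:G3}. Each remainder term is a product of four $\Pi^{(1)}$-type factors with intervening $\Lambda$-operators, integrated against $N^{3\beta-1} V(N^\beta(x-y))$; Lemmas~\ref{lm:aux} and \ref{lm:aux2} bound such products by $\|(\cN_+ +1)\xi\|$-type norms, the smallness of $\|\eta\| \le C\kappa$ lets one sum the series over the commutator orders, and $\sum_p |\eta_p| + \sum_p p^2 |\eta_p|^2 \le C(1+N^\beta)$ (by \eqref{eq:etap}, \eqref{eq:etapN}) handles the momentum sums. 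Terms carrying an explicit factor $N^{-1/2}$ (or better) are immediately of size $C\kappa(\cN_+ +1)$; for the remaining terms one switches to position space, bounds $\int N^{3\beta-1}V(N^\beta(x-y))\|\check a_x\check a_y\xi\|\cdots$ by $\|\cV_N^{1/2}\xi\|$ via Cauchy--Schwarz, and then either uses $\cN_+ \le N$ on $\cF_+^{\leq N}$ to get $\pm\cE_{N,\beta}^{(4)} \le \delta\cV_N + C\kappa(\cN_+ +1)$, or keeps the $\cV_N^{1/2}$-factors and invokes Lemma~\ref{lm:cVN} ($\cV_N \le CN^{\beta-1}\|(\cK+1)^{1/2}(\cN_+ +1)^{1/2}\cdot\|^2$) together with the explicit $1/N$ in front of $\cV_N$ to obtain $\pm\cE_{N,\beta}^{(4)} \le CN^{(\beta-1)/2}(\cH_N^\beta +1)(\cN_+ +1)$. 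The commutator bound $\pm[\cE_{N,\beta}^{(4)}, i\cN_+] \le C(\cH_N^\beta +1)$ follows from the same estimates, since $[\cN_+, \cdot\,]$ sends each of these terms to an operator of the same form up to a multiplicative constant bounded by the commutator order.

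\emph{Main obstacle.} The principal difficulty is purely organizational: among the many terms generated by conjugating four operators one must correctly identify exactly the three that are not part of $\cE_{N,\beta}^{(4)}$ (namely $\cV_N$ itself, the constant, and the quadratic piece of \eqref{eq:def-cEN4}), and verify that every non-normally-ordered contraction — an annihilation operator of $\cV_N$ meeting a creation operator produced by the $e^{B(\eta)}$-expansion — is either absorbed into a main term or genuinely small. The one analytically delicate point is the interplay between the $N^\beta$-growth of sums such as $\sum_p \widehat V(p/N^\beta)\eta_p^2$ (which do not decay in $p$) and the $N^{-1}$ prefactor of $\cV_N$: one must be careful to always pair such a growing sum with a compensating factor $N^{-1}$ — coming either from the explicit prefactor or from $\cN_+ \le N$ — so that no contribution of order $N^{\beta-1}$ or larger is left unbounded.
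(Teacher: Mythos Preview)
Your overall strategy --- expand via Lemma~\ref{lm:indu}, isolate the constant and quadratic pieces of (\ref{eq:def-cEN4}), bound the rest with Lemmas~\ref{lm:aux}, \ref{lm:aux2}, \ref{lm:cVN} --- is correct, and your third paragraph is essentially how the paper handles the remainders. But the very first step has a real gap. You want to insert identities and expand each $e^{-B(\eta)}\check a_x^\sharp e^{B(\eta)}$ separately; however, Lemma~\ref{lm:conv-series} gives a series for conjugation of $b_p$, not of $a_p$. The conversion $\check a_x=(N/(N-\cN_+))^{1/2}\check b_x$ you invoke is singular on the sector $\cN_+=N$ (where $\check b_x=0$ but $\check a_x\ne0$), and Lemma~\ref{lm:Ngrow} controls only integer powers of $(N+1-\cN_+)$, not the unbounded factor $(N-\cN_+)^{-1/2}$. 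There is no direct series expansion for a single conjugated $a$-operator; this, not the bookkeeping you flag as the ``main obstacle'', is the actual technical issue.

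The paper circumvents it by iterated Duhamel. Writing $e^{-B(\eta)}\cV_Ne^{B(\eta)}=\cV_N+\int_0^1 e^{-sB(\eta)}[\cV_N,B(\eta)]e^{sB(\eta)}\,ds$ and computing the commutator via (\ref{eq:comm-b2}), one obtains a piece $\propto b_q^*b_{-q}^*$ and a piece $\propto b^*_{p+r}b^*_q\,a^*_{-q-r}a_p$; see (\ref{eq:quartic1}). One then inserts $e^{sB(\eta)}e^{-sB(\eta)}$ only between $b$-factors and the number-preserving pair $a^*_{-q-r}a_p$, expands the $b$'s by Lemma~\ref{lm:conv-series}, and applies Duhamel once more to $e^{-sB(\eta)}a^*_{-q-r}a_pe^{sB(\eta)}$ (its commutator with $B(\eta)$ again yields only $b$-operators). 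This produces the decomposition $\cG_{N,\beta}^{(4)}-\cV_N=\text{W}_1+\text{W}_2+\text{W}_3+\text{W}_4$ of (\ref{eq:cG4-def})--(\ref{eq:defW2}), in which every factor is either a bare number-preserving $a^*a$ or an $\text{ad}^{(n)}_{B(\eta)}(b)$. The constant and quadratic main terms of (\ref{eq:def-cEN4}) then arise solely from the $(n,k)=(0,0)$ and $(0,1)$ summands of $\text{W}_1$, not from combining leading pieces of four separate expansions; the residues $\text{T}$, $\widetilde{\text{W}}_1$ and $\text{W}_2,\text{W}_3,\text{W}_4$ are estimated along the lines you outline.
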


\begin{proof}
We have
\begin{equation}\label{eq:quartic1} \begin{split} 
\cG_{N,\beta}^{(4)} = \;& \frac{\kappa}{2N} \sum_{p,q \in \Lambda_+^*, r \in \Lambda^* : r \not = -p,q} \widehat{V} (r/N) e^{-B(\eta)} a_p^* a_q^* a_{q-r} a_{p+r}  e^{B(\eta)} \\ =\; & \cV_N + \frac{\kappa}{2N} \sum_{p,q \in \Lambda_+^*, r \in \Lambda^* : r \not = -p,q} \widehat{V} (r/N) \int_0^1 ds \,  e^{-sB(\eta)} \left[ a_p^* a_q^* a_{q-r} a_{p+r} , B(\eta) \right] e^{sB(\eta)}
 \\ = \; &\cV_N + \frac{\kappa}{2N} \sum_{q \in \Lambda_+^*, r \in \Lambda^* : r \not = -q} \widehat{V} (r/N) \eta_{q+r} \int_0^1 ds \, \left( e^{-sB(\eta)} b_q^* b_{-q}^* e^{sB(\eta)} + \text{h.c.} \right)  \\
& +\frac{\kappa}{N} \sum_{p,q \in \Lambda_+^* , r \in \Lambda^* : r \not = p,-q} \widehat{V} (r/N) \eta_{q+r} \int_0^1 ds \left( e^{-s B(\eta)} b_{p+r}^* b_q^* a^*_{-q-r} a_p e^{sB(\eta)} + \text{h.c.} \right) \end{split} \end{equation}
Expanding again 
\[ \begin{split}  e^{-sB(\eta)} a^*_{-q-r} a_p e^{s B(\eta)}  &=
a^*_{-q-r} a_p + \int_0^s d\tau \, e^{-\tau B(\eta)} \left[ a^*_{-q-r} a_p , B(\eta) \right] e^{-\tau B(\eta)} \\ &= a^*_{-q-r} a_p + \int_0^s d\tau \, e^{-\tau B(\eta)} \left( \eta_p b^*_{-p} b^*_{-q-r} + \eta_{q+r} b_p b_{q+r} \right) e^{-\tau B(\eta)} \end{split}  \]
and using Lemma \ref{lm:conv-series}, we obtain 
\begin{equation}\label{eq:cG4-def}  \cG_{N,\beta}^{(4)} -\cV_N = \text{W}_1 + \text{W}_2 + \text{W}_3 + \text{W}_4 \end{equation}
where we defined 
\begin{equation}\label{eq:defW} \begin{split} \text{W}_1 = \; &
\sum_{n,k =0}^\infty \frac{(-1)^{n+k}}{n!k!(n+k+1)} 
 \\ &\hspace{.5cm} \times \frac{\kappa}{2N} \sum_{q \in \Lambda_+^*, r \in \Lambda^* : r \not = -q} \widehat{V} (r/N^\beta) \eta_{q+r} \left( \text{ad}^{(n)}_{B(\eta)} (b_q) \text{ad}^{(k)}_{B(\eta)} ( b_{-q}) + \text{h.c.} \right) \\ \text{W}_2 = \; & \sum_{n,k =0}^\infty \frac{(-1)^{n+k}}{n!k!(n+k+1)}  \\ &\hspace{.5cm} \times  \frac{\kappa}{N} \sum_{p,q \in \Lambda_+^* , r \in \Lambda^* : r \not = p,-q} \widehat{V} (r/N^\beta) \eta_{q+r} \, \left( \text{ad}^{(n)}_{B(\eta)} (b_{p+r}^*) \text{ad}^{(k)}_{B(\eta)} (b_q^*)  a^*_{-q-r} a_p + \text{h.c.} \right)  
\end{split} \end{equation} 
and 
\begin{equation}\label{eq:defW2}
\begin{split}
\text{W}_3 = \; &  \sum_{n,k,i,j =0}^\infty \frac{(-1)^{n+k+i+j}}{n!k!i!j!(i+j+1)(n+k+i+j+2)} \\ &\hspace{1cm}  \frac{\kappa}{N} \sum_{p,q\in \Lambda^*_+, r \in \Lambda^* : r \not = -p -q} \widehat{V} (r/N^\beta) \eta_{q+r} \eta_p  \\ &\hspace{2cm} \times \left( \text{ad}^{(n)}_{B(\eta)} (b^*_{p+r}) \text{ad}^{(k)}_{B(\eta)} (b^*_q) \text{ad}^{(i)}_{B(\eta)} (b_{-p}^*) \text{ad}^{(j)}_{B(\eta)} (b_{-q-r}^*) + \text{h.c.} \right) \\ \text{W}_4 = \; & \sum_{n,k,i,j =0}^\infty \frac{(-1)^{n+k+i+j}}{n!k!i!j!(i+j+1)(n+k+i+j+2)} \\ &\hspace{1cm} \times  \frac{\kappa}{N} \sum_{p,q\in \Lambda^*_+, r \in \Lambda^* : r \not = -p -q} \widehat{V} (r/N^\beta) \eta^2_{q+r}  \\ & \hspace{2cm} \times  \left( \text{ad}^{(n)}_{B(\eta)} (b^*_{p+r}) \text{ad}^{(k)}_{B(\eta)} (b^*_q) \text{ad}^{(i)}_{B(\eta)} (b_{p}) \text{ad}^{(j)}_{B(\eta)} (b_{q+r}) + \text{h.c.} \right) 
  \end{split} \end{equation}
In $\text{W}_1$, we isolate the contributions associated to $(n,k) = (0,0), (0,1)$. To this end, we write
\[ \begin{split} \text{W}_1 = \; &\frac{\kappa}{2N} \sum_{q \in \Lambda_+^*, r \in \Lambda^* : r \not = -q} \widehat{V} (r/N^\beta) \eta_{r+q} (b_q b_{-q} + \text{h.c.}) \\ &- \frac{\kappa}{4N} \sum_{q \in \Lambda_+^*, r \in \Lambda^* : r \not = -q}  \widehat{V} (r/N^\beta) \eta_{q+r} (b_q [B(\eta) , b_{-q}] + \text{h.c.}) + \wt{\text{W}}_1 \\ 
= \; &\frac{\kappa}{2N} \sum_{q \in \Lambda_+^*, r \in \Lambda^* : r \not = -q} \widehat{V} (r/N^\beta) \eta_{r+q} (b_q b_{-q} + \text{h.c.}) \\ &- \frac{\kappa}{4N} \sum_{q \in \Lambda_+^*, r \in \Lambda^* : r \not = -q}  \widehat{V} (r/N^\beta) \eta_{q+r} \eta_q + \text{T} + \wt{\text{W}}_1 
\end{split} \]
where we defined 
\begin{equation}\label{eq:wtW1def} \begin{split} \wt{W}_1 = \; &\sum_{n,k}^* \frac{(-1)^{n+k}}{n!k!(n+k+1)}  \\ &\hspace{.5cm} \times \frac{\kappa}{2N} \sum_{q \in \Lambda_+^*, r \in \Lambda^* : r \not = -q} \widehat{V} (r/N^\beta) \eta_{q+r} \left( \text{ad}^{(n)}_{B(\eta)} (b_q) \text{ad}^{(k)}_{B(\eta)} ( b_{-q}) + \text{h.c.} \right) \end{split} \end{equation}
with the sum $\sum_{n,k}^*$ running over all pairs $(n,k) \not = (0,0), (0,1)$, and 
\begin{equation}\label{eq:Tdef} \begin{split} \text{T} = \; &- \frac{\kappa}{4N} \sum_{q \in \Lambda_+^*, r \in \Lambda^* : r \not = -q}  \widehat{V} (r/N^\beta) \eta_{q+r} (b_q [B(\eta) , b_{-q}] + \text{h.c.}) \\ &+ \frac{\kappa}{2N} \sum_{q \in \Lambda_+^*, r \in \Lambda^* : r \not = -q} \widehat{V} (r/N^\beta) \eta_{q+r} \eta_q \\ =: \; & \text{T}_1 + \text{T}_2 + \text{T}_3 \end{split} \end{equation}
with
\[ \begin{split}  \text{T}_1 &= \frac{\kappa}{N^2} \sum_{q \in \Lambda^*_+, r\in \Lambda^* : r \not = -q} \widehat{V} (r/N^\beta) \eta_{r+q} \eta_q (2\cN_+ +1 + \cN_+ /N + \cN_+ ^2/N )  \\
\text{T}_2 &= \frac{2\kappa}{N^2} \sum_{q\in \Lambda^*_+, r\in \Lambda^* : r \not = -q} \widehat{V} (r/N^\beta) \eta_{r+q} \eta_q a_q^* a_q \left( 1 - \frac{\cN_+ +1}{N} \right) \\
\text{T}_3 &= \frac{\kappa}{N^3} \sum_{q,m \in \Lambda^*_+, r\in \Lambda^* :r \not = -q} \widehat{V} (r/N^\beta) \eta_{r+q} \eta_m  a_m^* a^*_{-m} a_q a_{-q} \end{split} \]
In the computation of $\text{T}$, we used the fact that
\[ [B(\eta), b_{-q} ] = -\eta_q(1-\cN_+ /N) b_q^*   + \frac{1}{N} \sum_{m \in \Lambda^*_+}  \eta_m b^*_m a^*_{-m} a_{-q} \]
Comparing with (\ref{eq:def-cEN4}), we arrive at
\begin{equation}\label{eq:cEN4-in} \cE_{N,\beta}^{(4)} = \text{T} + \wt{\text{W}}_1 + \text{W}_2 + \text{W}_3 + \text{W}_4 \end{equation}

Let us start by analyzing the operator $\text{T}$, defined in (\ref{eq:Tdef}).
Using (\ref{eq:etapN}), we estimate
\[ \frac{1}{N^2} \sum_{q \in \Lambda^*_+ , r \in \Lambda^* : r \not = -q} |\widehat{V} (r/N^\beta )| |\eta_{r+q}| |\eta_q|  \leq C N^{\beta-1} \left[  \frac{1}{N^2} \sum_{q \in \Lambda^*_+ , r \in \Lambda^* : r \not = -q} \frac{|\widehat{V} (r/N^\beta)|^2}{q^2 (q+r)^2} \right]^{1/2} \]
Next, we observe that
\[ \begin{split} 
\frac{1}{N^2} \sum_{q \in \Lambda^*_+ , r \in \Lambda^* : r \not = -q} \frac{|\widehat{V} (r/N^\beta)|^2}{q^2 (q+r)^2} &\leq \frac{C}{N^2} \sum_{r \in \Lambda^*} \frac{|\widehat{V} (r/N^\beta)|^2}{|r|+1} \\ &\leq \frac{C \| \widehat{V} \|^2_{\infty}}{N^2} \sum_{|r| \leq N^\beta} \frac{1}{|r|+1}  + \frac{C\| \widehat{V} (./N^\beta) \|^2_2}{N^{2+\beta}} \leq C N^{2(\beta-1)} \end{split} \]
Hence, we conclude that 
\begin{equation}\label{eq:bd-eta-eta} \frac{1}{N^2} \sum_{q \in \Lambda^*_+ , r \in \Lambda^* : r \not = -q} |\widehat{V} (r/N^\beta)| |\eta_{r+q}| |\eta_q|  \leq C N^{2 (\beta-1)} \end{equation}
With this bound, we easily arrive at  
\begin{equation}\label{eq:T1T2}  |\langle \xi , \text{T}_1 \xi \rangle |, |\langle \xi , \text{T}_2 \xi \rangle | \leq C \kappa^3 N^{2(\beta-1)} \| (\cN_+ +1)^{1/2} \xi \|^2 \end{equation}
To bound $\text{T}_3$, we switch to position space. We obtain 
\[ \begin{split} \text{T}_3 &= \frac{\kappa}{N^3} \sum_{q,m \in \Lambda^*_+, r\in \Lambda^* :r \not = -q} \widehat{V} (r/N^\beta) \eta_{r+q} \eta_m  a_m^* a^*_{-m} a_q a_{-q} \\ &=  \frac{\kappa}{N^3} \int_{\Lambda \times \Lambda} dx dy \, N^{3\beta} V(N^\beta (x-y)) \check{\eta} (x-y) \text{D} \check{a}_x \check{a}_y \end{split} \]
where $\text{D} = \sum_{m\in \Lambda_+^*} \eta_m a_m^* a_{-m}^*$. Since $\| \text{D}^* \xi \| \leq C \kappa \| (\cN_+ +1) \xi \|$, we find 
\[ \begin{split}  | \langle \xi , \text{T}_3 \xi \rangle | &\leq C \kappa^2  N^{-3} \| (\cN_+ +1) \xi \| \int_{\Lambda \times \Lambda} dxdy \, N^{3\beta} V(N^\beta (x-y)) |\check{\eta} (x-y)| \| \check{a}_x \check{a}_y \xi \| \\ 
&\leq C \kappa^3 N^{-3+\beta}  \| (\cN_+ +1) \xi \| \int_{\Lambda \times \Lambda}  dx dy \, N^{3\beta} V(N^\beta (x-y))  \| \check{a}_x \check{a}_y \xi \| \\ &\leq C \kappa^{5/2} N^{-1} \| (\cN_+ +1)^{1/2} \xi \| \| \cV_N^{1/2} \xi \| \end{split} \]
Together with (\ref{eq:T1T2}), we conclude that 
\begin{equation}\label{eq:Tfin} |\langle \xi, \text{T} \xi \rangle | \leq C\kappa^3  N^{2(\beta-1)}\| (\cN_+ +1)^{1/2} \xi \|^2 +   C \kappa^{5/2} N^{-1} \| (\cN_+ +1)^{1/2} \xi \| \| \cV_N^{1/2} \xi \| \, . 
\end{equation}
 
Let us now consider the operator $\wt{W}_1$, defined in 
(\ref{eq:wtW1def}). According to Lemma \ref{lm:indu}, the operator
\[ \frac{\kappa}{N} \sum_{q \in \Lambda^*_+, r\in \Lambda^* : r \not = -q} \widehat{V} (r/N^\beta) \eta_{q+r} \text{ad}^{(n)} (b_q) \text{ad}^{(k)} (b_{-q})  \]
can be written as the sum of $2^{n+k} n!k!$ term s having the form
\[ \begin{split} \text{X} = \frac{\kappa}{N} \sum_{q\in \Lambda^*_+, r\in \Lambda^* : r \not = -q} \widehat{V} (r/N^\beta) \eta_{q+r} \Lambda_1 &\dots \Lambda_{i_1} N^{-k_1} \Pi^{(1)}_{\sharp,\flat} (\eta^{j_1} , \dots , \eta^{j_{k_1}} ; \eta^{\ell_1}_q \ph_{\alpha_{\ell_1} q} ) \\ &\times \Lambda'_1 \dots \Lambda'_{i_2} N^{-k_2} \Pi^{(1)}_{\sharp', \flat'} (\eta^{m_1} , \dots , \eta^{m_{k_2}} ; \eta^{\ell_2}_q \ph_{-\alpha_{\ell_2} q} ) \end{split} \]
where $i_1, i_2, k_1, k_2, \ell_1, \ell_2 \in \bN$, $j_1, \dots , j_{k_1}, m_1, \dots , m_{k_2} \in \bN \backslash \{0 \}$, $\alpha_{\ell_i} = (- 1)^{\ell_i}$  and where each operator $\Lambda_r, \Lambda'_r$ is either a factor $(N-\cN_+ )/N$, a factor $(N+1-\cN_+ )/N$ or a $\Pi^{(2)}$-operator of the form 
\begin{equation}\label{eq:Pi2-E4} N^{-h} \Pi^{(2)}_{\underline{\sharp}, \underline{\flat}} (\eta^{z_1}, \dots , \eta^{z_p}) .\end{equation} 
for $h, z_1, \dots, z_h \in \bN \backslash \{ 0 \}$. To bound the expectation of the operator $\text{X}$, we distinguish two cases. If $\ell_1 + \ell_2 \geq 1$, we use Lemma \ref{lm:aux}, part ii), to estimate
\[ \begin{split} |\langle \xi , \text{X} \xi  \rangle | \leq \; &\frac{C^{n+k} \kappa^{n+k+2}}{N}  \| (\cN_+ +1)^{1/2} \xi \|  \\ & \hspace{1cm} \times \sum_{q,r \in \Lambda^*_+ : r \not = -q} \frac{|\widehat{V} (r/N^\beta)|}{(q+r)^2}  \left\{ \frac{1}{q^4} (1+k/N) \| (\cN_+ +1)^{1/2} \xi \| + \frac{1}{q^2} \| a_q \xi \|
\right\} \\ &+ \frac{C^{n+k} \kappa^{n+k} \| (\cN_+ +1)^{1/2} \xi \|^2}{N^2}  \sum_{q,r \in \Lambda^*_+ : r \not = -q} |\widehat{V} (r/N^\beta)| |\eta_{q+r}| |\eta_q|  \end{split} \]
Here we used the fact that we excluded the pairs $(n,k) = (0,0), (0,1)$ to make sure that, if $\ell_1 =0$ and $\ell_2 = 1$, then either $k_1 > 0$ or $k_2 > 0$ or at least one of the operators $\Lambda$ or $\Lambda'$ has to be a $\Pi^{(2)}$-operator. {F}rom (\ref{eq:bd-eta-eta}) and since, as we already showed in (\ref{eq:eta-bd1}),  
\[ \sup_{q \in \Lambda^*_+} \frac{1}{N} \sum_{r\in \Lambda^* : r \not = -q} |\widehat{V} (r/N^\beta)| \frac{1}{(q+r)^2} \leq C N^{\beta-1}  \] 
we conclude that, for $\ell_1 + \ell_2 \geq 1$, 
\begin{equation}\label{eq:X1} |\langle \xi, \text{X} \xi \rangle | \leq C^{n+k} \kappa^{n+k+2} N^{\beta-1} \| (\cN_+ +1)^{1/2} \xi \|^2 \end{equation}

For $\ell_1 = \ell_2 = 0$, we use Lemma \ref{lm:aux}, part ii), to write 
\[ \text{X} = \frac{\kappa}{N} \sum_{q\in \Lambda^*_+, r\in \Lambda^*} \widehat{V}(r/N^\beta) \eta_{q+r} \left[ \text{D}_{1} (q) +  \, \text{D}_2 \, a_q a_{-q} \right] =: \text{X}_1 + \text{X}_2 \]
where 
\[ |\langle \xi , \text{D}_1 (q) \xi \rangle | \leq  \frac{C^{n+k} \kappa^{n+k} k}{N q^2} \| (\cN_+ +1)^{1/2} \xi \|^2 \]
and (since we excluded the term with $(n,k) = (0,0)$) 
\[ \| \text{D}_2^* \xi \| \leq C^{n+k} N^{-1}  \kappa^{n+k} \| (\cN_+ +1) \xi \| \] 
We immediately obtain, using again (\ref{eq:bd-eta-eta}), that
\[ \begin{split} 
|\langle \xi , \text{X}_1 \xi \rangle | &\leq \frac{C^{n+k} \kappa^{n+k+2}}{N^2} \sum_{q\in \Lambda^*_+, r\in \Lambda^*} \widehat{V} (r/N^\beta) \frac{1}{(q+r)^2 q^2} \| (\cN_+ +1)^{1/2} \xi \|^2 \\ &\leq C^{n+k} \kappa^{n+k+2} N^{2(\beta-1)} \| (\cN_+ +1)^{1/2} \xi \|^2 \end{split} \]
Switching to position space, we also find 
\[ \begin{split} 
| \langle \xi , \text{X}_2 \xi \rangle |  &=  \Big| \frac{\kappa}{N} \int_{\Lambda\times \Lambda} dx dy \, N^{3\beta} V(N^\beta (x-y)) \check{\eta} (x-y) \langle \text{D}_2^* \, \xi ,  \check{a}_x \check{a}_y \xi \rangle  \Big|\\ &\leq \frac{\kappa}{N} \int_{\Lambda \times \Lambda} dx dy N^{3\beta} V(N^\beta (x-y)) |\check{\eta} (x-y)| \| \check{a}_x \check{a}_y \xi \| \| \text{D}_2^* \, \xi \| \\ &\leq \frac{C^{n+k} \kappa^{n+k+2}}{N^{2-\beta}}  \| (\cN_+ +1) \xi \| \int_{\Lambda \times \Lambda}  dx dy N^{3\beta} V(N^\beta (x-y)) \| \check{a}_x \check{a}_y \xi \| 
\\ &\leq C^{n+k}  \kappa^{n+k+3/2}  N^{\beta-1} \| (\cN_+ +1)^{1/2} \xi \| \| \cV_N^{1/2} \xi \| \end{split} \] 
Combining the last two bounds with (\ref{eq:X1}), and then summing over all $n,k$, we find 
\begin{equation}\label{eq:wtW1-fin} |\langle \xi , \wt{W}_1 \xi \rangle | \leq C \kappa^2 N^{\beta-1} \| (\cN_+ +1)^{1/2} \xi \|^2 + C \kappa^{3/2} N^{\beta-1} \| (\cN_+ +1)^{1/2} \xi \| \| \cV_N^{1/2}\xi \| \, . \end{equation}

Next, we consider the expectation of the operator $\text{W}_2$, defined in (\ref{eq:defW}). Since we will need the potential energy operator to bound this term, it is convenient to switch to position space. On $\cF_+$, we find
\begin{equation} \begin{split} \text{W}_2 = \; &\sum_{n,k= 0}^\infty \frac{(-1)^{n+k}}{n!k!(n+k+1)} \\ & \times \frac{\kappa}{N} \int_{\Lambda \times \Lambda} dx dy N^{3\beta} V(N^\beta(x-y)) \left( \text{ad}^{(n)}_{B(\eta)} (\check{b}^*_x) \text{ad}_{B(\eta)}^{(k)} (\check{b}^*_y) a^* (\check{\eta}_x) \check{a}_y  + \text{h.c.} \right) \end{split} 
\end{equation}
with the notation $\check{\eta}_x (z) = \check{\eta} (x-z)$. With  Cauchy-Schwarz, we find
\begin{equation}\label{eq:star1} \begin{split} 
\Big| \frac{\kappa}{N} \int_{\Lambda \times \Lambda} dx dy \, &N^{3\beta} V(N^\beta (x-y)) \langle \xi , \text{ad}^{(n)}_{B(\eta)} (\check{b}_x^*) \text{ad}^{(k)}_{B(\eta)} (\check{b}_y^*) a^* (\check{\eta}_x) \check{a}_y \xi \rangle \Big| \\ 
&\leq \frac{\kappa}{N}  \int_{\Lambda \times \Lambda} dx dy \, N^{3\beta} V(N^\beta (x-y)) \\ &\hspace{1cm} \times  \| (\cN_+ +1)^{1/2} \text{ad}^{(k)}_{B(\eta)} (\check{b}_y) \text{ad}^{(n)}_{B(\eta)} (\check{b}_x) \xi \| \| (\cN_+ +1)^{-1/2} a^* (\check{\eta}_x) \check{a}_y \xi \| \end{split} \end{equation}
We bound
\begin{equation}\label{eq:star2} \| (\cN_+ +1)^{-1/2} a^* (\check{\eta}_x) \check{a}_y \xi \| \leq C \kappa \| \check{a}_y \xi \|  \end{equation}
With Lemma \ref{lm:indu}, we estimate $\| (\cN_+ +1)^{1/2} \text{ad}^{(k)}_{B(\eta)} (\check{b}_y) \text{ad}^{(n)}_{B(\eta)} (\check{b}_x) \xi \|$ by the sum of $2^{n+k} n!k!$ terms of the form 
\begin{equation}\label{eq:norm-T} \begin{split} 
\text{Z} = &\left\| (\cN_+ +1)^{1/2} \Lambda_1 \dots \Lambda_{i_1} N^{-k_1} \Pi^{(1)}_{\sharp, \flat} (\eta^{j_1}, \dots , \eta^{j_{k_1}} ; \check{\eta}_{y}^{\ell_1}) \right. \\ &\hspace{3cm} \left. \times  \Lambda'_1 \dots \Lambda'_{i_2} N^{-k_2} \Pi^{(1)}_{\sharp,\flat} (\eta^{m_1} , \dots, \eta^{m_{k_2}} ; \check{\eta}^{\ell_2}_{x}) \xi \right\| \end{split} \end{equation}
with $i_1, i_2, k_1, k_2, \ell_1, \ell_2 \geq 0$, $j_1, \dots , j_{k_1}, m_1, \dots , m_{k_2} \geq 0$ and where each $\Lambda_i$ and $\Lambda'_i$ operator is either a factor $(N-\cN_+ )/N$, $(N+1-\cN_+ )/N$ or a $\Pi^{(2)}$-operator of the form (\ref{eq:Pi2-E4}) (here $\check{\eta}^{\ell_1}$ indicates the function with Fourier coefficients given by $\eta^{\ell_1}_p$, for all $p \in \Lambda^*_+$). With Lemma \ref{lm:aux2}, we find 
\begin{equation}\label{eq:Tf} \begin{split} \text{Z} &\leq (n+1) C^{k+n} \kappa^{k+n} \Big\{ \| (\cN_+ +1)^{3/2} \xi \| + \| \check{a}_y (\cN_+ +1) \xi \| + \| \check{a}_x (\cN_+ +1) \xi \| \\ &\hspace{6cm} + N^\beta \| (\cN_+ +1)^{1/2} \xi \| + \sqrt{N} \| \check{a}_x \check{a}_y \xi \| \Big\}  \end{split} \end{equation}
Inserting (\ref{eq:star2}) and (\ref{eq:Tf}) into (\ref{eq:star1}) we obtain, for any $\xi \in \cF_+^{\leq N}$, 
\[ \begin{split} 
&\left| \frac{\kappa}{N} \int_{\Lambda \times \Lambda} dx dy N^{3\beta} V(N^\beta (x-y)) \langle \xi,  \text{ad}^{(n)}_{B(\eta)} (\check{b}_x^*) \text{ad}^{(k)}_{B(\eta)} (\check{b}^*_y)  a^* (\check{\eta}_x) \check{a}_y \xi \rangle \right| \\ &\hspace{1cm} 
\leq \frac{(n+1)!k!C^{n+k} \kappa^{n+k+2}}{N} \int dx dy \, N^{3\beta} V(N^\beta (x-y)) \| \check{a}_y \xi \| \\ &\hspace{3cm} \times \Big\{ N \| (\cN_+ +1)^{1/2} \xi \| + N \| \check{a}_y \xi \| + N \| \check{a}_x \xi \| + \sqrt{N} \| \check{a}_x \check{a}_y \xi \| \Big\}  \\
&\hspace{1cm} \leq (n+1)! k! C^{n+k} \kappa^{n+k+2} \| (\cN_+ +1)^{1/2} \xi \|^2 \\ &\hspace{1.5cm} + (n+1)!k! C^{n+k} \kappa^{n+k+3/2} \| (\cN_+ +1)^{1/2} \xi \| \| \cV_N^{1/2} \xi \| \end{split} \]
Therefore, if $\kappa > 0$ is small enough, we find, for every $\delta > 0$, a constant $C > 0$ such that 
\begin{equation}\label{eq:W2-end0} |\langle \xi ,\text{W}_2 \xi \rangle | \leq \delta \| \cV_N^{1/2} \xi \|^2 + C \kappa^2 \| (\cN_+ +1)^{1/2} \xi \|^2  \end{equation}
On the other hand, inserting (\ref{eq:star2}) and (\ref{eq:Tf}) into (\ref{eq:star1}), we also arrive at
\[ \begin{split} 
&\left| \frac{\kappa}{N} \int_{\Lambda \times \Lambda} dx dy N^{3\beta} V(N^\beta (x-y)) \langle \xi,  \text{ad}^{(n)}_{B(\eta)} (\check{b}_x^*) \text{ad}^{(k)}_{B(\eta)} (\check{b}^*_y)  a^* (\check{\eta}_x) \check{a}_y \xi \rangle \right| \\ &\hspace{1cm} \leq  \frac{(n+1)!k! \, C^{n+k} \kappa^{n+k+2}}{N}  \int_{\Lambda \times \Lambda}  dx dy \, N^{3\beta} V(N^\beta (x-y))  \| \check{a}_y \xi \| \\ &\hspace{1.5cm} \times \Big\{ (\sqrt{N} + N^\beta) \| (\cN_+ +1) \xi \| +   \sqrt{N} \| \check{a}_x (\cN_+ +1)^{1/2} \xi \| \\ &\hspace{4cm} + \sqrt{N} \| \check{a}_y (\cN_+ +1)^{1/2} \xi \| + \sqrt{N} \,  \| \check{a}_x \check{a}_y \xi \| \Big\} 
\\ &\hspace{1cm} \leq (n+1)! k! \, C^{n+k} \kappa^{n+k+2}  N^{-\min (1-\beta, 1/2)} \| (\cN_+ +1) \xi \|^2 \\ &\hspace{1.5cm} + (n+1)! k! \, C^{n+k} \kappa^{n+k+3/2}  \| (\cN_+ +1)^{1/2} \xi \| \| \cV_N^{1/2} \xi \|
 \end{split} \] 
Therefore, for $\kappa > 0$ small enough and using  Lemma \ref{lm:cVN} we obtain 
\begin{equation}\label{eq:W2-end}
|\langle \xi , \text{W}_2 \xi \rangle | \leq  C N^{(\beta-1)/2}   \| (\cN_+ +1)^{1/2} (\cH_N^\beta +1)^{1/2}  \xi \|^2 .  \end{equation}

Next, let us consider the term $\text{W}_3$, defined in (\ref{eq:defW2}). As above, we switch to position space. We find
\begin{equation}\label{eq:W3}
\begin{split}  \text{W}_3 = &\; \sum_{n,k,i,j = 0}^\infty \frac{(-1)^{n+k+i+j}}{n!k!i!j! (i+j+1)(n+k+i+j+2)} \\ & \times \frac{\kappa}{N} \int dx dy \, N^{3\beta} V(N^\beta (x-y)) \\ &\hspace{1cm} \times  \left( \text{ad}^{(n)} (\check{b}^*_x) \text{ad}^{(k)}_{B(\eta)} (\check{b}^*_y) \text{ad}_{B(\eta)}^{(i)} (b^* (\check{\eta}_x)) \text{ad}^{(j)}_{B(\eta)} (b^* (\check{\eta}_y)) + \text{h.c.} \right) \end{split} \end{equation}
With Cauchy-Schwarz, we have
\[ \begin{split} & \left| \frac{\kappa}{N} \int dx dy N^{3\beta} V(N^\beta (x-y)) \langle \xi , \text{ad}^{(n)}_{B(\eta)} (\check{b}_x^*) \text{ad}^{(k)}_{B(\eta)} (\check{b}_y^*) \text{ad}^{(i)}_{B(\eta)} (\check{b}^* (\check{\eta}_x)) \text{ad}^{(j)}_{B(\eta)}  (\check{b} (\check{\eta}_y)) \xi \rangle \right| \\
&\hspace{2cm} \leq \frac{\kappa}{N} \int dx dy \, N^{3\beta} V(N^\beta (x-y)) \, \| (\cN_+ +1)^{1/2} \text{ad}^{(k)}_{B(\eta)} (\check{b}_y)  \text{ad}^{(n)}_{B(\eta)} (\check{b}_x) \xi \| \\ &\hspace{5cm} \times  \|  (\cN_+ +1)^{-1/2} 
\text{ad}^{(i)}_{B(\eta)} (b (\check{\eta}_x)) \text{ad}^{(j)} (b (\check{\eta}_y)) \xi \| 
\end{split} \]
Expanding $\text{ad}^{(i)}_{B(\eta_t)} (b (\check{\eta}_x)) \text{ad}^{(j)} (b (\check{\eta}_y))$ as in Lemma \ref{lm:indu} and using Lemma \ref{lm:aux2}, we obtain  
\begin{equation}\label{eq:W3-1} \begin{split} 
\| (\cN_+ +1)^{-1/2} &\text{ad}^{(i)}_{B(\eta)} (b (\check{\eta}_x)) \text{ad}^{(j)} (b (\check{\eta}_y)) \xi \| \leq i!j! \, C^{i+j} \kappa^{i+j+2}   \| (\cN_+ +1)^{1/2} \xi \|  \end{split} \end{equation}
As for the norm $\| (\cN_+ +1)^{1/2} \text{ad}^{(k)}_{B(\eta)} (\check{b}_y)  \text{ad}^{(n)}_{B(\eta)} (\check{b}_x) \xi \|$, we can estimate by the sum of $2^{n+k} n!k!$ contributions of the form (\ref{eq:norm-T}). With the bound (\ref{eq:Tf}), we can argue as in the analysis of the term $\text{W}_2$. Similarly to (\ref{eq:W2-end}) and (\ref{eq:W2-end0}), we conclude that, if $\kappa >0$ is sufficiently small, for every $\delta > 0$, there exists $C > 0$ such that 
\begin{equation}\label{eq:W3end}
\begin{split}  
|\langle \xi , \text{W}_3 \xi \rangle | &\leq \delta \| \cV_N^{1/2} \xi \|^2 + C \kappa^2 \| (\cN_+ +1)^{1/2} \xi \|^2 \\ 
|\langle \xi , \text{W}_3 \xi \rangle | &\leq C N^{(\beta-1)/2} \| (\cN_+ +1)^{1/2} (\cH_N^\beta + 1)^{1/2} \xi \|^2 \end{split} \end{equation}

The term $\text{W}_4$ in (\ref{eq:defW2}) can be bounded similarly. First, we switch to position space: 
\begin{equation}\label{eq:W4} \begin{split} \text{W}_4 = \; &\sum_{n,k,i,j =0}^\infty \frac{(-1)^{n+k+i+j}}{n!k!i!j! (i+j+1) (n+k+i+j+2)} \\ &\times \frac{\kappa}{N} \int dxdy \, N^{3\beta} V(N^\beta (x-y)) \, \left( \text{ad}^{(n)} (\check{b}_x) \text{ad}^{(k)} (\check{b}_y) \text{ad}^{(i)} (b (\check{\eta}^2_x)) \text{ad}^{(j)} (\check{b}_y) + \text{h.c.} \right) \end{split} \end{equation}
The expectation of the operators on the r.h.s. of (\ref{eq:W4}) can be bounded similarly as we did for the operators on the r.h.s. of (\ref{eq:W3}). The only difference is the fact that now we have to replace the estimate (\ref{eq:W3-1}) with 
\[ \| (\cN_+ +1)^{-1/2} \text{ad}^{(i)} (b (\check{\eta}^2_x)) \text{ad}^{(j)} (\check{b}_y) \xi \| \leq i! j! C^{i+j} \kappa^{i+j+2}  \left[ \| (\cN_+ +1)^{1/2} \xi \| + \| \check{a}_y \xi \| \right] \] 
Hence, we obtain that, for every $\delta > 0$, 
\begin{equation}\label{eq:W4end}
\begin{split}  
|\langle \xi , \text{W}_4 \xi \rangle | &\leq \delta \| \cV_N^{1/2} \xi \|^2 + C \kappa^2 \| (\cN_+ +1)^{1/2} \xi \|^2 \\ 
|\langle \xi , \text{W}_4 \xi \rangle | &\leq C N^{(\beta-1)/2} \| (\cN_+ +1)^{1/2} (\cH_N^\beta + 1)^{1/2} \xi \|^2 \, . \end{split} \end{equation}

Combining the bounds (\ref{eq:Tfin}), (\ref{eq:wtW1-fin}), (\ref{eq:W2-end0}), (\ref{eq:W2-end}), (\ref{eq:W3end}) and (\ref{eq:W4end}) we conclude, by (\ref{eq:cEN4-in}),  that, for all $\delta > 0$ there is $C > 0$ such that 
\begin{equation}\label{eq:cE4-fin0} \pm \cE^{(4)}_{N,\beta} \leq \delta \cV_N + C \kappa (\cN_+ +1) \end{equation}
and that, furthermore,
\[ \pm \cE^{(4)}_{N,\beta} \leq C N^{(\beta-1)/2} (\cN_+ +1) (\cH_N^\beta +1) \]
As usual, the bound for the commutator of $\cE^{(4)}_{N,\beta}$ with $\cN_+ $ can be proven exactly as we proved (\ref{eq:cE4-fin0}).
\end{proof}

\subsection{Proof of Theorem \ref{thm:gene}}

Combining the results of Prop. \ref{prop:G0}, Prop. \ref{prop:K}, Prop. \ref{prop:V}, Prop. \ref{prop:G3} and Prop. \ref{prop:G4}, we conclude that 
\[ \begin{split} 
\cG^\beta_{N} = \; &\frac{(N-1)}{2} \kappa \widehat{V} (0) + \sum_{p \in \Lambda^*_+} \left[ p^2 \eta_p^2 + \kappa \widehat{V} (p/N^\beta) \eta_p + \frac{1}{2N} \sum_{q \in \Lambda^*_+} \widehat{V} ((p-q)/N^\beta) \eta_p \eta_q \right] \\ &+ \cH_N^\beta + \sum_{p \in \Lambda^*_+} \left[ p^2 \eta_p + \frac{\kappa \widehat{V} (p/N^\beta)}{2} + \frac{1}{2N} \sum_{q \in\Lambda^*_+} \widehat{V} ((p-q)/N^\beta) \eta_q \right] \big[ b_p^* b_{-p}^* + b_p b_{-p} \big] \\ &+ \wt{\cE}_{N,\beta}
\end{split} \]
where the error $\wt{\cE}_{N,\beta} = \cE^{(0)}_{N,\beta} + \wt{\cE}_{N,\beta}^{(K)} + \wt{\cE}^{(V)}_{N,\beta} + \cG_{N,\beta}^{(3)} + \cE_{N,\beta}^{(4)}$ is such that, for every $\delta > 0$, there exists $C > 0$ with 
\[ \begin{split} \pm \wt{\cE}_{N,\beta} &\leq \delta \cH_N^\beta + C \kappa (\cN_+ +1) \\ \pm [\wt{\cE}_{N,\beta} , i \cN_+  ] &\leq C (\cH_N^\beta + 1) \end{split} \]
Using the relation (\ref{eq:eta-scat}), we can rewrite
\begin{equation}\label{eq:GNb-fin} \begin{split} 
\cG^\beta_{N} = \; &\frac{(N-1)}{2} \kappa \widehat{V} (0) + \frac{\kappa}{2}  \sum_{p\in \Lambda^*_+} \widehat{V} (p/N^\beta) \eta_p  + \cH_N^\beta + \cE'_{N,\beta} 
\end{split} \end{equation}
with 
\[ \begin{split} \cE'_{N,\beta} = \; &
\sum_{p \in \Lambda^*_+} \eta_p \Big[ N\lambda_{N,\ell} \widehat{\chi}_\ell (p) + \lambda_{N,\ell} \sum_{q \in \Lambda^*} \widehat{\chi}_\ell (p-q) \wt{\eta}_q - \frac{\kappa}{2N} \widehat{V} (p/N^\beta) \wt{\eta}_0 \Big] \\ &+ \sum_{p \in \Lambda^*_+} \Big[ N \lambda_{N,\ell} \widehat{\chi}_\ell (p) + \lambda_{N,\ell} \sum_{q \in \Lambda^*} \widehat{\chi}_\ell (p-q) \wt{\eta}_q - \frac{\kappa}{2N} \widehat{V} (p/N^\beta) \wt{\eta}_0  \Big] \big[ b_p^* b_{-p}^* + b_p b_{-p} \big] \\ 
&+ \wt{\cE}_{N,\beta} \end{split} \]
Since, by Lemma \ref{3.0.sceqlemma}, $N\lambda_{N,\ell} \leq C\kappa$ uniformly in $N$, since $|\wt{\eta}_q| \leq C \kappa / (|q|^2 +1)$ from (\ref{eq:etap}) and  (\ref{eq:wteta0}) and since $|\widehat{\chi}_\ell (p)| \leq C |p|^{-2}$ (see (\ref{eq:chip})), we conclude easily that for every $\delta > 0$ there exists $C >0$ such that 
\begin{equation}\label{eq:cE'-fin} \begin{split} \pm \cE'_{N,\beta} &\leq \delta \cH_N^\beta + C \kappa (\cN_+ +1) \\ 
\pm \big[ \cE'_{N,\beta} , i \cN_+  \big] &\leq C (\cH_N^\beta +1) \end{split} \end{equation}
Eq. (\ref{eq:GNb-fin}) implies, in particular, that the ground state energy of the Hamiltonian (\ref{eq:Ham0}) is such that 
\[ E_N^\beta \leq \langle \Omega, \cG_N^\beta \Omega \rangle \leq \frac{(N-1)}{2} \kappa \widehat{V} (0) + \frac{\kappa}{2} \sum_{p \in \Lambda^*_+} \widehat{V} (p/N^\beta) \eta_p  + C \]
for a constant $C > 0$ independent of $N$. Inserting in (\ref{eq:GNb-fin}) and using the first bound in  (\ref{eq:cE'-fin}) (taking for example $\delta = 1/4$)  we conclude that, for $\kappa$ small enough,  
\[ \cG_N^\beta - E_N^\beta \geq  \frac{1}{2} \cH_N^\beta - C \]
Furthermore, (\ref{eq:GNb-fin}) and the second bound in (\ref{eq:cE'-fin}) immediately give 
\[ \pm \big[ \cG_N^\beta , i \cN_+  \big] \leq C (\cH_N^\beta + 1) \]
which concludes the proof of part a) of Theorem \ref{thm:gene}. To show part b), we notice that Prop. \ref{prop:G0}, Prop. \ref{prop:K}, Prop. \ref{prop:V}, Prop. \ref{prop:G3} and Prop. \ref{prop:G4} also imply that
\[ \begin{split} \cG_N^\beta = \; &\frac{(N-1)}{2} \kappa \widehat{V} (0)  \\ &+ \sum_{p \in \Lambda^*_+} \Big[ p^2 \sigma_p^2 + \kappa \widehat{V} (p/N^\beta) (\sigma_p^2 +\sigma_p \gamma_p) + \frac{\kappa}{2N} \sum_{q \in \Lambda^*_+} \widehat{V} ((p-q)/N^\beta) \eta_p \eta_q \Big] \\ &+ \cH_N^\beta +  \sum_{p \in \Lambda^*_+} \big[ 2p^2 \sigma_p^2 + \kappa \widehat{V} (p/N^\beta) (\gamma_p + \sigma_p)^2 \big] b_p^* b_p \\ &+ \sum_{p \in \Lambda^*_+} \Big[ p^2 \sigma_p \gamma_p + \frac{\kappa}{2} \widehat{V} (p/N^\beta) (\gamma_p + \sigma_p)^2 + \frac{\kappa}{2N} \sum_{q \in \Lambda^*_+} \widehat{V} ((p-q)/N^\beta) \eta_q \Big] \\ &\hspace{9cm} \times (b_p^* b_{-p}^* + b_p b_{-p} ) \\ &+ \widehat{\cE}_{N}^\beta \end{split} \]
where the error term $\widehat{\cE}^\beta_{N} = \cE_{N,\beta}^{(0)} + \cE_{N,\beta}^{(K)} + \cE_{N,\beta}^{(V)} + \cG_{N,\beta}^{(3)} + \cE^{(4)}_{N,\beta}$ is such that 
\begin{equation}\label{eq:cE'} \pm \widehat{\cE}_{N}^\beta \leq C N^{(\beta-1)/2} (\cN_+ +1)(\cK + 1)
\end{equation}
Comparing with (\ref{eq:CN}) and (\ref{eq:FpGp}), we obtain that 
\[ \cG_N^\beta = C_N^\beta + \cQ_N^\beta + \cE_N^\beta \]
with
\begin{equation}\label{eq:cEN-fin} \cE^\beta_N = \widehat{\cE}^\beta_N + \cV_N + \frac{\kappa \wt{\eta}_0}{2N} \sum_{p \in \Lambda^*_+}  \widehat{V} (p/N^\beta) \wt{\eta}_0 (b_p^* b_{-p}^* + b_p b_{-p}) \end{equation}
Switching to position space, we have 
\[  \frac{\kappa}{2N} \sum_{p\in \Lambda^*_+} \widehat{V} (p/N^\beta) \wt{\eta}_0 \langle \xi, b_p b_{-p} \xi \rangle = \frac{\kappa \wt{\eta}_0}{2N} \int_{\Lambda \times \Lambda} dx dy \, N^{3\beta} V(N^\beta (x-y)) \langle \xi , \check{b}_x \check{b}_y \xi \rangle \]
Since $|\wt{\eta}_0| \leq C$  from (\ref{eq:wteta0}), we find
\[ \begin{split} \Big| \frac{\kappa}{2N} \sum_{p\in \Lambda^*_+} \widehat{V} (p/N^\beta) \wt{\eta}_0 \langle \xi, b_p b_{-p} \xi \rangle \Big| &\leq \frac{C}{N} \int_{\Lambda \times \Lambda} dx dy \, N^{3\beta} V(N^\beta (x-y)) \| \check{a}_x \check{a}_y \xi \| \| \xi \| \\ &\leq C N^{-1/2} \,  \| \cV_N^{1/2} \xi \| \| \xi \| \\ & \leq C N^{-1/2} \| (\cN_+ +1)^{1/2} (\cK +1)^{1/2} \xi \|^2 
\end{split} \]
where we used Lemma \ref{lm:cVN}. Combining the last estimate with (\ref{eq:cE'}) and again with Lemma \ref{lm:cVN}, Eq. (\ref{eq:cEN-fin}) implies that  
\[ \pm \cE^\beta_N \leq C N^{(\beta-1)/2} (\cN_+ +1) ( \cK + 1) \]
This completes the proof of part b) of Theorem \ref{thm:gene}.

\end{document}